\DeclareMathOperator{\diag}{diag}
\newcommand{\norm}[1]{\left\lVert#1\right\rVert}
\newcommand{\vertiii}[1]{\left\lvert\kern-0.25ex\left\lvert\kern-0.25ex\left\lvert #1 \right\rvert\kern-0.25ex\right\rvert\kern-0.25ex\right\rvert}
\newtheorem{definition}{Definition}		
\newtheorem{assumption}{Assumption}
\newtheorem{remark}{Remark}
\newtheorem{theorem}{Theorem}
\newtheorem{lemma}{Lemma}
\newtheorem{proposition}{Proposition}
\newtheorem{corollary}{Corollary}
\newtheorem{example}{Example}
\numberwithin{lemma}{section}
\numberwithin{example}{section}
\numberwithin{proposition}{section}
\numberwithin{equation}{section}
\numberwithin{theorem}{section}
\numberwithin{corollary}{section}
\numberwithin{remark}{section}
\numberwithin{definition}{section}
\numberwithin{assumption}{section}
\numberwithin{figure}{section}
\newtheorem*{definition*}{Definition}		
\newtheorem*{assumption*}{Assumption}
\newtheorem*{remark*}{Remark}
\newtheorem*{theorem*}{Theorem}
\newtheorem*{lemma*}{Lemma}
\newtheorem*{proposition*}{Proposition}
\newtheorem*{corollary*}{Corollary}
\newtheorem*{example*}{Example}
\newtheorem*{conjecture*}{Conjecture}
\newcommand{\eps}{\varepsilon}
\newcommand{\aver}[1]{\langle #1 \rangle}
\newcommand{\absval}[1]{\left| #1 \right|}
\newcommand{\cG}{\mathcal{G}}
\newcommand{\cHm}{\mathcal{H}_1}
\newcommand{\hperp}{h_{12}}
\newcommand{\hathperp}{\hat{h}_{12}}
\newcommand{\hathperprad}{\hat{h}_{12, {\rm rad}}}
\newcommand{\hathperpang}{\hat{h}_{12, {\rm ang}}}
\newcommand{\Hbm}{H^{{\rm eff}}}
\newcommand{\Hperp}{H_{12}}
\newcommand{\genH}{A} 
\newcommand{\gencH}{\cH_0} 
\newcommand{\hcs}{\mathfrak{h}}
\newcommand{\cC}{\mathcal{C}}
\newcommand{\cD}{\mathcal{D}}
\newcommand{\rotop}{\mathcal{U}}
\newcommand{\phf}{g} 
\newcommand{\phph}{\chi} 
\newcommand{\dpm}{\varphi} 
\newcommand{\dpmH}[1]{\dpm^{(#1)}} 
\newcommand{\tf}{f} 
\newcommand{\tff}[1]{\mathfrak{f}^{(#1)}} 
\newcommand{\mswf}{\phi}
\newcommand{\eshift}{E} 
\newcommand{\res}{\rho}
\newcommand{\resgen}{\varrho}
\newcommand{\sdp}{\eta}
\newcommand{\sdpt}{\eta}
\newcommand{\srp}{\delta}
\newcommand{\swp}{\nu}
\newcommand{\ellC}{\mu}
\newcommand{\cRall}{\cQ}
\newcommand{\radNNN}{\zeta}
\newcommand{\radpNN}{\xi}
\newcommand{\angs}{\lambda}
\newcommand{\angp}{\mu}
\newcommand{\HBM}{H^{(1)}}
\newcommand{\HBMNNN}{H^{({\rm NNN})}}
\newcommand{\HBMgNN}{H^{(\nabla, {\rm NN})}}
\newcommand{\HBMt}{H^{(2)}}
\newcommand{\HBMta}{H^{(2,{\rm all})}}
\newcommand{\HBMvar}[1]{H^{(#1)}}
\newcommand{\tffta}{\tff{2, {\rm all}}}
\newcommand{\Hfull}{H^{{\rm eff}}}
\newcommand{\Hoff}{H^{{\rm off}}}
\newcommand{\HBMsd}{S}
\newcommand{\HBMtwist}{S_{\rm twist}}
\newcommand{\wfntwist}{F_{\rm twist}}
\newcommand{\HBMptwist}{S'_{\rm twist}}
\newcommand{\wfnptwist}{F'_{\rm twist}}
\newcommand{\dtwist}{\mathfrak{D}}
\newcommand{\tj}{\mathfrak{t}}
\newcommand{\gradmax}{J}
\newcommand{\gradmaxtwo}{J_2}
\newcommand{\fw}{\mathfrak{w}}
\newcommand{\ls}{\mathfrak{d}}
\newcommand{\co}{o}
\newcommand{\angvar}[1]{\mathscr{#1}}
\newcommand{\svec}{s}
\newcommand{\cB}{\mathcal{B}}
\newcommand{\cA}{\mathcal{A}}
\newcommand{\cH}{\mathcal{H}}
\newcommand{\cS}{\mathcal{S}}
\newcommand{\cT}{\mathcal{T}}
\newcommand{\cU}{\mathcal{U}}
\newcommand{\sm}{\mathscr{m}}
\newcommand{\afn}{\mathscr{z}}
\newcommand{\hoppingparam}{\gamma}
\newcommand{\lparam}{\lambda_0}
\newcommand{\cM}{\mathcal{M}}
\newcommand{\cP}{\mathcal{P}}
\newcommand{\cQ}{\mathcal{Q}}
\newcommand{\cR}{\mathcal{R}}
\newcommand{\rot}{\mathfrak{R}}
\newcommand{\eV}{\text{ eV}}
\newcommand{\meV}{\text{ meV}}
\newcommand{\hoppingT}{\mathfrak{T}}
\newcommand{\Tu}{\hoppingT_{2,2}}
\newcommand{\microtime}{T}
\newcommand{\macrotime}{t}
\def\XXint#1#2#3{{\setbox0=\hbox{$#1{#2#3}{\int}$ }
\vcenter{\hbox{$#2#3$ }}\kern-.6\wd0}}
\renewcommand{\vec}[1]{#1}
\newcommand{\dee}{\ensuremath{\textrm{d}}}
\newcommand{\inty}[4]{\ensuremath{ \int_{#1}^{#2} \! #3 \, \dee#4 }}
\newcommand{\sq}[1]{{\color{black} #1}}
\newcommand{\srq}[1]{{\color{black} #1}}
\title{Higher-order continuum models for twisted bilayer graphene}
\author{Solomon Quinn, Tianyu Kong, Mitchell Luskin, Alexander B. Watson}
\begin{document}



\maketitle

\begin{abstract}
    The first-order continuum PDE model proposed by Bistritzer and MacDonald in \cite{bistritzer2011moire} accurately describes the single-particle electronic properties of twisted bilayer graphene (TBG) at small twist angles. In this paper, we 
    obtain higher-order corrections to the Bistritzer-MacDonald model via a systematic multiple-scales expansion.
    We prove that the solution of 
    the resulting higher-order PDE model
    accurately approximates the corresponding tight-binding wave function under a natural choice of parameters and given initial conditions that are spectrally localized to the monolayer Dirac points. 
    Numerical simulations of tight-binding and continuum dynamics demonstrate the validity of the higher-order continuum model. 
    Symmetries of the higher-order models are also discussed. This work extends the analysis from \cite{watson2023bistritzer}, which rigorously established the validity of the (first-order) BM model.
\end{abstract}

\section{Introduction}
\subsection{Overview}
Twisted bilayer graphene (TBG) is obtained by stacking two sheets of graphene on top of each other with a relative twist \cite{twistronics}. At incommensurate twist angles, TBG is not periodic and thus does not admit a Brillouin zone or periodic branches of spectrum \cite{genkubo17,dos17,momentumspace17}. Instead, the 
atoms form a structure which is approximately periodic with respect to the so-called moir\'e lattice, whose unit-cell area is inversely proportional to the square of the twist angle.
Thus, at small twist angles, it becomes intractable to simulate electron dynamics directly from first principles, making it essential to derive accurate effective models that reduce the complexity of the problem.

A first-order continuum model for TBG was proposed in \cite{castro2007,bistritzer2011moire}. The Bistritzer-MacDonald (BM) model \cite{bistritzer2011moire} is an exactly periodic, 
first-order PDE model that accurately captures single-particle electronic properties of TBG. Due to the periodicity of the BM Hamiltonian, one can apply Floquet-Bloch theory to parametrize its spectrum by two-dimensional surfaces over a Brillouin zone. The BM model predicts a sequence of twist angles (called ``magic angles’’) at which two of these surfaces are approximately flat (known as ``flat bands''). 
This prediction was later validated by experiments which showed that at the largest magic angle $\approx 1^\circ$, TBG admits superconducting and Mott insulating phases at low temperature \cite{cao2018correlated, cao2018unconventional}. Indeed, although the physical mechanism that produces them is not yet fully understood, it is believed that these phases occur because near to the magic angle the electron-electron energy scale becomes large compared to the single-particle energy scale defined by the flat band width.

The validity of the BM model was rigorously proven in \cite{watson2023bistritzer,kong2023modeling}, showing that for small twist angles and weakly interacting layers, the solution to the BM Schr\"odinger equation under appropriate initial conditions serves as an accurate approximation of the corresponding tight-binding dynamics. A formal derivation of a continuum model for twisted bilayer graphene (TBG) from Density Functional Theory has been given in \cite{cancesbm2023}. The goal of the present work is to improve on 
the results in \cite{watson2023bistritzer,kong2023modeling} by rigorously justifying 
a modified BM model with higher order error compared to the first-order BM model. 
In addition to proving a theorem to this effect, we also demonstrate
numerically that solutions to the higher-order model are indeed more accurate than, and have qualitative differences from, their first-order analogues. For example, the higher-order and tight-binding wave-functions exhibit a spiral pattern which is absent from the first-order BM model; see Figure \ref{fig:wave_packet_spiral}. Note that higher-order continuum models have been proposed in the physics literature, for example \cite{vafek2023continuum}, but the precise order of convergence and range of validity of these models has not been rigorously established.

\begin{figure}[h!]
    \centering
    \includegraphics[width=.65\textwidth]{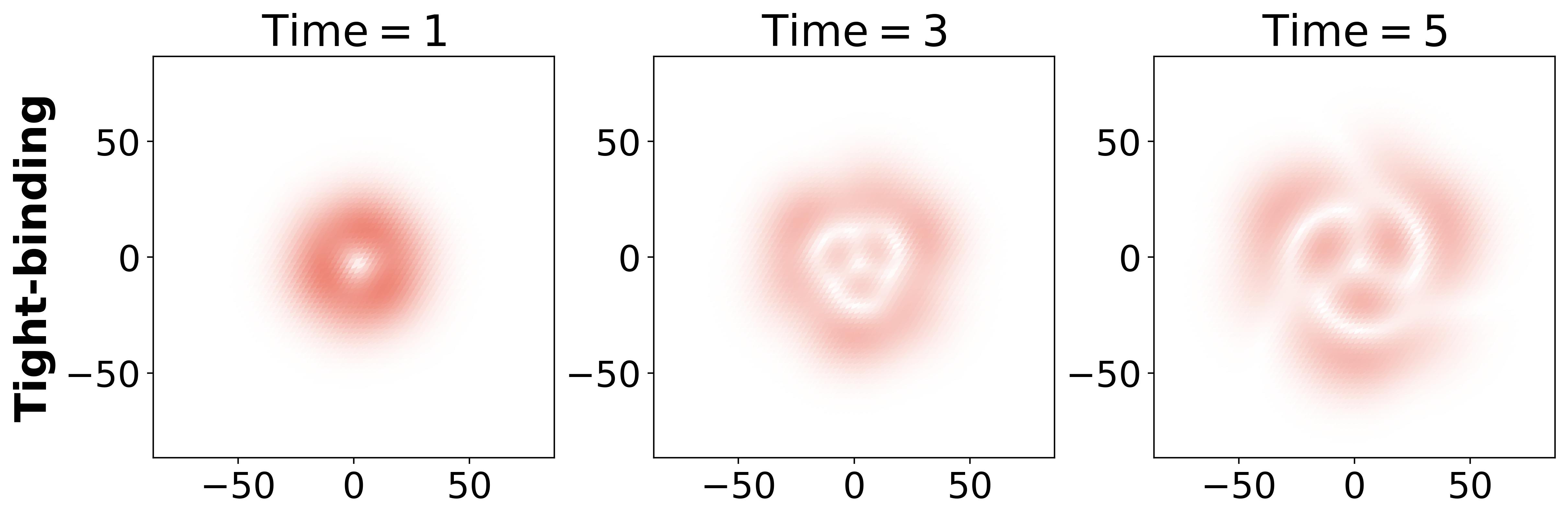}
    \includegraphics[width=.65\textwidth]{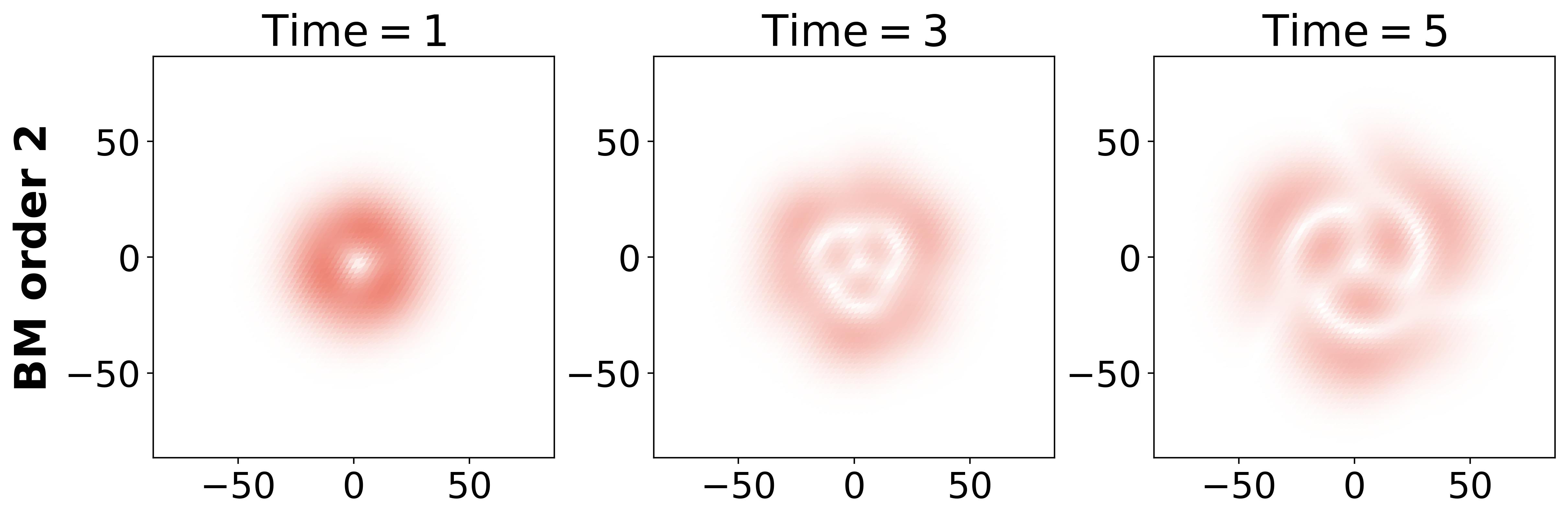}
    \includegraphics[width=.65\textwidth]{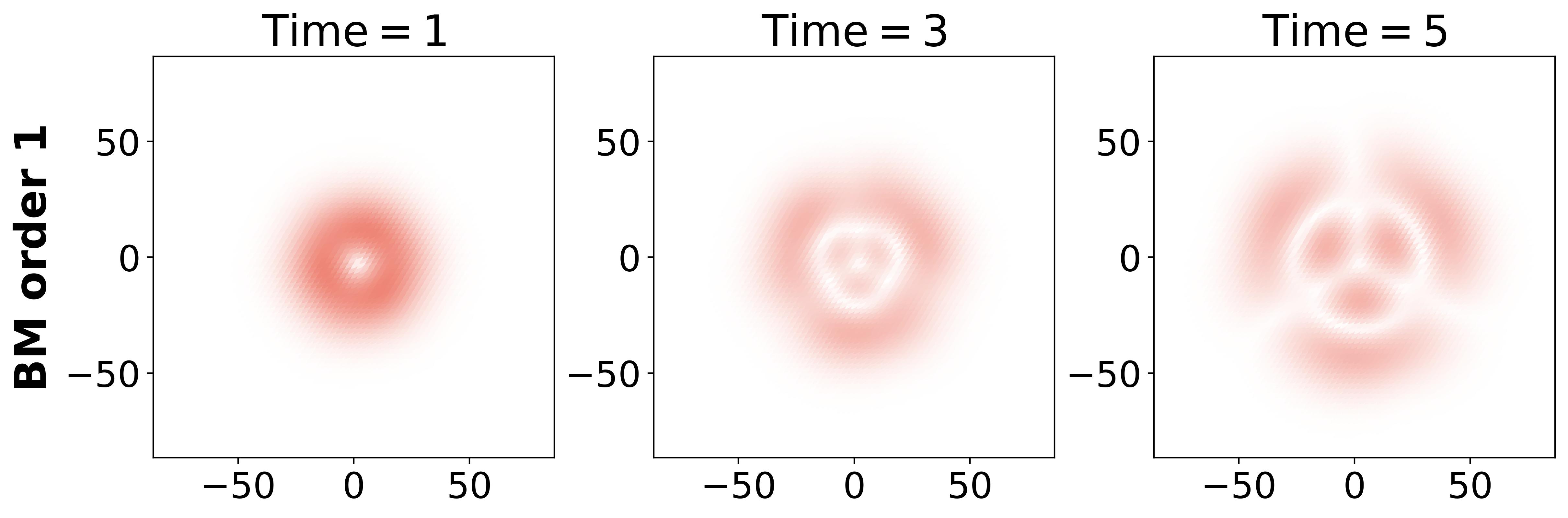}
    \caption{The modulus of the wave-function for the tight-binding model and the first and second order continuum models. The initial data is concentrated on a flat band. The group velocity is approximately zero. The tight-binding model display a spiral pattern, which is also present in the second order BM model. The axes have units \r{A}, and one unit of time is $\hbar \cdot \mathrm{eV}^{-1} \approx 6.6 \times 10^{-16} \text{ s}$.  }
    \label{fig:wave_packet_spiral}
\end{figure}

\subsection{Main results}\label{subsec:main}
This paper is concerned with approximating the solution to the initial value problem
\begin{align}\label{eq:tb}
    i \partial_\microtime \psi = H \psi, \qquad \psi \vert_{\microtime = 0} = \psi_0,
\end{align}
where $H$ is the discrete Hamiltonian for TBG defined by \eqref{eq:H}-\eqref{eq:intra}-\eqref{eq:Hperp}.
In this model, each layer of graphene is an infinite, hexagonal lattice of carbon atoms. We assume that the two layers are a small twist angle and arbitrary translation away from perfect alignment.
Electrons are localized to the carbon atom positions of each layer (tight-binding approximation) and interactions between electrons are ignored (except through their mean field in the modeling of the tight-binding parameters by Density Functional Theory). 
The probability that electrons move between lattice sites is modeled by 
intralayer and interlayer hopping functions depending only on the displacement between the lattice sites (a reduction known as the ``two-center approximation'' in the physics literature) and respectively satisfying Assumptions \ref{assumption:h} and \ref{assumption:hperp2} below. These assumptions require that the hopping functions decay rapidly at infinity. Moreover, Assumption \ref{assumption:h} imposes $2\pi/3$-rotation symmetry on the intralayer hopping function, while Assumption \ref{assumption:hperp2} roughly states that the interaction between the two layers is weak. For ease of exposition, we also assume that the interlayer hopping function has the separable form \eqref{eq:sep}, though our results would easily extend to arbitrary smooth functions with the appropriate decay properties; see Remark \ref{remark:linearity}.

With $f_0$ a sufficiently regular $\mathbb{C}^4$-valued function, the (dimensionless) BM model \cite{bistritzer2011moire} is the following initial value problem
\begin{align}\label{eq:BM}
    (i\partial_t - \HBM) f = 0, \qquad f (r,0) = f_0 (r),
\end{align}
where the self-adjoint operator $\HBM : H^1 (\mathbb{R}^2; \mathbb{C}^4) \to L^2 (\mathbb{R}^2; \mathbb{C}^4)$ is defined by
\begin{align}\label{eq:HBM_1}
\begin{split}
    \HBM &:= \begin{pmatrix}
        L & \hoppingT (r)\\
        \hoppingT^\dagger (r) & L
    \end{pmatrix}, \qquad
    L := \begin{pmatrix}
        0 & \alpha (D_{r_1} - i D_{r_2})\\
        \bar{\alpha} (D_{r_1} + i D_{r_2}) & 0
    \end{pmatrix},\\
    \hoppingT(r) &:= \frac{1}{|\Gamma|} \Bigg(
    \lambda_0 e^{-i \svec_1 \cdot r} \begin{pmatrix}
        1 & 1\\
        1 & 1
    \end{pmatrix} + 
    \lambda_2 e^{-i b_2 \cdot \ls} e^{-i \svec_2 \cdot r} \begin{pmatrix}
        1 & e^{-i2\pi/3}\\
        e^{i2\pi/3} & 1
    \end{pmatrix}\\
    & \hspace{5cm} 
    +\lambda_4 e^{ib_1 \cdot \ls}e^{-i \svec_3 \cdot r} \begin{pmatrix}
        1 & e^{i2\pi/3}\\
        e^{-i2\pi/3} & 1
    \end{pmatrix}\Bigg),
\end{split}
\end{align}
and the vectors $\svec_j$ are related by $2\pi/3$ rotations and given by
$$\svec_1 := \frac{4\pi \beta}{3a} (0,-1), \qquad 
    \svec_2 := \frac{4\pi \beta}{3a} \left(
        \frac{\sqrt{3}}{2}, \frac{1}{2}
    \right), \qquad
    \svec_3 := \frac{4\pi \beta}{3a} \left(-\frac{\sqrt{3}}{2}, \frac{1}{2}\right).$$
Here, $\alpha \ne 0$ is the Fermi velocity (see Section \ref{subsec:existence}), $|\Gamma|$ is the area of the fundamental unit cell defined in \eqref{eq:lattice}, and the constants $\lambda_0, \lambda_2, \lambda_4 \in \mathbb{C}$ are defined in \eqref{eq:rad_ang}. In the special case of a $2\pi/3$-rotation symmetric interlayer hopping function (see, e.g. \cite{bistritzer2011moire, watson2023bistritzer}), $\lambda_0 = \lambda_2 = \lambda_4$. 
The 
monolayer lattice constant $a > 0$ is defined above \eqref{eq:lattice_defn}, and the 
monolayer reciprocal lattice vectors $b_1$ and $b_2$ are given by \eqref{eq:reciprocal_lattice}. The model allows for a relative shift of the layers by $\ls \in \Gamma$; see Section \ref{subsec:bilayer}. Finally, the parameter $\beta \ne 0$ \srq{controls the twist angle of the discrete model \eqref{eq:tb}; see \eqref{eq:beta_def} for its precise definition.} 

The entries of $f$ model electron densities on each layer and graphene sublattice, with the first two entries corresponding to graphene sublattices $A$ and $B$ on layer $1$, and the next two entries corresponding to the sublattices on layer $2$.
Therefore, the operator $L$ models the single graphene sheet, while $\hoppingT (r)$ captures the coupling between layers.

We see that $\HBM$ is a first-order partial differential operator with smooth 
coefficients. 
Moreover, up to multiplication by a diagonal matrix whose nonzero entries are in $\{1, e^{i2\pi/3}, e^{-i2\pi/3}\}$, $\hoppingT(r)$ is periodic with respect to the moir\'e lattice defined in \eqref{eq:moire}; see Proposition \ref{prop:continuum_symmetries} (\ref{symm:translation}).


Starting from the discrete initial value problem \eqref{eq:tb} describing TBG, \cite{watson2023bistritzer} rigorously established the validity of the BM model \eqref{eq:BM} in a particular asymptotic regime. 
This asymptotic regime is defined by a small parameter $0 < \eps < 1$, which is proportional to three dimensionless quantities: 
\begin{enumerate}
    \item The spectral width of the wave-packet initial condition $\psi_0$ from \eqref{eq:tb}; see Theorems \ref{thm:main2} and \ref{thm:main}.
    \item The ratio of the interlayer hopping energy to the monolayer band width; see Remark \ref{remark:interlayer_hopping_strength}.
    \item The relative twist angle of the two layers (the constant of proportionality is $\beta$).
\end{enumerate}
With $\cH$ the natural $\ell^2$ Hilbert space defined by \eqref{eq:cH}, \cite[Theorem 3.1]{watson2023bistritzer} then states that the solution to the first-order Schr\"odinger equation \eqref{eq:BM} can be used to obtain a discrete function $\phph_{{\rm BM}}$ satisfying
$$\norm{\phph_{{\rm BM}} (\microtime)-\psi (\microtime)}_\cH \le C_{t_0} \eps^{1+\sdpt_-} \microtime, \qquad \qquad 0 \le \microtime \le t_0/\eps, \quad 0 < \eps < 1$$
for any fixed $t_0 > 0$,
where the parameter $0 < \sdpt_- < 1$ depends on the interlayer hopping function and can often be taken arbitrarily close to $1$; see Remark 3.5 there. We emphasize that although the discrete initial value problem depends on $\eps$, the emergent effective model \eqref{eq:BM} does not. For a detailed discussion on the experimental relevance of 
taking $\eps \to 0$, see \cite[Section 1.1]{watson2023bistritzer}.

In this paper, we identify the next-order corrections to the BM model in the same asymptotic regime. These terms consist of the periodic point-wise multiplication operator $\HBMNNN$, and the first- and second-order differential operators $\HBMgNN$ and $\HBMt$ all defined in Section \ref{subsec:multiscale}. 
Our effective model is
\begin{align}\label{eq:eff}
    (i \partial_t - \Hfull) f = 0, \qquad f (r,0) = f_0 (r),
\end{align}
where the corrected BM Hamiltonian $\Hfull : H^2 (\mathbb{R}^2; \mathbb{C}^4) \to L^2 (\mathbb{R}^2; \mathbb{C}^4)$ is given by
\begin{align}\label{eq:H_eff}
    \Hfull := \HBM + \radNNN (\eps) \HBMNNN + \radpNN (\eps) \HBMgNN + \eps \HBMt,
\end{align}
with 
$\radNNN (\eps)$ and $\radpNN (\eps)$ both $o(1)$ and defined in \eqref{eq:rad_ang}.
Although $\Hfull$ depends on $\eps$ via these prefactors, 
the operators $\HBM, \HBMNNN, \HBMgNN$ and $\HBMt$ are all independent of $\eps$.





We note that the asymptotic order of the functions $\radNNN$ and $\radpNN$ are in general less than $1$, but greater than $0,$ in the limit $\eps \to 0$. The presence of these non integer-powers of $\eps$ allows our theory to apply to a large class of interlayer hopping functions; see Remark \ref{remark:ex_rad} for an example. 
The functions $\radNNN$ and $\radpNN$ are determined by the spectral decay (in $\eps$) of the interlayer hopping function.

The two main results of this paper are contained in Section \ref{sec:effective}. Roughly speaking, they state that given a sufficiently smooth initial condition $f_0$, the solution to the \emph{continuum} Schr\"odinger equation \eqref{eq:eff} can be used to construct an \emph{approximate} solution $\phph$ of the \emph{discrete} Schr\"odinger equation \eqref{eq:tb}. 
More specifically, 
Theorem \ref{thm:main} establishes that for any $t_0$,
$$\norm{\phph (\microtime)-\psi (\microtime)}_\cH \le C_{t_0} \eps^{2+\sdpt_-} \microtime, \qquad \qquad 0 \le \microtime \le t_0/\eps, \quad 0 < \eps < 1,$$
where (as before) $0 < \sdpt_- < \sdpt \le 1$ depends on the decay of the interlayer hopping function; see Assumption \ref{assumption:hperp2} where the relevant decay parameter $\sdpt$ is introduced. In Theorem \ref{thm:main2}, the approximate solution (now $\phi$ instead of $\phph$) is obtained by a multiscale approximation of the solution $f$ to \eqref{eq:eff}. 
The two theorems share the same order of convergence, though Theorem \ref{thm:main2} requires less regularity on $f_0$ (and fewer constraints on parameters of the tight-binding model) than Theorem \ref{thm:main}.
\srq{We stress that our multiscale analysis involves terms of $O (\zeta (\eps))$ and $O(\xi (\eps))$, which are typically not 
$O (\eps)$. The decay assumption \eqref{eq:neighbor_bds} on the interlayer hopping function ensures that $|\radNNN^2 (\eps)| + |\radpNN^2 (\eps)| = o(\eps)$, which in turn will guarantee that the multiscale expansion for $f$ from \eqref{eq:eff} has only four non-negligible terms; see Lemma \ref{lemma:ae2}.}
\srq{These terms are defined by four $\eps$-independent initial value problems \eqref{eq:it}.}

\srq{Our results improve the existing $\eps^{1+\sdpt_-} \microtime$ order of convergence from \cite{watson2023bistritzer} to $\eps^{2+\sdpt_-} \microtime$. 
We refer to Example \ref{ex:2} for a specific interlayer hopping function satisfying the assumptions of our theory, where an upper bound for the corresponding decay rate $\sdpt_-$ is explicitly calculated.}

In addition to obtaining a higher order of convergence, we also weaken assumptions on the underlying tight-binding model compared with \cite{watson2023bistritzer}; see, e.g. Remark \ref{remark:weaken_assumptions}. For example, 
our theory applies to all $2\pi/3$-rotation symmetric intralayer hopping functions that decay super-algebraically at infinity (see Assumption \ref{assumption:h}), which extends the existing results on nearest neighbor intralayer hopping functions. 
In particular, we prove with Corollary \ref{corr:Dirac} that provided $\alpha \ne 0$, any monolayer Hamiltonian whose hopping function satisfies Assumption \ref{assumption:h} has a Dirac cone at $k = K := \frac{4\pi}{3a}(1,0)$; see also Definition \ref{def:Dirac}. 
Note that the existence of this Dirac cone will play an essential role in proving Theorems \ref{thm:main2} and \ref{thm:main}. Regarding the \emph{interlayer} hopping function, 
our Assumption \ref{assumption:hperp2} allows for angular dependence which was absent from \cite[Assumption 2.1]{watson2023bistritzer}.


\subsection{Structure of paper}
We introduce the tight-binding model for twisted bilayer graphene in Section \ref{sec:tight-binding}, which can be written concisely in block form \eqref{eq:H}. 
With Corollary \ref{corr:Dirac}, we prove
that each diagonal block generically admits Dirac cones (see Definition \ref{def:Dirac}). A stronger result concerning the dispersion surfaces of each diagonal block near the Dirac points is given by Theorem \ref{thm:Dirac}.
Our main results (see also Section \ref{subsec:main}) are presented in Section \ref{sec:effective}. In Section \ref{sec:symmetries}, we establish symmetries of the tight-binding (Sections \ref{subsec:symmetries_mono} and \ref{subsec:symmetries_bilayer}) and effective models (Section \ref{subsection:symmetries_continuum}), and observe the direct correspondence between the discrete and continuum symmetries with Remark \ref{remark:correspondence}. Section \ref{sec:numerical} contains numerical simulations that illustrate our theory. We compute the electron dynamics in TBG for
parameter values outside the scope of our theoretical results, and still observe that the higher-order effective model is more accurate than the first-order model, with qualitative differences between their solutions. Concluding remarks are given in Section \ref{sec:conclusion}, and proofs of the results from Sections \ref{sec:effective} and \ref{sec:symmetries} are deferred to Appendix \ref{sec:proofs}.

\subsection{Related literature}
This work builds on the analysis from \cite{watson2023bistritzer}, which rigorously established the validity of the first-order BM model under appropriate initial conditions. The main contribution of the present paper is to show that higher-order terms improve the order of accuracy of the BM dynamics (with analytical results in Section \ref{sec:effective} and numerical simulations in Section \ref{sec:numerical}). Compared with \cite{watson2023bistritzer}, we consider a larger class of hopping functions, allowing for longer-range intralayer interactions and
interlayer hopping that depends on the displacement (not just distance) between two atoms. \srq{Another contribution of this paper is to prove the generic existence of monolayer Dirac cones under these more general assumptions; see Definition \ref{def:Dirac} and Corollary \ref{corr:Dirac}.}

The higher-order effective model in this paper is similar to the one derived by Vafek-Kang in \cite{vafek2023continuum}. Instead of working with specific values of physical parameters, we perform a rigorous asymptotic analysis in the limit of small twist angle and weak interlayer coupling. As in \cite{bistritzer2011moire, vafek2023continuum}, we assume that the Fourier transform of the interlayer hopping function decays rapidly away from the origin.  
The slowly varying fermion fields in \cite[Equation (12)]{vafek2023continuum} play the role of our wavepackets. 
Kang-Vafek also derive an effective model for relaxed atomic configurations; see \cite{kang2023pseudomagnetic}. Our ongoing work is extending our analysis to this setting.

Canc\`es-Garrigue-Gontier formally derive an effective second-order PDE model for TBG in \cite{cancesbm2023}. Their derivation uses variational methods and Density Functional Theory (DFT), and bypasses the tight-binding model that we consider in Section \ref{sec:tight-binding}. We refer to \cite{cances2023semiclassical} for a detailed spectral analysis of first-order BM-type models of TBG using pseudodifferential calculus. As opposed to the wavepacket setting considered here, Theorems 3.7 and 3.8 in \cite{cances2023semiclassical} establish explicit formulas for a normalized density of states corresponding to the DFT and BM Hamiltonians.

Our work is also related to the analysis of Fefferman-Weinstein in \cite{fefferman2014wave}, which rigorously derives an effective Dirac equation from a PDE model of monolayer graphene. As with \cite{watson2023bistritzer} and the present paper, 
the effective model in \cite{fefferman2014wave} emerges from initial conditions that are spectrally localized to the (monolayer) Dirac point. For an effective Dirac model of edge states in monolayer graphene-type materials, see \cite{FLW-ES-2015}.

\srq{Finally, we comment on the relationship between our paper and the work of Quan-Watson-Massatt in \cite{quan2024construction}. Both papers establish the validity of a continuum model for twisted bilayer graphene, starting from a discrete model. The continuum Hamiltonian proposed by Quan-Watson-Massatt depends nontrivially on various asymptotic parameters, while the effective operators $\HBM, \HBMNNN, \HBMgNN$ and $\HBMt$ in \eqref{eq:H_eff} are all independent of $\eps$. Quan-Watson-Massatt quantify the difference of the two models using a density of states, defined as the normalized trace of a Gaussian function of the Hamiltonian. Our results concern the \emph{time-dependent} problem \eqref{eq:tb}, whose solution $e^{-itH} \psi_0$ is not uniquely determined by the above density of states. Although we believe that regularizing the propagator $e^{-itH}$ could produce an accurate approximation of $\psi$ (for the initial conditions considered in this paper, and over long times), 
this question is outside the scope of the present paper.} 

\subsection{Notation}
\begin{itemize}
    \item The adjoint  of a linear operator $A$ is denoted by $A^\dagger$.
    \item The complex conjugate of a number $z \in \mathbb{C}$ is deonted by $\bar{z}$.
    \item We use the convention that $(v_1, \dots, v_d) \in \mathbb{R}^d$ is a column vector.
    \item If $v = (v_1, \dots, v_d) \in \mathbb{R}^d$, then $|v| := \sqrt{\sum_{j=1}^d |v_j|^2}$.
    \item $1 \{ \cdot \}$ is the indicator function.
    \item $C^\infty_b$: the space of smooth, bounded functions whose derivatives are all bounded.
    \item $\nabla u$ is the gradient of $u$ and $\nabla^2 u$ is the Hessian of $u$.
    \item $D_\alpha := \frac{1}{i} \partial_\alpha$ and $D := \frac{1}{i} \nabla$. 
    \item $\delta (\cdot)$ is the Dirac delta function, and $\delta_{ij}$ is the Kronecker delta function.
    \item $\hat{u} (k) := \int_{\mathbb{R}^d} e^{-i k \cdot x} u(x) {\rm d} x$ is the Fourier transform of $u$. 
    \item $H^N$ denotes the standard Sobolev space with norm $\norm{u}_{H^N} := \sqrt{\sum_{\alpha\in \mathbb{N}^d, |\alpha| \le N} \norm{\partial^\alpha u}_{L^2}^2}$.
    \item $\rot_\phi := \begin{pmatrix}
    \cos \phi & -\sin \phi\\
    \sin \phi & \cos \phi
    \end{pmatrix}$ is rotation counter-clockwise by $\phi$.
    \item $\mathbb{S} \subset \mathbb{R}^2$ is the unit circle.
\end{itemize}

\section{Tight-binding models}\label{sec:tight-binding}

\subsection{Monolayer tight-binding model} \label{subsec:mono}

We start by introducing a general tight-binding model for monolayer graphene. Let $a>0$, define
\begin{align}\label{eq:lattice_defn}
    a_1 := \left(\frac{1}{2}a,\frac{\sqrt{3}}{2}a\right), \qquad a_2:= \left(-\frac{1}{2}a,\frac{\sqrt{3}}{2}a\right),
\end{align}
\sq{and let $\tau^A, \tau^B \in \mathbb{R}^2$ such that $\tau^A - \tau^B = \left(0,-\frac{1}{\sqrt{3}}a\right)$.}
We let 
\begin{align}\label{eq:lattice}
    \cR := \{n_1 a_1 + n_2 a_2 : (n_1, n_2) \in \mathbb{Z}^2\}, \qquad \Gamma := \{\alpha_1 a_1 + \alpha_2 a_2 : (\alpha_1, \alpha_2) \in [0,1)^2\},
\end{align} 
respectively denote the direct lattice and fundamental (unit) cell. The atom locations are then given by the set $(\cR + \tau^A) \cup (\cR + \tau^B)$; see Figure \ref{fig:lattice} for an illustration.
\begin{figure}[h!]
    \centering
    \includegraphics[width=.75\textwidth]{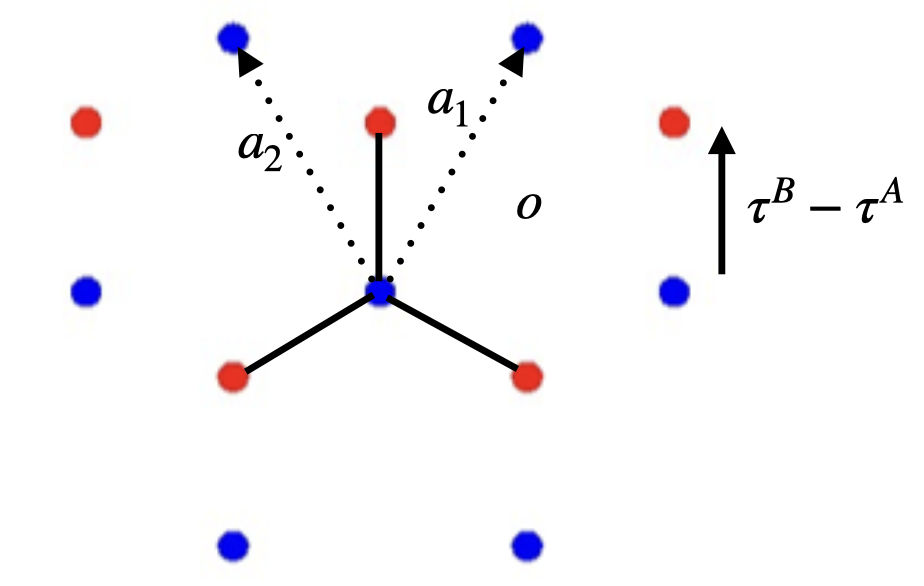}
    \caption{Monolayer graphene lattice with the blue and red dots respectively corresponding to the $A$ and $B$ sites. If the two right-most dots are indexed by $R = 0$ (so that their positions are $\tau^A$ and $\tau^B$), then
    the point $\co := \frac{1}{2} (\tau^A + \tau^B) - (\frac{a}{2}, 0)$ marks the center of the right 
    hexagon.}
    \label{fig:lattice}
\end{figure}
Define $\cR_\pm := \cR \pm (\tau^A - \tau^B)$, 
$\cRall := \cR \cup \cR_+ \cup \cR_-$, and $\cHm := \ell^2(\cR; \mathbb{C}^2)$.
The Hamiltonian $H : \cHm \to \cHm$ is the self-adjoint operator defined by
\begin{align}\label{eq:tight-binding}
    (H \varphi)^\sigma_R = \sum_{R' \in \cR} \sum_{\sigma ' \in \{A,B\}} h (R-R'+\tau^{\sigma, \sigma '}) \varphi^{\sigma'}_{R'},
\end{align}
where $\tau^{\sigma, \sigma'} := \tau^{\sigma} - \tau^{\sigma '}$ and $h : \cRall \rightarrow \mathbb{C}$ is the hopping function.
To ensure that $H$ is self-adjoint, we assume that $h(-r) = \overline{h(r)}$ for all $r \in \cRall$.
Throughout this paper, we will make the following
\begin{assumption}\label{assumption:h}
    Assume that for any $N > 0$, the function $h: \cRall \to \mathbb{C}$ satisfies $|r|^N |h(r)| \to 0$ as $|r| \to \infty$. Moreover, assume that $h (\rot_{2\pi/3} r) = h(r)$ for all $r \in \cRall$.
\end{assumption}
One can easily verify that $\rot_{2\pi/3} \cRall = \cRall$, hence the rotation condition (also known as ``rotation symmetry'') in Assumption \ref{assumption:h} is well defined.
The rapid decay of $h$ at infinity more than ensures that the sum in \eqref{eq:tight-binding} is absolutely convergent for any $\varphi \in \cHm$; it also guarantees that the Bloch transform $\tilde{h}^{\sigma, \sigma'}$ of $h$ defined by \eqref{eq:tildeH} is infinitely smooth.
These assumptions are realistic due to the exponential decay and symmetry properties of Wannier orbitals \cite{ashcroft1976solid}.


The reciprocal lattice vectors $b_1$ and $b_2$ are defined by the condition that $b_i \cdot a_j = 2\pi \delta_{ij}$ for $i,j \in \{1,2\}$, and are given explicitly by
\begin{align}\label{eq:reciprocal_lattice}
    b_1 = \frac{4\pi}{\sqrt{3} a} \left(\frac{\sqrt{3}}{2}, \frac{1}{2}\right), \qquad b_2 = \frac{4\pi}{\sqrt{3} a} \left(-\frac{\sqrt{3}}{2}, \frac{1}{2}\right).
\end{align}
The reciprocal lattice, $\cR^*$, and Brillouin zone, $\Gamma^*$, are then defined by 
\begin{align}\label{eq:cR_star}
    \cR^* := \{n_1 b_2 + n_2 b_2 : (n_1, n_2) \in \mathbb{Z}^2\}, \qquad \Gamma^* := \{\beta_1 b_2 + \beta_2 b_2 : (\beta_1, \beta_2) \in [0,1)^2\}.
\end{align}
We next define the Bloch transform and its inverse by
\begin{equation}\label{eq:bloch}
    \tilde{\varphi}^\sigma(k) := \left[ \mathcal{G} \varphi \right]^\sigma(k) := \sum_{R \in \mathcal{R}} e^{- i k \cdot (R + \tau^\sigma)} \varphi_R^\sigma, \quad \varphi_R^\sigma = \left[ \mathcal{G}^{-1} \tilde{\varphi} \right]^\sigma_R := \frac{1}{|\Gamma^*|} \inty{\Gamma^*}{}{ e^{i k \cdot (R + \tau^\sigma)} \tilde{\varphi}^\sigma(k) }{k}.
\end{equation}
Observe that the Bloch transform is quasi-periodic with respect to $\cR^*$, as
\begin{align*}
    \left[ \cG \varphi \right]^\sigma (k+G) = e^{-iG \cdot \tau^\sigma} \left[ \cG \varphi \right]^\sigma (k),\qquad G \in \cR^*.
\end{align*}
The Bloch transform of $H \varphi$ is
\begin{align}\label{eq:tildeH} 
    (\widetilde{H \varphi})^\sigma (k) = \sum_{\sigma' \in \{A,B\}} \tilde{h}^{\sigma,\sigma '} (k) \tilde{\varphi}^{\sigma '} (k), \qquad
    \tilde{h}^{\sigma, \sigma '} (k) := \sum_{R \in \cR} e^{-i(R+\tau^{\sigma, \sigma '}) \cdot k} h(R+\tau^{\sigma, \sigma '}),
\end{align}
where $\tilde{h}(k)$ is known as the Bloch Hamiltonian.
The rapid decay of $h$ in Assumption \ref{assumption:h} implies that $\tilde{h}^{\sigma \sigma'} \in C^\infty (\mathbb{R}^2; \mathbb{C})$.
We easily see from the definition \eqref{eq:tildeH} that
\begin{align}\label{eq:AABB}
    \tilde{h}^{A,A} (k) = \tilde{h}^{B,B} (k), \qquad k \in \mathbb{R}^2.
\end{align}
Moreover, 
the self-adjointness of $H$, i.e. $h (-r) = \overline{h(r)}$, implies that
\begin{align*}
    \tilde{h}^{\sigma',\sigma} (k) = \overline{\tilde{h}^{\sigma,\sigma'} (k)}, \qquad k \in \mathbb{R}^2.
\end{align*}




\subsection{Existence of monolayer Dirac cones}\label{subsec:existence}
In this section, we derive expressions for the functions $\tilde{h}^{\sigma, \sigma'}$ (defined in \eqref{eq:tildeH}) and their derivatives evaluated at the high-symmetry points
\begin{align}\label{eq:Dirac_points}
K := \frac{4\pi}{3a}(1,0), \qquad K' := -K.
\end{align}
\srq{These expressions will allow us to prove the generic existence of Dirac cones (see Definition \ref{def:Dirac} below) at $K$ and $K'$ for the tight-binding Hamiltonian \eqref{eq:tight-binding}; see Corollary \ref{corr:Dirac}.}
Following the standard convention, we will refer to the points $K$ and $K'$ as \emph{Dirac points}. 
Given that $$\tilde{h}^{\sigma,\sigma'} (k+G) = e^{-i G\cdot \tau^{\sigma,\sigma'} } \tilde{h}^{\sigma,\sigma'} (k), \qquad G \in \cR^*,$$ it is easy to extend the following results to the set of all Dirac points $\pm K + \cR^*$.

\begin{definition}\label{def:Dirac}
    We say that the tight-binding Hamiltonian \eqref{eq:tight-binding} has a \emph{Dirac cone} at $k=K_*$ if the $2 \times 2$ matrix $\tilde{h}$ defined in \eqref{eq:tildeH} has eigenvalues $\eshift \pm v_F |k-K_*| + O(|k-K_*|^2)$ as $|k-K_*| \to 0$, for some real numbers $\eshift$ and $v_F$ with $v_F \ne 0$. 
\end{definition}
The constants $\eshift$ and $v_F$ are known as the ``Dirac energy''  and ``Fermi velocity'' in the physics literature. The existence of Dirac cones has been analyzed in many contexts; see for example \cite{becker2024dirac, berkolaiko2018symmetry, fefferman2012honeycomb, fefferman2018honeycomb,malinovitch2025twistedbilayergraphenecommensurate}. In neutral graphene, the Fermi level occurs exactly at the Dirac energy.
\begin{lemma}\label{lemma:tilde_h_rot}
    Under Assumption \ref{assumption:h}, for all $k \in \mathbb{R}^2$, 
    \begin{align*}
        \tilde{h}^{\sigma \sigma} (\rot_{2\pi/3} (k-K) + K) = \tilde{h}^{\sigma \sigma} (k), &\qquad
        \tilde{h}^{AB} (\rot_{2\pi/3} (k-K) + K) = \sq{e^{-i2\pi/3}} \tilde{h}^{AB} (k),\\
        \tilde{h}^{\sigma \sigma} (\rot_{2\pi/3} (k-K') + K') = \tilde{h}^{\sigma \sigma} (k), &\qquad
        \tilde{h}^{AB} (\rot_{2\pi/3} (k-K') + K') = \sq{e^{i2\pi/3}} \tilde{h}^{AB} (k).
    \end{align*}
\end{lemma}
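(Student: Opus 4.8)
The plan is to prove Lemma \ref{lemma:tilde_h_rot} directly from the series representation of the Bloch Hamiltonian in \eqref{eq:tildeH}, exploiting the rotation symmetry of the hopping function $h$ from Assumption \ref{assumption:h}. First I would note the key algebraic identity for $K$: since $K = \frac{4\pi}{3a}(1,0)$ lies on the boundary of the Brillouin zone, one has $\rot_{2\pi/3} K - K \in \cR^*$; indeed a short computation gives $\rot_{2\pi/3} K = K - b_1$ (or the analogous relation with the reciprocal lattice vectors of \eqref{eq:reciprocal_lattice}), and similarly $\rot_{2\pi/3} K' - K' \in \cR^*$. This is exactly the arithmetic that makes $K$ a ``high-symmetry point'' and is the crux of the argument.

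Next I would carry out the substitution in the series. Write $k' := \rot_{2\pi/3}(k-K) + K$. Then $\tilde{h}^{\sigma,\sigma'}(k') = \sum_{R \in \cR} e^{-i(R+\tau^{\sigma,\sigma'})\cdot k'} h(R+\tau^{\sigma,\sigma'})$. Using $\rot$ orthogonal, $(R+\tau^{\sigma,\sigma'})\cdot(\rot_{2\pi/3}(k-K)) = (\rot_{2\pi/3}^{-1}(R+\tau^{\sigma,\sigma'}))\cdot(k-K) = (\rot_{-2\pi/3}(R+\tau^{\sigma,\sigma'}))\cdot(k-K)$. The idea is then to re-index the sum by the new variable $\tilde r := \rot_{-2\pi/3}(R+\tau^{\sigma,\sigma'})$; since $\rot_{2\pi/3}$ preserves $\cRall$ (noted after Assumption \ref{assumption:h}) and $h(\rot_{2\pi/3} r) = h(r)$, the hopping weight is unchanged, i.e. $h(R+\tau^{\sigma,\sigma'}) = h(\tilde r)$. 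The surviving factors are the exponential in $K$ (which does not rotate away) plus the global phase from the $\rot_{2\pi/3}K = K - (\text{reciprocal vector})$ identity. Carefully tracking these phases, the diagonal case $\sigma = \sigma'$ has $\tau^{\sigma,\sigma} = 0$ so there is no $\tau$-shift and the leftover phase collapses to $1$ (using $G\cdot R \in 2\pi\Zm$ for $G \in \cR^*$, $R \in \cR$), giving invariance; the off-diagonal case $\sigma = A, \sigma' = B$ picks up precisely the residual phase $e^{-i G \cdot \tau^{A,B}}$ with $G = \rot_{2\pi/3}K - K \in \cR^*$, which evaluates to $e^{-i2\pi/3}$ (and its conjugate $e^{i2\pi/3}$ at $K'$, since $K' = -K$ flips the sign in the exponent). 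Using $\tau^A - \tau^B = (0,-a/\sqrt 3)$ from \eqref{eq:lattice_defn} and $b_1$ from \eqref{eq:reciprocal_lattice}, this phase computation is a finite check.

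The main obstacle — really the only nontrivial point — is bookkeeping the interplay between the $\tau^{\sigma,\sigma'}$ shifts and the reciprocal-lattice vector $G = \rot_{2\pi/3}K - K$. One must be careful that the re-indexing $R \mapsto \rot_{-2\pi/3}(R + \tau^{\sigma,\sigma'}) - \tau^{\sigma,\sigma'}$ does indeed send $\cR$ bijectively to $\cR$ (this uses that $\rot_{2\pi/3}$ fixes the lattice $\cR$ and that $\rot_{\pm 2\pi/3}\tau^{\sigma,\sigma'} - \tau^{\sigma,\sigma'}$ lands back in the right coset), and that the leftover phase is genuinely $G$-independent of the summation index modulo $2\pi\Zm$. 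Convergence is not an issue thanks to the super-algebraic decay of $h$ in Assumption \ref{assumption:h}, which lets us freely rearrange the absolutely convergent sum. Once the phase is pinned down, the four claimed identities follow immediately; alternatively, one could derive the off-diagonal phases from Lemma-style quasi-periodicity relation $\tilde{h}^{\sigma,\sigma'}(k+G) = e^{-iG\cdot\tau^{\sigma,\sigma'}}\tilde{h}^{\sigma,\sigma'}(k)$ stated just before Definition \ref{def:Dirac} combined with the pure-rotation invariance that holds before the $\tau$-shift is introduced, but the direct series computation is cleanest.
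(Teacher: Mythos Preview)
Your approach is essentially identical to the paper's: both expand the series \eqref{eq:tildeH}, move the rotation onto the summation variable via orthogonality, re-index using $\rot_{-2\pi/3}(\cR+\tau^{\sigma,\sigma'})=\cR+\tau^{\sigma,\sigma'}$ and the rotation invariance of $h$, and then read off the residual phase from $\tau^{A,B}$ paired with the leftover reciprocal-lattice vector. One small bookkeeping point to watch when you write it out: after the change of variables the relevant vector is $\rot_{-2\pi/3}K-K$ (this is what the paper uses), not $G=\rot_{2\pi/3}K-K$ as you wrote---both lie in $\cR^*$, but they pair with $\tau^{A,B}$ to give opposite phases, so your stated final value $e^{-i2\pi/3}$ is correct while your intermediate identification of $G$ is off by a sign.
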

\begin{proof}
    We write
    \begin{align}\label{eq:rotAA}
        \tilde{h}^{\sigma \sigma} (\rot_{2\pi/3} (k-K) + K) = \sum_{R \in \cR} e^{-i R \cdot (\rot_{2\pi/3} (k-K) + K)} h(R) = \sum_{R \in \cR} e^{-i R \cdot (k-K + \rot_{-2\pi/3} K)} h(R),
    \end{align}
    where we changed variables $R \leftarrow \rot_{-2\pi/3} R$ and used the $2\pi/3$-rotation invariance of $h$ to obtain the last equality. One can then verify that $R \cdot (-K + \rot_{-2\pi/3} K) \in 2\pi \mathbb{Z}$ for all $R \in \cR$, hence \eqref{eq:rotAA} equals $\tilde{h}^{\sigma \sigma} (k)$.

    Similarly, we have
    \begin{align}\label{eq:rotAB}
    \begin{split}
        \tilde{h}^{AB} (\rot_{2\pi/3} (k-K) + K) &=\sum_{R \in \cR} e^{-i (R+\tau^{A,B}) \cdot (\rot_{2\pi/3} (k-K) + K)} h(R+\tau^{A,B})\\
        &= \sum_{R \in \cR} e^{-i (R+\tau^{A,B}) \cdot (k-K + \rot_{-2\pi/3} K)} h(R+\tau^{A,B}),
        \end{split}
    \end{align}
    with the second equality again following from $2\pi/3$-rotation invariance of $h$. Note that the sum in the second line is still over $\cR$ since $\rot_{-2\pi/3} (\cR + \tau^{A,B}) = \cR + \tau^{A,B}$. Using that $\tau^{A,B} \cdot (-K + \rot_{-2\pi/3} K) = \sq{2\pi/3}$, it follows that \eqref{eq:rotAB} equals $\sq{e^{-i2\pi/3}} \tilde{h}^{AB} (k)$.

    The proofs for rotations about $K'$ are similar.
\end{proof}
For $\sigma, \sigma' \in \{A,B\}$, define $\hcs^{\sigma \sigma'} :\mathbb{C} \to \mathbb{C}$ by $\hcs^{\sigma \sigma'} (q) = \tilde{h}^{\sigma \sigma'} ((q_1, q_2) + K)$, where $q = q_1 + i q_2$ with $q_1, q_2 \in \mathbb{R}$. Lemma \ref{lemma:tilde_h_rot} then implies that $\hcs^{\sigma \sigma} (e^{i2\pi/3} q) = \hcs^{\sigma \sigma} (q)$ and $\hcs^{AB} (e^{i2\pi/3} q) = \sq{e^{-i2\pi/3}} \hcs^{AB} (q)$. It immediately follows from Taylor expanding $\hcs^{\sigma \sigma'}$ about $0$ that for all $m, n \in \mathbb{N}_0$,
\begin{align}\label{eq:allowed_derivatives}
    \partial^m_q \partial^n_{\bar{q}} \hcs^{\sigma \sigma} (0) = 0 \quad \text{if} \quad n-m \notin 3 \mathbb{Z}, \qquad
    \partial^m_q \partial^n_{\bar{q}} \hcs^{AB} (0) = 0 \quad \text{if} \quad n-m \notin 3 \mathbb{Z}\sq{+}1,
\end{align}
where $\partial_q := \frac{1}{2} (\partial_{q_1} - i \partial_{q_2})$ and $\partial_{\bar{q}} := \frac{1}{2} (\partial_{q_1} + i \partial_{q_2})$. 
Using a parallel argument for $K'$, we have thus proven
\begin{theorem}\label{thm:Dirac}
    Under Assumption \ref{assumption:h}, for some $\alpha, \alpha', \alpha_o, \alpha_o' \in \mathbb{C}$ and $\alpha_d, \alpha'_d \in \mathbb{R}$, 
    \begin{align*}
        \tilde{h}^{A,B} (K) &= \tilde{h}^{A,B} (K') = 0, \quad \nabla \tilde{h}^{A,B} (K) = \alpha (1,-i), \quad \nabla \tilde{h}^{A,B} (K') = \alpha' (1,i),\\
        \nabla \tilde{h}^{\sigma, \sigma} (K) &= \nabla \tilde{h}^{\sigma, \sigma} (K') = (0,0), \quad
        \nabla^2 \tilde{h}^{A,B} (K) = \alpha_o \begin{pmatrix}
            1 & i\\
            i & -1
        \end{pmatrix}, \quad
        \nabla^2 \tilde{h}^{A,B} (K') = \alpha'_o \begin{pmatrix}
            -1 & i\\
            i & 1
        \end{pmatrix},
    \end{align*}
    where $I_2$ is the $2 \times 2$ identity matrix.
\end{theorem}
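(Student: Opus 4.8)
The plan is to obtain every claimed identity by combining the rotation-equivariance of the complex-variable symbols $\hcs^{\sigma\sigma'}$ recorded just before the theorem with the analogous equivariance at $K'$, and then to convert the resulting vanishing statements for Wirtinger derivatives at the origin into statements about the Cartesian gradient and Hessian of $\tilde h^{\sigma\sigma'}$ at the Dirac points. First I would collect the constraints: from $\hcs^{\sigma\sigma}(e^{i2\pi/3}q)=\hcs^{\sigma\sigma}(q)$ and $\hcs^{AB}(e^{i2\pi/3}q)=e^{-i2\pi/3}\hcs^{AB}(q)$ one has \eqref{eq:allowed_derivatives}, namely $\partial_q^m\partial_{\bar q}^n\hcs^{\sigma\sigma}(0)=0$ unless $n-m\in 3\mathbb{Z}$ and $\partial_q^m\partial_{\bar q}^n\hcs^{AB}(0)=0$ unless $n-m\in 3\mathbb{Z}+1$; running the same argument at $K'$ with the two identities of Lemma \ref{lemma:tilde_h_rot} about rotations around $K'$, the $AB$ symbol now satisfies $\hcs^{AB}(e^{i2\pi/3}q)=e^{i2\pi/3}\hcs^{AB}(q)$, so its surviving multi-indices are those with $n-m\in 3\mathbb{Z}-1$, while the diagonal symbol is unchanged. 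Reading off orders zero, one and two: the order-zero constraint gives $\tilde h^{AB}(K)=\tilde h^{AB}(K')=0$; at order one, for $\hcs^{AB}$ only $\partial_{\bar q}\hcs^{AB}(0)$ survives near $K$ and only $\partial_q\hcs^{AB}(0)$ near $K'$, and no order-one derivative of $\hcs^{\sigma\sigma}$ survives at either point; at order two, for $\hcs^{AB}$ only $\partial_q^2\hcs^{AB}(0)$ survives near $K$ and only $\partial_{\bar q}^2\hcs^{AB}(0)$ near $K'$, while for $\hcs^{\sigma\sigma}$ only the mixed derivative $\partial_q\partial_{\bar q}\hcs^{\sigma\sigma}(0)$ survives at either point.

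Next I would translate these into Cartesian form using $\hcs^{\sigma\sigma'}(q)=\tilde h^{\sigma\sigma'}((q_1,q_2)+K_*)$ together with $\partial_q=\tfrac12(\partial_{q_1}-i\partial_{q_2})$, $\partial_{\bar q}=\tfrac12(\partial_{q_1}+i\partial_{q_2})$ (equivalently $\partial_{q_1}=\partial_q+\partial_{\bar q}$, $\partial_{q_2}=i(\partial_q-\partial_{\bar q})$). At $K$, $\partial_q\hcs^{AB}(0)=0$ forces $\partial_{k_1}\tilde h^{AB}(K)=i\,\partial_{k_2}\tilde h^{AB}(K)$, and setting $\alpha:=\partial_{\bar q}\hcs^{AB}(0)$ yields $\nabla\tilde h^{AB}(K)=\alpha(1,-i)$; the parallel computation at $K'$ gives $\nabla\tilde h^{AB}(K')=\alpha'(1,i)$ with $\alpha':=\partial_q\hcs^{AB}(0)$ there. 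For the Hessians, the single surviving second-order Wirtinger derivative at $K$, $\alpha_o:=\partial_q^2\hcs^{AB}(0)$, gives $\partial_{k_1}^2\tilde h^{AB}(K)=\alpha_o$, $\partial_{k_1}\partial_{k_2}\tilde h^{AB}(K)=i\alpha_o$, $\partial_{k_2}^2\tilde h^{AB}(K)=-\alpha_o$, i.e. $\nabla^2\tilde h^{AB}(K)=\alpha_o\left(\begin{smallmatrix}1&i\\ i&-1\end{smallmatrix}\right)$; at $K'$ the surviving derivative is $\partial_{\bar q}^2\hcs^{AB}(0)$, and naming $\alpha_o':=-\partial_{\bar q}^2\hcs^{AB}(0)$ (the sign is a harmless choice of constant) gives $\nabla^2\tilde h^{AB}(K')=\alpha_o'\left(\begin{smallmatrix}-1&i\\ i&1\end{smallmatrix}\right)$. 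For the diagonal block: no order-one derivative of $\hcs^{\sigma\sigma}$ survives, so $\nabla\tilde h^{\sigma,\sigma}(K)=\nabla\tilde h^{\sigma,\sigma}(K')=0$; and since only $\partial_q\partial_{\bar q}\hcs^{\sigma\sigma}(0)=\tfrac14\Delta\hcs^{\sigma\sigma}(0)$ survives at order two, $\nabla^2\tilde h^{\sigma,\sigma}$ is a scalar multiple of $I_2$ at each Dirac point, and the multiple is real because $\tilde h^{\sigma\sigma}$ is real (self-adjointness gives $\tilde h^{\sigma'\sigma}=\overline{\tilde h^{\sigma\sigma'}}$, hence $\tilde h^{\sigma\sigma}=\overline{\tilde h^{\sigma\sigma}}$); this yields the constants $\alpha_d,\alpha_d'\in\mathbb{R}$ in the statement. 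Throughout, \eqref{eq:AABB} lets all diagonal conclusions be phrased for a single symbol $\tilde h^{\sigma,\sigma}$ since $\tilde h^{A,A}=\tilde h^{B,B}$.

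I do not anticipate a genuine difficulty: once Lemma \ref{lemma:tilde_h_rot} and \eqref{eq:allowed_derivatives} are available, the proof is essentially bookkeeping. The only point requiring care is the $K'$ analysis: the rotation eigenvalue of $\tilde h^{AB}$ there is $e^{i2\pi/3}$ rather than $e^{-i2\pi/3}$, so the admissible residue class shifts from $3\mathbb{Z}+1$ to $3\mathbb{Z}-1$, which is exactly what swaps the roles of $\partial_q$ and $\partial_{\bar q}$ and flips the sign pattern of $\nabla^2\tilde h^{AB}(K')$ relative to $\nabla^2\tilde h^{AB}(K)$; one must track this sign bookkeeping and the Wirtinger-to-Cartesian conversion consistently so that the $2\times 2$ matrices emerge in precisely the stated form.
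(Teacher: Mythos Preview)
Your proposal is correct and follows essentially the same approach as the paper: the paper also derives \eqref{eq:allowed_derivatives} from Lemma \ref{lemma:tilde_h_rot}, then says only ``Using a parallel argument for $K'$, we have thus proven'' the theorem, leaving the Wirtinger-to-Cartesian conversion and the $K'$ eigenvalue swap entirely implicit. You have simply filled in the bookkeeping that the paper omits.
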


Combining the above theorem with \eqref{eq:AABB}, we have shown that
\begin{align}\label{eq:Dirac}
    \tilde{h}(K+q) = \tilde{h}^{AA} (K) I_2 + \begin{pmatrix}
        0 & \alpha (q_1 - i q_2)\\
        \bar{\alpha} (q_1 + i q_2) & 0
    \end{pmatrix} + O(|q|^2),
\end{align}
which proves that the matrix $\tilde{h}$ has eigenvalues $\tilde{h}^{AA} (K) \pm |\alpha| |q| + O (|q|^2)$.
Therefore, we have 
\begin{corollary}\label{corr:Dirac}
    Provided $\alpha \ne 0$, the tight-binding Hamiltonian \eqref{eq:tight-binding} has a Dirac cone at $k=K$ with Fermi velocity $v_F = |\alpha|$.
    Similarly, the Hamiltonian has a Dirac cone at $k=K'$ if $\alpha ' \ne 0$.
\end{corollary}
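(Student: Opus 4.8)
The plan is to read the statement off directly from Theorem~\ref{thm:Dirac} together with the expansion \eqref{eq:Dirac}, so the argument is short. First I would recall that the self-adjointness of $H$ (i.e. $\tilde{h}^{\sigma',\sigma}(k)=\overline{\tilde{h}^{\sigma,\sigma'}(k)}$) makes the $2\times 2$ Bloch Hamiltonian $\tilde{h}(k)$ Hermitian for every $k$; in particular $\tilde{h}^{AA}(k)$ is real, and combined with \eqref{eq:AABB} we get $\tilde{h}^{AA}(k)=\tilde{h}^{BB}(k)\in\mathbb{R}$. Next I would Taylor-expand the entries of $\tilde{h}$ about $K$ to second order, inserting the data from Theorem~\ref{thm:Dirac}: $\tilde{h}^{A,B}(K)=0$, $\nabla\tilde{h}^{A,B}(K)=\alpha(1,-i)$, and $\nabla\tilde{h}^{\sigma,\sigma}(K)=(0,0)$ (so the diagonal has no linear term). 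This produces exactly \eqref{eq:Dirac},
\[
\tilde{h}(K+q)=E\,I_2+\begin{pmatrix}0 & \alpha(q_1-iq_2)\\ \bar{\alpha}(q_1+iq_2) & 0\end{pmatrix}+R(q),
\]
with $E:=\tilde{h}^{AA}(K)\in\mathbb{R}$ and $\|R(q)\|=O(|q|^2)$ as $|q|\to 0$.

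Then I would diagonalize the unperturbed matrix: the off-diagonal matrix is Hermitian with eigenvalues $\pm|\alpha(q_1-iq_2)|=\pm|\alpha|\,|q|$, so $E\,I_2$ plus that matrix has eigenvalues $E\pm|\alpha|\,|q|$. Finally, since $\tilde{h}(K+q)$ is Hermitian and $R(q)$ is a Hermitian perturbation of operator norm $O(|q|^2)$, Weyl's eigenvalue perturbation inequality gives that the two eigenvalues of $\tilde{h}(K+q)$ are $E\pm|\alpha|\,|q|+O(|q|^2)$. When $\alpha\ne 0$ this is precisely the condition in Definition~\ref{def:Dirac} with Dirac energy $\eshift=E=\tilde{h}^{AA}(K)$ and Fermi velocity $v_F=|\alpha|\ne 0$. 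The proof at $K'$ is verbatim the same, using instead the $K'$-entries from Theorem~\ref{thm:Dirac} (notably $\nabla\tilde{h}^{A,B}(K')=\alpha'(1,i)$) and the hypothesis $\alpha'\ne 0$.

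There is essentially no serious obstacle here, since all the structural work has been done in Theorem~\ref{thm:Dirac}. The only point meriting a word of care is absorbing the quadratic remainder $R(q)$ into the eigenvalue expansion: this is exactly Weyl's (or Lidskii's) inequality for Hermitian matrices and requires no spectral-gap hypothesis, so the $O(|q|^2)$ remainder in the eigenvalues is automatic. One should also note explicitly that $E=\tilde{h}^{AA}(K)$ is real — which follows from $\tilde{h}^{\sigma',\sigma}=\overline{\tilde{h}^{\sigma,\sigma'}}$ — so that the Dirac energy in Definition~\ref{def:Dirac} is a genuine real number.
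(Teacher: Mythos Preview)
Your proposal is correct and follows essentially the same approach as the paper: the paper's proof consists of the two sentences preceding the corollary, namely the expansion \eqref{eq:Dirac} (derived from Theorem~\ref{thm:Dirac} and \eqref{eq:AABB}) followed by the observation that the eigenvalues of $\tilde{h}(K+q)$ are $\tilde{h}^{AA}(K)\pm|\alpha||q|+O(|q|^2)$. Your version is simply more explicit, in particular spelling out Weyl's inequality to justify absorbing the $O(|q|^2)$ remainder into the eigenvalues and noting why $\tilde{h}^{AA}(K)$ is real---both points the paper leaves implicit.
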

\begin{remark}
    For every tight-binding model of graphene in the physics literature, the function $h$ is real-valued (i.e., $[H,\cC]=0$ with $\cC f:= \bar{f}$ the complex conjugation operator), which implies that $\alpha' = -\bar{\alpha}$.
    Indeed,
    \begin{align*}
        \nabla \tilde{h}^{A,B} (k)= -i \sum_{R \in \cR} (R - \tau^B) e^{-ik\cdot (R - \tau^B)} h (R -\tau^B),
    \end{align*}
    so that
    \begin{align*}
        -\overline{\nabla \tilde{h}^{A,B} (K')}= -i \sum_{R \in \cR} (R - \tau^B) e^{-iK\cdot (R - \tau^B)} \overline{h (R -\tau^B)},
    \end{align*}
    with the above right-hand side equal to
    $\nabla \tilde{h}^{A,B} (K)$ if $h$ is real. 
\end{remark}

\subsection{Bilayer tight-binding model}\label{subsec:bilayer}

We now define a discrete model for twisted bilayer graphene that serves as the starting point for the analysis in this paper.
For $-\pi < \theta \le \pi$ and $j \in \{1,2\}$, set
\begin{align*}
    \cR_j := \rot_{\theta_j/2} \cR, \qquad \theta_j := (-1)^j \theta, \qquad \rot_\phi:= \begin{pmatrix}
    \cos \phi & -\sin \phi\\
    \sin \phi & \cos \phi
\end{pmatrix}, 
\end{align*}
where $\cR$ is the untwisted monolayer graphene lattice defined in \eqref{eq:lattice}. Thus $\cR_j$ denotes the lattice corresponding to layer $j$, and $\theta$ the relative twist angle between the two layers. 
Recall the definitions 
of $\tau^\sigma$ and $\Gamma$ in Section \ref{subsec:mono}, and for $\ls \in \Gamma$ define
\begin{align}\label{eq:tau_j}
\tau_j^\sigma := \rot_{\theta_j/2} \left(\tau^\sigma + (-1)^j \frac{\ls}{2}\right). 
\end{align}
The positions of the atoms in layer $j$ are then given by the set $(\cR_j + \tau^A_j) \cup (\cR_j + \tau^B_j)$, so that $\ls$ is the displacement of layer $2$ relative to layer $1$ before twisting.
Although the results in this paper are valid for any $\ls$, there are two natural choices. The first is $\ls = 0$, which corresponds to the untwisted graphene sheets aligned perfectly on top of each other. The second, known as the ``Bernal stacking'' configuration, is $\ls = \pm \tau^{A,B} \mod \cR$, which is the lowest-energy configuration of the untwisted system \cite{carr2018relaxation}.

We introduce the Hilbert space $\cH := \ell^2 (\cR_1; \mathbb{C}^2)\oplus \ell^2 (\cR_2; \mathbb{C}^2)$ with norm given by
\begin{align}\label{eq:cH}
    \norm{\varphi}_\cH := \left( \sum_{i\in \{1,2\}} \sum_{R_i \in \cR_i} \sum_{\sigma \in \{A,B\}} |(\varphi_i^\sigma)_{R_i}|^2 \right)^{1/2}.
\end{align}
We may write any function $\varphi \in \cH$ as a $4$-vector, $\varphi = (\varphi^A_1, \varphi^B_1, \varphi^A_2, \varphi^B_2)$, where each $\varphi^\sigma_j$ belongs to $\ell^2 (\cR_j; \mathbb{C})$.
We will also write $\varphi_j := (\varphi^A_j, \varphi^B_j) \in \ell^2 (\cR_j; \mathbb{C}^2)$ to denote the entries of $\varphi$ corresponding to layer $j$. 

\srq{The TBG tight-binding dynamics are modeled by the Schr\"odinger equation $$i \partial_\microtime \psi = H \psi, \qquad \psi \vert_{\microtime = 0} = \psi_0,$$ where the Hamiltonian $H : \cH \to \cH$ is a self-adjoint operator to be defined below.}
\srq{As described in Section \ref{subsec:main}, the analysis in this paper is restricted to a regime in which the two graphene sheets are \emph{weakly} coupled. We therefore introduce the relevant asymptotic parameter $0 < \eps < 1$, which will also control the width of the wave-packet envelope for $\psi_0$ (see Theorems \ref{thm:main2} and \ref{thm:main}) and the twist angle $\theta$ (see \eqref{eq:beta_def}, but in this section we can think of $\theta$ and $\eps$ as independent parameters). This is the same asymptotic regime that was analyzed in \cite{watson2023bistritzer}.}

\srq{The Hamiltonian $H$ is then defined by}


\begin{align}\label{eq:H}
    H \varphi= \begin{pmatrix}
        H_{11}& \Hperp\\
        \Hperp^{\dagger} & H_{22}
    \end{pmatrix}
    \begin{pmatrix}
        \varphi_1\\
        \varphi_2
    \end{pmatrix},
\end{align}
where
\begin{align}\label{eq:intra}
    (H_{jj}\varphi_j)^{\sigma}_{R_j} := \sum_{R_j' \in \cR_j} \sum_{\sigma'} h (\rot_{-\theta_j/2} (R_j - R_j' +\tau_j^{\sigma \sigma'})) (\varphi_j^{\sigma'})_{R'_j}, \qquad \tau_j^{\sigma \sigma'} := \tau_j^\sigma- \tau_j^{\sigma'}
\end{align}
and
\begin{align}\label{eq:Hperp}
    (\Hperp \varphi_2)^\sigma_{R_1} &:= \sum_{R_2 \in \cR_2} \sum_{\sigma'} \hperp (R_1 - R_2+\tau_1^{\sigma}-\tau_2^{\sigma'};\eps) (\varphi^{\sigma'}_{2})_{R_2},
\end{align}
are given in terms of an intralayer and interlayer hopping function $h$ and $\hperp$ satisfying the following assumptions.
The intralayer hopping function $h$ in \eqref{eq:intra} satisfies Assumption \ref{assumption:h} \sq{and the self-adjointness condition $h(-r) = \overline{h(r)}$}. Note that $\rot_{-\theta_j/2} (\cR_j +\tau_j^{\sigma \sigma'}) = \cR + \tau^{\sigma \sigma'}$, hence $h$ in \eqref{eq:intra} is evaluated only at points in $\cRall$ and so is well defined. 
The interlayer hopping function
$\hperp : \mathbb{R}^2 \times (0,1) \to \mathbb{C}$ in \eqref{eq:Hperp} satisfies 
the following
\begin{assumption}\label{assumption:hperp2}
    Fix $0 <\sdpt \le 1$. Assume $\hathperp (\cdot \; ; \eps) \in C^2 (\mathbb{R}^2;\mathbb{C})$ is of the form 
    \begin{align}\label{eq:sep}
        \hathperp (k;\eps) = \hathperprad (|k|; \eps) \hathperpang (\angvar{k}), \qquad k \in \mathbb{R}^2, \quad 0 < \eps < 1,
    \end{align}
    for some $\hathperprad (\cdot \; ; \eps) \in C^2 ([0, \infty); \mathbb{R})$ and $\hathperpang \in C^2 (\mathbb{S}; \mathbb{C})$,
    where $\angvar{k} := k/|k|$.
    Moreover, suppose 
    \begin{align}\label{eq:neighbor_bds}
        \hathperprad (|K|; \eps) = \eps, \qquad  |\hathperprad ' (|K|; \eps)| \le C \eps^{\frac{1+\sdpt}{2}}, \qquad |\hathperprad (2|K|; \eps)| \le C \eps^{\frac{3+\sdpt}{2}},
\end{align}
    and that there exists $0 < \swp \le 1$ such that
    \begin{equation}
    \begin{alignedat}{2}\label{eq:4bounds}
        |\hathperprad (|k|; \eps)| + |\hathperprad ' (|k|; \eps)| + |\hathperprad '' (|k|; \eps)| &\le C, \qquad &&|k| \ge 0,\\
        |\hathperprad (|k|; \eps)| + |\hathperprad ' (|k|; \eps)| + |\hathperprad '' (|k|; \eps)| &\le C (C \eps)^{\sdpt |k|/|K|}, \qquad &&|k| \ge (1-\swp) |K|,\\
        |\hathperprad (|k|; \eps)| + |\hathperprad ' (|k|; \eps)| &\le C (C\eps)^{\frac{1+\sdpt}{2}|k|/|K|}, \qquad &&|k| \ge (1-\swp) 2 |K|,\\
        |\hathperprad (|k|; \eps)| &\le C (C\eps)^{\frac{2+\sdpt}{\sqrt{7}}|k|/|K|}, \qquad &&|k| \ge (1-\swp) \sqrt{7} |K|,
    \end{alignedat}
    \end{equation}
    uniformly in $0 < \eps < 1$, where $\hathperprad '$ and $\hathperprad ''$ are the first and second derivatives of $\hathperprad(\cdot \; ; \eps)$.
\end{assumption}


The decomposition \eqref{eq:sep} and first two bounds in \eqref{eq:neighbor_bds} imply that
\begin{align}\label{eq:nabla_bd_K}
    |\nabla \hathperp (k; \eps)| \le C \eps^{\frac{1+\sdpt}{2}}, \qquad \qquad |k| = |K|, \quad 0 < \eps < 1.
\end{align}
Moreover, by \eqref{eq:4bounds} we have
\begin{equation}
    \begin{alignedat}{2}\label{eq:4bounds_full}
        \absval{\hathperp (k; \eps)} +\absval{\nabla \hathperp (k; \eps)} + \norm{\nabla^2 \hathperp (k; \eps)}_{{\rm F}} &\le C, \qquad && k \in \mathbb{R}^2 ,\\
        \absval{\hathperp (k; \eps)} +\absval{\nabla \hathperp (k; \eps)} +\norm{\nabla^2 \hathperp (k; \eps)}_{{\rm F}} &\le C (C\eps)^{\sdpt |k|/|K|}, \qquad &&|k| \ge (1-\swp) |K|,\\
        \absval{\hathperp (k; \eps)} + \absval{\nabla \hathperp (k; \eps)} &\le C (C\eps)^{\frac{1+\sdpt}{2}|k|/|K|}, \qquad &&|k| \ge (1-\swp) 2 |K|,\\
        \absval{\hathperp (k; \eps)} &\le C (C\eps)^{\frac{2+\sdpt}{\sqrt{7}}|k|/|K|}, \qquad &&|k| \ge (1-\swp) \sqrt{7} |K|,
    \end{alignedat}
\end{equation}
uniformly in $0 < \eps < 1$, where
the above derivatives are with respect to $k$, and 
$\norm{\cdot}_{{\rm F}}$ is the Frobenius norm.
\srq{We will use these bounds to approximate $\hathperp (k;\eps)$ by its (zeroth, first, or second order) Taylor series expansion \eqref{eq:filtered_Taylor} about the points $k = \rot_{\theta/2} (K+G)$ with $G \in \cR^*$, using the exponential decay of $\hathperp$ and its derivatives to control the error when $|G_2|$ is large. The last three bounds in \eqref{eq:4bounds_full} imply that 
\begin{align*}
    \eps^2 \norm{\nabla^2 \hathperp (k_1; \eps)}_{\rm F} + \eps |\nabla \hathperp (k_2; \eps)| + |\hathperp (k_3; \eps)| \le C \eps^{2+\sdpt}, \qquad (|k_1|, |k_2|, |k_3|) = (|K|, 2|K|, \sqrt{7}|K|),
\end{align*}
which will be used to obtain the almost $\eps^{2+\sdpt} \microtime$ order of convergence in Theorems \ref{thm:main2} and \ref{thm:main}.
The importance of these specific $|k_j|$ reflects the geometry of the monolayer reciprocal lattice, as the points $k \in \cR^*$ that are closest to $-K$ satisfy $|k+K| \in \{|K|, 2|K|, \sqrt{7}|K|\}$; see Figure \ref{fig:NNN}. As we will see in the proof of Lemma \ref{lemma:bilayer1}, controlling relevant derivatives of $\hathperp$ at these ``nearest neighbors'' is critical to establish our main results. Note that the effective continuum model \eqref{eq:eff} depends on $\hathperprad$ only through the functions $\radNNN (\eps)$ and $\radpNN (\eps)$, which are determined by $\hathperprad (2|K|;\eps)$ and $\hathperprad ' (|K|; \eps)$; see \eqref{eq:rad_ang}.}




\srq{We now show that the set of interlayer hopping functions satisfying Assumption \ref{assumption:hperp2} is nonempty.}
\begin{example} \label{ex:2}
    Fix $0 < \hoppingparam < \frac{\sqrt{3}}{2} |K|$ and $\lparam > 0$, and consider the hopping function
    \begin{align}\label{eq:example_hopping}
        \hperp (r;\eps) := \frac{e^{-\hoppingparam \sqrt{|r|^2+\ell^2 (\eps)}}}{\sqrt{|r|^2+\ell^2 (\eps)}}, \qquad \ell (\eps) := -\frac{1}{\sqrt{|K|^2+\hoppingparam^2}}\log \left( \frac{\lparam \eps \sqrt{|K|^2 + \hoppingparam^2}}{2\pi}\right).
    \end{align}
    The Fourier transform of $\hperp (\cdot \; ; \eps)$ is
    \begin{align*}
        \hathperp (k; \eps) = 2\pi \frac{e^{-\ell (\eps) \sqrt{|k|^2 + \hoppingparam^2}}}{\sqrt{|k|^2+\hoppingparam^2}} = 
        \frac{2\pi}{\sqrt{|k|^2+\hoppingparam^2}} \left( \frac{\lparam \eps \sqrt{|K|^2 + \hoppingparam^2}}{2\pi}\right)^{\sqrt{\frac{|k|^2+\hoppingparam^2}{|K|^2+\hoppingparam^2}}}
        =: \lparam \hathperprad (|k|; \eps),
    \end{align*}
    where $\hathperprad (|K|; \eps) = \eps$. The definition of $\hoppingparam$ implies that the constant
    \begin{align*}
        \sdpt_+ := \frac{\sqrt{7} |K|}{\sqrt{|K|^2+\hoppingparam^2}}-2
    \end{align*}
    is positive. Moreover, for any $0 < \sdpt < \sdpt_+$, we have
    \begin{align*}
        \frac{2 + \sdpt}{\sqrt{7}} \frac{|k|}{|K|} < \sqrt{\frac{|k|^2+\hoppingparam^2}{|K|^2+\hoppingparam^2}}, \qquad \qquad k \in \mathbb{R}^2,
    \end{align*}
    and thus
    \begin{align*}
        |\hathperprad (|k|; \eps)| + |\hathperprad '(|k|; \eps)| + |\hathperprad ''(|k|; \eps)| \le C (C\eps)^{\frac{2+\sdpt}{\sqrt{7}}|k|/|K|}, \qquad \qquad k \in \mathbb{R}^2, \quad 0 < \eps < 1.
    \end{align*}
    The bounds in \eqref{eq:neighbor_bds} and \eqref{eq:4bounds} follow, hence $\hathperp$ satisfies Assumption \ref{assumption:hperp2} as desired.
\end{example}
We refer to Appendix \ref{sec:examples} for other examples of interlayer hopping functions satisfying Assumption \ref{assumption:hperp2}. These examples clarify that 
the decay assumptions \eqref{eq:4bounds} capture an important physical phenomenon: locality of interlayer hopping in momentum space due to separation of the layers. Indeed, one can interpret $\eps$ as a parameter measuring the (out of plane) distance $\ell$ between the layers, with $\ell$ a monotonically decreasing function of $\eps$. Therefore, small $\eps$ would correspond to large $\ell$ and hence weakly interacting layers. In several applications of interest (see Examples \ref{ex:2} and \ref{ex:1}), the function $r \mapsto \hperp (r; \eps)$ is analytic in the strip $\mathbb{R} \times (-\ell,\ell) \subset \mathbb{C}$, with $\ell \sim |\log \eps|$ as $\eps \to 0$.  

The most popular choice of hopping function in applications \cite{castro2007}  has been the Slater-Koster interlayer hopping function \cite{slater1954simplified} given by
\begin{align}\label{eq:slater-koster}
    \hperp (\vec r; \eps) = V_{pp\pi}^0 \exp\left(-\frac{\sqrt{|\vec r|^2+\ell^2}-a_0}{r_0}\right)\left(\frac{|\vec r|^2}{|\vec r|^2+\ell^2}\right) + V_{pp\sigma}^0\exp\left(-\frac{\sqrt{|\vec r|^2+\ell^2}-\ell}{r_0}\right)\left(\frac{\ell^2}{|\vec r|^2+\ell^2}\right),
\end{align}
where $\ell = \ell (\eps)$ is still the interlayer separation, while $r_0 > 0$ and $V_{pp\pi}^0, V_{pp\sigma}^0, a_0 \in \mathbb{R}$ are $\eps$-independent parameters. Since $\hperp$ is radial, so is its Fourier transform, and thus the decomposition \eqref{eq:sep} trivially holds.
\srq{Moreover, we observe that $\hperp (\cdot \; ; \eps)$ is exponentially decaying and analytic in $\mathbb{R} \times (-\ell, \ell)$, which implies that for any $\ell_- < \ell$,
\begin{align}\label{eq:SK_decay}
    \absval{\hathperp (k; \eps)} +\absval{\nabla \hathperp (k; \eps)} + \norm{\nabla^2 \hathperp (k; \eps)}_{{\rm F}} \le C e^{-\ell_- |k|}, \qquad k \in \mathbb{R}^2.
\end{align}
But $\hathperp (k; \eps)$ is not known analytically, making it difficult to verify the estimates in Assumption \ref{assumption:hperp2}. In particular, it is unclear whether one could choose $\ell \gtrsim |\log \eps|$ such that 
$\hathperprad (|K|; \eps) = \eps$. 
This equality could be satisfied if, for example, $\absval{\hathperprad (|K|; \eps)} \ge c e^{-\ell \delta}$ for some $c, \delta > 0$; see \cite[Equation (2.68)]{watson2023bistritzer}. More recently, accurate interlayer hopping functions have been proposed by Fang-Kaxiras, see Remark \ref{remark:weaken_assumptions}. Although we cannot verify Assumption \ref{assumption:hperp2} for these hopping functions, we are confident that our assumption captures all of their essential features.
}
\srq{
\begin{remark}\label{remark:interlayer_hopping_strength}
    As we will see in the derivation of our effective continuum model (more specifically, the proof of Lemma \ref{lemma:tildeF12}), the assumption \eqref{eq:neighbor_bds} that $\hathperprad (|K|; \eps)=\eps$ implies that the strength of interlayer coupling is $O(\eps)$. Indeed, the factors $\lambda_j$ in the leading-order off-diagonal term $\hoppingT(r)$ from \eqref{eq:HBM_1} are given by $\lambda_j = \eps^{-1} \hathperp (\rot_{\pi j/3} K; \eps)$; see also \eqref{eq:rad_ang}. Thus the ratio of energy scales of the tight-binding Hamiltonian $H$ and $\eps$-independent first-order BM Hamiltonian $\HBM$ is $O(\eps)$. This will be made evident also by the $\eps$ scaling of the time variable in \eqref{eq:def_phi}. 
\end{remark}
}

\begin{remark}\label{remark:weaken_assumptions}
    The separability assumption in \eqref{eq:sep} is a natural way to extend the theory from \cite{watson2023bistritzer} to non-radial interlayer hopping functions. 
    That is, one could always take $\hathperpang$ to be constant, in which case $\hathperp$, and thus $\hperp$, would be radial.
    We note that the most realistic tight-binding models include angular dependence of the interlayer hopping function, as this is necessary to capture the overlap of Wannier orbitals of different layers; see \cite{Fang_Kaxiras_2016}.
\end{remark}
\begin{remark}\label{remark:linearity}
    By linearity, the results in this paper easily generalize to  
    a much larger class of 
    $C^2 (\mathbb{R}^2; \mathbb{C})$ hopping functions satisfying \eqref{eq:4bounds_full}. 
    Indeed, any such function can be 
    decomposed into its real and imaginary parts as
    \begin{align*}
        \hathperp (k; \eps) = \hathperp^{({\rm re})} (k; \eps) + i \hathperp^{({\rm im})} (k; \eps),
    \end{align*}
    where each $\hathperp^{(j)} \in C^2 (\mathbb{R}^2; \mathbb{R})$ has the Fourier decomposition
    \begin{align*}
        \hathperp^{(j)} (k; \eps) = \sum_{n =0}^\infty a_n^{(j)}(|k|; \eps) \cos (n \phi_k) + \sum_{n =1}^\infty b_n^{(j)}(|k|; \eps) \sin (n \phi_k), \qquad k = (|k| \cos \phi_k, |k| \sin \phi_k),
    \end{align*}    
    and thus can be written as a linear combination of functions of the form \eqref{eq:sep}. 
    In order to recover a BM-type model at leading order (that is, to prevent the interlayer blocks from dominating the intralayer Dirac term), the assumption that $\hathperprad (|K|; \eps) = \eps$ would then get replaced by 
    \begin{align*}
        a_n^{({\rm re})} (|K|; \eps)= O(\eps), \quad b_n^{({\rm re})} (|K|; \eps)= O(\eps), \quad a_n^{({\rm im})} (|K|; \eps)= O(\eps), \quad b_n^{({\rm im})} (|K|; \eps) = O(\eps) \qquad \text{for all } n.
    \end{align*}
    The other bounds from \eqref{eq:neighbor_bds} would become
    \begin{align*}
        (a_n^{(j)})' (|K|; \eps)= O(\eps^{\frac{1+\sdpt}{2}}), \quad (b_n^{(j)})' (|K|; \eps)= O(\eps^{\frac{1+\sdpt}{2}}), \quad a_n^{(j)} (2|K|; \eps)= O(\eps^{\frac{3+\sdpt}{2}}), \quad b_n^{(j)} (2|K|; \eps)= O(\eps^{\frac{3+\sdpt}{2}}). 
    \end{align*}
\end{remark}
\begin{remark}\label{remark:rv}
    In the physics literature, it is assumed that the interlayer hopping function is real-valued. This would require $\hathperp (-k;\eps) = \overline{\hathperp (k; \eps)}$ and $\hathperpang (-\angvar{k}) = \overline{\hathperpang (\angvar{k})}$.
\end{remark}


\begin{figure}
    \centering
    \includegraphics[scale=.75]{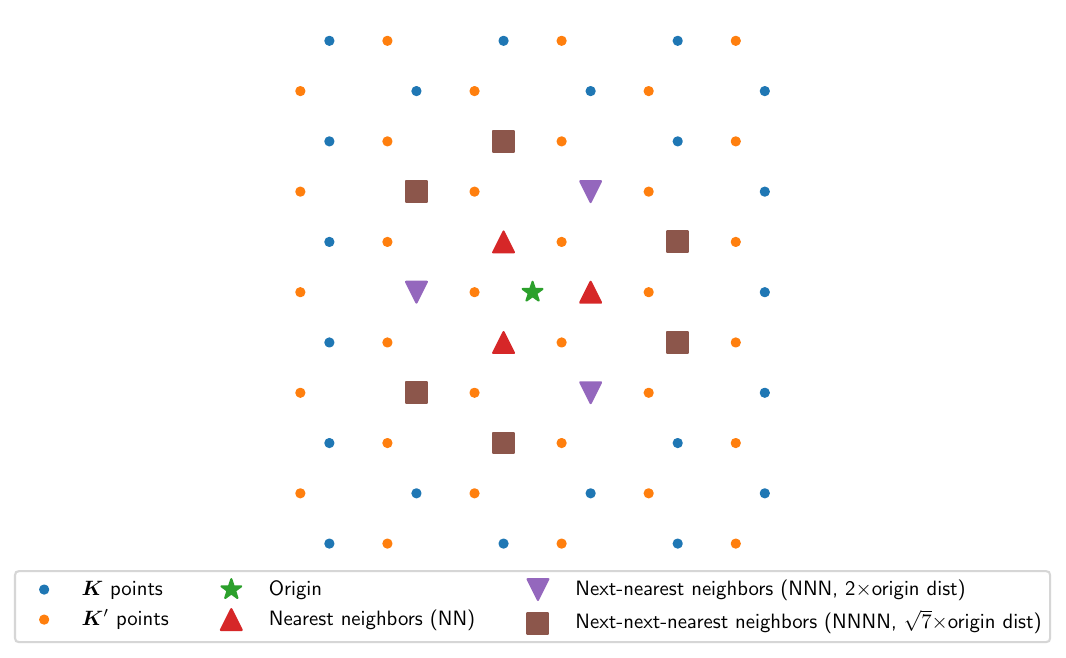}
    \caption{Geometry of the monolayer graphene reciprocal lattice.}
     \label{fig:NNN}
\end{figure}






\section{Effective continuum model}\label{sec:effective}
In this section, we present our two main results, Theorems \ref{thm:main2} and \ref{thm:main}. These theorems establish the validity of a second-order PDE model for TBG in the limit that $\theta$ and $\eps$ both go to $0$.
This asymptotic regime corresponds to small twist angles and weakly coupled layers, \srq{with the latter enforced by the decay of the interlayer hopping function in $\eps$; recall \eqref{eq:neighbor_bds}-\eqref{eq:4bounds}.}
\srq{For ease of exposition, we henceforth assume the existence of real 
constants $\beta \ne 0$ and $0 < \eps_0 < 1$ such that
\begin{align}\label{eq:beta_def}
\theta = 2 \sin^{-1} (\beta \eps/2), \qquad 0 < \eps < \eps_0.
\end{align}
This assumption immediately implies that
$ 
    2 \sin (\theta/2) = \beta \eps 
$
for all $0 < \eps < \eps_0$,
with $\theta \sim \beta \eps$ as $\eps \to 0$.}
A natural extension of the exact formula \eqref{eq:beta_def} for $\theta$ would be to assume that $\theta = \beta_1 \eps + \beta_2 \eps^2 + O(\eps^3)$ as $\eps \to 0$,
in which case the Hamiltonian $\HBMt$ (see below) would acquire an additional term involving $\beta_2$.

In Theorem \ref{thm:main2}, the effective model takes the form of a multiple-scales expansion; that is, a collection of four $\eps$-independent first-order Schr\"odinger equations whose solutions can be used to construct a second-order accurate approximation of the tight-binding dynamics. On the other hand, Theorem \ref{thm:main} presents the effective model as a single $\eps$-dependent second-order PDE. These results share the same order of convergence, though Theorem \ref{thm:main} requires slightly stronger assumptions on parameters of the tight-binding model and a higher degree of regularity of the initial data.

The proofs of Theorems \ref{thm:main2} and \ref{thm:main} are deferred to Appendix \ref{sec:proofs}.

\subsection{Multiscale expansion and main result}\label{subsec:multiscale}
This subsection presents an effective second-order model as a 
collection of four Schr\"odinger equations, each independent of the asymptotic parameter $\eps$. First, we introduce some notation.
Parameters of the effective model include the values $\hathperp (\rot_{2\pi j/3} K;\eps)$, $\hathperp (-2 \rot_{2\pi j/3} K;\eps)$ and $\nabla \hathperp(\rot_{2\pi j/3} K;\eps)$
for $j \in \{0,1,2\}$, with $K$ given by \eqref{eq:Dirac_points}. 
\srq{Recalling the decomposition $\hathperp (k;\eps) = \hathperprad (|k|; \eps) \hathperpang (\angvar{k})$ from \eqref{eq:sep},} it will thus be useful to define
\begin{align}\label{eq:rad_ang}
\begin{split}
    \radNNN (\eps) &:= \eps^{-1}\hathperprad (2|K|;\eps) \sq{\; \in \mathbb{R}},\qquad
     \radpNN (\eps) := \hathperprad ' (|K|;\eps) \sq{\; \in \mathbb{R}},\\
    \angs_i &:= \hathperpang (\rot_{\pi i/3} \angvar{K}) \sq{\; \in \mathbb{C}}, \qquad
    \angp_j := \frac{1}{|K|} \hathperpang ' (\rot_{2\pi j/3} \angvar{K}) \sq{\; \in \mathbb{C}},
\end{split}
\end{align}
for $i \in \{0,1,\dots, 5\}$ and $j \in \{0,1,2\}$, where $\angvar{K} := K/|K| = (1,0)$. \srq{Here, $\hathperprad '$ is the derivative of $\hathperprad (|k|; \eps)$ with respect to $|k|$, while $\hathperpang ' (\angvar{k}) := \lim_{\phi \to 0} \phi^{-1} (\hathperpang (\rot_\phi \angvar{k}) - \hathperpang (\angvar{k}))$ is the derivative of $\hathperpang$ with respect to counterclockwise rotations of $\angvar{k}$.}
\srq{Note that the $\lambda_i$ involve $\pi/3$ (rather than $2 \pi/3$) rotations of $\angvar{K}$ since $\hathperpang$ must be evaluated at $2\pi/3$ rotations of both $\angvar{K}$ and $-\angvar{K}$. These points correspond to the six triangles in Figure \ref{fig:NNN}.}

In addition to 
these $\lambda_i$ and $\mu_j$, the effective second-order model will involve the constant $\beta$ from \eqref{eq:beta_def} and parameters
$\alpha, \alpha_o \in \mathbb{C}$ and $\alpha_d \in \mathbb{R}$ in Theorem \ref{thm:Dirac}. 
Recall also the lattice constant $a$ from \eqref{eq:lattice_defn}, the reciprocal lattice vectors $b_1$ and $b_2$ defined by \eqref{eq:reciprocal_lattice}, and the interlayer displacement $\ls$ in \eqref{eq:tau_j}. Set
\begin{align}\label{eq:svecs}
    \svec_1 := \frac{4\pi \beta}{3a} (0,-1), \qquad 
    \svec_2 &:= \frac{4\pi \beta}{3a} \left(
        \frac{\sqrt{3}}{2}, \frac{1}{2}
    \right), \qquad
    \svec_3 := \frac{4\pi \beta}{3a} \left(-\frac{\sqrt{3}}{2}, \frac{1}{2}\right), 
\end{align}
and observe that $s_j = \rot_{2\pi/3} s_{j-1}$. Note also the difference between the layer $1$ and layer $2$ Dirac points corresponding to $K$ is 
\begin{align}\label{eq:Dirac_diff}
K_1 - K_2 := \rot_{-\theta/2}K - \rot_{\theta/2} K = \begin{pmatrix}
    0 & 2 \sin (\theta/2)\\
    -2 \sin (\theta/2) & 0
\end{pmatrix} K = \eps \svec_1,
\end{align}
with similar expressions relating differences of other corresponding Dirac points (obtained by translating $K$ in \eqref{eq:Dirac_diff} by monolayer reciprocal lattice vectors) to the other $\svec_j$; see Figure \ref{fig:bands} (left panel) or \cite[Figure 4]{watson2023bistritzer}.
Finally, we remark that differences between the $\svec_j$ are restricted to the lattice
\begin{align*}
    \cR_m^* = \{n_1 b_{m,1} + n_2 b_{m,2} : (n_1, n_2) \in \mathbb{Z}^2\}, \qquad b_{m,1} := \frac{4\pi \beta}{\sqrt{3} a} \begin{pmatrix}
        1/2\\
        -\sqrt{3}/2
    \end{pmatrix},
    \qquad b_{m,2} := \frac{4\pi \beta}{\sqrt{3} a} \begin{pmatrix}
        1/2\\
        \sqrt{3}/2
    \end{pmatrix},
\end{align*}
which is the reciprocal lattice to the moir\'e lattice 
$\cR_m$ defined in \eqref{eq:moire}. We will show with Proposition \ref{prop:continuum_symmetries} (\ref{symm:translation}) that our effective second-order model exhibits translational symmetry with respect to $\cR_m$.

We define the effective Hamiltonian as follows, recalling the shorthand $D_{r_j} := -i \partial_{r_j}$.
There are four scales that arise in our second-order approximation of the tight-binding dynamics. Each scale admits an effective Hamiltonian, which is a partial differential operator with smooth and periodic coefficients. We will denote these operators by
$$\HBM : H^1 \to L^2, \qquad \HBMNNN : L^2 \to L^2,\qquad \HBMgNN : H^1 \to L^2,\qquad \HBMt : H^2 \to L^2,$$ 
where $H^N := H^N (\mathbb{R}^2; \mathbb{C}^4)$ is the Sobolev space of order $N$.
The leading-order operator, $\HBM$, is 
given by
\begin{align}\label{eq:HBM_def}
\begin{split}
    \HBM &:= \begin{pmatrix}
        L & \hoppingT(r)\\
        \hoppingT^\dagger (r) & L
    \end{pmatrix}, \qquad
    L := \begin{pmatrix}
        0 & \alpha (D_{r_1} - i D_{r_2})\\
        \bar{\alpha} (D_{r_1} + i D_{r_2}) & 0
    \end{pmatrix},\\
    \hoppingT(r) &:= \frac{1}{|\Gamma|} \Bigg(
    \lambda_0 e^{-i \svec_1 \cdot r} \begin{pmatrix}
        1 & 1\\
        1 & 1
    \end{pmatrix} + 
    \lambda_2 e^{-i b_2 \cdot \ls} e^{-i \svec_2 \cdot r} \begin{pmatrix}
        1 & e^{-i2\pi/3}\\
        e^{i2\pi/3} & 1
    \end{pmatrix}\\
    & \hspace{5cm} 
    +\lambda_4 e^{ib_1 \cdot \ls}e^{-i \svec_3 \cdot r} \begin{pmatrix}
        1 & e^{i2\pi/3}\\
        e^{-i2\pi/3} & 1
    \end{pmatrix}\Bigg).
\end{split}
\end{align}
Note that if we take $\alpha \in \mathbb{R}$ and $\lambda_0 = \lambda_2 = \lambda_4$, then $\HBM$ reduces to the BM Hamiltonian \cite{bistritzer2011moire}.
As in \eqref{eq:Dirac}, the operator $L$ is obtained by Taylor expanding the Bloch transform of the intralayer hopping function about the monolayer Dirac point $K$. The periodic off-diagonal blocks model the leading-order coupling between the two layers.

We will see in the proof of Theorem \ref{thm:main2} that the factors $\lambda_0, \lambda_2, \lambda_4$ in \eqref{eq:HBM_def} arise from evaluating $\hathperp (k; \eps)$ at the points $k = \rot_{2\pi j/3} K$ for $j\in \{0,1,2\}$. These three points are the displacements between $-K$ and the three reciprocal lattice vectors that are closest to $-K$; see Figure \ref{fig:NNN}. The higher-order correction, $\HBMgNN$, to the off-diagonal blocks similarly consists of terms proportional to 
the leading-order contribution of $\nabla \hathperp$ evaluated at the same three ``nearest neighbor'' points. It is the first-order differential operator defined by
\begin{align*}
    \HBMgNN &:= \begin{pmatrix}
        0 & \hoppingT_{\nabla, {\rm NN}}\\
        \hoppingT^\dagger_{\nabla, {\rm NN}} & 0
    \end{pmatrix},\\
    \hoppingT_{\nabla, {\rm NN}} &:= \frac{1}{|\Gamma|} \Bigg(
    \lambda_0 e^{-i \svec_1 \cdot r} \begin{pmatrix}
        1 & 1\\
        1 & 1
    \end{pmatrix}(\angvar{K} \cdot D) + 
    \lambda_2 e^{-ib_2 \cdot \ls}e^{-i \svec_2 \cdot r} \begin{pmatrix}
        1 & \sq{e^{-i2\pi/3}}\\
        \sq{e^{i2\pi/3}} & 1
    \end{pmatrix} ((\rot_{2\pi/3} \angvar{K}) \cdot D)\\
    &\hspace{5cm}
    +\lambda_4 e^{ib_1 \cdot \ls}e^{-i \svec_3 \cdot r} \begin{pmatrix}
        1 & \sq{e^{i2\pi/3}}\\
        \sq{e^{-i2\pi/3}} & 1
    \end{pmatrix}((\rot_{4\pi/3} \angvar{K}) \cdot D)\Bigg),
\end{align*}
where $D := (D_{r_1}, D_{r_2})$.

We next define the bounded point-wise multiplication operator
\begin{align*}
    \HBMNNN &:= \begin{pmatrix}
        0 & \hoppingT_{{\rm NNN}} (r)\\
        \hoppingT^\dagger_{{\rm NNN}} (r) & 0
    \end{pmatrix},\\
    \hoppingT_{{\rm NNN}}(r) &:= \frac{1}{|\Gamma|} \Bigg(
    \lambda_3 e^{i (b_1 - b_2) \cdot \ls}e^{i 2\svec_1 \cdot r} \begin{pmatrix}
        1 & 1\\
        1 & 1
    \end{pmatrix} + 
    \lambda_5 e^{i (b_1 + b_2) \cdot \ls}e^{i 2\svec_2 \cdot r} \begin{pmatrix}
        1 & \sq{e^{-i2\pi/3}}\\
        \sq{e^{i2\pi/3}} & 1
    \end{pmatrix} \\
    &\hspace{5cm} + 
    \lambda_1 e^{-i (b_1 + b_2) \cdot \ls}e^{i 2\svec_3 \cdot r} \begin{pmatrix}
        1 & \sq{e^{i2\pi/3}}\\
        \sq{e^{-i2\pi/3}} & 1
    \end{pmatrix}\Bigg),
\end{align*}
which corresponds to evaluations of $\hathperp (k; \eps)$ at $k = \rot_{2\pi j/3} (-2K)$ for $j\in \{0,1,2\}$. These are the ``next-nearest neighbors'' to $-K$ on the reciprocal lattice, and are denoted by blue triangles in Figure \ref{fig:NNN}.
Recall that our decay assumption \eqref{eq:neighbor_bds} on $\hathperp$ ensures that these next-nearest neighbor evaluations are of higher order than the nearest neighbor values $\hathperp (\rot_{2\pi j/3} K; \eps) = O(\eps)$.

Finally, 
we define the second-order differential operator $\HBMt$ by
\begin{align*}
    \HBMt &:= \begin{pmatrix}
        \HBMsd - \frac{1}{2}\beta i \sigma_3 L & \hoppingT_2 + \Tu (r)\\
        \hoppingT^\dagger_2 + \Tu^\dagger (r) & \HBMsd + \frac{1}{2}\beta i \sigma_3 L
    \end{pmatrix}, \quad
    \HBMsd := \frac{1}{2}\begin{pmatrix}
        \alpha_d (D_{r_1}^2 + D_{r_2}^2) & \alpha_o (D^2_{r_1} - D^2_{r_2} + 2i D_{r_1 r_2})\\ 
        \overline{\alpha_o} (D^2_{r_1} - D^2_{r_2} - 2i D_{r_1 r_2}) & \alpha_d (D_{r_1}^2 + D_{r_2}^2)
    \end{pmatrix},\\
    \hoppingT_2 &:= \frac{1}{|\Gamma|} \Bigg(
    \mu_0 e^{-i \svec_1 \cdot r} \begin{pmatrix}
        1 & 1\\
        1 & 1
    \end{pmatrix}((\rot_{\pi/2} \angvar{K}) \cdot D) + 
    \mu_1 e^{-ib_2 \cdot \ls}e^{-i \svec_2 \cdot r} \begin{pmatrix}
        1 & \sq{e^{-i2\pi/3}}\\
        \sq{e^{i2\pi/3}} & 1
    \end{pmatrix} ((\rot_{7\pi/6} \angvar{K}) \cdot D)\\
    &\hspace{5cm}
    +\mu_2 e^{ib_1 \cdot \ls}e^{-i \svec_3 \cdot r} \begin{pmatrix}
        1 & \sq{e^{i2\pi/3}}\\
        \sq{e^{-i2\pi/3}} & 1
    \end{pmatrix}((\rot_{11 \pi/6} \angvar{K}) \cdot D)\Bigg),\\
    \Tu (r) &:= \frac{\beta |K|}{2 |\Gamma|}
    \Bigg(
    \mu_0 e^{-i \svec_1 \cdot r} \begin{pmatrix}
        1 & 1\\
        1 & 1
    \end{pmatrix}+ 
    \mu_1 e^{-ib_2 \cdot \ls}e^{-i \svec_2 \cdot r} \begin{pmatrix}
        1 & \sq{e^{-i2\pi/3}}\\
        \sq{e^{i2\pi/3}} & 1
    \end{pmatrix}\\
    &\hspace{5cm}
    +\mu_2 e^{ib_1 \cdot \ls}e^{-i \svec_3 \cdot r} \begin{pmatrix}
        1 & \sq{e^{i2\pi/3}}\\
        \sq{e^{-i2\pi/3}} & 1
    \end{pmatrix}\Bigg),
\end{align*}
with $L$ defined in \eqref{eq:HBM_def}. The operator $\HBMsd$ corresponds to the second-order term in the Taylor expansion \eqref{eq:Dirac}, while $\hoppingT_2$ contains the terms of the $\{ \nabla \hathperp (\rot_{2\pi j/3} K; \eps)\}_{j \in \{0,1,2\}}$ that were omitted from $\hoppingT_{\nabla, {\rm NN}}$. The operators $\mp\frac{1}{2} \beta i \sigma_3 L$ on the diagonal blocks and $\Tu (r)$ on the off-diagonal correct for the twisting of the monolayer Dirac points. Indeed, although the parameters of the effective model depend only on evaluations of hopping functions at the untwisted Dirac point $K$ (and its $\pi i/3$ rotations, $i \in \{0,1, \dots, 5\}$), the Dirac point for layer $j$ is in fact rotated by $\theta_j/2$; see \eqref{eq:Dirac_pt_rot}.

We observe that $\hoppingT$, $\hoppingT_{\rm NNN}$ and $\Tu$ are point-wise multiplication operators with $\hoppingT^\dagger$, $\hoppingT_{\rm NNN}^\dagger$ and $\Tu^\dagger$ their Hermitian conjugates. Although $\hoppingT_{\nabla, {\rm NN}}$ contains both variable coefficients and derivatives, its adjoint is simply
\begin{align*}
    \hoppingT^\dagger_{\nabla, {\rm NN}} &= \frac{1}{|\Gamma|} \Bigg(
    \overline{\lambda_0} e^{i \svec_1 \cdot r} \begin{pmatrix}
        1 & 1\\
        1 & 1
    \end{pmatrix}(\angvar{K} \cdot D) + 
    \overline{\lambda_2} e^{ib_2 \cdot \ls}e^{i \svec_2 \cdot r} \begin{pmatrix}
        1 & \sq{e^{i2\pi/3}}\\
        \sq{e^{-i2\pi/3}} & 1
    \end{pmatrix} ((\rot_{2\pi/3} \angvar{K}) \cdot D)\\
    &\hspace{4cm}
    +\overline{\lambda_4} e^{-ib_1 \cdot \ls}e^{i \svec_3 \cdot r} \begin{pmatrix}
        1 & \sq{e^{-i2\pi/3}}\\
        \sq{e^{i2\pi/3}} & 1
    \end{pmatrix}((\rot_{4\pi/3} \angvar{K}) \cdot D) \Bigg),
\end{align*}
as $[v_1 \cdot D, e^{i v_2 \cdot r}] = 0$ if $v_1 \cdot v_2 = 0$. Finally, the adjoint of $\hoppingT_2$ is
\begin{align}\label{eq:adjoint}
\begin{split}
    \hoppingT_2^\dagger &= \frac{1}{|\Gamma|} \Bigg(
    \overline{\mu_0} ((\rot_{\pi/2} \angvar{K}) \cdot D) e^{i \svec_1 \cdot r} \begin{pmatrix}
        1 & 1\\
        1 & 1
    \end{pmatrix}+ 
    \overline{\mu_1} e^{ib_2 \cdot \ls}((\rot_{7\pi/6} \angvar{K}) \cdot D)e^{i \svec_2 \cdot r} \begin{pmatrix}
        1 & \sq{e^{i2\pi/3}}\\
        \sq{e^{-i2\pi/3}} & 1
    \end{pmatrix}\\
    &\hspace{3cm}
    +\overline{\mu_2} e^{-ib_1 \cdot \ls}((\rot_{11\pi/6} \angvar{K}) \cdot D)e^{i \svec_3 \cdot r} \begin{pmatrix}
        1 & \sq{e^{-i2\pi/3}}\\
        \sq{e^{i2\pi/3}} & 1
    \end{pmatrix}\Bigg)\\
    &= \frac{1}{|\Gamma|} \Bigg(
    \overline{\mu_0} e^{i \svec_1 \cdot r} \begin{pmatrix}
        1 & 1\\
        1 & 1
    \end{pmatrix}((\rot_{\pi/2} \angvar{K}) \cdot D) + 
    \overline{\mu_1} e^{ib_2 \cdot \ls}e^{i \svec_2 \cdot r} \begin{pmatrix}
        1 & \sq{e^{i2\pi/3}}\\
        \sq{e^{-i2\pi/3}} & 1
    \end{pmatrix}((\rot_{7\pi/6} \angvar{K}) \cdot D)\\
    &\hspace{2cm}
    +\overline{\mu_2} e^{-ib_1 \cdot \ls}e^{i \svec_3 \cdot r} \begin{pmatrix}
        1 & \sq{e^{-i2\pi/3}}\\
        \sq{e^{i2\pi/3}} & 1
    \end{pmatrix}((\rot_{11\pi/6} \angvar{K}) \cdot D)\Bigg)
    - 2 \Tu^\dagger (r).
\end{split}
\end{align}

Our first main result is then the following
\begin{theorem}\label{thm:main2}
    Define $H: \cH \to \cH$ by \eqref{eq:H}-\eqref{eq:intra}-\eqref{eq:Hperp} with $\cH := \ell^2 (\cR_1; \mathbb{C}^2)\oplus \ell^2 (\cR_2; \mathbb{C}^2)$.
    Define the operators $\HBM, \HBMNNN, \HBMgNN, \HBMt$ as above.
    Suppose the intralayer hopping function $h$ satisfies Assumption \ref{assumption:h}, and the constant \srq{$0 < \sdpt \le 1$} 
    and interlayer hopping function $\hperp$ satisfy Assumption \ref{assumption:hperp2}.
    Assume $\alpha$ in \eqref{eq:HBM_def} is nonzero, and
    let \srq{$f_0 \in H^{6+\sdpt} (\mathbb{R}^2; \mathbb{C}^4)$}. 
    For $i\in \{1,2\}$, $\sigma \in \{A,B\}$ and $R_i \in \cR_i$, define $$(\psi_{i,0})^\sigma_{R_i} := \eps f^\sigma_{i,0} (\eps (R_i+\tau_i^\sigma)) e^{i K_i \cdot (R_i + \tau_i^\sigma)},$$
    where $K_i := \rot_{\theta_i/2} K$ is the Dirac point corresponding to layer $i$, and we recall that $\theta_i = (-1)^i \theta$.
    Suppose that the wave function $\psi := ((\psi_1)^\sigma_{R_1} (\microtime), (\psi_2)^\sigma_{R_2}(\microtime))$ solves the Schr\"odinger equation $i \partial_\microtime \psi = H \psi$ subject to the initial condition $(\psi_j)^\sigma_{R_j} (0) = (\psi_{j,0})^\sigma_{R_j}$.

    Define $\radNNN$ and $\radpNN$ by \eqref{eq:rad_ang}.
    Set $f := \tff{1} + \radNNN (\eps) \tff{{\rm NNN}} + \radpNN (\eps) \tff{\nabla, {\rm NN}} + \eps \tff{2}$, where the $\eps$-independent functions $f^{(j)}= f^{(j)} (r,\macrotime)$ 
    on the right-hand side are defined iteratively by
    \begin{align}\label{eq:it}
    \begin{split}
        (i \partial_\macrotime -\HBM) \tff{1} = 0, \qquad & \tff{1} (r,0) = f_0 (r),\\
        (i \partial_\macrotime -\HBM) \tff{{\rm NNN}} = \HBMNNN \tff{1}, \qquad & \tff{{\rm NNN}} (r,0) = 0,\\
        (i \partial_\macrotime -\HBM) \tff{\nabla, {\rm NN}} = \HBMgNN \tff{1}, \qquad & \tff{\nabla, {\rm NN}} (r,0) = 0,\\
        (i \partial_\macrotime -\HBM) \tff{2} = \HBMt\tff{1}, \qquad & \tff{2} (r,0) = 0.
    \end{split}
    \end{align}
    
    Set $\eshift := \tilde{h}^{A,A} (K)$. Then for any 
    \srq{$0 < \sdpt_- < \sdpt$},
    the wave function
    \begin{align}\label{eq:def_phi}
        (\mswf_{i})^\sigma_{R_i} (\microtime) := \eps \tf^\sigma_{i} (\eps (R_i+\tau_i^\sigma),\eps \microtime) e^{i K_i \cdot (R_i + \tau_i^\sigma)}e^{-i\eshift \microtime}
    \end{align}
    satisfies
    \begin{align}\label{eq:error2}
        \norm{\mswf (\microtime) -\psi (\microtime)}_\cH \le C \srq{\norm{f_0}_{H^{6+\sdpt}}} \eps^{1 + \sdpt_-} \left(\eps \microtime + \eps^{\sdpt -\sdpt_-} (\eps \microtime)^2 \right)
    \end{align}
    uniformly in $0 < \eps < 1$ and $\microtime \ge 0$.
\end{theorem}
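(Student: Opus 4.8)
The plan is to exhibit $\mswf$ as an approximate solution of the tight-binding equation $i\partial_\microtime\psi = H\psi$ and close the estimate using Duhamel's formula together with the unitarity of $e^{-i\microtime H}$. Writing $\mathcal{E}(\microtime) := (i\partial_\microtime - H)\mswf(\microtime)\in\cH$ for the residual, self-adjointness of $H$ gives
\begin{equation*}
    \norm{\mswf(\microtime) - \psi(\microtime)}_\cH \le \norm{\mswf(0) - \psi_0}_\cH + \int_0^\microtime \norm{\mathcal{E}(s)}_\cH\,\d s .
\end{equation*}
Because the three inhomogeneous problems in \eqref{eq:it} have vanishing initial data, $f(\cdot,0) = \tff{1}(\cdot,0) = f_0$, so \eqref{eq:def_phi} gives $(\mswf_i)^\sigma_{R_i}(0) = \eps f_{i,0}^\sigma(\eps(R_i+\tau_i^\sigma))e^{iK_i\cdot(R_i+\tau_i^\sigma)} = (\psi_{i,0})^\sigma_{R_i}$ and the first term vanishes identically. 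The theorem thus reduces to the residual bound
\begin{equation*}
    \norm{\mathcal{E}(\microtime)}_\cH \le C\norm{f_0}_{H^{6+\sdpt}}\,\eps^{2+\sdpt_-}\bigl(1 + \eps^{\sdpt-\sdpt_-}\eps\microtime\bigr),\qquad 0<\eps<1,\ \microtime\ge 0,
\end{equation*}
which integrates over $[0,\microtime]$ to give \eqref{eq:error2}.

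The core of the argument is to compute $H\mswf$ by a wave-packet reduction, row by row in $H = \begin{pmatrix} H_{11} & \Hperp\\ \Hperp^\dagger & H_{22}\end{pmatrix}$. For the intralayer blocks $H_{ii}$ one passes to the Bloch transform: the packet $\eps\tf_i(\eps\,\cdot\,,\eps\microtime)e^{iK_i\cdot(\cdot)}$ is spectrally localized within $O(\eps)$ of the twisted Dirac point $K_i = \rot_{\theta_i/2}K$, so $H_{ii}\mswf_i$ is obtained by evaluating the layer-$i$ Bloch Hamiltonian at $K_i + \eps D$ and Taylor expanding to second order. Feeding in Theorem \ref{thm:Dirac} (which supplies $\tilde h^{A,B}(K)=0$, $\nabla\tilde h^{A,B}(K)=\alpha(1,-i)$, $\nabla\tilde h^{\sigma,\sigma}(K)=0$ and the stated Hessians), the rotation expansion $\rot_{-\theta_i/2} = I - \tfrac{\theta_i}{2}\rot_{\pi/2} + O(\theta^2)$, and $2\sin(\theta/2)=\beta\eps$, the zeroth, first and second order terms assemble exactly into $\eshift I$, the Dirac operator $L$ of \eqref{eq:HBM_def}, and $\HBMsd \mp \tfrac12\beta i\sigma_3 L$ (the $\beta$-term being the twist correction of the Dirac point), with an $O(\eps^3)\norm{\tf_i}_{H^3}$ Taylor remainder. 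For the interlayer block $\Hperp$ one uses the Fourier representation of $\hperp(\cdot\,;\eps)$ and Poisson summation over $\cR_2$: acting on a layer-$2$ packet this produces a layer-$1$ packet whose Fourier content near $K_1$ involves the samples $\hathperp(\rot_{\theta/2}(K+G);\eps)$, $G\in\cR^*$, and their momentum gradients. The three reciprocal-lattice vectors closest to $-K$ give the leading off-diagonal block $\hoppingT$, with $\lambda_j = \eps^{-1}\hathperp(\rot_{2\pi j/3}K;\eps)=O(1)$ by \eqref{eq:neighbor_bds}; the next-nearest vectors, at distance $2|K|$, give $\radNNN\HBMNNN$; the first-order momentum expansion at the nearest vectors gives $\radpNN\HBMgNN$ together with the $\hoppingT_2$ part of $\eps\HBMt$; and expanding the twisted transfer vectors $\rot_{\theta/2}(K+G)$ gives the $\Tu(r)$ part of $\eps\HBMt$. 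Every discarded piece — the second-order momentum remainder at the nearest vectors, the first-order remainder at the $2|K|$ vectors, the zeroth-order term at the $\sqrt7|K|$ vectors, and the exponentially small sum over all farther $G$ — is controlled by the last three lines of \eqref{eq:4bounds_full}, precisely through $\eps^2\norm{\nabla^2\hathperp(k_1;\eps)}_{\rm F} + \eps|\nabla\hathperp(k_2;\eps)| + |\hathperp(k_3;\eps)| \le C\eps^{2+\sdpt}$ at $(|k_1|,|k_2|,|k_3|)=(|K|,2|K|,\sqrt7|K|)$, summing the remaining geometric tail costing the arbitrarily small loss $\sdpt\to\sdpt_-$. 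Collecting the retained terms shows $H\mswf$ equals $\eps\,e^{iK_i\cdot(\cdot)}e^{-i\eshift\microtime}$ times $[(\eshift + \Hfull)\tf]_i$ evaluated at $(\eps(R_i+\tau_i^\sigma),\eps\microtime)$ — with the convention that the derivatives in $\HBM,\HBMNNN,\HBMgNN,\HBMt$ act on the slow variable and carry the appropriate powers of $\eps$ — up to a remainder that is $O(\eps^{2+\sdpt_-})$ times a Sobolev norm of $\tf$ of order $\le 4$.

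Since $i\partial_\microtime\mswf_i = \eps\,e^{iK_i\cdot(\cdot)}e^{-i\eshift\microtime}(\eshift\tf_i + \eps\,i\partial_\macrotime\tf_i)$, subtracting shows that $\mathcal{E}$ is, modulo the above $O(\eps^{2+\sdpt_-})$ remainder, the wave-packet lift of $\eps(i\partial_\macrotime f - \Hfull f)$. By the definitions \eqref{eq:it} this telescopes,
\begin{equation*}
    i\partial_\macrotime f - \Hfull f = -\bigl(\radNNN\HBMNNN + \radpNN\HBMgNN + \eps\HBMt\bigr)\bigl(\radNNN\tff{{\rm NNN}} + \radpNN\tff{\nabla,{\rm NN}} + \eps\tff{2}\bigr),
\end{equation*}
and each term carries a prefactor bounded by $C(|\radNNN|^2 + |\radpNN|^2 + \eps|\radNNN| + \eps|\radpNN| + \eps^2)$, which by \eqref{eq:neighbor_bds} is $O(\eps^{1+\sdpt})$. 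The remaining factors are Sobolev norms of $\tff{{\rm NNN}},\tff{\nabla,{\rm NN}},\tff{2}$, controlled by propagating regularity: $\HBM$ generates a strongly continuous unitary group on every $H^N(\Rm^2;\Cm^4)$ since its coefficients are smooth and bounded together with all derivatives, so $\norm{\tff{1}(\cdot,\macrotime)}_{H^N} = \norm{f_0}_{H^N}$, and Duhamel applied to \eqref{eq:it} gives $\norm{\tff{{\rm NNN}}(\cdot,\macrotime)}_{H^N} + \norm{\tff{\nabla,{\rm NN}}(\cdot,\macrotime)}_{H^N} + \norm{\tff{2}(\cdot,\macrotime)}_{H^N} \le C\macrotime\norm{f_0}_{H^{N+2}}$ — the $H^{6+\sdpt}$ hypothesis absorbing the two derivatives demanded by the second-order operators and the extra regularity needed for the $\eps^{2+\sdpt_-}$-remainder above. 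Hence $i\partial_\macrotime f - \Hfull f = O(\eps^{1+\sdpt}(1+\macrotime))\norm{f_0}_{H^{6+\sdpt}}$ in $L^2$, so after the $\ell^2$-to-$L^2$ rescaling (which carries no net power of $\eps$, the $\eps$-prefactor of \eqref{eq:def_phi} cancelling the $\eps^{-2}$ from the Riemann sum) its wave-packet lift contributes $O(\eps^{2+\sdpt}(1+\eps\microtime))$ to $\norm{\mathcal{E}(\microtime)}_\cH$, while the wave-packet reduction remainder contributes $O(\eps^{2+\sdpt_-}(1+\eps\microtime))$; adding these and integrating yields \eqref{eq:error2}. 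The main obstacle is the interlayer half of the wave-packet reduction: one must organize the Poisson-summed series so that exactly the blocks $\hoppingT$, $\HBMNNN$, $\HBMgNN$, $\hoppingT_2$, $\Tu$ (with their precise prefactors $\radNNN,\radpNN,\eps$) are extracted and every other contribution — the momentum-Taylor remainders at the three families of near neighbors and the reciprocal-lattice tail — is pinned down to be $O(\eps^{2+\sdpt_-})$ uniformly, which is precisely what the hierarchy of decay bounds in Assumption \ref{assumption:hperp2} is engineered to deliver; keeping careful track of the interlocking powers of $\eps$ (and of the linear-in-$\macrotime$ growth coming from the inhomogeneous envelopes) is where the bookkeeping is heaviest.
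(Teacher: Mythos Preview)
Your proposal follows essentially the same route as the paper: reduce to a residual estimate via Duhamel and unitarity of $e^{-i\microtime H}$, compute $H\mswf$ by a wave-packet reduction (intralayer via Bloch transform and Taylor expansion about $K_i$, interlayer via Poisson summation organised by the shells $|K|,\,2|K|,\,\sqrt{7}|K|$ and the decay hierarchy of Assumption~\ref{assumption:hperp2}), identify the continuum residual $i\partial_\macrotime f-\Hfull f$ as the telescoped product you wrote, and control the latter through Sobolev bounds on the $\tff{j}$.

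One point needs correction: the claim that $e^{-i\macrotime\HBM}$ is unitary on $H^N$ and hence $\norm{\tff{1}(\cdot,\macrotime)}_{H^N}=\norm{f_0}_{H^N}$ is false, since $\HBM$ has variable (periodic) coefficients and does not commute with $(1-\Delta)^{N/2}$. The paper obtains the needed uniform-in-$\macrotime$ bound $\norm{\tff{1}(\cdot,\macrotime)}_{H^N}\le C\norm{f_0}_{H^N}$ by exploiting the hypothesis $\alpha\ne 0$: this makes $\HBM$ an elliptic first-order operator, so $\norm{((\HBM)^2+\mu)^{N/2}u}_{L^2}$ is equivalent to $\norm{u}_{H^N}$, and the former is exactly conserved because $((\HBM)^2+\mu)^{N/2}$ commutes with $\HBM$. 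The same device is what gives the linear-in-$\macrotime$ growth for $\tff{{\rm NNN}},\tff{\nabla,{\rm NN}},\tff{2}$ uniformly in $\macrotime$. Without this ellipticity argument your Sobolev propagation step has a gap.
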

For the proof of Theorem \ref{thm:main2}, see Appendix \ref{subsec:proof_thm_main2}. Note that 
\eqref{eq:neighbor_bds} implies that
\begin{align}\label{eq:xi_zeta_bd}
    |\radNNN (\eps)| + |\radpNN (\eps)| \le C \eps^{\frac{1+\sdpt}{2}}, \qquad \qquad 0 < \eps < 1,
\end{align}
thus the terms $\radNNN (\eps) \tff{{\rm NNN}}$ and $\radpNN (\eps) \tff{\nabla, {\rm NN}}$ are indeed of higher order than $\tff{1}$.  
\srq{We make no assumption on the size of $\radNNN (\eps)$ relative to $\radpNN (\eps)$.}

\srq{
\begin{remark}\label{remark:ex_rad}
    For the particular choice of interlayer hopping function \eqref{eq:example_hopping} from Example \ref{ex:2},
\begin{align*}
        \radNNN (\eps)  = C_0 \eps^{\sqrt{\frac{4|K|^2+\hoppingparam^2}{|K|^2+\hoppingparam^2}}-1}, \qquad \radpNN (\eps) = C_1 \eps
    \end{align*}
for some real constants $C_0$ and $C_1$. Moreover, in this example we have $0 < \sdpt < \sqrt{7} - 2$, making it impossible to obtain a convergence rate of $O (\eps^{\sqrt{7}}\microtime)$ in \eqref{eq:error2}.
But for any fixed $\delta > 0$, we can obtain a convergence rate of $O(\eps^{\sqrt{7} - \delta}\microtime)$ provided that $\hoppingparam$ is sufficiently small.
\end{remark}}
\begin{remark}
    The above theorem is an extension of \cite[Theorem 3.1]{watson2023bistritzer} to our second-order effective continuum model.
    In \cite{watson2023bistritzer}, it was shown (under stronger assumptions on the hopping functions) that the function 
    \begin{align*}
        (\phi^{\text{BM}}_i)^\sigma_{R_i} (\microtime) := \eps (\tff{1})^\sigma_{i} (\eps (R_i+\tau_i^\sigma),\eps \microtime) e^{i K_i \cdot (R_i + \tau_i^\sigma)}e^{-i\eshift \microtime}
    \end{align*}
    constructed from the Bistritzer-MacDonald wave-function $\tff{1}$ defined by \eqref{eq:it} satisfies
    \begin{align*}
        \norm{\mswf^{\text{BM}} (\microtime) -\psi (\microtime)}_\cH \le C\eps^{1+\eta_*} \microtime
    \end{align*}
    uniformly in $\microtime$, for some $0 < \eta_* < 1$. 
    Thus, over time scales of $O(\eps^{-1})$, Theorem \ref{thm:main2} improves the existing $O(\eps^{1+\eta_*} \microtime)$ rate of convergence to $O(\eps^{2 + \sdpt_-} \microtime)$.
\end{remark}
\begin{remark}
    The bound \eqref{eq:error2} implies that the continuum model produces an $O(\eps^{1 + \sdpt_-})$ approximation of the tight-binding dynamics over timescales of $O(\eps^{-1})$. Moreover, for any $t_0 > 0$,
    \begin{align*}
        \lim_{\eps \to 0} \; \sup \{\norm{\mswf (\microtime) -\psi (\microtime)}_\cH: 0 \le \microtime \le t_0\eps^{-(3+\sdpt_-)/2}\} = 0.
    \end{align*}
    Finally, we comment on the long-time behavior of the continuum approximation.
    Recall that $\psi$ solves a Schr\"odinger equation and thus $\norm{\psi(\microtime)}_\cH$ is independent of $\microtime$. However, Sobolev norms of $f$, and thus $\norm{\phi(\microtime)}_\cH$, might go to infinity as $\microtime \to \infty$ (see Lemma \ref{lemma:Sobolev_ms2}). 
    Indeed, \eqref{eq:error2} does not rule out the possibility that at fixed $\eps$, the error $\norm{\mswf (\microtime) -\psi (\microtime)}_\cH$ between the continuum and discrete models grows quadratically in $\microtime$.
\end{remark}

\subsection{An alternate perspective}\label{subsec:alternate}
Instead of the systematic multi-scale expansion in Theorem \ref{thm:main2}, one might instead write an effective model directly in terms of the operator 
\begin{align}\label{eq:Hfull}
    \Hfull := \HBM + \radNNN (\eps) \HBMNNN + \radpNN (\eps) \HBMgNN + \eps \HBMt.
\end{align}
The validity of such an approach requires the assumption that for any $s \ge 0$, there exist positive constants $\ellC$, $c$ and $C$ such that
\begin{align}\label{eq:ellipticity}
    c \eps^{s/2} \norm{u}_{H^{s}} \le \norm{((\Hbm)^2 + \ellC)^{s/4} u}_{L^2} \le C\norm{u}_{H^{s}}, \qquad u \in H^s.
\end{align}
In fact, the above upper bound is guaranteed to hold since $\Hbm$ is a second order differential operator and all derivatives of any of its coefficients are uniformly bounded in $\eps$. 
The lower bound in \eqref{eq:ellipticity}, which is an ellipticity assumption on $\Hbm$, holds provided the Fourier symbol of the leading-order operator
\begin{align*}
    \HBMsd = \frac{1}{2}\begin{pmatrix}
        \alpha_d (D_{r_1}^2 + D_{r_2}^2) & \alpha_o (D^2_{r_1} - D^2_{r_2} + 2i D_{r_1 r_2})\\ 
        \overline{\alpha_o} (D^2_{r_1} - D^2_{r_2} - 2i D_{r_1 r_2}) & \alpha_d (D_{r_1}^2 + D_{r_2}^2)
    \end{pmatrix} 
\end{align*}
grows quadratically at infinity.
The symbol of $S$ is
\begin{align*}
    \hat{S} (p_1, p_2) = \frac{1}{2}\begin{pmatrix}
        \alpha_d (p_1^2 + p_2^2) & \alpha_o (p_1+ip_2)^2\\ 
        \overline{\alpha_o} (p_1 - ip_2)^2 & \alpha_d (p_1^2 + p_2^2)
    \end{pmatrix},
\end{align*}
which has eigenvalues $\frac{1}{2} (\alpha_d \pm |\alpha_o|) (p_1^2 + p_2^2)$. Therefore, \eqref{eq:ellipticity} holds so long as $|\alpha_d| \ne |\alpha_o|$.

We now state our second main result.
\begin{theorem}\label{thm:main}
    Define $H: \cH \to \cH$ by \eqref{eq:H}-\eqref{eq:intra}-\eqref{eq:Hperp} with $\cH := \ell^2 (\cR_1; \mathbb{C}^2)\oplus \ell^2 (\cR_2; \mathbb{C}^2)$. 
    Assume the hopping functions $h$ and $\hperp$ respectively satisfy Assumptions \ref{assumption:h} and \ref{assumption:hperp2} for some \srq{$0 < \sdpt \le 1$} defined in the latter.
    Define $\Hfull$ by \eqref{eq:Hfull}, and assume the coefficients $\alpha_d$ and $\alpha_o$ appearing in $\HBMt$ have different absolute values, and that $\alpha$ in \eqref{eq:HBM_def} is nonzero.
    For $i\in \{1,2\}$ and $\sigma \in \{A,B\}$, let \srq{$f^\sigma_{i,0} \in H^{8+\sdpt} (\mathbb{R}^2; \mathbb{C})$}. 
    For $R_i \in \cR_i$, define $$(\psi_{i,0})^\sigma_{R_i} := \eps f^\sigma_{i,0} (\eps (R_i+\tau_i^\sigma)) e^{i K_i \cdot (R_i + \tau_i^\sigma)},$$
    as in Theorem \ref{thm:main2}.
    Suppose that the wave function 
    $\psi := ((\psi_1)^\sigma_{R_1}(\microtime), (\psi_2)^\sigma_{R_2}(\microtime))$ solves the Schr\"odinger equation $i \partial_\microtime \psi = H \psi$ subject to the initial condition $(\psi_j)^\sigma_{R_j} (0) = (\psi_{j,0})^\sigma_{R_j}$.
    Moreover, suppose the function $\phf := (\phf^\sigma_1(r,\macrotime), \phf^\sigma_2(r,\macrotime))$ satisfies the (continuous-in-space) Schr\"odinger equation $i \partial_\macrotime \phf = \Hfull \phf$ 
    with the initial condition $\phf^\sigma_j (r,0)= f^\sigma_{j,0} (r)$.
    Then for any \srq{$0 < \sdpt_- < \sdpt$}, 
    the wave function $$(\phph_{i})^\sigma_{R_i} (\microtime) := \eps \phf^\sigma_{i} (\eps (R_i+\tau_i^\sigma),\eps \microtime) e^{i K_i \cdot (R_i + \tau_i^\sigma)}e^{-i\eshift \microtime}$$ satisfies
    \begin{align}\label{eq:bd_thm_main}
        \srq{\norm{\phph (\microtime)-\psi (\microtime)}_\cH \le C \left( \norm{f_0}_{H^{6+\sdpt}} \eps^{1+\sdpt_-} \left(\eps \microtime + \eps^{\sdpt -\sdpt_-} (\eps \microtime)^2 \right) + \norm{f_0}_{H^{8+\sdpt}} \eps^{2 - (\sdpt - \sdpt_-)}(\eps \microtime)^3 \right)}
    \end{align}
    uniformly in $0 < \eps < 1$ and $\microtime \ge 0$.
\end{theorem}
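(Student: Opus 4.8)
The plan is to deduce Theorem \ref{thm:main} from Theorem \ref{thm:main2} by comparing the solution $\phf$ of the single second-order equation $i\partial_\macrotime \phf = \Hfull\phf$ with the multiscale sum $f = \tff{1} + \radNNN(\eps)\tff{{\rm NNN}} + \radpNN(\eps)\tff{\nabla,{\rm NN}} + \eps\tff{2}$ from Theorem \ref{thm:main2}. Since $\phf$ and $f$ have the same initial data $f_0$, and the wave-packet construction \eqref{eq:def_phi} is linear in its envelope, $\phph - \mswf$ is exactly the wave packet built from $\phf - f$; hence by the triangle inequality $\norm{\phph(\microtime) - \psi(\microtime)}_\cH \le \norm{\phph(\microtime) - \mswf(\microtime)}_\cH + \norm{\mswf(\microtime) - \psi(\microtime)}_\cH$, the last term being precisely the bound \eqref{eq:error2} supplied by Theorem \ref{thm:main2}. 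It therefore remains to show $\norm{\phph(\microtime) - \mswf(\microtime)}_\cH \le C\norm{f_0}_{H^{8+\sdpt}}\,\eps^{2-(\sdpt-\sdpt_-)}(\eps\microtime)^3$, and by the same $\ell^2$-versus-$L^2$ sampling estimate used for Theorem \ref{thm:main2} (comparing $\eps$-scaled lattice sums against $L^2$ integrals, with error controlled by higher Sobolev norms) it suffices to bound $\phf - f$ in a fixed Sobolev norm $H^{m_0}$ at macroscopic time $\macrotime = \eps\microtime$.

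Set $w := \phf - f$. Differentiating the multiscale ansatz and using the defining system \eqref{eq:it}, a short computation gives $(i\partial_\macrotime - \Hfull)f = -\rho$, where $\rho$ collects the nine bilinear ``cross terms'': each is a product of two of the small prefactors $\radNNN,\radpNN,\eps$ times one of $\HBMNNN,\HBMgNN,\HBMt$ applied to one of the correctors $\tff{{\rm NNN}},\tff{\nabla,{\rm NN}},\tff{2}$ (these are exactly the terms of degree $\ge 2$ in the small parameters produced when $\Hfull$ hits the truncated sum). Since $w(\cdot,0)=0$, $w$ solves $(i\partial_\macrotime - \Hfull)w = \rho$ with zero data. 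Each cross term is genuinely of higher order: by \eqref{eq:neighbor_bds}, $|\radNNN|+|\radpNN|\le C\eps^{(1+\sdpt)/2}$, so all products of two prefactors are $o(\eps)$, and the second-order operator $\HBMt$ always appears against an $\eps$ (or smaller) prefactor. Combining these prefactor bounds with Sobolev-in-$\macrotime$ estimates for $\tff{1},\tff{{\rm NNN}},\tff{\nabla,{\rm NN}},\tff{2}$ — obtained by iterating Duhamel's formula on \eqref{eq:it} and using the (polynomial-in-$\macrotime$) growth of $e^{-i\macrotime\HBM}$ on Sobolev spaces, where the orders of $\HBMNNN,\HBMgNN,\HBMt$ dictate how many derivatives of $f_0$ are consumed — yields $\norm{\rho(\macrotime)}_{H^{m}} \le C\,\eps^{\,p}(1+\macrotime)^{q}\norm{f_0}_{H^{m+r}}$ with appropriate powers $p,q,r$; the terms involving $\HBMt$, together with the $\eps$-losses in \eqref{eq:ellipticity} below, are responsible for the two extra derivatives of $f_0$ required here compared to Theorem \ref{thm:main2}.

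To solve for $w$ we use Duhamel, $w(\macrotime) = -i\int_0^\macrotime e^{-i(\macrotime-s)\Hfull}\rho(s)\,ds$, together with the fact that $\Hfull$ is self-adjoint (a real-coefficient combination of self-adjoint operators), so $e^{-is\Hfull}$ is unitary on $L^2$. To upgrade the $L^2$ bound to the $H^{m_0}$ bound required for the sampling step, we invoke the ellipticity estimate \eqref{eq:ellipticity} — available here precisely because $|\alpha_d|\ne|\alpha_o|$ — which exchanges $\norm{\cdot}_{H^{m_0}}$ for the $\Hfull$-conserved quantity $\norm{((\Hfull)^2+\ellC)^{m_0/4}\cdot}_{L^2}$ at the cost of a fixed negative power of $\eps$. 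Tracking all the powers through (one Duhamel integration for $w$, one for each corrector, plus the $\eps$-losses from \eqref{eq:ellipticity}) produces $\norm{w(\eps\microtime)}_{H^{m_0}} \le C\norm{f_0}_{H^{8+\sdpt}}\,\eps^{2-(\sdpt-\sdpt_-)}(\eps\microtime)^3$; with the triangle inequality and \eqref{eq:error2} this gives \eqref{eq:bd_thm_main}. The non-integer gain $2-(\sdpt-\sdpt_-)$ (rather than a clean $\eps^2$) and the freedom to pick $\sdpt_-<\sdpt$ are inherited directly from the non-integer decay rates $\radNNN,\radpNN$, exactly as in Theorem \ref{thm:main2}.

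The main obstacle is this last step: $\Hfull$ is a genuinely second-order operator whose leading part $\eps\HBMsd$ has symbol of size only $O(\eps)$, so the conversion between $H^{m_0}$ norms and the conserved $\Hfull$-energy norm genuinely loses powers of $\eps$; the hypothesis \eqref{eq:ellipticity} and the extra two derivatives of regularity on $f_0$ are calibrated exactly to absorb this loss while keeping the $\eps$-exponent at $2-(\sdpt-\sdpt_-)$ and the growth in $\microtime$ cubic. Beyond that, the work is bookkeeping — identifying the dominant cross terms in $\rho$ (which uses the sharp decay \eqref{eq:neighbor_bds}) and tracking how the orders of $\HBMNNN,\HBMgNN,\HBMt$ propagate through the nested Duhamel formulas — since the wave-packet sampling estimate and the residual/stability machinery are already in place from the proof of Theorem \ref{thm:main2}.
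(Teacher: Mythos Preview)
Your overall strategy---split $\|\chi-\psi\|\le\|\chi-\phi\|+\|\phi-\psi\|$, quote Theorem~\ref{thm:main2} for the second term, and estimate $w:=g-f$ by Duhamel for $(i\partial_t-\Hfull)w=\rho$---is correct, but the tracking you sketch does \emph{not} produce the stated bound, and the shortfall is exactly the loss you flag as ``the main obstacle''. By Lemma~\ref{lemma:ae2}, $\|\rho(\cdot,t)\|_{H^N}\le Ct\,\eps^{1+\sdpt}\|f_0\|_{H^{N+4}}$; one Duhamel integration then gives $\|((\Hfull)^2+\ellC)^{m_0/4}w(t)\|_{L^2}\le Ct^2\eps^{1+\sdpt}\|f_0\|_{H^{m_0+4}}$, and after the ellipticity penalty $\eps^{-m_0/2}$ with $m_0>2$ you are left with only $\|w(\eps T)\|_{H^{m_0}}\le C\eps^{\sdpt_-}(\eps T)^2\|f_0\|_{H^{m_0+4}}$ (take $m_0=2+2(\sdpt-\sdpt_-)$). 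This is $(\eps T)^2$, not $(\eps T)^3$, and a full power of $\eps$ short of the $(\eps T)^2$ coefficient $\eps^{1+\sdpt}$ already present in \eqref{eq:error2}; your $\|\chi-\phi\|$ bound would therefore dominate the Theorem~\ref{thm:main2} contribution rather than being absorbed by it.

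The missing idea is to push the multiscale expansion one order further \emph{before} paying the ellipticity price. In the simplified case $\radNNN=\radpNN=\eps$, introduce a further corrector $\tff{3}$ via $(i\partial_t-\HBM)\tff{3}=\HBMta\,\tffta$ with zero data, and set $\mathring f:=f+\eps^2\tff{3}$. Then $(i\partial_t-\Hfull)\mathring f=-\eps^3\HBMta\tff{3}$ is $O(\eps^3 t^2)$ in $H^N$, so Duhamel followed by the $\eps^{-m_0/2}$ loss gives $\|g-\mathring f\|_{H^{m_0}}\le C\eps^{2-(\sdpt-\sdpt_-)}t^3\|f_0\|_{H^{m_0+6}}$, which lands precisely on the $(\eps T)^3$ term in \eqref{eq:bd_thm_main} and explains the need for $f_0\in H^{8+\sdpt}$. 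The remaining piece $\mathring f-f=\eps^2\tff{3}$ involves no $\Hfull$-propagation and hence no ellipticity penalty: Lemma~\ref{lemma:Hs_to_l22} gives $\|\mathring\phi-\phi\|_\cH\le C\eps^2(\eps T)^2\|f_0\|_{H^{6+\sdpt}}$, which is absorbed into \eqref{eq:error2}. (A minor correction: $e^{-it\HBM}$ is uniformly bounded on every $H^N$, since $\HBM$ is first-order elliptic---see Lemma~\ref{lemma:Sobolev_ms2}; the $t$-growth of the correctors comes solely from the forcing, not from the propagator.)
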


We postpone the proof of Theorem \ref{thm:main} to Appendix \ref{subsec:proof_thm_main}. Although the order of convergence in Theorems \ref{thm:main2} and \ref{thm:main} is the same, the 
former makes no assumptions on the parameters $\alpha_d$ and $\alpha_o$, and requires less regularity of the initial data $f_0$.

\begin{remark}
    One could extend the derivation of the effective Hamiltonian \eqref{eq:Hfull} to any order \srq{by keeping more terms in the Taylor series expansions of the hopping functions \eqref{eq:intra_Taylor} and \eqref{eq:filtered_Taylor}, and evaluating $\hathperp (q; \eps)$ and its derivatives at larger $q = K_2+G_2$ ($G_2 \in \rot_{\theta/2} \cR^*$) in the latter}. Thus, given any $s >0$, our procedure could be used to obtain an $O (\eps^s)$ approximation of the tight-binding dynamics over time intervals of $O (\eps^{-1})$. However, we believe our second-order approximation already captures the mathematical insight of this approach without introducing the extra complexity of higher orders.
\end{remark}


\section{Symmetries}\label{sec:symmetries}
In this section, we identify symmetries of the tight-binding and continuum models for TBG from Sections \ref{subsec:bilayer} and \ref{subsec:alternate}.
\subsection{Monolayer tight-binding model}\label{subsec:symmetries_mono}
Set $\co := \frac{1}{2} (\tau^A + \tau^B) - (\frac{a}{2}, 0)$, which corresponds to the center of a hexagon on the lattice (see Figure \ref{fig:lattice}). Define $\Pi : \cR \to \cR$ and $\Pi^\sigma : \cR \to \cR$ by
\begin{align*}
    \Pi (R) := -R - \tau^A - \tau^B + 2 \co, \qquad \Pi^\sigma (R) := \rot_{2\pi/3} (R + \tau^\sigma - \co) - \tau^\sigma + \co.
\end{align*}

Let $\cP: \cH_1 \to \cH_1$ be the parity operator defined by $(\cP \psi)^{\sigma}_R := \psi^{\sigma^c}_{\Pi (R)}$, where $\sigma^c$ is the element in $\{A,B\}$ \emph{not} equal to $\sigma$. This is a $\pi$-rotation about the shifted $z$-axis whose corresponding symmetry is known as $\mathcal{C}_{2z}$ symmetry \cite{kaxiras2019quantum}. The center of rotation is the point $\co$.

Let $\cC :\cH_1 \to \cH_1$ be the complex conjugation operator, so that $(\cC \psi)^\sigma_R := \overline{\psi^\sigma_R}$. In the literature, this is often referred to as the time-reversal operator and denoted instead by $\mathcal{T}$.

Next, define the rotation operator $\rotop :\cH_1 \to \cH_1$ by $(\rotop \psi)^\sigma_R = \psi^\sigma_{\Pi^\sigma (R)}.$ This $2\pi/3$ rotation about $\co$ is often denoted $\mathcal{C}_{3z}$.

A direct calculation using the self-adjointness, $h(-r) = \overline{h(r)}$, of $H$ reveals that $[H, \cP \cC]= 0$. Similarly, the $2\pi/3$-rotation invariance of $h$ in Assumption \ref{assumption:h} implies that $[H, \rotop] = 0$.



\medskip

One might wonder if it is possible to generalize the hopping function in \eqref{eq:tight-binding} so that $H$ still obeys the $\cP \cC$ and $\rotop$ symmetries. That is, suppose instead that the self-adjoint operator $H$ is defined by
\begin{align}\label{eq:tight-binding_general}
    (H \psi)^\sigma_R = \sum_{R' \in \cR} \sum_{\sigma ' \in \{A,B\}} h^{\sigma \sigma'} (R-R') \psi^{\sigma '}_{R'},
\end{align}
where the hopping functions $h^{\sigma \sigma'}: \cR \to \mathbb{C}$ are chosen such that $[H, \cP \cC] = [H, \rotop] = 0$. The self-adjointness and $\cP \cC$ symmetry of $H$ respectively imply that
\begin{align}\label{eq:h_SA}
    h^{\sigma ' \sigma} (-r) = \overline{h^{\sigma \sigma'} (r)}, \qquad h^{\sigma (\sigma ')^c} (r) = \overline{h^{\sigma^c \sigma'} (-r)},
\end{align}
and thus $h^{AA} (r) = \overline{h^{AA} (-r)} = h^{BB} (r)$. Recalling the definitions of $\cR_\pm$ and $\cRall$ in Section \ref{subsec:mono}, since the sets $\cR, \cR_+, \cR_-$ are pairwise disjoint, there exists a function $h : \cRall \to \mathbb{C}$ such that $h^{\sigma \sigma'} (R) = h (R + \tau^\sigma - \tau^{\sigma'})$ for all $R \in \cR$. The self-adjointness of $H$ then implies that $h(-r) = \overline{h(r)}$, while the assumption $[H, \rotop] = 0$ implies that $h(\rot_{2\pi/3} r) = h(r)$.


This shows that if $H$ defined by \eqref{eq:tight-binding_general} commutes with $\cP \cC$ and $\rotop$, then the hopping function $h^{\sigma \sigma'} (R-R')$ must be of the form $h (R-R' + \tau^\sigma - \tau^{\sigma'})$ for some function $h: \cRall \to \mathbb{C}$ satisfying $h(-r) = \overline{h(r)}$ and $h(\rot_{2\pi/3} r) = h(r)$ for all $r \in \cRall$. In other words, the Hamiltonian defined by \eqref{eq:tight-binding} and Assumption \ref{assumption:h} 
covers the entire class of self-adjoint operators of the form \eqref{eq:tight-binding_general} that satisfy the above $\cP \cC$ and $\rotop$ symmetries.

\begin{remark}
    In any physical model of graphene (see, e.g. \cite{Fang_Kaxiras_2016,jung2013tight,slater1954simplified}), the function $h$ is real-valued, meaning that $H$ commutes with both $\cP$ and $\cC$ separately.
\end{remark}
\begin{remark}
    One could instead define the parity and rotation operators with respect to different centers of rotation. In particular, we could take $(\cP \psi)^\sigma_R := \psi^{\sigma^c}_{-R}$, which corresponds to a $\pi$-rotation about the midpoint of the $A$ and $B$ site in the $R=0$ unit cell, that is $\frac{1}{2} (\tau^A + \tau^B)$.
    The rotation operator could be defined by $(\rotop \psi)^B_R := \psi^B_{\rot_{2\pi/3} R}$ and $(\rotop \psi)^A_R := \psi^A_{\rot_{2\pi/3} (R + \tau^{A,B}) - \tau^{A,B}}$, which is a $2\pi/3$ rotation about the point $\tau^B$. With these alternate definitions, the statements in this section would still hold. 
    That is, if $H$ is given by \eqref{eq:tight-binding_general}, then $H$ is of the form \eqref{eq:tight-binding} with $h$ satisfying Assumption \ref{assumption:h} if and only if $[H, \cP \cC] = [H, \rotop] = 0$.
\end{remark}



\subsection{Bilayer tight-binding model}\label{subsec:symmetries_bilayer}
For ease of exposition, we restrict our discussion of symmetries of the bilayer tight-binding model \eqref{eq:H}-\eqref{eq:intra}-\eqref{eq:Hperp} to the parameter values 
\begin{align}\label{eq:particular}
    \tau^A := \left(\frac{a}{2}, \sq{-}\frac{a}{2\sqrt{3}}\right),\qquad \ls := (0,0).
\end{align}
By the definition of $\tau^B$ in Section \ref{subsec:mono}, this means $\tau^B = \left(\frac{a}{2}, \sq{\frac{a}{2\sqrt{3}}}\right).$
Our choice of $\tau^A$ and $\ls$ ensures a symmetric TBG lattice and corresponds to $\co = (0,0)$ in Section \ref{subsec:symmetries_mono}; see Figure \ref{fig:symmetry}.
\begin{figure}[h!]
    \centering
    \includegraphics[width=.45\textwidth]{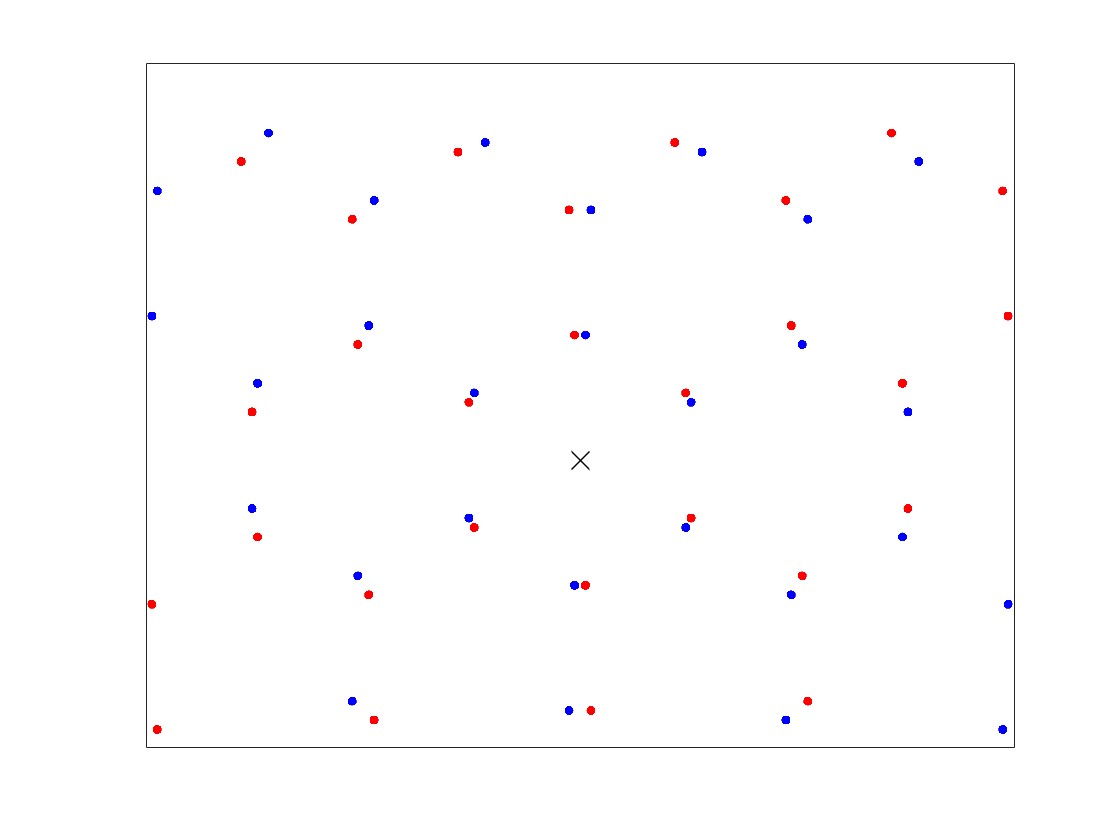}
    \includegraphics[width=.45\textwidth]{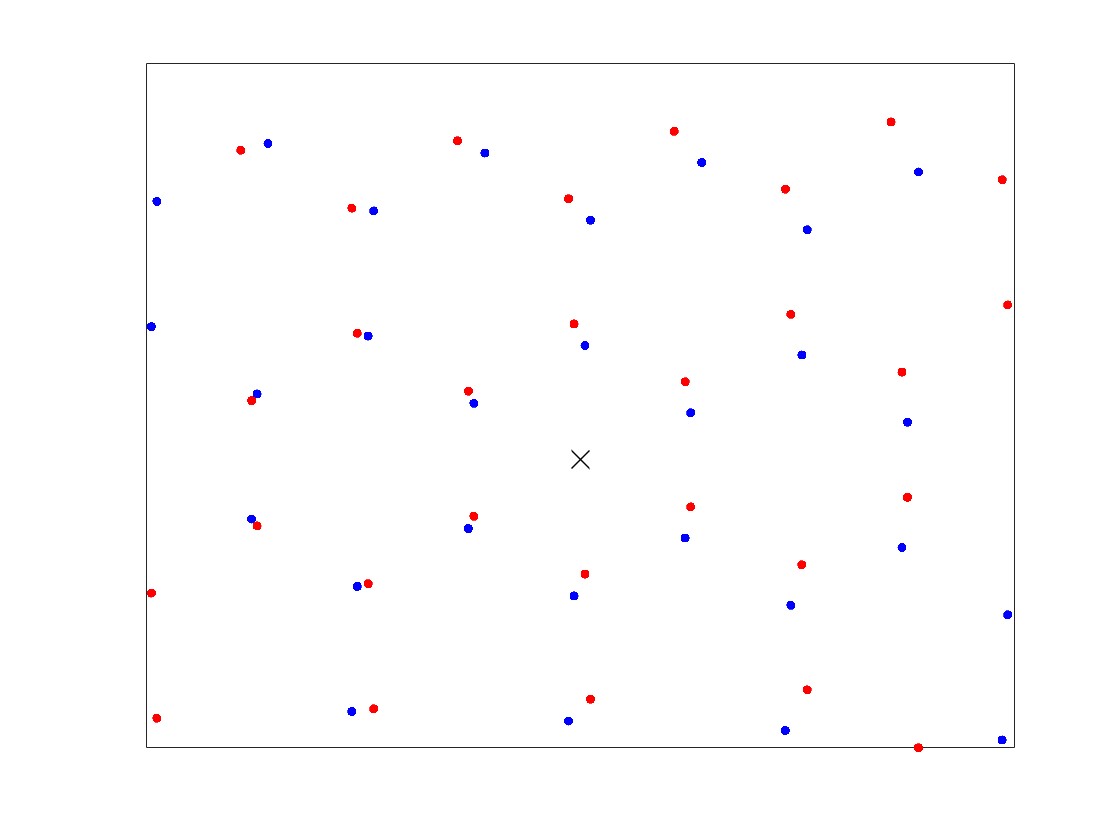}
    \caption{TBG lattice at twist angle $\theta = 5^\circ$ with the red and blue dots respectively corresponding to layers $1$ and $2$. The point ``$\times$'' marks the center of rotation. Left: parameters are given by \eqref{eq:particular}. Right: same value of $\tau^A$, but now $\ls := (0, a/10)$.}
    \label{fig:symmetry}
\end{figure}

\begin{remark}
    For arbitrary $\tau^A$ and $\ls$, the tight-binding Hamiltonian $H$ will in general not have the symmetries we identify below. However, the ergodic structure of the bilayer lattice at incommensurate twist angles implies that one could translate the lattice to achieve values of $\tau^A$ and $\ls$ arbitrarily close to those in \eqref{eq:particular}. 
    Therefore, given the smoothness of $\hperp$, there would exist corresponding symmetry operators that \emph{almost} commute with $H$ up to an arbitrarily small error.
\end{remark}

Note that Assumption \ref{assumption:hperp2} does not include any symmetries of the interlayer hopping function $\hperp$ and thus does not imply the usual symmetries associated with twisted bilayer graphene. 
We now collect additional assumptions on $\hperp$ and identify their corresponding symmetries. 
In the following proposition, it is assumed that $h$ and $\hperp$ respectively satisfy Assumptions \ref{assumption:h} and \ref{assumption:hperp2}, with $h(-r) = \overline{h (r)}$ for all $r \in \cRall$. 
\begin{proposition}\label{prop:symmetries}
    Define $H : \cH \to \cH$ by \eqref{eq:H}-\eqref{eq:intra}-\eqref{eq:Hperp}. 
    \begin{enumerate}
        \item \label{it:pc} \textbf{$\cP \cC $ symmetry.} Suppose $\hperp (-r;\eps) = \overline{\hperp (r;\eps)}$ for all $r \in \mathbb{R}^2$ and $0 < \eps < 1$. 
        For $j\in \{1,2\}$, define $\Pi_j : \cR_j \to \cR_j$ by
        \begin{align}\label{eq:Pi0}
            \Pi_j (R_j) := -R_j - \tau_j^A - \tau_j^B. 
        \end{align}
        For $j \in \{1,2\}$,
        define $\cP_j : \ell^2 (\cR_j; \mathbb{C}^2) \to \ell^2 (\cR_j; \mathbb{C}^2)$ and $\cC_j :\ell^2 (\cR_j; \mathbb{C}^2) \to \ell^2 (\cR_j; \mathbb{C}^2)$ by $(\cP_j \psi)^{\sigma}_{R_j} = \psi^{\sigma^c}_{\Pi_j (R_j)}$ and $(\cC_j \psi)^\sigma_{R_j} = \overline{\psi^\sigma_{R_j}}$, where $\sigma^c$ is the element in $\{A,B\}$ not equal to $\sigma$. Define $\cD : \cH \to \cH$ by $\cD := \diag (\cP_1 \cC_1, \cP_2 \cC_2)$. Then $[H, \cD] = 0$.  
        \item \label{it:rot} \textbf{Rotation symmetry.} Suppose $\hperp (\rot_{2\pi/3} r; \eps) = \hperp (r; \eps)$ for all $r \in \mathbb{R}^2$ and $0 < \eps < 1$. 
        For $\sigma \in \{A,B\}$ and $j \in \{1,2\}$, 
        define $\Pi^\sigma_j: \cR_j \to \cR_j$ by
        \begin{align}\label{eq:Pi}
            \Pi^\sigma_j (R_j) := \rot_{2\pi/3} (R_j + \tau_j^\sigma) - \tau_j^\sigma.
        \end{align}
        Define $\cU : \cH \to \cH$ by $((\cU \psi)^\sigma_j)_{R_j} = (\psi^\sigma_j)_{\Pi^\sigma_j (R_j)}.$
        Then $[H,\cU] = 0$.
        \item \textbf{Mirror symmetry.}  \label{it:M} Define the maps $m_x : \mathbb{R}^2 \to \mathbb{R}^2$ and $m_y: \mathbb{R}^2 \to \mathbb{R}^2$ by $m_x (r_1, r_2) = (-r_1, r_2)$ and $m_y (r_1, r_2) = (r_1, -r_2)$.
        Next, for $\sigma \in \{A,B\}$ and $j \in \{1,2\}$, define $\Pi^\sigma_{x,j} : \cR_j \to \cR_{j^c}$ and $\Pi^\sigma_{y,j}: \cR_j \to \cR_{j^c}$ by
        \begin{align*}
            \Pi^\sigma_{x,j} (R_j) := m_x (R_j + \tau^\sigma_j) - \tau^{\sigma}_{j^c}, \qquad \Pi^\sigma_{y,j} (R_j) := m_y(R_j + \tau^\sigma_j) - \tau^{\sigma^c}_{j^c},
        \end{align*}
        where $j^c$ is the element of $\{1,2\}$ not equal to $j$.
        Finally, define the ``mirror operators'' $\cM_x : \cH \to \cH$ and $\cM_y : \cH \to \cH$ by
        \begin{align*}
            ((\cM_x \psi)^\sigma_j)_{R_j} = (\psi^\sigma_{j^c})_{\Pi^\sigma_{x,j}(R_{j})}, \qquad
            ((\cM_y \psi)^\sigma_j)_{R_j} = (\psi^{\sigma^c}_{j^c})_{\Pi^\sigma_{y,j}(R_{j})}.
        \end{align*}
        Set $i \in \{x,y\}$, if $h (m_i (r)) = h(r)$ and $\hperp (m_i (r'); \eps) = \overline{\hperp (r'; \eps)}$ for all $r \in \cRall$, $r' \in \mathbb{R}^2$ and $0 < \eps < 1$, then $[H, \cM_i] = 0.$
    \end{enumerate}
\end{proposition}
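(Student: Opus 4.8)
The three parts all have the same shape: each operator ($\cD$, $\cU$, $\cM_x$, $\cM_y$) is built from a relabelling of lattice sites composed with a sublattice swap and/or complex conjugation, so checking the stated commutation relation reduces to (i) verifying the relabelling is a well-defined (anti)unitary of finite order on $\cH$, and (ii) following how the arguments of $h$ and $\hperp$ transform under it and invoking the invariance of the hopping functions. I would organise the proof in three steps.

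\textit{Step 1 (the point maps are well defined).} Using $\ls = 0$ and the explicit $\tau^A$ from \eqref{eq:particular}, I would first record the elementary geometric facts: $\rot_{2\pi/3}\cR=\cR$ and $m_i\cR=\cR$; $\rot_{2\pi/3}$ commutes with all rotations while $m_i\rot_{\theta/2}=\rot_{-\theta/2}m_i$; $\rot_{2\pi/3}(\cR+\tau^{\sigma\sigma'})=\cR+\tau^{\sigma\sigma'}$; and $m_x\tau^\sigma\equiv\tau^\sigma$, $m_y\tau^\sigma\equiv\tau^{\sigma^c}\pmod{\cR}$. With \eqref{eq:tau_j} these give $\rot_{2\pi/3}\cR_j=\cR_j$ and $m_i\cR_j=\cR_{j^c}$, and show that $\Pi_j$, $\Pi^\sigma_j$, $\Pi^\sigma_{x,j}$, $\Pi^\sigma_{y,j}$ each map their stated domain bijectively onto their stated codomain, with $\cD^2=\cM_x^2=\cM_y^2=I$ and $\cU^3=I$. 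Hence $\cU$, $\cM_x$, $\cM_y$ are unitary and $\cD$ is antiunitary on $\cH$.

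\textit{Step 2 (the block-diagonal operators $\cD$ and $\cU$).} Since $\cD=\diag(\cP_1\cC_1,\cP_2\cC_2)$ and $\cU=\diag(\cU_1,\cU_2)$ are block diagonal, $[H,\cD]=0$ (resp. $[H,\cU]=0$) is equivalent to the intralayer identities $[H_{jj},\cP_j\cC_j]=0$ (resp. $[H_{jj},\cU_j]=0$) for $j\in\{1,2\}$ together with the single interlayer identity $\cP_1\cC_1\Hperp=\Hperp\cP_2\cC_2$ (resp. $\cU_1\Hperp=\Hperp\cU_2$); the companion relation for $\Hperp^\dagger$ follows from the same computation applied to its kernel $\overline{\hperp(\,\cdot\,;\eps)}$. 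For the intralayer identities I would conjugate $H_{jj}$ by the isometry $R_j\mapsto\rot_{-\theta_j/2}R_j$, which by \eqref{eq:intra} and Step 1 turns $H_{jj}$ into the untwisted monolayer Hamiltonian \eqref{eq:tight-binding} with the same $h$ and turns $\cP_j,\cU_j$ into the monolayer operators $\cP,\rotop$ of Section~\ref{subsec:symmetries_mono} (with center $\co=(0,0)$, by \eqref{eq:particular}); the relations $[H,\cP\cC]=0$ and $[H,\rotop]=0$ proved there, which rely only on $h(-r)=\overline{h(r)}$ and the $\rot_{2\pi/3}$-invariance of $h$ in Assumption~\ref{assumption:h}, then transfer. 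For the interlayer identity I would expand $(\cP_1\cC_1\Hperp\varphi_2)^\sigma_{R_1}$ from \eqref{eq:Hperp}, use $\overline{\hperp(r;\eps)}=\hperp(-r;\eps)$ (resp. $\hperp(\rot_{2\pi/3}r;\eps)=\hperp(r;\eps)$), change variables $R_2\mapsto\Pi_2(R_2)$ (resp. $R_2\mapsto\Pi^{\sigma'}_2(R_2)$), simplify the argument of $\hperp$ with the identities from Step 1 (e.g. $-\Pi_1(R_1)-\tau_1^{\sigma^c}=R_1+\tau_1^\sigma$, resp. $\Pi^\sigma_1(R_1)+\tau_1^\sigma=\rot_{2\pi/3}(R_1+\tau_1^\sigma)$), and, relabelling $\sigma'\mapsto(\sigma')^c$ where needed, recognise the result as $(\Hperp\cP_2\cC_2\varphi_2)^\sigma_{R_1}$ (resp. $(\Hperp\cU_2\varphi_2)^\sigma_{R_1}$).

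\textit{Step 3 (the mirror operators).} Here the new feature is that $\cM_i$ interchanges layers, so it has the block form with zero diagonal and off-diagonal pieces $A_i:\ell^2(\cR_2;\mathbb{C}^2)\to\ell^2(\cR_1;\mathbb{C}^2)$ and $B_i:\ell^2(\cR_1;\mathbb{C}^2)\to\ell^2(\cR_2;\mathbb{C}^2)$, so that $[H,\cM_i]=0$ unpacks into $A_iH_{22}=H_{11}A_i$, $B_iH_{11}=H_{22}B_i$, $A_i\Hperp^\dagger=\Hperp B_i$, $B_i\Hperp=\Hperp^\dagger A_i$. The first two are of intralayer type: since by Step 1 $m_i$ carries $\cR_2$ onto $\cR_1$ and conjugates $\rot_{-\theta_2/2}$ to $\rot_{-\theta_1/2}$, the change of variables dictated by $\Pi^{\sigma'}_{i,1}$ turns the argument of $h$ in $A_iH_{22}$ into $m_i$ applied to the argument in $H_{11}A_i$, so $h(m_i r)=h(r)$ closes them. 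The last two are of interlayer type: expanding both sides from \eqref{eq:Hperp} and the kernel of $\Hperp^\dagger$ and changing variables with $\Pi^\sigma_{i,j}$, the orientation-reversing action of $m_i$ is what converts $\Hperp$ into $\Hperp^\dagger$, and one matches the two sides using $\hperp(m_i r;\eps)=\overline{\hperp(r;\eps)}$; for $\cM_y$ one must in addition carry the sublattice transposition $\sigma\mapsto\sigma^c$ built into $\Pi^\sigma_{y,j}$ through the computation. I expect this last point to be the main obstacle: the twist rotations, the $\tau$-shift corrections, the sublattice labels, and the orientation reversal of $m_i$ must all align so that after the change of variables the argument of $\hperp$ is exactly the $m_i$-image of the original argument — with $\cM_y$ slightly worse than $\cM_x$ because of the extra transposition. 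Steps 1 and 2, by contrast, are routine bookkeeping plus a direct appeal to the monolayer computation of Section~\ref{subsec:symmetries_mono}.
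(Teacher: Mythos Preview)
Your proposal is correct and follows essentially the same route as the paper: block decomposition of the symmetry operator, reduction of the intralayer relations to the monolayer identities of Section~\ref{subsec:symmetries_mono}, and a direct change of variables in the interlayer kernel using the assumed invariance of $\hperp$. The only cosmetic difference is that where you suggest redoing the interlayer computation for $\Hperp^\dagger$ (and listing all four block identities in the mirror case), the paper instead deduces the companion relations by taking adjoints, using that $\cP_j,\cC_j$ are self-adjoint with $[\cP_j,\cC_j]=0$, that $\cU_j^3=\mathrm{Id}$, and that $\cM_{x,12}^\dagger$ is the other off-diagonal block of $\cM_x$.
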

For a proof of Proposition \ref{prop:symmetries}, see Appendix \ref{subsection:proofs_symmetries}.



\begin{remark}
    Above, the operators $\cP_j$ correspond to a $\pi$-rotation about the origin, while $\cU$ is a $2\pi/3$-rotation about the origin. The mirror operators $\cM_x$ and $\cM_y$ respectively correspond to reflections about the $y$- and $x$-axes.
\end{remark}

\begin{remark}\label{remark:different}
    Under different assumptions on $h$ and $\hperp$, the first result in Proposition \ref{prop:symmetries} can be strengthened. Indeed, if $h$ and $\hperp$ are real-valued (as is the case for graphene), then the operator $\cC := \diag (\cC_1, \cC_2)$ commutes with $H$. If, in addition, $\hperp (-r;\eps) = \hperp (r;\eps)$ for all $r \in \mathbb{R}^2$ and $0 < \eps < 1$, then $\cP := \diag (\cP_1, \cP_2)$ also commutes with $H$. We observe that these results, as well as part 3 of Proposition \ref{prop:symmetries}, in fact do not require the $2\pi/3$-rotation invariance of $h$ in Assumption \ref{assumption:h}.
\end{remark}


\subsection{Continuum model}\label{subsection:symmetries_continuum}
Let $\Hfull_0$ denote the operator $\Hfull$ from \eqref{eq:Hfull} when $\ls = 0$. The following proposition shows that for any $\ls$, $\Hfull$ and $\Hfull_0$ are related by a unitary transformation.
\begin{proposition}\label{prop:unitary}
    Set $\phi := \frac{1}{2} \ls \cdot K$ and $\fw := \frac{1}{\beta}\rot_{-\pi/2} \ls$. Define the unitary operator $\cU$ by
    \begin{align}\label{eq:unitary}
        \cU f (r) := \begin{pmatrix}
            e^{-i \phi}I_2 & 0\\
            0 & e^{i\phi} I_2
        \end{pmatrix} f(r-\fw).
    \end{align}
    Then $\cU \Hfull \cU^\dagger = \Hfull_0$.
\end{proposition}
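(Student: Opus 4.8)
The plan is to write $\cU = \cM \cT$, where $\cM$ is pointwise multiplication by the constant block‑diagonal unitary $\diag(e^{-i\phi}I_2, e^{i\phi}I_2)$ and $\cT$ is the translation $(\cT f)(r):=f(r-\fw)$, so that $\cU^\dagger = \cT^{-1}\cM^{-1}$ with $(\cT^{-1}f)(r)=f(r+\fw)$, and $\cM,\cT$ commute. Since $\cM$ is a constant unitary and $\cT$ preserves all Sobolev spaces, $\cU:H^2\to H^2$ unitarily, so $\cU\Hfull\cU^\dagger$ is well defined on $\mathrm{dom}(\Hfull_0)$. Both $\Hfull$ and $\Hfull_0$ are $2\times2$ block operators; I will check that conjugation by $\cU$ fixes the diagonal blocks and sends the $(1,2)$ block of $\Hfull$ to that of $\Hfull_0$, and deduce the $(2,1)$ block by taking adjoints (using that both operators are self‑adjoint, or equivalently that their off‑diagonal blocks are a pair $(B,B^\dagger)$).

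\emph{Diagonal blocks.} The $(1,1)$ and $(2,2)$ blocks of $\Hfull$ come only from $\HBM$ and $\HBMt$ (the operators $\HBMNNN,\HBMgNN$ are purely off‑diagonal), namely the $\ls$‑independent constant‑coefficient differential operators $L$ and $\HBMsd\mp\tfrac12\beta i\sigma_3 L$. These commute with $\cT$ (translations commute with constant‑coefficient differential operators), and $\cM$ acts on each diagonal block as the scalar $e^{\mp i\phi}$, so it commutes with them too. Hence the diagonal blocks are unchanged, and they already coincide with those of $\Hfull_0$ since they do not involve $\ls$.

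\emph{Off‑diagonal block.} Reading off the $(1,2)$ entry of $\cM X\cM^{-1}$ gives $(\cU\Hfull\cU^\dagger)_{12}=e^{-2i\phi}\,\cT\,(\Hfull)_{12}\,\cT^{-1}$, with $e^{-2i\phi}=e^{-i\ls\cdot K}$. Every term occurring in $(\Hfull)_{12}$ (from $\hoppingT$, $\radpNN(\eps)\hoppingT_{\nabla,{\rm NN}}$, $\radNNN(\eps)\hoppingT_{{\rm NNN}}$, $\eps\hoppingT_2$, $\eps\Tu$) has the form
\begin{align*}
  c\; e^{i\gamma\cdot\ls}\; e^{-iw\cdot r}\,M_0\,P(D),
\end{align*}
where $c$ collects the $\ls$‑independent scalars, $M_0$ is a constant $2\times2$ matrix, $P(D)$ is a constant‑coefficient operator (the identity, or a first‑order operator $v\cdot D$), $w\in\{\svec_1,\svec_2,\svec_3\}$ for all terms except those of $\hoppingT_{{\rm NNN}}$ where $w\in\{-2\svec_1,-2\svec_2,-2\svec_3\}$, and $\gamma\in\{0,\pm b_1,\pm b_2,\pm(b_1-b_2),\pm(b_1+b_2)\}$ is exactly the exponent of the $\ls$‑prefactor written in \eqref{eq:HBM_def} and in the definitions of $\hoppingT_{\nabla,{\rm NN}},\hoppingT_{{\rm NNN}},\hoppingT_2,\Tu$. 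Using $\cT\,e^{-iw\cdot r}M_0P(D)\,\cT^{-1}=e^{iw\cdot\fw}\,e^{-iw\cdot r}M_0P(D)$ (translations commute with $P(D)$ and shift the exponential by $e^{iw\cdot\fw}$), such a term is sent to $c\,e^{i(\gamma-K)\cdot\ls}\,e^{iw\cdot\fw}\,e^{-iw\cdot r}M_0P(D)$. Writing $w\cdot\fw=w'\cdot\ls$ with $w':=\tfrac1\beta\rot_{\pi/2}w$ (from $\fw=\tfrac1\beta\rot_{-\pi/2}\ls$ and $\rot_{-\pi/2}^{\mathsf T}=\rot_{\pi/2}$), and since $\ls\in\Gamma$ is arbitrary, it suffices to verify the exact vector identities $\gamma-K+w'=0$ for each term; then each term is carried to its $\ls=0$ version and $(\cU\Hfull\cU^\dagger)_{12}=(\Hfull_0)_{12}$.

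\emph{The key identities.} A direct computation from \eqref{eq:svecs}, \eqref{eq:reciprocal_lattice}, \eqref{eq:Dirac_points} gives $\rot_{\pi/2}\svec_1=\beta K$, $\rot_{\pi/2}\svec_2=\beta(K+b_2)$, $\rot_{\pi/2}\svec_3=\beta(K-b_1)$, hence $w'=K,\,K+b_2,\,K-b_1$ for $w=\svec_1,\svec_2,\svec_3$; these cancel $\gamma=0,\,-b_2,\,b_1$, which are precisely the prefactor exponents of the three terms of $\hoppingT$, and — since $\hoppingT_{\nabla,{\rm NN}},\hoppingT_2,\Tu$ carry exactly the same exponential/$\ls$ structure term by term — also of those operators. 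For the three $\hoppingT_{{\rm NNN}}$ terms ($w=-2\svec_j$) one gets $w'=-2K,\,-2(K+b_2),\,-2(K-b_1)$, and combining with the single lattice identity $3K=b_1-b_2$ (and its $\rot_{2\pi/3}$‑images) yields $\gamma-K+w'=0$ against the prefactors $e^{i(b_1-b_2)\cdot\ls},e^{i(b_1+b_2)\cdot\ls},e^{-i(b_1+b_2)\cdot\ls}$. Summing all terms gives $(\cU\Hfull\cU^\dagger)_{12}=(\Hfull_0)_{12}$; taking adjoints gives $(\cU\Hfull\cU^\dagger)_{21}=(\Hfull_0)_{21}$, and together with the diagonal blocks, $\cU\Hfull\cU^\dagger=\Hfull_0$.

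\emph{Main obstacle.} There is no conceptual difficulty here: the content is entirely bookkeeping — verifying the finitely many identities $\rot_{\pi/2}\svec_j=\beta(K+\text{reciprocal vector})$ and $3K=b_1-b_2$, and confirming that each of the roughly fifteen off‑diagonal terms carries exactly the $\ls$‑prefactor predicted by its $w$. The cyclic covariance ($\svec_j=\rot_{2\pi/3}\svec_{j-1}$, with the matrices $M_0$, the prefactor exponents, and the vectors $w$ all rotating in step) reduces the verification to one representative per operator.
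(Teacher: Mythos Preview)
Your proof is correct and follows essentially the same approach as the paper: both observe that the diagonal blocks are $\ls$-independent constant-coefficient operators (hence unchanged by conjugation), and then verify term by term that the phase $e^{-2i\phi}e^{iw\cdot\fw}$ acquired under conjugation by $\cU$ cancels the $\ls$-prefactor $e^{i\gamma\cdot\ls}$ on each off-diagonal piece. The paper packages the computation through the auxiliary matrix functions $\tj_j(r)$ and the identities $e^{-2i\phi}e^{i\gamma_j\cdot\ls}\tj_j(r-\fw)=\tj_j(r)$, whereas you reduce it to the equivalent vector identities $\rot_{\pi/2}\svec_j=\beta(K+G_j)$ and $3K=b_1-b_2$; the content is identical.
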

We refer to Appendix \ref{subsection:proofs_symmetries} for a proof of Proposition \ref{prop:unitary}. The result will allow for symmetries of $\Hfull_0$ to extend to symmetries of $\Hfull$; see Proposition \ref{prop:continuum_symmetries} below.

\medskip

The effective Hamiltonian $\Hfull$ 
admits symmetries analogous to those for the bilayer tight-binding model. The continuum Hamiltonian also satisfies a translation symmetry that is absent from the tight-binding picture.
To make this more precise, define the ``moir\'e lattice'' \cite{Nam_Koshino_2017, watson2023bistritzer} by
\begin{align}\label{eq:moire}
    \cR_m := \{n_1 a_{m,1} + n_2 a_{m,2} : (n_1, n_2) \in \mathbb{Z}^2\}, \qquad a_{m,1} := \frac{a}{\beta} \begin{pmatrix}
        \sqrt{3}/2\\
        1/2
    \end{pmatrix},
    \qquad a_{m,2} := \frac{a}{\beta} \begin{pmatrix}
        -\sqrt{3}/2\\
        1/2
    \end{pmatrix},
\end{align}
where we recall the respective definitions of $a$ and $\beta$ in 
\eqref{eq:lattice_defn} and \eqref{eq:beta_def}.
For $v \in \mathbb{R}^2$ and $\svec_1$ given by \eqref{eq:svecs}, let $$\cT_v f (r) := \begin{pmatrix}
    I_2 & 0\\
    0 & e^{i \svec_1 \cdot v} I_2
\end{pmatrix}f(r-v)$$ denote the operator which translates a function $f$ by $v$ and then multiplies the third and fourth entries by a $v$-dependent phase. \srq{Observe that if $v \in \cR_m$, then $s_1 \cdot v \in \frac{2 \pi}{3} \mathbb{Z}$, meaning that $e^{i \svec_1 \cdot v} \in \{1, e^{i2\pi/3}, e^{-i2\pi/3}\}$.}
Define the operator which rotates $\mathbb{C}^4$-valued functions counterclockwise by $2\pi/3$ and then multiplies the second and fourth entries by a phase as
\begin{align*}
    \mathscr{R} f(r) := \diag (1, \sq{e^{i2\pi/3}}, 1, \sq{e^{i2\pi/3}}) f (\rot_{2\pi/3}^{\top} r).
\end{align*}
Analogous to Section \ref{subsec:symmetries_bilayer}, define the parity and complex conjugation operators
\begin{align}\label{eq:PC}
    \cP f(r) = f(-r), \qquad \cC f(r) = \overline{f(r)},
\end{align}
and compose these operators together with the operator which swaps sublattice as
\begin{align*}
    \cD := \cP \cC \begin{pmatrix}
        0 & 1 & 0 & 0\\
        1 & 0 & 0 & 0\\
        0 & 0 & 0 & 1\\
        0 & 0 & 1 & 0
    \end{pmatrix}.
\end{align*}
Finally, define the mirror operators by
\begin{align*}
    \cM_x f(r_1, r_2) := \begin{pmatrix}
        0 & 0 & 1 & 0\\
        0 & 0 & 0 & 1\\
        1 & 0 & 0 & 0\\
        0 & 1 & 0 & 0
    \end{pmatrix} \overline{f(-r_1, r_2)},
    \qquad \cM_y f(r_1, r_2) := \begin{pmatrix}
        0 & 0 & 0 & 1\\
        0 & 0 & 1 & 0\\
        0 & 1 & 0 & 0\\
        1 & 0 & 0 & 0
    \end{pmatrix} f(r_1, -r_2).
\end{align*}
We are now ready to state the assumptions needed for $\Hfull$ to commute with the above operators. These assumptions involve the constants $\alpha$, $\alpha_o$, $\lambda_i$ and $\mu_j$ defined in Theorem \ref{thm:Dirac} and \eqref{eq:rad_ang}.
\begin{proposition}\label{prop:continuum_symmetries}
    Define $\Hfull$ by \eqref{eq:Hfull} and $\cU$ by \eqref{eq:unitary}. 
    Assume that $h$ and $\hperp$ respectively satisfy Assumptions \ref{assumption:h} and \ref{assumption:hperp2}, and that $h(-r) = \overline{h (r)}$ for all $r \in \cRall$. Then
    \begin{enumerate}
        \item \textbf{Translation  symmetry.} \label{symm:translation} For any $v \in \cR_m$, $[\Hfull, \cT_v] = 0$.
        \item \textbf{$\cP \cC $ symmetry.} \label{symm:PC} 
        If $\lambda_i, \mu_j \in \mathbb{R}$ for all $i \in \{0,1, \dots, 5\}$ and $j \in \{0,1,2\}$, then $[\Hfull, \cU^\dagger \cD \cU] = 0$.
        \item \textbf{Rotation symmetry.} \label{symm:rotation} If $\lambda_0 = \lambda_2 = \lambda_4$, $\lambda_1, = \lambda_3 = \lambda_5$ and $\mu_0 = \mu_1 = \mu_2$, then $[\Hfull, \cU^\dagger \mathscr{R}\, \cU] = 0$.
        \item \textbf{Mirror symmetry.}\label{symm:mirror_x} 
        If $\alpha, \alpha_o \in \mathbb{R}$, $\lambda_1 = \lambda_5$, $\lambda_2 = \lambda_4$, $\mu_0 = 0$ and $-\mu_1 = \mu_2$, then $[\Hfull, \cU^\dagger \cM_x\, \cU] =0$.
        \item \textbf{Mirror symmetry.} \label{symm:mirror_y} If $\alpha, \alpha_o, \lambda_0, \lambda_3 \in \mathbb{R}$, $\overline{\lambda_1} = \lambda_5$, $\overline{\lambda_2} = \lambda_4$, $\mu_0 \in i \mathbb{R}$ and $\overline{\mu_1} = -\mu_2$, then $[\Hfull, \cU^\dagger \cM_y \, \cU] = 0$.
    \end{enumerate}
\end{proposition}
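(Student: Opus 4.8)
The plan is to prove all five assertions by writing $\Hfull$ in the block form \eqref{eq:Hfull} and checking the required (anti)commutation relation on each of the four pieces $\HBM,\HBMNNN,\HBMgNN,\HBMt$ separately. For parts (2)--(5) the first step is to reduce to $\ls=0$: since $\cU$ of \eqref{eq:unitary} is unitary, Proposition~\ref{prop:unitary} gives $[\Hfull,\cU^\dagger\cX\cU]=0\iff[\cU\Hfull\cU^\dagger,\cX]=[\Hfull_0,\cX]=0$ for $\cX\in\{\cD,\mathscr{R},\cM_x,\cM_y\}$, so it suffices to work with $\Hfull_0$, i.e.\ the operator obtained from \eqref{eq:HBM_def} and the definitions of $\HBMNNN,\HBMgNN,\HBMt$ by deleting every $\ls$-phase $e^{\pm ib_j\cdot\ls}$. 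Part (1) is handled directly on $\Hfull$.

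For part (1), the diagonal blocks of $\Hfull$ are the constant-coefficient operators $L$, $\HBMsd$ and $\mp\tfrac12\beta i\sigma_3 L$; each commutes with any translation $f\mapsto f(\cdot-v)$ and with the constant block-scalar matrix $\diag(I_2,e^{i\svec_1\cdot v}I_2)$ in $\cT_v$, so the diagonal part of $\Hfull$ commutes with $\cT_v$ for every $v\in\mathbb{R}^2$. The off-diagonal blocks are sums of terms of the shape (constant $2\times2$ matrix)$\,\cdot\,e^{-i\svec_j\cdot r}\,\cdot\,$(constant directional-derivative factor), with $j\in\{1,2,3\}$ in $\HBM,\HBMgNN,\hoppingT_2,\Tu$ and with $e^{i2\svec_j\cdot r}$ in $\HBMNNN$; conjugating such a term by $\cT_v$ multiplies it by $e^{i(\svec_j-\svec_1)\cdot v}$, respectively $e^{-i(2\svec_j+\svec_1)\cdot v}$, and a short computation shows $\svec_j-\svec_1\in\cR_m^*$ and $2\svec_j+\svec_1\in\cR_m^*$, so these phases equal $1$ whenever $v\in\cR_m$ (the directional-derivative factors commute with translations and play no role). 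Hence $[\Hfull,\cT_v]=0$ for $v\in\cR_m$.

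For parts (2)--(5), after reducing to $\Hfull_0$, the mechanism is uniform: the symmetry operator acts on the scalar phases $e^{-i\svec_j\cdot r}$ by permuting or reflecting the triple $\{\svec_1,\svec_2,\svec_3\}$ (using $\svec_j=\rot_{2\pi/3}\svec_{j-1}$ from \eqref{eq:svecs} for $\mathscr{R}$, and $m_x:\svec_1\mapsto\svec_1,\ \svec_2\leftrightarrow\svec_3$ and $m_y:\svec_1\mapsto-\svec_1,\ \svec_2\mapsto-\svec_3,\ \svec_3\mapsto-\svec_2$ for the mirrors); on the three hopping matrices $\left(\begin{smallmatrix}1&1\\1&1\end{smallmatrix}\right),\left(\begin{smallmatrix}1&e^{-i2\pi/3}\\e^{i2\pi/3}&1\end{smallmatrix}\right),\left(\begin{smallmatrix}1&e^{i2\pi/3}\\e^{-i2\pi/3}&1\end{smallmatrix}\right)$ by conjugation with $\diag(1,e^{\pm i2\pi/3})$ (for $\mathscr{R}$, which cycles the three) or by transposition together with complex conjugation (for $\cD,\cM_x,\cM_y$, which interchange the last two); and on the directional factors $(\rot_\phi\angvar K)\cdot D$ by rotating or reflecting $\phi$. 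The stated algebraic hypotheses on $\lambda_i,\mu_j$ (and $\alpha,\alpha_o$) are then exactly what absorbs the leftover scalar phases so that each of $\HBM,\HBMNNN,\HBMgNN,\HBMt$ is returned to itself. Concretely, in (3) the covariance of $L$ under $\mathscr{R}$ is the continuum form of Lemma~\ref{lemma:tilde_h_rot}, the covariance of $\HBMsd$ is encoded in the matrix $\alpha_o\left(\begin{smallmatrix}1&i\\i&-1\end{smallmatrix}\right)$ from Theorem~\ref{thm:Dirac} (so $(D_{r_1}+iD_{r_2})^2$ acquires $e^{-i4\pi/3}$ under $\rot_{2\pi/3}$, cancelled by $\diag(1,e^{i2\pi/3})$), $\sigma_3$ commutes with the diagonal phase (preserving the $\pm$ between the two diagonal blocks), and $\lambda_0=\lambda_2=\lambda_4$, $\lambda_1=\lambda_3=\lambda_5$, $\mu_0=\mu_1=\mu_2$ let the three cyclically permuted off-diagonal summands recombine; in (2), where the permutation inside $\cD$ swaps sublattices within each layer, one checks (using $\lambda_i,\mu_j\in\mathbb{R}$) that each off-diagonal summand returns to itself; in (4) and (5), where the permutations inside $\cM_x$ and $\cM_y$ swap layers (and, for $\cM_y$, also sublattices), one compares the transformed $(1,2)$ block against the adjoint blocks $\hoppingT^\dagger,\hoppingT^\dagger_{\nabla,{\rm NN}},(\hoppingT_2+\Tu)^\dagger$, the identity \eqref{eq:adjoint} for $\hoppingT_2^\dagger$ being used to match the $\hoppingT_2$ contribution; the conditions $\mu_0=0$ in (4) and $\mu_0\in i\mathbb{R}$ in (5) arise because the $\svec_1$-summand of $\hoppingT_2$ carries the direction $\rot_{\pi/2}\angvar K=(0,1)$, i.e.\ $D_{r_2}$, which is odd under the relevant coordinate reflection.

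The main obstacle is the bookkeeping in parts (4) and (5): there the symmetry operator simultaneously reflects a coordinate, conjugates (in the $\cM_x$ case), and permutes the four spinor components across \emph{both} layers and sublattices, so a generic off-diagonal summand (matrix, exponential, and derivative factor together) must be tracked into the adjoint block and matched there via \eqref{eq:adjoint}, rather than simply returned to itself as in the rotation case; making the precise lists of conditions ($\lambda_0\in\mathbb{R}$, $\lambda_2=\lambda_4$, $\lambda_1=\lambda_5$, $\mu_0=0$, $-\mu_1=\mu_2$ in (4), and $\lambda_0,\lambda_3\in\mathbb{R}$, $\overline{\lambda_1}=\lambda_5$, $\overline{\lambda_2}=\lambda_4$, $\mu_0\in i\mathbb{R}$, $\overline{\mu_1}=-\mu_2$ in (5)) come out exactly right demands care with the signs produced by $m_x$, $m_y$ and $\cC$ on each factor. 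I expect everything else (parts 1--3) to be a routine if lengthy direct verification once the transformation rules for $\{\svec_j\}$, the hopping matrices, and $(\rot_\phi\angvar K)\cdot D$ are tabulated.
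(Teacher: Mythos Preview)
Your proposal is correct and follows essentially the same approach as the paper: reduce to $\ls=0$ via Proposition~\ref{prop:unitary} for parts (2)--(5), then verify each commutation block-by-block using the transformation rules for the phases $e^{\pm i\svec_j\cdot r}$, the three hopping matrices, and the directional derivatives---organized in the paper through auxiliary functions $\tj_j(r)$ (equation \eqref{eq:tj}), which is exactly the tabulation you anticipate. One small slip: your list of conditions for part (4) includes $\lambda_0\in\mathbb{R}$, but the statement imposes no condition on $\lambda_0$ (and none is needed, since $\tj_0^\top(-r_1,r_2)=\tj_0(r_1,r_2)$ handles that summand automatically).
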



Next, we provide sufficient conditions for the assumptions of Proposition \ref{prop:continuum_symmetries} to hold.
\begin{proposition}\label{prop:sufficient}
    Suppose $h$ and $\hperp$ respectively satisfy Assumptions \ref{assumption:h} and \ref{assumption:hperp2}, and that $h(-r) = \overline{h (r)}$ for all $r \in \cRall$. Then
    \begin{enumerate}
        \item If $\hperp (-r; \eps) = \overline{\hperp (r; \eps)}$ for all $(r;\eps) \in \mathbb{R}^2 \times (0,1)$, then $\lambda_i, \mu_j \in \mathbb{R}$ for all $i \in \{0,1, \dots, 5\}$ and $j \in \{0,1,2\}$.
        \item If $\hperp (\rot_{2\pi/3} r; \eps) = \hperp (r; \eps)$ for all $(r;\eps) \in \mathbb{R}^2 \times (0,1)$, then $\lambda_0 = \lambda_2 = \lambda_4$, $\lambda_1, = \lambda_3 = \lambda_5$ and $\mu_0 = \mu_1 = \mu_2$.
        \item \label{it:sufficient_alpha} If $h(-r_1, r_2) = h(r_1, r_2)$ for all $(r_1, r_2) \in \cRall$, then $\alpha, \alpha_o \in \mathbb{R}$.
        \item \label{it:sufficient_mirror_x} If $\hperp (r_1, -r_2; \eps) = \hperp (r_1, r_2; \eps)$ for all $(r_1, r_2; \eps) \in \mathbb{R}^2 \times (0,1)$, then $\lambda_1 = \lambda_5$, $\lambda_2 = \lambda_4$, $\mu_0 = 0$ and $-\mu_1 = \mu_2$.
        \item \label{it:sufficient_mirror_y} If $\hperp (-r_1, r_2; \eps) = \overline{\hperp (r_1, r_2; \eps)}$ for all $(r_1, r_2; \eps) \in \mathbb{R}^2 \times (0,1)$, then $\lambda_0, \lambda_3 \in \mathbb{R}$, $\overline{\lambda_1} = \lambda_5$, $\overline{\lambda_2} = \lambda_4$, $\mu_0 \in i \mathbb{R}$ and $\overline{\mu_1} = -\mu_2$.
    \end{enumerate}
\end{proposition}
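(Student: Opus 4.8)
The plan is to prove all five parts by the same two-step scheme: transfer each hypothesized real-space symmetry of $\hperp$ (or $h$) to the Fourier side, then read off the finitely many constants $\lambda_i,\mu_j$ (or $\alpha,\alpha_o$) by evaluating at the relevant momenta. The transfer is a change of variables in $\hathperp(k;\eps)=\int e^{-ik\cdot r}\hperp(r;\eps)\,\d r$: for $O\in O(2)$ one gets $\hathperp(Ok;\eps)=\int e^{-ik\cdot r'}\hperp(Or';\eps)\,\d r'$, so $\hperp(Or;\eps)=\hperp(r;\eps)$ yields $\hathperp(Ok;\eps)=\hathperp(k;\eps)$, while $\hperp(Or;\eps)=\overline{\hperp(r;\eps)}$ yields $\hathperp(-Ok;\eps)=\overline{\hathperp(k;\eps)}$ (the extra sign from conjugating the exponential). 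In each case I then use the factorization $\hathperp=\hathperprad\hathperpang$ and the normalization $\hathperprad(|K|;\eps)=\eps\ne 0$ from \eqref{eq:neighbor_bds} to cancel the (real) radial factor and transfer the symmetry onto $\hathperpang\in C^2(\mathbb{S};\mathbb{C})$; the claimed identities among $\lambda_i=\hathperpang(\rot_{\pi i/3}\angvar{K})$ and $\mu_j=|K|^{-1}\hathperpang'(\rot_{2\pi j/3}\angvar{K})$ then follow by evaluating at the six unit vectors $\rot_{\pi i/3}\angvar{K}$, using also that the angular derivative is real-linear, commutes with rotations, and picks up a sign under reflections (a reflection reversing the sense of rotation).

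For part 1, taking $O=-I$ turns $\hperp(-r;\eps)=\overline{\hperp(r;\eps)}$ into $\hathperp(k;\eps)=\overline{\hathperp(k;\eps)}$, i.e.\ $\hathperp$ is real-valued, hence so is $\hathperpang$ on $\mathbb{S}$; thus every $\lambda_i\in\mathbb{R}$ and, angular differentiation preserving real-valuedness, every $\mu_j\in\mathbb{R}$. For part 2, $\hperp(\rot_{2\pi/3}r;\eps)=\hperp(r;\eps)$ gives $\hathperpang(\rot_{2\pi/3}\angvar{k})=\hathperpang(\angvar{k})$ on $\mathbb{S}$; since $\rot_\pi=\rot_{\pi/3}\rot_{2\pi/3}$, evaluating at the points $\rot_{\pi i/3}\angvar{K}$ collapses $\{\lambda_0,\lambda_2,\lambda_4\}$ and $\{\lambda_1,\lambda_3,\lambda_5\}$ each to a single value, and differentiating the rotation identity gives $\hathperpang'(\rot_{2\pi/3}\angvar{k})=\hathperpang'(\angvar{k})$, whence $\mu_0=\mu_1=\mu_2$.

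Parts 4 and 5 run the same argument with the reflection $m_y$, using $m_y=(-I)\,m_x$. Then $m_y$-invariance of $\hperp$ gives $\hathperpang(m_y\angvar{k})=\hathperpang(\angvar{k})$, while $\hperp(m_x r;\eps)=\overline{\hperp(r;\eps)}$ gives $\hathperp(m_y k;\eps)=\overline{\hathperp(k;\eps)}$ and hence $\hathperpang(m_y\angvar{k})=\overline{\hathperpang(\angvar{k})}$. Since $m_y$ sends $\rot_{\pi i/3}\angvar{K}\mapsto\rot_{-\pi i/3}\angvar{K}$ and fixes $\angvar{K}=\rot_0\angvar{K}$, evaluation gives $\lambda_1=\lambda_5$, $\lambda_2=\lambda_4$ in part 4, and $\lambda_0,\lambda_3\in\mathbb{R}$, $\lambda_5=\overline{\lambda_1}$, $\lambda_4=\overline{\lambda_2}$ in part 5. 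For the $\mu_j$, the sign under reflection gives $\hathperpang'(m_y\angvar{k})=-\hathperpang'(\angvar{k})$ in part 4 (resp.\ $-\overline{\hathperpang'(\angvar{k})}$ in part 5); at the fixed direction $\angvar{K}$ this forces $\hathperpang'(\angvar{K})=0$, i.e.\ $\mu_0=0$ (resp.\ $\hathperpang'(\angvar{K})\in i\mathbb{R}$, i.e.\ $\mu_0\in i\mathbb{R}$), and at $\rot_{2\pi/3}\angvar{K}$, whose $m_y$-image is $\rot_{4\pi/3}\angvar{K}$, it gives $-\mu_1=\mu_2$ (resp.\ $\overline{\mu_1}=-\mu_2$).

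Part 3 is the exceptional case, since it concerns $h$ and the Dirac data $\alpha,\alpha_o$ rather than $\hathperpang$. Here I would push $m_x$-invariance through the Bloch sum \eqref{eq:tildeH} — using that the coset $\cR+\tau^{A,B}$ is $m_x$-invariant because $\tau^{A,B}$ lies on the $r_2$-axis and $m_x$ permutes $a_1,a_2$ — to obtain $\tilde h^{A,B}(m_x k)=\tilde h^{A,B}(k)$, and combine it with the self-adjointness identity $\tilde h^{B,A}(k)=\overline{\tilde h^{A,B}(k)}$ (equivalently with $h(m_y r)=\overline{h(r)}$, an antilinear symmetry fixing $K$). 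Differentiating once and twice at $k=K$, matching against $\nabla\tilde h^{A,B}(K)=\alpha(1,-i)$ and the expression for $\nabla^2\tilde h^{A,B}(K)$ in Theorem \ref{thm:Dirac}, and using the vanishing pattern \eqref{eq:allowed_derivatives} of the Taylor coefficients, should extract $\alpha,\alpha_o\in\mathbb{R}$. I expect this part to be the main obstacle: because $m_x$ maps $K$ to $K'=-K$ rather than fixing it, the $m_x$-relation by itself only couples the Taylor data at $K$ to that at $K'$ (yielding $\alpha'=-\alpha$, $\alpha_o'=-\alpha_o$), so pinning down reality \emph{at $K$} requires also using the reality of $h$ that holds for every physical hopping function (cf.\ the remark after Corollary \ref{corr:Dirac}, which gives $\alpha'=-\overline{\alpha}$); more generally the genuinely laborious ingredient throughout is tracking how each rotation and reflection acts on the lattice cosets, on the shift vectors $\tau^{\sigma,\sigma'}$ and $\ls$, and on the accompanying phase factors, and this is heaviest in part 3.
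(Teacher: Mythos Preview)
For parts 1, 2, 4, and 5 your scheme is exactly the paper's: push the real-space symmetry of $\hperp$ to $\hathperp$ by a change of variables, pass to $\hathperpang$ via the factorization \eqref{eq:sep} and the normalization $\hathperprad(|K|;\eps)=\eps$, then evaluate $\hathperpang$ and $\hathperpang'$ at the six directions $\rot_{\pi i/3}\angvar{K}$.

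For part 3 you are making it harder than it is. The paper does not pass through a functional identity $\tilde h^{A,B}(m_xk)=\tilde h^{A,B}(k)$ and never has to confront the fact that $m_xK=K'$. Instead it writes $\alpha=\partial_{k_1}\tilde h^{A,B}(K)$ directly as the lattice sum
\[
\alpha=-i\sum_{R\in\cR}R_1\,e^{-i4\pi R_1/(3a)}\,h(R_1,R_2-a/\sqrt3),
\]
the simplification being that $K_2=0$ and $\tau^{A,B}_1=0$, so the phase depends on $R_1$ alone. The bijection $R\mapsto m_xR=(-R_1,R_2)$ of $\cR$, together with $h(-r_1,r_2)=h(r_1,r_2)$, then shows the sum is purely imaginary, whence $\alpha\in\mathbb R$; the identical trick with $\partial_{k_1}^2$ gives $\alpha_o\in\mathbb R$. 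No tracking of cosets, of $\ls$, or of how $m_x$ moves $K$ is needed.

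Your instinct that reality of $h$ enters is nonetheless correct: the step ``the sum is purely imaginary'' in the paper already uses $h$ real (the $m_x$ substitution yields $S=-\sum_R R_1 e^{+i4\pi R_1/(3a)}h(\cdot)$, which equals $-\overline S$ only when $h$ is real-valued; combining $m_x$-evenness with self-adjointness alone does not force this). So this is a tacit hypothesis of part~3 shared by the paper's argument, not a peculiarity of your route. With $h$ real, as in all physical models, the paper's two-line computation is the cleanest path and avoids all of the bookkeeping you anticipated.
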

Our proofs of Propositions \ref{prop:continuum_symmetries} and \ref{prop:sufficient} can be found in Appendix \ref{subsection:proofs_symmetries}.

\begin{remark}
    The assumptions in Proposition \ref{prop:sufficient} are generally satisfied for models of TBG in the physics literature \cite{bistritzer2011moire, slater1954simplified}. The simplest examples are real-valued, radial hopping functions.
\end{remark}
\begin{remark}\label{remark:correspondence}
    As expected, there is a correspondence between symmetries of the discrete and continuum models. Indeed, Proposition \ref{prop:sufficient} demonstrates that provided $\hperp (-r; \eps) = \overline{\hperp (r; \eps)}$ for all $(r;\eps) \in \mathbb{R}^2 \times (0,1)$, the operator $\cD$ (in a slight abuse of notation, we use the same symbol in the discrete and continuum settings) is a symmetry of both models; see Propositions \ref{prop:symmetries} (\ref{it:pc}) and \ref{prop:continuum_symmetries} (\ref{symm:PC}). Similarly, $2\pi/3$-rotation invariance of the interlayer hopping function induces a rotation symmetry on the tight-binding and PDE Hamiltonians, as shown by Propositions \ref{prop:symmetries} (\ref{it:rot}) and \ref{prop:continuum_symmetries} (\ref{symm:rotation}).

    There are, however, also differences in the tight-binding and continuum symmetries. For example, the translation symmetry in Proposition \ref{prop:continuum_symmetries} (\ref{symm:translation}) is absent from the discrete model, as the latter is only moir\'e periodic. Moreover, we observe that the assumptions in Proposition \ref{prop:sufficient} (\ref{it:sufficient_alpha} and \ref{it:sufficient_mirror_y}) that imply the $\cM_y$ symmetry in Proposition \ref{prop:continuum_symmetries} (\ref{symm:mirror_y}) correspond to the discrete $\cM_x$ symmetry in Proposition \ref{prop:symmetries} (\ref{it:M}). Indeed, the assumptions that $h(-r_1, r_2) = h(r_1, r_2)$ and $\hperp (-r_1, r_2; \eps) = \overline{\hperp (r_1, r_2; \eps)}$ surprisingly imply a continuum mirror symmetry in the $r_2$ (not $r_1)$ variable, which demonstrates a $\pi/2$-rotation of the coordinates of our effective model relative to the tight-binding setting. See also the $\pi/2$ rotation in the definition of $\fw$ in Proposition \ref{prop:unitary}. 
    Similarly, if $h$ and $\hperp$ are real-valued, the conditions in Proposition \ref{prop:sufficient} (\ref{it:sufficient_alpha} and \ref{it:sufficient_mirror_x}) implying the continuum $\cM_x$ symmetry agree with those in Proposition \ref{prop:symmetries} (\ref{it:M}) for the discrete $\cM_y$ symmetry. (Given our universal assumption that $h(-r) = \overline{h(r)}$, the condition $h(-r_1, r_2) = h(r_1, r_2)$ is equivalent to $h(r_1, -r_2) = \overline{h(r_1, r_2)}$.) 
    Real-valued hopping functions are required since the continuum $\cM_x$ operator includes complex conjugation while the discrete $\cM_y$ does not. 
    
    Note that the Dirac point $K=\frac{4\pi}{3a}(1,0)$ is invariant with respect to reflection across the $x$-axis, while a reflection across the $y$-axis maps $K$ to $-K$. This explains the presence (resp. absence) of complex conjugation in the continuum $\cM_x$ (resp. $\cM_y$) symmetry operator.
\end{remark}

Finally, we introduce an emergent symmetry of the first order continuum model that does not exist in the tight-binding model and gets broken at higher orders. Define $\cM_x'$ by
\begin{align*}
    \cM_x' f(r_1, r_2) = \begin{pmatrix}
        0 & 1 & 0 & 0\\
        1 & 0 & 0 & 0\\
        0 & 0 & 0 & -1\\
        0 & 0 & -1 & 0
    \end{pmatrix} f(-r_1, r_2).
\end{align*}
\begin{proposition}[\textbf{Particle-hole symmetry of the BM Hamiltonian}]\label{prop:ph}
    If $\alpha \in \mathbb{R}$ and $\lambda_2 = \lambda_4$, then $$\HBM \cU^\dagger \cM'_x \cU + \cU^\dagger \cM'_x \cU \HBM = 0.$$
\end{proposition}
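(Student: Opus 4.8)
The plan is to reduce to the case $\ls = 0$ and then verify the anticommutation block by block. First I would invoke Proposition~\ref{prop:unitary}: its proof treats each summand of $\Hfull$ separately, so in particular it establishes $\cU\HBM\cU^\dagger = \HBM_0$, where $\HBM_0$ denotes $\HBM$ with $\ls = 0$ (so $\HBM = \cU^\dagger\HBM_0\cU$). Since $\cU$ is unitary,
\[
  \HBM\,\cU^\dagger\cM'_x\cU + \cU^\dagger\cM'_x\cU\,\HBM \;=\; \cU^\dagger\bigl(\HBM_0\cM'_x + \cM'_x\HBM_0\bigr)\cU ,
\]
so it suffices to prove $\HBM_0\cM'_x + \cM'_x\HBM_0 = 0$ on $H^1(\mathbb{R}^2;\mathbb{C}^4)$. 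I would then record that $\cM'_x = \Sigma\, M$, where $\Sigma := \diag(\sigma_1,-\sigma_1)$ with $\sigma_1 = \left(\begin{smallmatrix}0&1\\1&0\end{smallmatrix}\right)$ is a constant unitary on $\mathbb{C}^4$ and $M\colon f(r_1,r_2)\mapsto f(-r_1,r_2)$ is the spatial reflection. Since $\Sigma^2 = I$, $M^2 = I$ and $[\Sigma,M] = 0$ (and $M$ preserves $H^1$, so domains are not an issue), $\cM'_x$ is an involution, and the desired identity is equivalent to $\cM'_x\HBM_0\cM'_x = -\HBM_0$, i.e. $\Sigma\,(M\HBM_0 M)\,\Sigma = -\HBM_0$.

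For the diagonal (Dirac) blocks I would use $MD_{r_1}M = -D_{r_1}$ and $MD_{r_2}M = D_{r_2}$ to get $MLM = \left(\begin{smallmatrix}0 & \alpha(-D_{r_1}-iD_{r_2})\\ \bar\alpha(-D_{r_1}+iD_{r_2}) & 0\end{smallmatrix}\right)$, and then conjugate by $\sigma_1$, which swaps the two off-diagonal entries: $\sigma_1(MLM)\sigma_1 = \left(\begin{smallmatrix}0 & \bar\alpha(-D_{r_1}+iD_{r_2})\\ \alpha(-D_{r_1}-iD_{r_2}) & 0\end{smallmatrix}\right)$. This equals $-L$ precisely when $\alpha = \bar\alpha$, which is the hypothesis $\alpha\in\mathbb{R}$. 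For the off-diagonal block I would note that conjugating the multiplication operator $\hoppingT(\cdot)$ (at $\ls=0$) by $M$ replaces $\hoppingT(r)$ by $\hoppingT(m_x r)$ with $m_x r = (-r_1,r_2)$; that $m_x$ fixes $\svec_1$ and swaps $\svec_2 \leftrightarrow \svec_3$; and that conjugation by $\sigma_1$ fixes the $\lambda_0$-matrix $\left(\begin{smallmatrix}1&1\\1&1\end{smallmatrix}\right)$ and swaps the $\lambda_2$-matrix $\left(\begin{smallmatrix}1&e^{-i2\pi/3}\\e^{i2\pi/3}&1\end{smallmatrix}\right)$ with the $\lambda_4$-matrix $\left(\begin{smallmatrix}1&e^{i2\pi/3}\\e^{-i2\pi/3}&1\end{smallmatrix}\right)$. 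Hence $\sigma_1\hoppingT(m_x r)\sigma_1$ is exactly $\hoppingT(r)$ with the coefficients $\lambda_2$ and $\lambda_4$ interchanged, so it equals $\hoppingT(r)$ iff $\lambda_2 = \lambda_4$; the companion identity $\sigma_1\hoppingT^\dagger(m_x r)\sigma_1 = \hoppingT^\dagger(r)$ then follows by taking adjoints.

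Finally I would assemble: writing $\HBM_0$ in $2\times2$ block form, conjugation by $\Sigma = \diag(\sigma_1,-\sigma_1)$ conjugates the diagonal blocks by $\sigma_1$ and the off-diagonal blocks by $\sigma_1$ together with an extra $-1$ (from the opposite signs of the two $\sigma_1$-blocks in $\Sigma$), so the two computations above yield $\Sigma(M\HBM_0 M)\Sigma = \left(\begin{smallmatrix} -L & -\hoppingT(r)\\ -\hoppingT^\dagger(r) & -L\end{smallmatrix}\right) = -\HBM_0$, as required. The only real subtlety — which I expect to be the main point to get right — is keeping the two independent sign flips straight in this last step: the minus sign on the diagonal blocks comes from $M$ flipping $D_{r_1}$ and forces $\alpha\in\mathbb{R}$, while the minus sign on the off-diagonal blocks comes from $\Sigma$; and one must check that the permutation of $\{\svec_1,\svec_2,\svec_3\}$ induced by $m_x$ is matched — once the $\sigma_1$-conjugation of the constant matrices is included — exactly by requiring $\lambda_2 = \lambda_4$. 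These two conditions are precisely the hypotheses of the proposition.
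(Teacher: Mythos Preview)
Your proof is correct and follows essentially the same approach as the paper: reduce to $\ls=0$ via Proposition~\ref{prop:unitary}, write $\cM'_x$ as $\diag(\sigma_1,-\sigma_1)$ composed with the reflection $\sm_x$, and verify the anticommutation block by block using $\alpha\in\mathbb{R}$ for the Dirac part and $\lambda_2=\lambda_4$ for the interlayer part. The only cosmetic difference is that you phrase the computation as $\cM'_x\HBM_0\cM'_x=-\HBM_0$ rather than directly expanding the anticommutator, which is equivalent since $\cM'_x$ is an involution.
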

We refer to Appendix \ref{subsection:proofs_symmetries} for a proof. We recall Proposition \ref{prop:sufficient} (parts \ref{it:sufficient_alpha} and \ref{it:sufficient_mirror_x}), which states sufficient conditions for the above assumptions to hold.
\begin{remark}
    Proposition \ref{prop:ph} implies a symmetry for the spectrum of $\HBM$. Indeed, if $\HBM \psi = E \psi$ for some $E \in \mathbb{R}$ and $\psi : \mathbb{R}^2 \to \mathbb{C}^4$, then $\varphi := \cU^\dagger \cM'_x \cU \psi$ satisfies $\HBM \varphi = -E \varphi$. But there is no such symmetry in the tight-binding model and indeed $\Hfull \cU^\dagger \cM'_x \cU + \cU^\dagger \cM'_x \cU \Hfull \ne 0$ in general. The breaking of particle-hole symmetry is illustrated by Figure \ref{fig:bands} (right panel), where the first-order band structure is invariant with respect to reflections about the horizontal axis, while the second-order band structure is not.
\end{remark}

\section{Numerical results}\label{sec:numerical}

We test the approximation error of the continuum models of TBG with twist angle $ \theta = 1.05^\circ$, also called the magic angle because it is the
angle which gives a nearly flat band at the energy level of monolayer Dirac points\cite{watson2023bistritzer, bistritzer2011moire}. There are different tight-binding model parameters to model TBG, most notably a Slater-Koster type hopping function \cite{slater1954simplified, TramblydeLaissardiere_Mayou_Magaud_2010, Moon_Koshino_2012}, and a DFT-based hopping function \cite{Fang_Kaxiras_2016}. 

We start with the Slater-Koster hopping function \eqref{eq:slater-koster} with physical parameters $V^0_{pp\pi}=-2.7 \text{ eV}$, $V^0_{pp\sigma}=0.48 \text{ eV}$, $a_0 = a/\sqrt{3} = 1.42 \text{ \r A}$, $r_0 = 0.319 a_0$.  We set $\ell = 0$ for the monolayer tight-binding hopping function
\begin{align}
    h_\text{intra} (\vec r) := V_{pp\pi}^0 \exp\left(-\frac{|\vec r|-a_0}{r_0}\right).
\end{align}
We assume the distance between graphene layers is $\ell_0 = 3.5 \text{ \r{A}}$, and the interlayer hopping function is
\begin{align}
    h_\text{inter} (\vec r) := V_{pp\pi}^0 \exp\left(-\frac{\sqrt{|\vec r|^2+\ell_0^2}-a_0}{r_0}\right)\left(\frac{|\vec r|^2}{|\vec r|^2+\ell_0^2}\right) + V_{pp\sigma}^0\exp\left(-\frac{\sqrt{|\vec r|^2+\ell_0^2}-\ell_0}{r_0}\right)\left(\frac{\ell_0^2}{|\vec r|^2+\ell_0^2}\right).
\end{align}

%
We report that the intralayer component has the following parameters.
\begin{table}[h!]
\centering
\begin{tabular}{@{}cccc@{}}
\toprule
$\mu = \tilde{h}_\text{intra}^{A,A} (K)$ & $v =  \nabla \tilde{h}_\text{intra}^{A,B} (K) $ & $v_d = \nabla^2 \tilde{h}_\text{intra}^{A,A} (K)$ & $v_o = \nabla^2 \tilde{h}^{A,B}_\text{intra} (K)$  \\ \midrule
0.79 \text{eV} & 5.23 eV$\cdot$\r{A}       & -1.12 eV$\cdot\text{\r{A}}^2$        & -2.25 eV$\cdot\text{\r{A}}^2$       \\ \bottomrule
\end{tabular} 
\caption{Intralayer parameters for the physical continuum model computed from  the Slater-Koster hopping function. }\label{tab:intra_params}
\end{table}

We can compute the Fourier transform of the interlayer hopping function numerically using the Hankel transform. We report some Fourier coefficients in Table \ref{tab:inter_params}. The radial dependence also means $\mu_j = 0$ in \eqref{eq:rad_ang}.

\begin{table}[h!]
\centering
\begin{tabular}{@{}cccc@{}}
\toprule
 $w_1 = \frac{1}{|\Gamma|}\widehat{h_\text{inter}}(\vec K)$ & $w_2 = \frac{1}{|\Gamma|}\widehat{h_\text{inter}}(2\vec K)$ & $w_3 = \frac{1}{|\Gamma|}\widehat{h_\text{inter}}(\sqrt{7}\vec K)$  & $|w_1'|=\left|\frac{1}{|\Gamma|}\nabla\widehat{h_\text{inter}}(\vec K)\right|$ \\ \midrule
    110.85 meV   & 1.56 meV     & 0.06 meV   & 226.15 meV$\cdot$\r{A}   \\ \bottomrule
\end{tabular} 
\caption{Interlayer parameters $w_1,w_2,w_1'$ for the physical continuum model computed from the Slater-Koster hopping function. Note that although $w_1'$ is the largest parameter here, it appears in the Bistritzer-MacDonald model with a derivative. Since the wave-packet is assumed to vary on the moir\'e scale, its effect on the dynamics is much smaller; see Table \ref{tab:scale}. $w_3$ does not appear in the effective model, but is used in the error analysis. }\label{tab:inter_params}
\end{table}

To compare to existing physics literature, throughout this section we will use the Hamiltonian with physical units recovered from the non-dimensonalized model \eqref{eq:Hfull}. We first change variables $\vec r$ and $\vec t$ back to \begin{equation}
	X := \frac{\vec r}{\eps}\cdot \text{\r{A}}, \quad \mathscr S:= \frac{t \hbar}{\eps \cdot \text{eV}},
\end{equation}
and multiply both sides of \eqref{eq:Hfull} by $\eps$. 
The physical continuum Hamiltonian with order $j$ acts on the wave packet envelope function $\mathfrak f$ through
\begin{equation}
	i \hbar \partial_{\mathscr S} \mathfrak f(X, \mathscr S) = H_{\text{cont},j} \mathfrak f (\vec X, \mathscr S).
\end{equation}
After the change of variable $\mathfrak f$ also depends on the small parameter $\eps$. To ensure the regularity of $\mathfrak f$, we will restrict our numerics to a specific class of functions \eqref{eq:initial_data} with appropriate length scale. 

The first order continuum Hamiltonian is (choosing $\ls = 0$)
\begin{equation}\label{eq:H_cont_1}
   H_{\text{cont},1} := \begin{pmatrix} v \vec{\sigma} \cdot (- i \nabla_{\vec X}) &  \displaystyle  w_1\sum_{n = 1}^3  T_{n} e^{- i \mathfrak{s}_{n} \cdot \vec X} \\  \displaystyle  w_1\sum_{n = 1}^3  T_{n}^\dagger e^{i \mathfrak{s}_{n} \cdot \vec{X}} &  v \vec{\sigma} \cdot (- i \nabla_{\vec{X}}) \end{pmatrix},
\end{equation} 
where the energies $v$ is from Table \ref{tab:intra_params} and $w_1$ is from  Table \ref{tab:inter_params}. They are related to the dimensionless variables through
   $ v = \alpha \cdot \text{eV} \cdot \text{\r{A}}$ and $w_1 = \eps \cdot \lambda_0 \cdot \text{eV}$.
   $\sigma = (\sigma_1, \sigma_2)$ is a vector of Pauli matrices. The interlayer momentum shifts are $\mathfrak s_{1} = \eps \vec s_1 =  \frac{4\pi \sin(\theta/2)}{3a} (0,-1)$, $\mathfrak s_{2} = \rot_{2\pi/3} \mathfrak s_{1}$, and $\mathfrak s_{3} = \rot_{4\pi/3} \mathfrak s_{1}$, and
\begin{equation}
	T_{1} =  \begin{pmatrix}
        1 & 1\\
        1 & 1
    \end{pmatrix},  \quad 
    T_{2} = \begin{pmatrix}
        1 & e^{-i2\pi/3}\\
        e^{i2\pi/3} & 1
    \end{pmatrix}, \quad 
    T_{3} = \begin{pmatrix}
        1 & e^{i2\pi/3}\\
        e^{-i2\pi/3} & 1
    \end{pmatrix}.
\end{equation}
This reduces to the BM Hamiltonian introduced in 
\cite{bistritzer2011moire}. 

The second order terms include the next nearest-neighbor contributions
	\begin{equation}\label{eq:H_cont_NNN}
	 H_{\text{cont},2}^\text{(NNN)} = 
	    \begin{pmatrix}  0 &  \displaystyle w_2 \sum_{n=1}^{3}  T_{n} e^{i 2\mathfrak s_{n} \vec X}\\  
	    \displaystyle w_2 \sum_{n=1}^{3} T_{n}^\dagger e^{-i2\mathfrak s_{n} \vec X} & 0 \end{pmatrix},
	\end{equation}
	with $w_2 = \eps\radNNN(\eps) \cdot \lambda_1 \cdot \text{eV}$.
Also included are the momentum-dependent terms, or the non-local approximation
\begin{equation}\label{eq:H_cont_k-dep}
		H_{\text{cont},2}^{(\nabla, \text{NN})}  = 
    \begin{pmatrix}  0 &  \displaystyle  \sum_{n=1}^{3} \vec w_{n}' \cdot \left(-i\nabla_{\vec X} \right) T_{n} e^{-i\mathfrak s_{n} \vec X} \\  
    \displaystyle  \sum_{n=1}^3  \vec w_{n}' \cdot \left(-i\nabla_{\vec X} \right) T_{n}^\dagger e^{i\mathfrak s_{n} \vec X} & 0  \end{pmatrix},
	\end{equation}
where $w_{1}' = \frac{1}{|\Gamma|} \nabla \widehat{h_\text{inter}}(\vec K)$,  $w_{2}' = \rot_{2\pi/3} w_{1}'$, $w_{3}' = \rot_{4\pi/3} w_{1}'$. Their magnitudes are $|w_n'| = \eps \radpNN(\eps) \cdot \lambda_0 \cdot \text{eV} \cdot \text{\r A}$. Lastly, we include the second order corrections to the Dirac cone
\begin{equation}\label{eq:H_cont_intra_2}
		H_{\text{cont},2}^{(2)}= \frac{1}{2}
    \begin{pmatrix} 
    \begin{aligned}
    {\left[v_d (-i\nabla_{\vec X})^2 \right] Id} +  v_o \vec{\sigma} \cdot (-\partial^2_{X_1}+\partial^2_{X_2}, 2\partial_{X_1} \partial_{X_2}) \\
    + i\sin(\theta/2)  \sigma_3 \cdot v \vec{\sigma} \cdot (- i \nabla_{\vec X})
    \end{aligned}	
&  0 \\  0 & \begin{aligned}	{v_d (-i\nabla_x)^2 Id}  + v_o \vec{\sigma} \cdot (-\partial^2_{X_1}+\partial^2_{X_2}, 2\partial_{X_1} \partial_{X_2}) \\
+ i\sin(-\theta/2)  \sigma_3 \cdot v \vec{\sigma} \cdot (- i \nabla_{\vec X})
 \end{aligned}
     \end{pmatrix},
	\end{equation}
where $v_d = \alpha_d \cdot  \text{eV} \cdot \text{\r{A}}^2$ and $v_o = \alpha_o \cdot  \text{eV} \cdot \text{\r{A}}^2$, and their values are in Table \ref{tab:intra_params}.

Summing up these three terms gives the complete second order continuum model
\begin{equation}\label{eq:H_cont_2}
	  H_{\text{cont},2} := \begin{pmatrix}  
    \begin{aligned} v \vec{\sigma}_{ \theta/2} \cdot  (- i \nabla_{\vec X}) + {\frac{1}{2}v_d (-i\nabla_{\vec X})^2  Id}   \\ +  \frac{1}{2}v_o \vec{\sigma} \cdot (-\partial^2_{X_1}+\partial^2_{X_2}, 2\partial_{X_1} \partial_{X_2})
    \end{aligned}
	&   \begin{aligned}   w_1 \sum_{n=1}^{3}  T_{n} e^{-i\mathfrak s_{n} \cdot \vec X}  + w_2 \sum_{n=1}^{3}  T_{n} e^{i2\mathfrak s_{n} \cdot \vec X} \\+ \sum_{n=1}^{3} \vec w_{n}' \cdot \left(-i\nabla_{\vec X} \right) T_{n} e^{-i\mathfrak s_{n}\cdot \vec X} \\ \end{aligned}
 \\
       \begin{aligned}  w_1 \sum_{n=1}^{3}  T_{n}^\dagger e^{i\mathfrak s_{n} \cdot \vec X}  + w_2 \sum_{n=1}^{3}  T_{n}^\dagger e^{-i2\mathfrak s_{n} \cdot \vec X} \\+ \sum_{n=1}^{3} \vec w_{n}' \cdot \left(-i\nabla_{\vec X} \right) T_{n}^\dagger e^{i\mathfrak s_{n}\cdot \vec X} \end{aligned}
       & \begin{aligned} v \vec{\sigma}_{-\theta/2} \cdot  (- i \nabla_{\vec X}) + {\frac{1}{2}v_d (-i\nabla_{\vec X})^2 Id} \\  + \frac{1}{2}v_0 \vec{\sigma} \cdot (-\partial^2_{X_1}+\partial^2_{X_2}, 2\partial_{X_1} \partial_{X_2})
       \end{aligned}
     \end{pmatrix},
\end{equation}
with $\sigma_{\theta/2}:= [I - i \sin(\theta/2)\sigma_3] \cdot \sigma$ being the Dirac operator with a linearized rotation.

The length scale of the TBG moir\'e reciprocal unit cell is approximately  $ |\Delta \vec K| \approx 0.033 \text{\r{A}}^{-1}$. Suppose our choice of $\eps$ is on the same length scale as $ |\Delta \vec K|$, which is equivalent to the wave packet initial data spanning at least one moir\'e unit cell, we can estimate the energy-scale of the Hamiltonian. Table \ref{tab:scale} shows that with the physical parameters, there is a natural separation of scale for the terms in \eqref{eq:H_cont_1} and \eqref{eq:H_cont_2}. In particular, the first order model \eqref{eq:H_cont_1} include only $O(\eps)$ terms, which have energy of around 100 meV. The second order model  \eqref{eq:H_cont_2} includes $O(\eps^2)$ terms, which have energy less than 10 meV. Based on these estimations, there exists a range of $\eps$ we can choose to ensure these energy scales are separated. 

\begin{table}[h!]
\centering
\begin{tabular}{@{}ccccccc@{}}
\toprule
& \multicolumn{2}{c}{First order}  & \multicolumn{4}{c}{Second order}                                                         
\\ \cmidrule(l){2-3} \cmidrule(l){4-7} 
& Term   & Energy & Term   & Energy & Term & Energy \\ \midrule
Intralayer & $v |\Delta \vec K|$ & 170 meV                               & $v_d |\Delta \vec K|^2$       & 1 meV  & $v_o |\Delta \vec K|^2$ & 2 meV  \\
Interlayer  & $w_{1}$                     & 111 meV & $|\vec w'_{1} \cdot \Delta \vec K|$ & 7 meV  & $w_{2}$                       & 2 meV  \\ \bottomrule
\end{tabular}
\caption{Separation of energy scale using physical values. }\label{tab:scale}
\end{table}


To numerically investigate the approximation error of continuum models, we first need to generate reference solutions from the discrete tight-binding model. We generate the Hamiltonian by replacing $h$ in \eqref{eq:tight-binding}
with \eqref{eq:slater-koster}. The lattices of TBG are incommensurate and infinite, but the dynamics of the tight-binding model can be approximated by computations on a finite domain. Suppose we restrict all the interaction to a space contained in a ball with radius $R$, the truncation error decays exponentially with respect to $R$  \cite{kong2023modeling}.

We choose a sufficiently large $R$ so that the truncation error is small. In this way, we are able to study the non-dimensionalized relative approximation error $\eta_j$ of $j$-th order continuum models on the finite domain 
\begin{equation}\label{eq:relerror}
	\eta_j := \frac{\left\|\Psi^j_{\text{cont}} -\Psi_{\text{TB},R} \right\|_{\mathcal H_R}}{\left\|\Psi_{\text{TB},R} \right\|_{\mathcal H_R}}.
\end{equation}
For all numerical experiments, we use the truncated tight-binding model with $R = 86.60 \text{ \r{A}}$.  
 
 We can utilize the band structure of the BM Hamiltonian to excite wave-packets with wave numbers concentrated in momentum space on any band. Choose a point $\vec k $ in the moir\'e Brillouin zone (see Figure \ref{fig:bands})
 and band index $n$, the energy on the $n$-th band is the $n$-th eigenvalue $E_n(\vec k)$, with the corresponding eigenfunction $\Phi_n(X; \vec k)$. Then the wave packet initial data concentrated on $k_i$ on the $n$-th band is described by
\begin{equation}\label{eq:initial_data}
    \mathfrak f_0(\vec X) = c\cdot \Phi_n(\vec X; \vec k)\cdot G(\vec X), \quad G(\vec X) := e^{-\frac{|\vec X|^2}{2\sigma_r^2}}, 
\end{equation}
where $c$ is the overall normalization constant, and $G$ is the two-dimensional Gaussian function. We can control the wave-packet envelope length scale by setting $\sigma_r = \eps^{-1}$.
 The group velocity for the wave-packet envelope is $\nabla_{\vec k} E_n(\vec k)$.

We implement the first and second order continuum model, and compute the approximation error by comparing to the solution of tight-binding models. The approximation error for the second order model $\eta_2$ is much less than that of the first order model (see Figure \ref{fig:error_terms}). This result validates that the inclusion of higher order corrections captures more information about the tight-binding model.

\begin{figure}[h!]
    \centering
    \includegraphics[width=.3\textwidth]{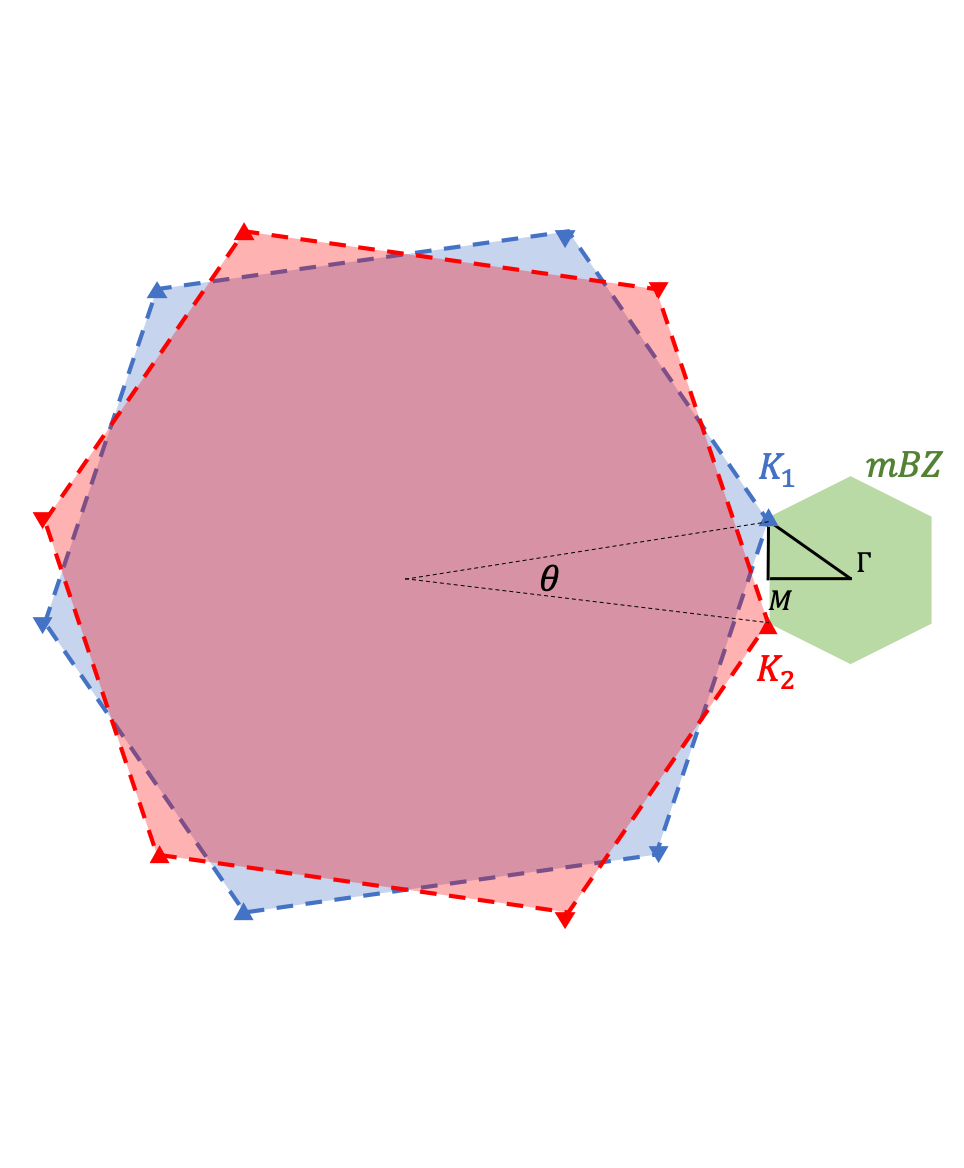}
    \includegraphics[width=.15\textwidth]{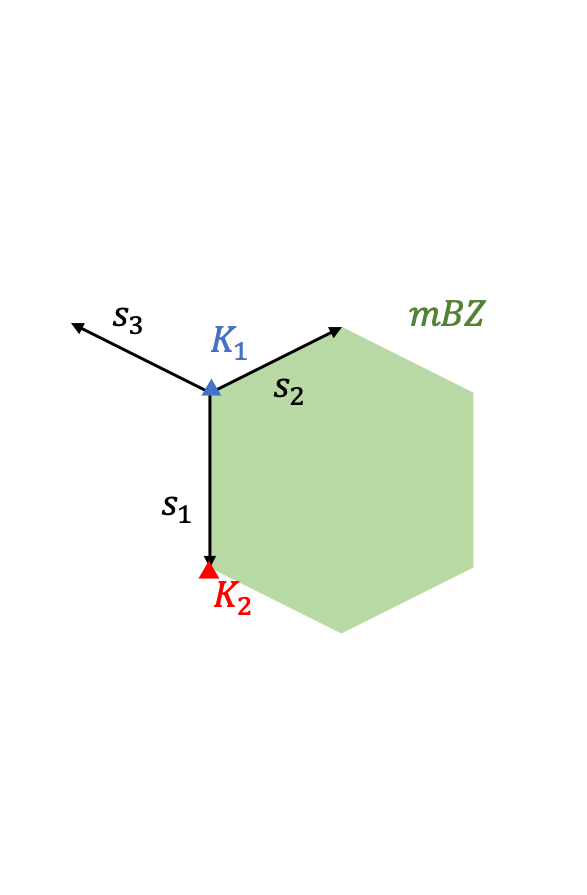}
    \hspace{2cm}
    \includegraphics[width=.33\textwidth]{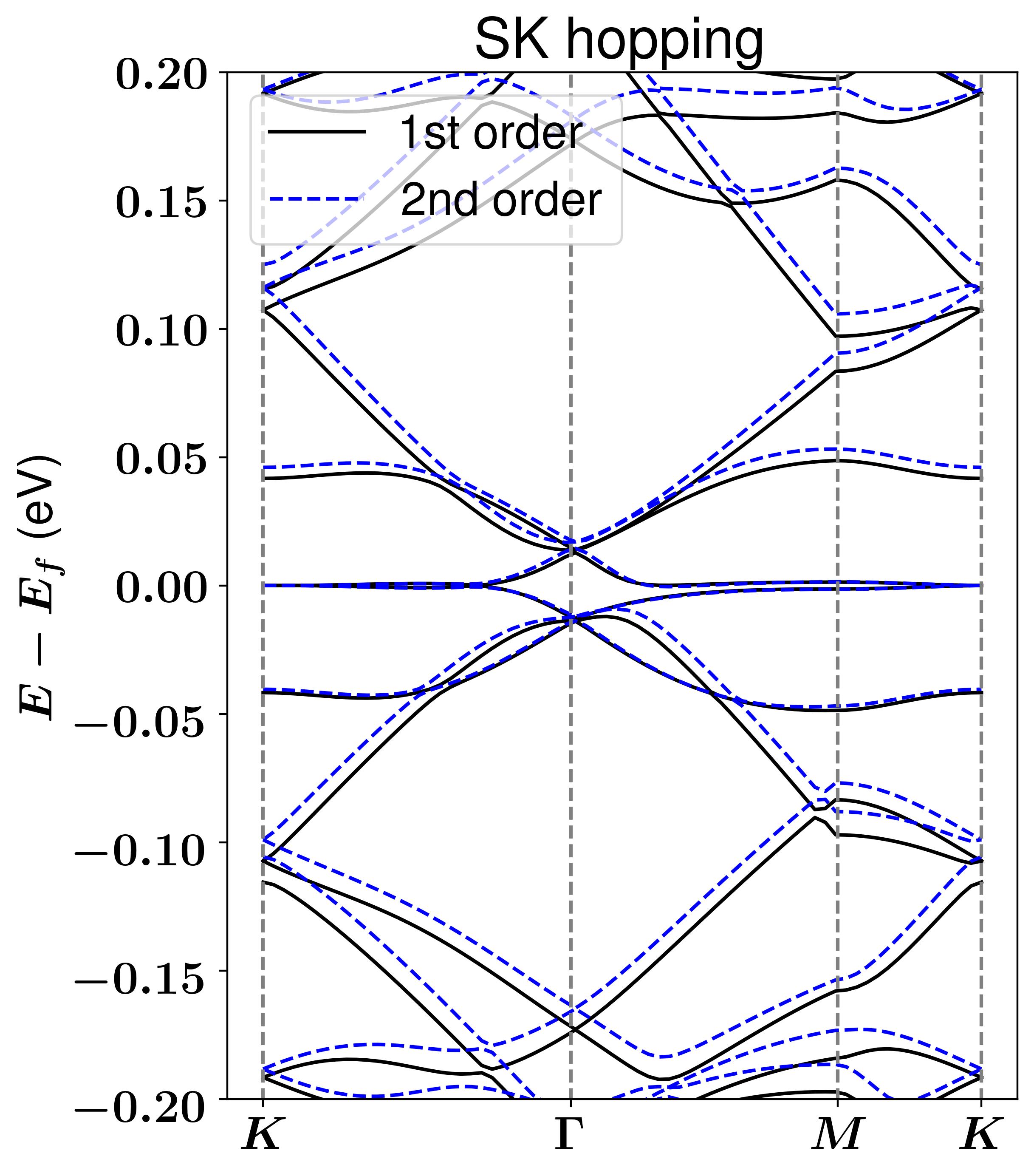}
    \caption{Left: Illustration of monolayer Brillouin zone (BZ) in red and blue for twisted bilayer graphene, high symmetry point $\vec K_1$ and $\vec K_2$, moir\'e Brillouin zone (mBZ) in green and moir\'e high symmetry points $\vec K := \vec K_1$, $\Gamma$ and $M$. 
    Right: The band structure showing the dispersion relation along a specific path in mBZ for first and second order BM continuum models, with Slater-Koster hopping function. 
    The second order corrections breaks the particle-hole symmetry of the first order model.}
    \label{fig:bands}
\end{figure}

\begin{figure}[h!]
    \centering
    \includegraphics[width=.45\textwidth]{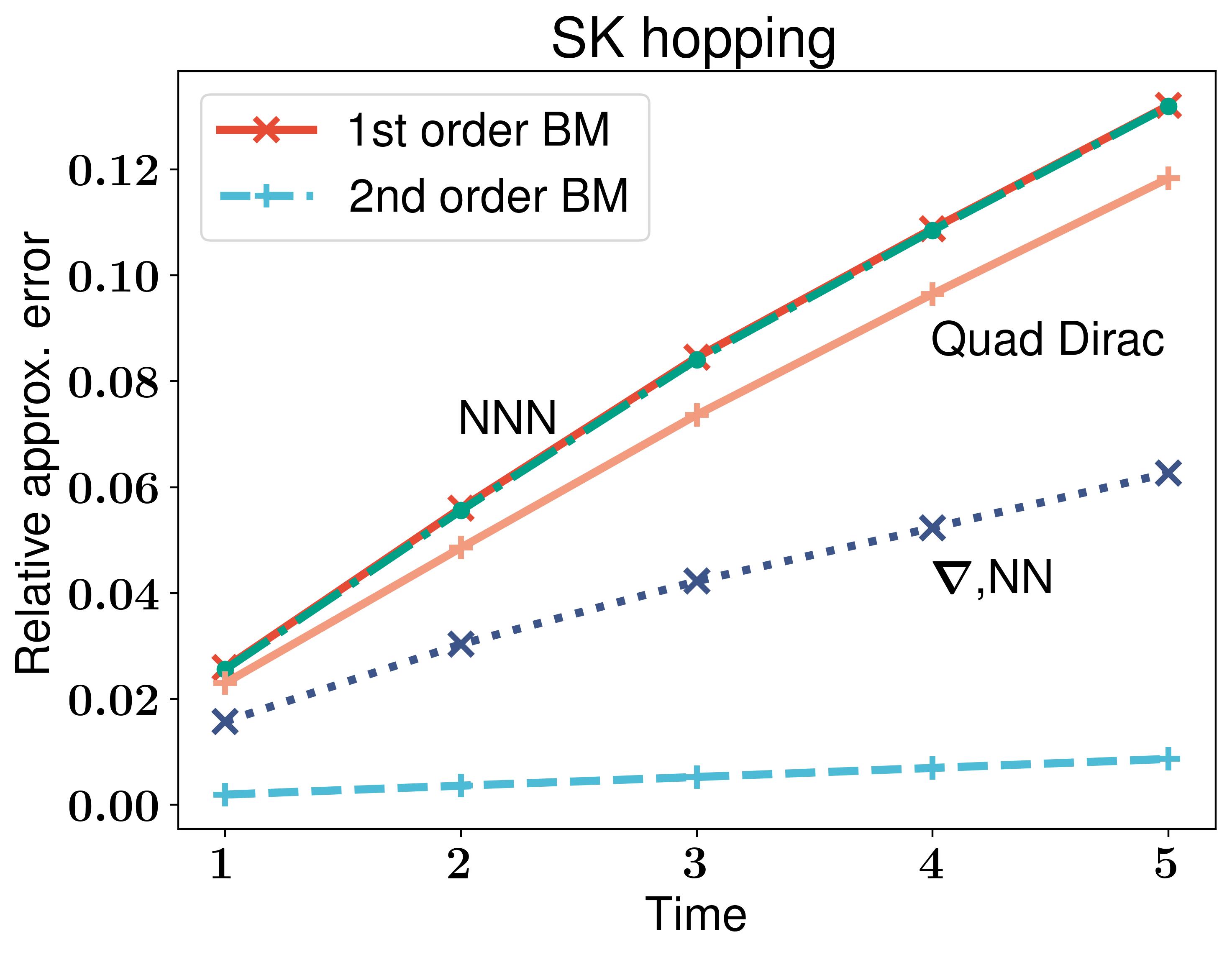}
    \caption{Relative approximation error \eqref{eq:relerror} of wave packet dynamics of first and second order BM model, when compared to the tight-binding reference dynamics. The inclusion of the second order correction terms greatly reduces the approximation error. Here $\eps=0.05$. When looking at improvements of individual terms, the non-local  ($\nabla$,NN) term \eqref{eq:H_cont_k-dep} has the smallest error, followed by quadratic Dirac term \eqref{eq:H_cont_intra_2}. The next nearest-neighbor (NNN) term \eqref{eq:H_cont_NNN} has only negligible improvement. 
    }
    \label{fig:error_terms}
\end{figure}

We also probe which term in the second order model has the largest contribution. Recall that the second order model includes three additional improvements as the first order model \eqref{eq:H_cont_NNN}, \eqref{eq:H_cont_k-dep} and \eqref{eq:H_cont_intra_2}.
All of these terms are of the order of between $O(\eps)$ and $O(\eps^2)$.  We can add one of these terms individually to the first order continuum model, and calculate the approximation error for that model. In this way, we are probing the magnitude of each individual term to see what is their contribution in the second order model. Figure \ref{fig:error_terms} shows that the momentum-dependent terms introduce the largest improvement, followed by the quadratic terms in the intralayer Dirac operator. The next nearest-neighbor terms does not have visible contributions for improvements, as the approximation error for the model with these corrections do not differ significantly from the original one.

We present time snippets of the wave packet dynamics for all three models with different initial data. All initial data are wave packets generated on selected bands using \eqref{eq:initial_data}, and the band structure is generated from the first order continuum model. For wave packets concentrated on the remote band, all models generate similar group velocity (see Figure \ref{fig:wave_packet_group_velocity}). Although the tight-binding model is aperiodic, we can still observe an approximate group velocity that originates from the effective continuum model which is periodic with respect to the moir\'e lattices. 

\begin{figure}[h!]
    \centering
    \includegraphics[width=.65\textwidth]{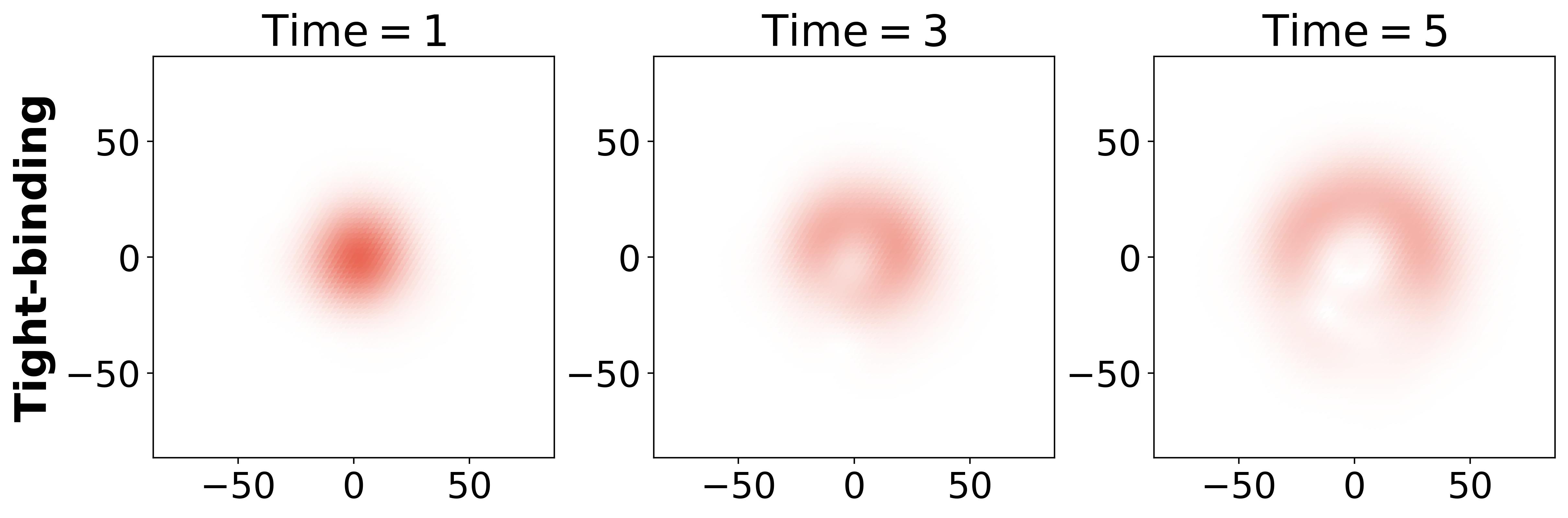}
    \includegraphics[width=.65\textwidth]{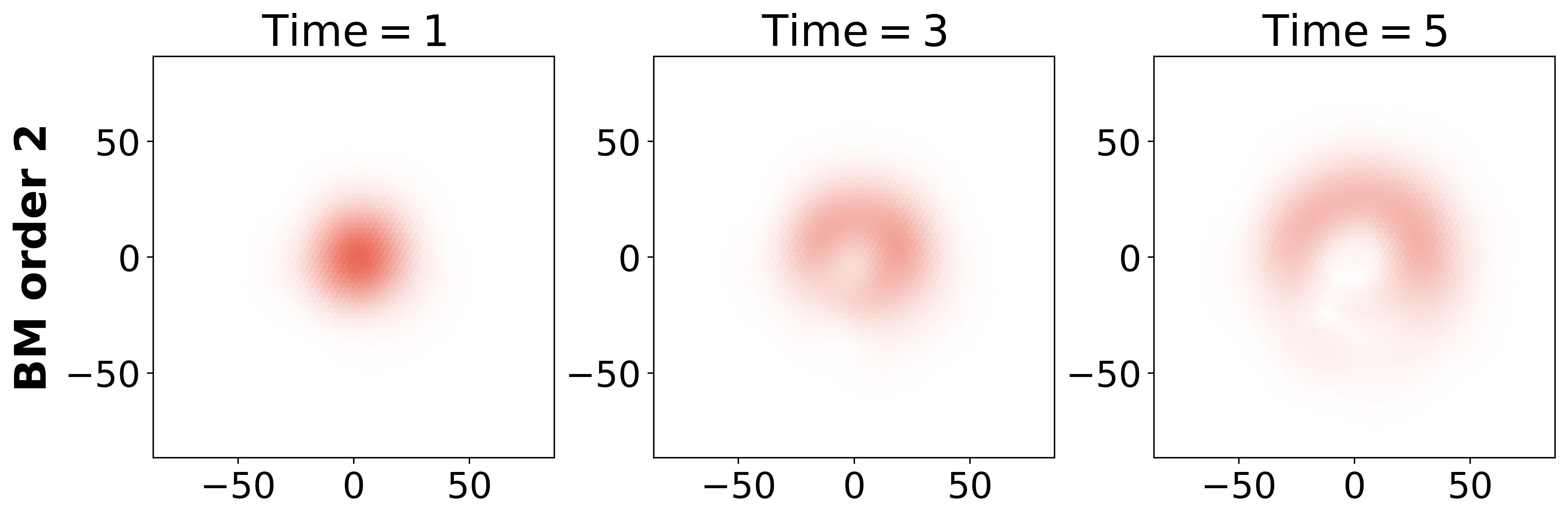}
    \includegraphics[width=.65\textwidth]{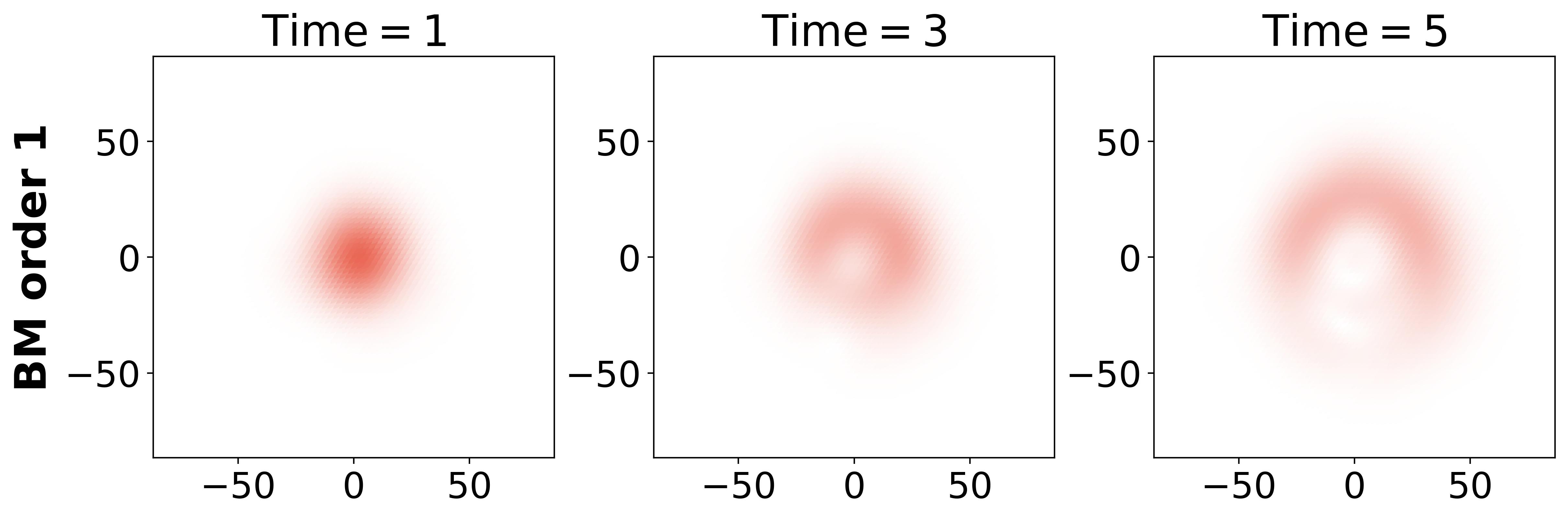}
    \caption{The modulus of the wave-function for the tight-binding model and the first and second order continuum models. The initial data is concentrated on a remote band. All models predict similar group velocities. Only layer 1 is presented, as the two layers have similar behaviour. The axes have units \r{A}, and one unit of time is $\hbar \cdot \mathrm{eV}^{-1} \approx 6.6 \times 10^{-16} \text{ s}$.}
    \label{fig:wave_packet_group_velocity}
\end{figure}

When the initial data is concentrated on the flat band at $\Gamma_m$, the wave packet propagates with near 0 group velocity (see Figure \ref{fig:wave_packet_spiral}). For each layer, the reference tight-binding dynamics show a spiral pattern with the orientation same as the rotation of that layer. The spiral is recovered in the second order continuum model, but lost in the first order model. Further investigation shows that within the three improvements from the first to second order, the spiral is generated from the momentum dependent interlayer hopping terms \eqref{eq:H_cont_k-dep}.

Lastly,  we want to probe the relation of relative approximation error with respect to the small scale $\eps$. The continuum models \eqref{eq:H_cont_1}, \eqref{eq:H_cont_2} depend on the twist angle $\theta$ and interlayer distance $\ell$. We propose they can be parametrized by $\eps$ through
\begin{equation}
	\theta(\eps) = \beta \eps, \quad \ell(\eps) = \gamma |\log\eps|.
\end{equation}
The choice of $\eps,\beta$ and $\gamma$ can be arbitrary. We choose $\eps_0 = 0.05$, and fix constants $\beta$ and $\gamma$ such that $\theta(\eps_0) = 1.05^\circ$ and $\ell(\eps_0) = 3.5\text{\r A}$ to match the estimated physical values of these parameters in twisted bilayer graphene at the magic angle. The continuum Hamiltonian at $\eps_0$ now has hopping energies as reported in Table \ref{tab:inter_params}. 
We then vary $\eps$, and generate the Hamiltonian according to our choice of $\theta(\eps)$ and $\ell(\eps)$. At a fixed small time, we plot the approximation error for both the first and second order models as a function of $\eps$ in Figure \ref{fig:error_epsilon}. The linear regression results show the error is approximately $O(\eps^{1.98})$ for the first order model, and $O(\eps^{2.98})$ for the second order model. The results are consistent with \cite[Theorem 3.1]{watson2023bistritzer} and Theorem \ref{thm:main}, which states the approximation error of the first order continuum model is $O(\eps^{1+\sdpt_*})$ and that of the second order continuum model is $O(\eps^{2 + \sdpt_-})$, where $0<\sdpt_*,\sdpt_-<1$, $\sdpt_*$ can be taken arbitrarily close to 1, and $\sdpt_-<\sdpt$ where $\sdpt$ depends on the decay of the Slater-Koster hopping function. 

Despite the apparent $\eps^3$ convergence of the second order model in Figure \ref{fig:error_epsilon}, we would not expect this to be the asymptotic behavior as $\eps \to 0$. Assuming the interlayer separation satisfies $\ell \sim |\log\eps|$, the spectral decay of the Slater-Koster interlayer hopping function in \eqref{eq:SK_decay} 
 would imply that $|\hathperp (k; \eps)| \le C \eps^{|k|/|K|}$ for all $\eps$ and $k$, with the factor $|K|^{-1}$ in the exponent explained by the requirement that $\hathperp (K; \eps) = O(\eps)$. The last line of \eqref{eq:4bounds_full} would then suggest that $\sdpt = \sqrt{7}-2$, 
implying $O(\eps^{\sqrt{7}})$ convergence of the second order model over timescales of $O(1)$.

Such discrepancies are due to the fact that the coefficient of the $O(\eps^{\sqrt{7}})$ terms is sufficiently small that these terms are negligible compared to the $O(\eps^3)$ terms for the range of $\eps$ values we consider in Figure \ref{fig:error_epsilon}.
We assume these error terms can be written as 
$	\mathrm{error}(\eps) = F_1 \eps^{\sqrt 7} + F_2 \eps^3$. The $\eps^{\sqrt{7}}$ error term arises from the fact that in the effective Hamiltonian we cutoff momentum space hopping at the NNN term. The coefficient of the $\eps^{\sqrt{7}}$ error term is thus determined by the magnitude of hopping to the third shell (NNNN) term, see Figure \ref{fig:NNN}, defined by the physical parameter $w_3$, see Table \ref{tab:inter_params}. More precisely, following the prescription that the magnitude of our $\eps$-dependent hopping function agrees with the physical values in Table \ref{tab:inter_params} when $\eps = \eps_0 = 0.05$, we have $F_1 = w_3 / (0.05)^{\sqrt{7}} \approx 0.166$ eV. We can compare this to a typical intralayer term which will appear in the error at $\eps^3$ order. The values of the first-order and second-order intralayer terms are consistent with coefficients $F_2$ on the order of $2$ eV, since we have energies $F_2 \eps_0 = 100 \meV$  and $F_2 \eps_0^2 = 5 \meV$, see Table \ref{tab:scale}. The size of a typical $\eps^3$ term is thus $F_2 \eps_0^3 = 2.5\times 10^{-4}\eV$. Since $F_1\eps_0^{\sqrt{7}} = 6\times 10^{-5}\eV$, we see that the $O(\eps^3)$ error dominates in the interval $[0.05, 0.175]$ presented in Figure \ref{fig:error_epsilon}. In fact, these errors are comparable only when $F_1 \eps^{\sqrt 7} = F_2 \eps^3$, for $\eps$ as low as $9\times 10^{-4}$. Note that we express error in this paragraph in terms of energies in order to compare with the physical energies in Tables \ref{tab:inter_params} and \ref{tab:scale}. The dimensionless error could be computed by multiplying by the typical timescale and dividing by $\hbar$.

\begin{figure}[h!]
    \centering
    \includegraphics[width=.45\textwidth]{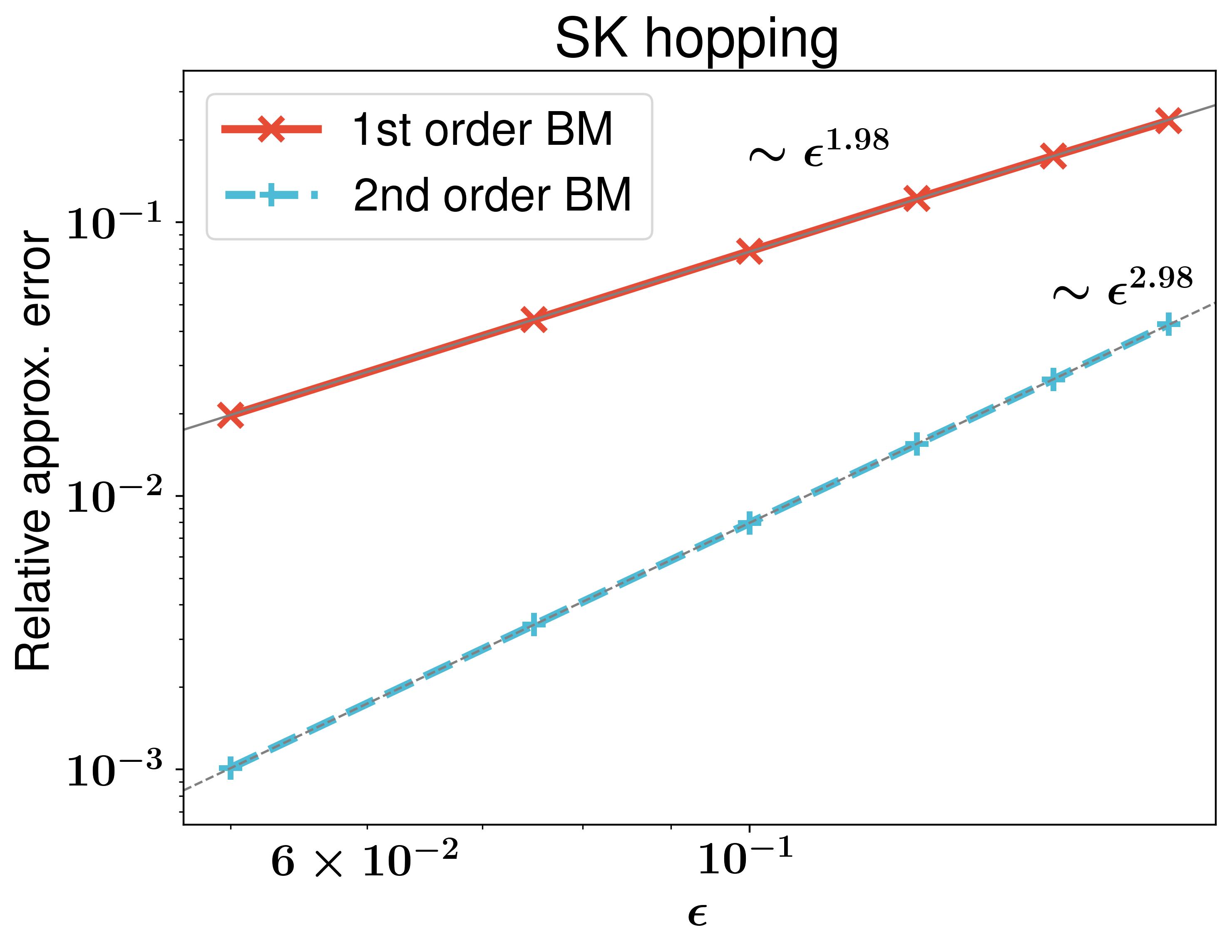}
\caption{Relative approximation error \eqref{eq:relerror} between BM and tight-binding dynamics for $\eps \in [0.05, 0.175].$ For a fixed $t=1$, the approximation error is approximately $O(\eps^{1.98})$ for the first order BM model, and $O(\eps^{2.98})$ for the second order BM model. Note that our estimates suggest that for sufficiently small $\eps$, the error should be $O(\eps^{\sqrt{7}})$ rather than $O(\eps^3)$ as observed here. We attribute the inconsistency to the smallness of the coefficient of the $O(\eps^{\sqrt{7}})$ terms in the expansion, which make these terms negligible compared to the $O(\eps^3)$ error terms for the range of $\eps$ values considered here. For more details, see the discussion in the final paragraphs of Section \ref{sec:numerical}. Simulations for smaller values of $\eps$ are omitted due to their prohibitive memory requirements.} 
    \label{fig:error_epsilon}
\end{figure}

\section{Conclusion}\label{sec:conclusion}
This paper rigorously establishes the validity of a second-order PDE model for TBG, \srq{extending the first-order result from \cite{watson2023bistritzer} to obtain a higher order of accuracy}. We show that in the limit of small twist angle and weakly interacting layers, the second-order model accurately captures the tight-binding dynamics of wave-packets whose Bloch transforms are localized to the monolayer Dirac point. We use a systematic multiple-scales expansion to construct a series of Sch\"odinger equations that are independent of our asymptotic parameter, and prove that each of these Schr\"odinger equations is (generically) well-posed. 

Our analysis is fairly general, \srq{as it applies to a larger class of hopping functions than those considered in \cite{watson2023bistritzer}.}
But it does not yet handle some models of TBG that can be found in the physics literature. In particular, some interlayer hopping functions \srq{(e.g. Slater-Koster \cite{slater1954simplified})} might fail to satisfy the specific bounds we impose on their Fourier transforms. Still, our numerical simulations demonstrate that the second-order PDE remains an accurate effective model for TBG dynamics even outside the requirements of our theory.

Finally, we verify the symmetries of the tight-binding and continuum models, and comment on the correspondence between the two. Most of these symmetries require realistic constraints on the hopping functions which are not needed for our main results. We find that the higher-order model 
breaks the ``accidental'' particle-hole symmetry of the first-order BM model. 
This is expected to be important in modeling the many-body physics of twisted bilayer graphene \cite{StubbsGroundState,FaulstichIBM}, which typically starts from adding electron-electron Coulomb interaction terms to moir\'e-scale continuum models like the models we justify here. 


The wavepacket framework from this paper could also be used to derive effective continuum models for non-TBG moir\'e materials \cite{MargetisTrilayer,zhu20trilayer,zhu2019moir}. At the expense of introducing more terms in the multiscale expansion, these effective models could in general achieve arbitrarily high orders of accuracy. 
Ongoing work 
includes a similar asymptotic analysis for a model of TBG that takes atomic relaxation into account. 
Any effective model capturing relaxation effects would involve more terms in the multiscale expansion, due to 
a slower decay rate of the Fourier transform of the interlayer hopping function; see \cite{kang2023pseudomagnetic, relaxedblg23}. 

\section{Acknowledgments}

ML's and TK's research was partially supported by Simons Targeted Grant Award No. 896630. AW's, TK's, and ML's research was also partially supported in part by NSF DMREF Award No. 1922165.  AW's research was supported in part by grant NSF DMS-2406981. 

ML's research was supported in part by grant NSF PHY-2309135 to the Kavli Institute for Theoretical Physics (KITP). ML and SQ also thank the Flatiron Institute for hospitality while (a portion of) this research was carried out. The Flatiron Institute is a division of the Simons Foundation.

\bibliographystyle{siam}
\bibliography{refs}

\appendix
\numberwithin{equation}{section}
\section{Proofs of main results}\label{sec:proofs}
\subsection{Theorem \ref{thm:main2}}\label{subsec:proof_thm_main2}
We will now prove Theorem \ref{thm:main2}. For this, we need the following four lemmas. Lemma  \ref{lemma:res2} is a general result which controls the norm of a solution to the forced Schr\"odinger equation. Lemma \ref{lemma:Sobolev_ms2} establishes that all Sobolev norms of each $\tff{j}$ are uniformly bounded in any bounded and fixed time interval. Lemma \ref{lemma:ae2} then uses this uniform boundedness to show that the function $\tf$ almost satisfies a continuum Schr\"odinger equation. Finally, Lemma \ref{lemma:Hs_to_l22} is a classical result that controls the discrete $\ell^2$ norm of a function by its (continuum) Sobolev norms.
\begin{lemma}\label{lemma:res2}
    Let $\genH$ be a symmetric linear operator on a Hilbert space $\gencH$.
    Suppose $\varphi = \varphi (\cdot, t)$ satisfies $i\partial_t \varphi = \genH \varphi + \resgen$ with zero initial condition $\varphi (\cdot, 0) \equiv 0$, where the residual $\resgen$ belongs to $\gencH$ for all $0 \le t \le t_0$, for some $t_0 > 0$.
    Then $\norm{\varphi (\cdot, t)}_{\gencH} \le \int_0^t \norm{\rho (\cdot, s)}_{\gencH} {\rm d} s$ for all $0 \le t\le t_0$.
\end{lemma}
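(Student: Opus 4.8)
The plan is to use a standard energy/Duhamel estimate. First I would show that, at least formally, one can write the solution via Duhamel's principle: since $i\partial_t\varphi = \genH\varphi + \resgen$ with $\varphi(\cdot,0)=0$, and $\genH$ is symmetric (hence generates a unitary group $e^{-it\genH}$, at least in the cases of interest here where $\genH$ is essentially self-adjoint), one has
\begin{align*}
    \varphi(\cdot,t) = -i\int_0^t e^{-i(t-s)\genH}\resgen(\cdot,s)\,\d s.
\end{align*}
Taking $\gencH$-norms, using the triangle inequality for the Bochner integral and the fact that $e^{-i(t-s)\genH}$ is an isometry on $\gencH$, gives $\norm{\varphi(\cdot,t)}_{\gencH} \le \int_0^t\norm{\resgen(\cdot,s)}_{\gencH}\,\d s$ immediately.

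Alternatively — and this avoids invoking the unitary group and works even if $\genH$ is only symmetric on a dense domain — I would argue directly with the real-valued function $t\mapsto \norm{\varphi(\cdot,t)}_{\gencH}^2$. Differentiating,
\begin{align*}
    \frac{\d}{\d t}\norm{\varphi(\cdot,t)}_{\gencH}^2
    = 2\,\mathrm{Re}\langle \partial_t\varphi, \varphi\rangle
    = 2\,\mathrm{Re}\langle -i\genH\varphi - i\resgen, \varphi\rangle
    = 2\,\mathrm{Re}(-i\langle \genH\varphi,\varphi\rangle) + 2\,\mathrm{Re}(-i\langle\resgen,\varphi\rangle).
\end{align*}
Since $\genH$ is symmetric, $\langle \genH\varphi,\varphi\rangle$ is real, so $\mathrm{Re}(-i\langle \genH\varphi,\varphi\rangle)=0$ and the first term drops. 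For the second term, $|{\mathrm{Re}(-i\langle\resgen,\varphi\rangle)}| \le \norm{\resgen}_{\gencH}\norm{\varphi}_{\gencH}$ by Cauchy--Schwarz, so with $y(t):=\norm{\varphi(\cdot,t)}_{\gencH}$ we get $\frac{\d}{\d t}y(t)^2 \le 2\norm{\resgen}_{\gencH}\,y(t)$, i.e. $2y(t)\,y'(t)\le 2\norm{\resgen}_{\gencH}\,y(t)$. Wherever $y(t)>0$ this gives $y'(t)\le\norm{\resgen(\cdot,t)}_{\gencH}$; integrating from $0$ (using $y(0)=0$) yields $y(t)\le\int_0^t\norm{\resgen(\cdot,s)}_{\gencH}\,\d s$, as claimed. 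A small amount of care is needed to handle the set where $y(t)=0$ (so that division is not permitted) and to justify differentiating under the inner product — the cleanest fix is to work with $y_\delta(t):=\sqrt{\norm{\varphi(\cdot,t)}_{\gencH}^2+\delta^2}$, derive $y_\delta'(t)\le\norm{\resgen(\cdot,t)}_{\gencH}$ cleanly for all $t$, integrate, and then let $\delta\to0$.

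The main (and only real) obstacle is regularity/rigor rather than anything conceptual: one must ensure $t\mapsto\varphi(\cdot,t)$ is differentiable in $\gencH$ with $\genH\varphi(\cdot,t)\in\gencH$ so that the manipulations above are legitimate. In the applications in this paper $\genH$ is a (self-adjoint) differential operator and $\varphi$ is built from solutions of the continuum Schr\"odinger equations with smooth, Schwartz-class data, so this regularity is available; I would simply remark that the estimate holds for sufficiently regular solutions and extends to the general case by the Duhamel formula and a density argument. I expect the intended proof to be the short Duhamel argument, so I would present that as the primary proof and mention the energy estimate as an alternative.
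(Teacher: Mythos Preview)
Your proposal is correct, and your energy-estimate alternative is essentially the paper's proof: the paper differentiates $\norm{\varphi}_{\gencH}^2$, uses symmetry of $\genH$ to kill the $\genH$-term, bounds the remaining $\resgen$-term by Cauchy--Schwarz to get $\norm{\varphi}\partial_t\norm{\varphi}\le\norm{\varphi}\norm{\resgen}$, and integrates. Your Duhamel argument is a valid alternative route the paper does not take; it is slightly cleaner but requires invoking the unitary group (hence essential self-adjointness), whereas the energy computation works with symmetry alone---the paper, like you, glosses over the $y(t)=0$ issue and regularity of $t\mapsto\varphi$.
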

\begin{proof}
    For conciseness, suppress the subscript $\gencH$ in the norms and inner-products below. We see that $$2i \norm{\varphi} \partial_t \norm{\varphi} =i\partial_t \norm{\varphi}^2 = (\varphi, i\partial_t \varphi) - (i\partial_t \varphi, \varphi) = (\varphi,\genH \varphi)+(\varphi,\resgen)-(\genH \varphi,\varphi)-(\resgen,\varphi) = 2i\Im (\varphi,\resgen),$$
    where we have used the self-adjointness of $\genH$ to justify the last equality. It follows that $\norm {\varphi} \partial_t \norm{\varphi} \le \norm{\varphi} \norm{\resgen}$, and thus $\norm{\varphi (\cdot, t)} \le \int_0^t \norm{\rho (\cdot, s)}{\rm d} s$ as desired.
\end{proof}
\begin{lemma}\label{lemma:Sobolev_ms2}
    Under the assumptions of Theorem \ref{thm:main2}, for any $N \ge 0$,
    \begin{align}\label{eq:Sobolev2}
    \begin{split}
        \norm{\tff{1} (\cdot, t)}_{H^{N}} \le C \norm{f_0}_{H^N}, &\qquad \norm{\tff{{\rm NNN}} (\cdot, t)}_{H^{N}}\le C t \norm{f_0}_{H^N},\\
        \norm{\tff{\nabla, {\rm NN}} (\cdot, t)}_{H^{N}}\le C t \norm{f_0}_{H^{N+1}},&\qquad \norm{\tff{2} (\cdot, t)}_{H^{N}}\le C t \norm{f_0}_{H^{N+2}},
    \end{split}
    \end{align}
    uniformly in $t \ge 0$.
\end{lemma}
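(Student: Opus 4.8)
The plan is to treat the four initial value problems in \eqref{eq:it} one at a time, since they are arranged in a triangular fashion: $\tff{1}$ solves a homogeneous equation, and each of $\tff{{\rm NNN}}$, $\tff{\nabla,{\rm NN}}$, $\tff{2}$ solves an inhomogeneous equation whose forcing term is a fixed differential operator applied to $\tff{1}$. The basic tool is energy estimates in Sobolev norm for the Schr\"odinger equation $(i\partial_\macrotime - \HBM) u = g$, combined with the observation that $\HBM$ commutes with constant-coefficient derivatives only up to lower-order terms coming from its $r$-dependent (but smooth and bounded-with-all-derivatives) off-diagonal coefficients.

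First I would handle $\tff{1}$. Since $\HBM$ is self-adjoint on $L^2$, $\norm{\tff{1}(\cdot,\macrotime)}_{L^2} = \norm{f_0}_{L^2}$ exactly. To control higher derivatives, apply $\partial^\gamma$ for a multi-index $|\gamma| \le N$ to the equation; because $\hoppingT(r)$ and $\hoppingT^\dagger(r)$ are in $C^\infty_b$ (finite sums of plane waves times constant matrices), the commutator $[\partial^\gamma,\HBM]$ is a differential operator of order $\le |\gamma|$ with bounded coefficients, so $(i\partial_\macrotime - \HBM)\partial^\gamma \tff{1} = [\HBM,\partial^\gamma]\tff{1}$ has $L^2$-norm bounded by $C\norm{\tff{1}(\cdot,\macrotime)}_{H^{|\gamma|}}$. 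Using Lemma \ref{lemma:res2} (with $\genH = \HBM$) together with a Gr\"onwall argument on $\sum_{|\gamma|\le N}\norm{\partial^\gamma \tff{1}}_{L^2}$ gives $\norm{\tff{1}(\cdot,\macrotime)}_{H^N} \le e^{C\macrotime}\norm{f_0}_{H^N}$. The claimed bound is stated without the exponential; I expect the resolution is that the paper only needs the estimate on bounded time intervals (as the surrounding text says "in any bounded and fixed time interval"), or alternatively that the skew-adjoint structure of the leading part lets one absorb the commutator more carefully — in any case this is the standard linear Schr\"odinger energy estimate and I would note that $C$ may depend on the fixed time horizon.

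Next, for the three inhomogeneous problems I would again apply $\partial^\gamma$, $|\gamma|\le N$, and use Lemma \ref{lemma:res2}: $\norm{\partial^\gamma u(\cdot,\macrotime)}_{L^2} \le \int_0^\macrotime \norm{\partial^\gamma(\text{forcing})(\cdot,s)}_{L^2}\,ds + \int_0^\macrotime\norm{[\HBM,\partial^\gamma]u(\cdot,s)}_{L^2}\,ds$, then sum over $\gamma$ and run Gr\"onwall. For $\tff{{\rm NNN}}$ the forcing is $\HBMNNN\tff{1}$, and $\HBMNNN$ is a bounded pointwise multiplication operator with $C^\infty_b$ entries, so $\norm{\HBMNNN\tff{1}(\cdot,s)}_{H^N} \le C\norm{\tff{1}(\cdot,s)}_{H^N} \le C\norm{f_0}_{H^N}$; integrating in $s$ produces the factor $\macrotime$, giving $\norm{\tff{{\rm NNN}}(\cdot,\macrotime)}_{H^N} \le C\macrotime\norm{f_0}_{H^N}$. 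For $\tff{\nabla,{\rm NN}}$ the forcing $\HBMgNN\tff{1}$ is first order, so $\norm{\HBMgNN\tff{1}(\cdot,s)}_{H^N}\le C\norm{\tff{1}(\cdot,s)}_{H^{N+1}}\le C\norm{f_0}_{H^{N+1}}$, yielding the $H^{N+1}$ bound with a factor $\macrotime$. For $\tff{2}$ the forcing $\HBMt\tff{1}$ is a second-order operator with $C^\infty_b$ coefficients (note $\HBMsd$, $\hoppingT_2$, $\Tu$ and $\mp\frac12\beta i\sigma_3 L$ all have bounded-with-all-derivatives coefficients), so $\norm{\HBMt\tff{1}(\cdot,s)}_{H^N}\le C\norm{f_0}_{H^{N+2}}$, giving the last bound.

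The main obstacle is bookkeeping the commutator $[\HBM,\partial^\gamma]$ correctly so that the Gr\"onwall argument closes: one must verify that every coefficient appearing in $\HBM$, $\HBMNNN$, $\HBMgNN$, $\HBMt$ is smooth with all derivatives bounded uniformly (which holds since they are finite trigonometric polynomials in $r$ times constant matrices), and that the derivatives in $\HBMgNN$ and $\hoppingT_2$ are taken in directions orthogonal to the corresponding plane-wave phases — this is exactly the identity $[v_1\cdot D, e^{iv_2\cdot r}]=0$ when $v_1\cdot v_2 = 0$ already noted in the text, which keeps those operators genuinely first order and avoids spurious loss of derivatives. Given that, all four estimates follow from the same energy-estimate-plus-Gr\"onwall template, with the stated regularity budgets $N$, $N$, $N+1$, $N+2$ reflecting precisely the order of the forcing operator.
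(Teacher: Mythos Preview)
Your commutator-plus-Gr\"onwall approach proves a version of these estimates, but it misses the uniformity the lemma actually claims: the bounds must hold \emph{uniformly in $t \ge 0$}, with $C$ independent of $t$. Differentiating and applying Gr\"onwall produces a factor $e^{Ct}$ (or at best polynomial growth in $t$, if you exploit that $[\partial^\gamma,\hoppingT(r)]$ has order $|\gamma|-1$), because $[\partial^\gamma, \HBM]$ does not vanish. You acknowledge this, but your proposed fixes do not work: allowing $C$ to depend on a time horizon weakens the lemma and hence Theorem \ref{thm:main2}, which is stated uniformly in $\microtime \ge 0$; and there is no general ``skew-adjoint absorption'' mechanism for the commutator in a first-order system with variable zeroth-order part.

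The missing idea is to use the hypothesis $\alpha \neq 0$, which makes $\HBM$ a first-order \emph{elliptic} operator. This gives the norm equivalence
\[
c\,\norm{u}_{H^N} \;\le\; \norm{((\HBM)^2+\mu)^{N/2} u}_{L^2} \;\le\; C\,\norm{u}_{H^N}
\]
for suitable $\mu > 0$. Since $((\HBM)^2+\mu)^{N/2}$ is a function of $\HBM$, it commutes exactly with $e^{-it\HBM}$, so $\norm{((\HBM)^2+\mu)^{N/2} \tff{1}(\cdot,t)}_{L^2}$ is conserved in time---no commutator, no Gr\"onwall, no growth. For the inhomogeneous problems one differentiates $\norm{((\HBM)^2+\mu)^{N/2} \tff{j}}_{L^2}^2$ in $t$; the $\HBM$ term drops out by self-adjointness and only the forcing $\HBMvar{j}\tff{1}$ remains, giving $\norm{\tff{j}(\cdot,t)}_{H^N} \le C\int_0^t \norm{\tff{1}(\cdot,s)}_{H^{N_j}}\,ds \le Ct\,\norm{f_0}_{H^{N_j}}$ with $N_j = N, N+1, N+2$ according to the order of $\HBMvar{j}$.
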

\begin{proof}
    By \eqref{eq:HBM_def}, the assumption that $\alpha \ne 0$ implies that $\HBM$ is an elliptic first-order differential operator. This means that for any $N \ge 0$, there exist positive constants $\mu$, $c$ and $C$ such that
    \begin{align}\label{eq:elliptic_M2}
        c \norm{u}_{H^N} \le \norm{\left(\left(\HBM\right)^2 + \mu\right)^{N/2} u}_{L^2} \le C\norm{u}_{H^N}
    \end{align}
    uniformly in $u \in H^N$.
    It follows that
    \begin{align*}
        \norm{\tff{1} (\cdot, t)}_{H^{N}} &= \norm{e^{-it\HBM} f_0}_{H^{N}} \le c^{-1} \norm{((\HBM)^2 + \mu)^{N/2} e^{-it\HBM} f_0}_{L^2} = c^{-1} \norm{((\HBM)^2 + \mu)^{N/2} f_0}_{L^2}\\
        &= c^{-1} C \norm{f_0}_{H^{N}}
    \end{align*}
    is bounded uniformly in $t \ge 0$.
    Moreover, since $\HBM$ is symmetric, we have that for all $j \ne 1$
    \begin{align*}
        i \partial_t \norm{((\HBM)^2 + \mu)^{N/2} \tff{j}}^2_{L^2} &= 2 i \Im \left( ((\HBM)^2 + \mu)^{N/2} \tff{j}, ((\HBM)^2 + \mu)^{N/2} i \partial_t \tff{j} \right)\\
        &= 2 i \Im \left( ((\HBM)^2 + \mu)^{N/2} \tff{j}, ((\HBM)^2 + \mu)^{N/2}\HBMvar{j}\tff{1}\right).
    \end{align*}
    By the regularity of the coefficients of the $\HBMvar{j}$, we have
    \begin{align}\label{eq:reg_Hj}
        \norm{\HBMNNN u}_{H^N} \le C \norm{u}_{H^{N}}, \qquad
        \norm{\HBMgNN u}_{H^N} \le C \norm{u}_{H^{N+1}}, \qquad
        \norm{\HBMt u}_{H^N} \le C \norm{u}_{H^{N+2}}
    \end{align}
    uniformly in $u$, which combined with the upper bound in \eqref{eq:elliptic_M2} implies
    \begin{align*}
        \partial_t \norm{((\HBM)^2 + \mu)^{N/2} \tff{j}}^2_{L^2} \le C \norm{((\HBM)^2 + \mu)^{N/2} \tff{j}}_{L^2} \norm{\tff{1}}_{H^{N_j}},
    \end{align*}
    where $N_{\rm NNN} = N$, $N_{\nabla, \rm NN} = N+1$ and $N_2 = N+2$.
    It follows that
    \begin{align*}
        \norm{((\HBM)^2 + \mu)^{N/2} \tff{j} (\cdot, t)}_{L^2} \le C \int_0^t \norm{\tff{1} (\cdot, t')}_{H^{N_j}}{\rm d} t' \le C t\norm{f_0}_{H^{N_j}}
    \end{align*}
    uniformly in $t \ge 0$. By the lower bound in \eqref{eq:elliptic_M2}, we conclude that
    \begin{align*}
        \norm{\tff{j}(\cdot, t)}_{H^N} \le C t\norm{f_0}_{H^{N_j}}
    \end{align*}
    and the proof is complete.
\end{proof}

\begin{lemma}\label{lemma:ae2}
    Under the assumptions of Theorem \ref{thm:main2}, define $\Hfull := \HBM + \radNNN (\eps) \HBMNNN + \radpNN (\eps) \HBMgNN + \eps \HBMt$ and $\res := (i \partial_t - \Hfull) \tf$. Then for any $N \ge 0$,
    \begin{align*}
        \norm{\res (\cdot, t)}_{H^N} \le C t \left( \eps^{1+\sdpt} \norm{f_0}_{H^{N+2}} + \eps^{\frac{3+\sdpt}{2}}\norm{f_0}_{H^{N+3}} + \eps^2 \norm{f_0}_{H^{N+4}}\right)
    \end{align*}
    uniformly in $t \ge 0$.
\end{lemma}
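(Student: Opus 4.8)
The plan is to compute $\res = (i\partial_t - \Hfull)\tf$ directly by substituting the definition $\tf = \tff{1} + \radNNN(\eps)\tff{\rm NNN} + \radpNN(\eps)\tff{\nabla,\rm NN} + \eps\tff{2}$ and using the defining equations \eqref{eq:it}. Since $(i\partial_t - \HBM)\tff{1} = 0$, $(i\partial_t - \HBM)\tff{\rm NNN} = \HBMNNN\tff{1}$, and likewise for the other two, applying $i\partial_t - \HBM$ to $\tf$ produces exactly $\radNNN(\eps)\HBMNNN\tff{1} + \radpNN(\eps)\HBMgNN\tff{1} + \eps\HBMt\tff{1}$. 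Therefore
\begin{align*}
    \res = (i\partial_t - \Hfull)\tf = (i\partial_t - \HBM)\tf - \left(\radNNN(\eps)\HBMNNN + \radpNN(\eps)\HBMgNN + \eps\HBMt\right)\tf,
\end{align*}
and the first group of terms cancels against the piece of the second group acting on $\tff{1}$. What remains is the ``cross terms'': the correction operators $\radNNN(\eps)\HBMNNN$, $\radpNN(\eps)\HBMgNN$, $\eps\HBMt$ applied to the correction fields $\radNNN(\eps)\tff{\rm NNN} + \radpNN(\eps)\tff{\nabla,\rm NN} + \eps\tff{2}$ rather than to $\tff{1}$. Explicitly, $\res$ is a sum of nine terms of the form $-c_a(\eps)c_b(\eps)\, \HBMvar{a}\tff{b}$ with $a,b$ ranging over $\{\rm NNN, (\nabla,NN), 2\}$ and $c_{\rm NNN}(\eps) = \radNNN(\eps)$, $c_{(\nabla,\rm NN)}(\eps) = \radpNN(\eps)$, $c_2(\eps) = \eps$.

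Next I would bound each of these nine terms in $H^N$. For the prefactors, invoke \eqref{eq:xi_zeta_bd}: $|\radNNN(\eps)| + |\radpNN(\eps)| \le C\eps^{(1+\sdpt)/2}$, together with the trivial bound on $\eps$. Thus every product $c_a(\eps)c_b(\eps)$ is at most: $O(\eps^{1+\sdpt})$ when both indices are in $\{\rm NNN, (\nabla,NN)\}$; $O(\eps^{(3+\sdpt)/2})$ when exactly one index equals $2$; and $O(\eps^2)$ when both indices equal $2$. For the operator norms, combine the regularity estimates \eqref{eq:reg_Hj} (namely $\HBMNNN$ loses no derivatives, $\HBMgNN$ loses one, $\HBMt$ loses two) with the Sobolev bounds of Lemma \ref{lemma:Sobolev_ms2} on $\tff{b}$ (namely $\|\tff{\rm NNN}(\cdot,t)\|_{H^M} \le Ct\|f_0\|_{H^M}$, $\|\tff{\nabla,\rm NN}(\cdot,t)\|_{H^M}\le Ct\|f_0\|_{H^{M+1}}$, $\|\tff{2}(\cdot,t)\|_{H^M}\le Ct\|f_0\|_{H^{M+2}}$). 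So $\|\HBMvar{a}\tff{b}(\cdot,t)\|_{H^N}\le Ct\|f_0\|_{H^{N + d_a + e_b}}$ where $d_a\in\{0,1,2\}$ is the derivative loss of $\HBMvar{a}$ and $e_b\in\{0,1,2\}$ is the derivative cost of $\tff{b}$. The worst case in each prefactor class is attained by taking $d_a + e_b$ as large as possible: for the $\eps^{1+\sdpt}$ class (both indices in $\{\rm NNN,(\nabla,NN)\}$) the max of $d_a + e_b$ is $1+1 = 2$; for the $\eps^{(3+\sdpt)/2}$ class (one index $=2$) the max is $2 + 1 = 3$ (taking $\HBMt$ on $\tff{\nabla,\rm NN}$, or $\HBMgNN$ on $\tff{2}$, both give $3$); for the $\eps^2$ class (both $=2$) it is $2 + 2 = 4$. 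Summing the nine contributions and keeping the dominant term in each class yields precisely
\begin{align*}
    \norm{\res(\cdot,t)}_{H^N} \le Ct\left(\eps^{1+\sdpt}\norm{f_0}_{H^{N+2}} + \eps^{\frac{3+\sdpt}{2}}\norm{f_0}_{H^{N+3}} + \eps^2\norm{f_0}_{H^{N+4}}\right),
\end{align*}
uniformly in $t\ge 0$, which is the claim.

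I do not expect a genuine obstacle here — this lemma is essentially bookkeeping. The one point requiring a little care is making sure the algebraic cancellation is organized correctly: one must verify that all terms involving $\tff{1}$ (and only those) cancel, so that the residual is genuinely quadratically small in the correction parameters, and then track which of the nine surviving cross terms actually saturates each power of $\eps$ (it is the combination of the largest prefactor with the largest derivative loss, and these happen to be compatible). A secondary check is that Lemma \ref{lemma:Sobolev_ms2} is being applied at high enough Sobolev index — here at $H^{N+4}$ at worst — which is fine since that lemma holds for all $N\ge 0$; this is the source of the $H^{N+4}$ appearing on the right-hand side, and ultimately (through the choice $N=0$ or small $N$) of the $H^{6+\sdpt}$ regularity requirement on $f_0$ in Theorem \ref{thm:main2}.
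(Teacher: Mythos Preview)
Your proof is correct and follows essentially the same approach as the paper: compute $\res$ as the nine ``cross terms'' $-(\radNNN\HBMNNN + \radpNN\HBMgNN + \eps\HBMt)(\radNNN\tff{\rm NNN} + \radpNN\tff{\nabla,\rm NN} + \eps\tff{2})$, then bound each via \eqref{eq:reg_Hj}, Lemma~\ref{lemma:Sobolev_ms2}, and \eqref{eq:xi_zeta_bd}. The only cosmetic difference is that the paper groups the terms by Sobolev index rather than by prefactor class (e.g.\ it places the $\eps|\radNNN|$ contributions with the $H^{N+2}$ terms, using $\eps^{(3+\sdpt)/2}\le C\eps^{1+\sdpt}$ for $\sdpt\le 1$), but both groupings yield the stated estimate.
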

\begin{proof}
    It follows from \eqref{eq:it} that
    \begin{align*}
        \res = \left(\radNNN (\eps) \HBMNNN + \radpNN (\eps) \HBMgNN + \eps \HBMt \right) \left( \radNNN (\eps) \tff{{\rm NNN}} + \radpNN (\eps) \tff{\nabla, {\rm NN}} + \eps \tff{2}\right).
    \end{align*}
    By \eqref{eq:Sobolev2} and \eqref{eq:reg_Hj}, this means
    \begin{align*}
        \norm{\res (\cdot, t)}_{H^N} \le C t \Big( |\radNNN (\eps)|^2 \norm{f_0}_{H^N} +|\radNNN (\eps)\radpNN (\eps)| \norm{f_0}_{H^{N+1}}&+ (|\radpNN (\eps)|^2 + \eps |\radNNN (\eps)|)\norm{f_0}_{H^{N+2}}\\
        &+ \eps |\radpNN (\eps)|\norm{f_0}_{H^{N+3}} + \eps^2 \norm{f_0}_{H^{N+4}}\Big).
    \end{align*}
    Using \eqref{eq:xi_zeta_bd} and the fact that $\frac{1 + \sdpt}{2} < \frac{4+2\sdpt}{\sqrt{7}}-1$ (recall that $0 < \sdpt \le \sqrt{7} - 2$), we verify that
    \begin{align*}
        |\radNNN (\eps)|^2 + |\radNNN (\eps)\radpNN (\eps)| + |\radpNN (\eps)|^2 + \eps |\radNNN (\eps)| \le C \eps^{1 + \sdpt} \qquad \text{and} \qquad \eps |\radpNN (\eps)| \le C\eps^{\frac{3+\sdpt}{2}},
    \end{align*}
    which completes the result.
\end{proof}

With an abuse of notation, $u$ below will denote either a function defined on all of $\mathbb{R}^2$ or its restriction to a lattice.
\begin{lemma}\label{lemma:Hs_to_l22}
    Let $\cR \subset \mathbb{R}^2$ be a lattice. 
    For any $s>2$, there exists a positive constant $C_s$ such that $\norm{u}_{\ell^2 (\cR)} \le C_s \norm{u}_{H^s (\mathbb{R}^2)}$ uniformly in $u \in H^s (\mathbb{R}^2)$.
\end{lemma}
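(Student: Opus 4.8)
The plan is to prove the estimate $\norm{u}_{\ell^2(\cR)} \le C_s \norm{u}_{H^s(\mathbb{R}^2)}$ for $s > 2$ by the standard route: interpret the $\ell^2$-norm over the lattice as a sum of point evaluations, bound each point evaluation by a local Sobolev norm via the Sobolev embedding $H^s \hookrightarrow C^0$ (valid since $s > 1 = d/2$ in dimension $d=2$), and then sum the local contributions using a partition of unity or a translation argument so that the overlaps are controlled by a fixed finite constant depending only on the geometry of $\cR$.

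\textbf{Step 1: reduce to a local estimate.} Fix a bounded open set $Q \subset \mathbb{R}^2$ (for instance a ball, or the fundamental cell of $\cR$ slightly enlarged) such that the translates $\{Q + R : R \in \cR\}$ cover $\mathbb{R}^2$ with bounded overlap, i.e. there is an integer $M = M(\cR, Q)$ with $\sum_{R \in \cR} \mathbf{1}_{Q+R}(x) \le M$ for all $x$. Choose $Q$ large enough that $0 \in Q$, so that $R \in Q + R$ for each $R$. By the Sobolev embedding theorem on the bounded Lipschitz domain $Q$ (which holds because $s > 2/2 = 1$, and $s > 2$ is more than enough), there is a constant $C_Q$ with $|v(0)|^2 \le C_Q \norm{v}_{H^s(Q)}^2$ for all $v \in H^s(Q)$; applying this to $v(x) = u(x + R)$ gives $|u(R)|^2 \le C_Q \norm{u}_{H^s(Q+R)}^2$ for every $R \in \cR$.

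\textbf{Step 2: sum over the lattice.} Summing the previous inequality over $R \in \cR$ and using the bounded-overlap property,
\begin{align*}
    \norm{u}_{\ell^2(\cR)}^2 = \sum_{R \in \cR} |u(R)|^2 \le C_Q \sum_{R \in \cR} \norm{u}_{H^s(Q+R)}^2 \le C_Q M \norm{u}_{H^s(\mathbb{R}^2)}^2,
\end{align*}
where the last step follows by expanding the squared $H^s$-norm as an integral of $\sum_{|\alpha| \le s} |\partial^\alpha u|^2$ (or the Fourier-side Bessel-potential version if $s$ is non-integer), interchanging sum and integral, and bounding $\sum_{R} \mathbf{1}_{Q+R} \le M$. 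Taking square roots yields the claim with $C_s = (C_Q M)^{1/2}$, and in the vector-valued case one simply applies the scalar estimate componentwise.

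\textbf{Main obstacle.} There is no serious obstacle here — this is a classical fact. The only points requiring minor care are: (i) making sure the enlarged cell $Q$ is a domain on which the Sobolev embedding $H^s(Q) \hookrightarrow C^0(\bar Q)$ applies with a constant independent of which translate we use (translation-invariance of the estimate makes this automatic once $Q$ is fixed), and (ii) handling non-integer $s$, for which one should phrase $\norm{u}_{H^s}$ via the Bessel potential $\norm{(1-\Delta)^{s/2}u}_{L^2}$ and note the embedding into continuous functions still holds for $s > 1$. If one prefers to avoid even this, it suffices to prove the inequality for the smallest integer $s_0 \ge 3$ and then note $\norm{u}_{H^{s}} \le \norm{u}_{H^{s_0}}$ is false in general — so instead one genuinely wants the fractional statement, and invoking the fractional Sobolev embedding $W^{s,2}(\mathbb{R}^2) \hookrightarrow C^{0,s-1}(\mathbb{R}^2)$ for $1 < s < 2$ together with the integer case for $s \ge 2$ covers all $s > 2$ cleanly. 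In all cases the constant depends only on $s$ and the lattice $\cR$ through the covering number $M$, as asserted.
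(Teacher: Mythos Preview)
Your proof is correct but follows a genuinely different route from the paper. The paper works entirely on the Fourier side: it uses the Poisson summation identity $\norm{u}_{\ell^2(\cR)}^2 = (2\pi)^{-4}\sum_{G\in\cR^*}\int_{\mathbb{R}^2}\overline{\hat u(k)}\,\hat u(k+G)\,dk$, inserts weights $\langle k\rangle^s\langle k+G\rangle^s$, uses the pointwise bound $\langle k\rangle^{-s}\langle k+G\rangle^{-s}\le C\langle G\rangle^{-s}$, and then applies Cauchy--Schwarz together with the summability $\sum_{G\in\cR^*}\langle G\rangle^{-s}<\infty$, which is exactly where the hypothesis $s>2$ enters. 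Your argument instead stays in physical space: Sobolev embedding $H^s\hookrightarrow C^0$ on a fixed fundamental cell, translate, and sum with bounded overlap. This is more elementary and in fact yields the estimate for all $s>1$, which is sharper than what the paper proves; the paper's Fourier argument is natural in context because the surrounding analysis already lives in the Bloch domain, but it pays for that convenience with the stricter threshold $s>2$.

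One small cleanup: your handling of non-integer $s$ in Step~2 is slightly awkward, since the Bessel-potential norm is not a local integral and the bounded-overlap bound does not apply to it directly. The simplest fix is to run Steps~1--2 with the integer exponent $2$ (the embedding $H^2(Q)\hookrightarrow C^0(\bar Q)$ already holds, and the overlap argument is then a straightforward sum of integrals of $|\partial^\alpha u|^2$), and at the very end invoke $\norm{u}_{H^2(\mathbb{R}^2)}\le \norm{u}_{H^s(\mathbb{R}^2)}$ for $s\ge 2$. This avoids any fractional-norm localization issues entirely.
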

\begin{proof}
    Let $\cR^*$ denote the reciprocal lattice and $\Gamma^*$ the Brillouin zone.
    We know that
    \begin{align*}
        \norm{u}_{\ell^2 (\cR)} = \norm{\tilde{u}}_{L^2 (\Gamma^*)}, \qquad
        \tilde{u} (k) = (2\pi)^{-2} \sum_{G \in \cR^*} \hat{u} (k+G),
    \end{align*}
    with $\tilde{u}$ and $\hat{u}$ respectively the Bloch and Fourier transforms of $u$. It follows that
    \begin{align*}
        \norm{u}_{\ell^2 (\cR)}^2 = (2\pi)^{-4} \sum_{G_1, G_2 \in \cR^*} \int_{\Gamma^*} \overline{\hat{u} (k+G_1)} \hat{u} (k+G_2) {\rm d}k,
    \end{align*}
    which after a change of variables can be written as
    \begin{align*}
        \norm{u}_{\ell^2 (\cR)}^2 = (2\pi)^{-4} \sum_{G_1, G_2 \in \cR^*} \int_{\Gamma^*} \overline{\hat{u} (k+G_1)} \hat{u} (k+G_1+G_2) {\rm d}k.
    \end{align*}
    Combining the sum over $G_1$ with the integral over $k$, we obtain
    \begin{align}\label{eq:l2_u2}
        \norm{u}_{\ell^2 (\cR)}^2 = (2\pi)^{-4} \sum_{G \in \cR^*} \int_{\mathbb{R}^2} \overline{\hat{u} (k)} \hat{u} (k+G) {\rm d}k.
    \end{align}
    With $\aver{x} := \sqrt{1 + x^2}$, we next multiply and divide the integrand in \eqref{eq:l2_u2} by $\aver{k}^s \aver{k+G}^s$ and use the fact that $\aver{k}^{-s} \aver{k+G}^{-s} \le C \aver{G}^{-s}$ to obtain
    \begin{align*}
        \norm{u}_{\ell^2 (\cR)}^2 \le C \sum_{G \in \cR^*} \int_{\mathbb{R}^2} \aver{k}^s \overline{\hat{u} (k)}\aver{k+G}^s \hat{u} (k+G) \aver{G}^{-s}{\rm d}k,
    \end{align*}
    which by Cauchy-Schwarz implies
    \begin{align*}
        \norm{u}_{\ell^2 (\cR)}^2 \le C \norm{\aver{\cdot}^s \hat{u} (\cdot)}_{L^2 (\mathbb{R}^2)}^2 \sum_{G \in \cR^*}\aver{G}^{-s} \le C \norm{u}^2_{H^s (\mathbb{R}^2)}.
    \end{align*}
    This completes the proof.
\end{proof}
\sq{Before proving Theorem \ref{thm:main2}, let us introduce some notation. For $j=1,2$, set $\theta_j := (-1)^j \theta$ as before. Recall the definition $K := \frac{4\pi}{3a}(1,0)$ from 
\eqref{eq:Dirac_points} and for $j=1,2$ define
\begin{align}\label{eq:Dirac_pt_rot}
    K_j := \rot_{\theta_j/2} K
\end{align}
a Dirac point corresponding to layer $j$. Similarly, the reciprocal lattice and Brillouin zone for layer $j$ are respectively given by
\begin{align*}
    \cR_j^* := \rot_{\theta_j/2} \cR^*, \qquad \Gamma_j^* := \rot_{\theta_j/2} \Gamma^*,
\end{align*}
where $\cR^*$ and $\Gamma^*$ defined in \eqref{eq:cR_star} are the unrotated reciprocal lattice and Brillouin zone.}

\sq{We will denote wave functions in the Bloch domain $L^2 (\Gamma_1^*; \mathbb{C}^2) \oplus L^2 (\Gamma_2^*; \mathbb{C}^2)$ by $\tilde{\varphi} = (\tilde{\varphi}^A_1, \tilde{\varphi}^B_1, \tilde{\varphi}^A_2, \tilde{\varphi}^B_2)$ with each $\varphi_j^\sigma \in L^2 (\Gamma_j^*; \mathbb{C})$, or more concisely as $\tilde{\varphi} = (\tilde{\varphi}_1, \tilde{\varphi}_2)$ with $\tilde{\varphi}_j \in L^2 (\Gamma_j^*; \mathbb{C}^2)$. With 
$k_j \in \Gamma_j^*$ and $R_j \in \cR_j$ for $j=1,2$, the (unitary) Bloch transform and its inverse are then defined by
\begin{align*}
    [\cG \varphi] (k_1, k_2) := \begin{pmatrix}
        [\cG_1 \varphi_1] (k_1)\\
        [\cG_2 \varphi_2] (k_2)
    \end{pmatrix},
    \qquad
    [\cG^{-1} \tilde{\varphi}]_{R_1, R_2} := \begin{pmatrix}
        [\cG_1^{-1} \tilde{\varphi}_1]_{R_1}\\
        [\cG_2^{-1} \tilde{\varphi}_2]_{R_2}
    \end{pmatrix},
\end{align*}
where
\begin{align*}
    \left[ \mathcal{G}_j \varphi_j \right]^\sigma(k) := \sum_{R_j \in \mathcal{R}_j} e^{- i k \cdot (R_j + \tau_j^\sigma)} \varphi_{R_j}^\sigma, \qquad \left[ \mathcal{G}_j^{-1} \tilde{\varphi}_j \right]^\sigma_{R_j} := \frac{1}{|\Gamma^*|} \inty{\Gamma_j^*}{}{ e^{i k \cdot (R_j + \tau_j^\sigma)} \tilde{\varphi}_j^\sigma(k) }{k}.
\end{align*}
We will often use the shorthand $\tilde{\varphi}_j := \cG_j \varphi_j$.}
\begin{proof}[Proof of Theorem \ref{thm:main2}]
    We apply Lemma \ref{lemma:res2} to the operator $H : \cH \to \cH$ and function $\varphi := \phi - \psi$. Since $i\partial_\microtime \psi = H\psi$ and $(\psi_j)^\sigma_{R_j} (0) = (\phi_j)^\sigma_{R_j}$, it suffices to show that $i\partial_\microtime \phi - H \phi = \resgen$, where 
    \begin{align}\label{eq:resgen_suffice}
        \norm{\resgen (\microtime)}_\cH \le C \srq{\norm{f_0}_{H^{6+\sdpt}}} \eps^{1 + \sdpt_-} \left(\eps + \eps^{\sdpt -\sdpt_-} \eps^2 \microtime \right)
    \end{align}
    uniformly in $\microtime \ge 0$.
    It follows from the definition \eqref{eq:def_phi} of $\phi$ that
    \begin{align}\label{eq:partial_t_phi}
        i (\partial_\microtime \phi_j)^\sigma_{R_j} (\microtime) = \eps^2 i \partial_t f^\sigma_{j} (\eps (R_j+\tau_j^\sigma),\eps \microtime) e^{i K_j \cdot (R_j + \tau_j^\sigma)} e^{-i\eshift \microtime}
        + \eshift (\phi_j)^\sigma_{R_j} (\microtime).
    \end{align}
    We then use Lemmas \ref{lemma:ae2} and \ref{lemma:Hs_to_l22} to write $i \partial_t f = \Hfull f + \res$, where
     \begin{align}\label{eq:res_neglect}
     \begin{split}
        \norm{\eps^2 \res_j (\eps (\cdot \; +\tau_j^\sigma),\eps \microtime)e^{i K_j \cdot (\cdot \; + \tau_j^\sigma)}e^{-i\eshift \microtime}}_{\ell^2 (\cR_j)} &\le 
        \srq{C \norm{\eps^2 \res_j (\eps (\cdot \; +\tau_j^\sigma),\eps \microtime)e^{i K_j \cdot (\cdot \; + \tau_j^\sigma)}e^{-i\eshift \microtime}}_{H^{2+\sdpt} (\mathbb{R}^2)}}\\
        &\srq{\le C\eps \microtime \,\eps^{2+\sdpt}\norm{f_0}_{H^{6+\sdpt}}}
    \end{split}
    \end{align}
    uniformly in $\microtime \ge 0$. Combining \eqref{eq:partial_t_phi} and \eqref{eq:res_neglect}, we obtain
    \begin{align}\label{eq:partial_t_phi2}
        i (\partial_\microtime \phi_j)^\sigma_{R_j} (\microtime) = \eps^2 \{(\Hfull [f])^\sigma_{j} (\eps (R_j+\tau_j^\sigma),\eps \microtime)\} e^{i K_j \cdot (R_j + \tau_j^\sigma)} e^{-i\eshift \microtime}
        + \eshift (\phi_j)^\sigma_{R_j} (\microtime) + O_{\ell^2 (\cR_j)} (\eps \microtime \,\eps^{2 + \sdpt})
    \end{align}
    uniformly in $\microtime \ge 0$. 
    It thus remains to control $\norm{F - H \phi}_\cH$, where
    \begin{align*}
        (F_j)^\sigma_{R_j}(\microtime) := \eps^2 \{(\Hfull [f])^\sigma_{j} (\eps (R_j+\tau_j^\sigma),\eps \microtime)\} e^{i K_j \cdot (R_j + \tau_j^\sigma)} e^{-i\eshift \microtime}
        + \eshift (\phi_j)^\sigma_{R_j} (\microtime)
    \end{align*}
    is the first two terms on the right-hand side of \eqref{eq:partial_t_phi2}.

    It will now be most convenient to set $j$ above to $1$ and control the intra- and inter-layer terms separately (the parallel argument for $j=2$ will be omitted). We begin with the intralayer terms. 
    Define
    \begin{align*}
        (F_{11})^\sigma_{R_1}(\microtime) := \eps^2 \{(\Hfull_{11} [f_1])^\sigma (\eps (R_1+\tau_1^\sigma),\eps \microtime)\} e^{i K_1 \cdot (R_1 + \tau_1^\sigma)} e^{-i\eshift \microtime}
        + \eshift (\phi_1)^\sigma_{R_1} (\microtime),
    \end{align*}
    with the first term on the above right-hand side the contribution to $F_1$ from the diagonal block of $\Hfull$,
    \begin{align}\label{eq:Hfull_diag}
        \Hfull_{11} := L + \eps (\HBMsd - \sq{\frac{1}{2}}\beta i \sigma_3 L).
    \end{align}
    Define the auxiliary $\eps$-dependent symmetric operator $\HBMtwist : H^2 (\mathbb{R}^2; \mathbb{C}^2) \to L^2(\mathbb{R}^2; \mathbb{C}^2)$ by
    \begin{align}\label{eq:Stwist_def}
        \HBMtwist := \begin{pmatrix}
            \frac{1}{2}\eps \alpha_d (D_{r_1}^2 + D_{r_2}^2) & e^{-i\theta/2} \alpha (D_{r_1} - i D_{r_2}) + \frac{1}{2}\eps e^{i\theta}\alpha_o (D^2_{r_1} - D^2_{r_2} + 2i D_{r_1 r_2})\\
            * & \frac{1}{2}\eps \alpha_d (D_{r_1}^2 + D_{r_2}^2)
        \end{pmatrix},
    \end{align}
    where the constants $\alpha, \alpha_o \in \mathbb{C}$ and $\alpha_d \in \mathbb{R}$ are defined in Theorem \ref{thm:Dirac}. Observe that $\HBMtwist$ is the operator $L + \eps \HBMsd$ with $\alpha$ and $\alpha_o$ respectively replaced by $e^{-i\theta/2} \alpha$ and $e^{i\theta} \alpha_o$. Set
    \begin{align}\label{eq:wfntwist_def}
        (\wfntwist)_{R_1}^\sigma(\microtime) := \eps^2 \{(\HBMtwist [f_1])^\sigma (\eps (R_1+\tau_1^\sigma),\eps \microtime)\} e^{i K_1 \cdot (R_1 + \tau_1^\sigma)} e^{-i\eshift \microtime}
        + \eshift (\phi_1)^\sigma_{R_1} (\microtime).
    \end{align}
    We will bound $F_{11} - \wfntwist$ and $\wfntwist - H_{11} \phi_1$ starting with the latter. Define $\tilde{h}_1^{\sigma, \sigma'} (k) := \tilde{h}^{\sigma,\sigma'} (\rot_{\theta/2} k)$, where the function $\tilde{h}^{\sigma,\sigma'}$ is the Bloch Hamiltonian defined in \eqref{eq:tildeH}.
    The Bloch transform of $H_{11}\phi_1$ is
    \begin{align}\label{eq:Bloch_transform_diag}
        (\widetilde{H_{11}\phi_1})^\sigma (k,\microtime) = \frac{1}{\eps  |\Gamma|}e^{-i \eshift \microtime} \sum_{G_1 \in \cR_1^*} \sum_{\sigma' \in \{A,B\}} \tilde{h}_1^{\sigma, \sigma'} (k)e^{iG_1 \cdot \tau_1^{\sigma'}}\hat{f}_1^{\sigma'} \left(\frac{k-K_1+G_1}{\eps},\eps \microtime\right),
    \end{align}
    see Lemma \ref{lemma:Bloch_transform_diag} for a derivation. Lemma \ref{lemma:Bloch_transform_twist} states that the Bloch transform 
    \begin{align}\label{eq:Bloch_transform_twist}
        (\widetilde{\wfntwist})^\sigma (k,\microtime) = \frac{1}{\eps  |\Gamma|}e^{-i \eshift \microtime} \sum_{G_1 \in \cR_1^*} \sum_{\sigma' \in \{A,B\}} \tilde{\mathscr{h}}_{1}^{\sigma, \sigma'} (k;K_1 - G_1)e^{iG_1 \cdot \tau_1^{\sigma'}}\hat{f}_1^{\sigma'} \left(\frac{k-K_1+G_1}{\eps},\eps \microtime\right)
    \end{align}
    of $\wfntwist$ is nearly identical, only $\tilde{h}_1^{\sigma, \sigma'}$ gets replaced by its second-order Taylor expansion about $K_1 - G_1$,
    \begin{align}\label{eq:intra_Taylor}
    \begin{split}
        \tilde{\mathscr{h}}_{1}^{\sigma, \sigma'} (k;K_1 - G_1) := \tilde{h}^{\sigma,\sigma'}_1 (K_1 - G_1) &+ (k-K_1 + G_1) \nabla \tilde{h}^{\sigma,\sigma'}_1 (K_1 - G_1)\\
        &+ \frac{1}{2}(k-K_1 + G_1) \cdot \nabla^2 \tilde{h}^{\sigma,\sigma'}_1 (K_1 - G_1)(k-K_1 + G_1).
    \end{split}
    \end{align}
    \sq{We conclude by Lemma \ref{lemma:monolayer1} that 
    \begin{align}\label{eq:intra_diff1}
        \norm{\wfntwist (\microtime) - H_{11} \phi_1 (\microtime)}_{\ell^2 (\cR_1)} \le 
        \srq{C \eps^3 \norm{f_1 (\cdot, \eps \microtime)}_{H^{4+\sdpt}}}.
    \end{align}}
    We next write
    \begin{align*}
        (F_{11} - \wfntwist)^\sigma_{R_1} (\microtime) = \eps^2 \{(\Hfull_{11} [f_1] - \HBMtwist [f_1])^\sigma (\eps (R_1+\tau_1^\sigma),\eps \microtime)\} e^{i K_1 \cdot (R_1 + \tau_1^\sigma)} e^{-i\eshift \microtime},
    \end{align*}
    and see directly from the definitions \eqref{eq:Hfull_diag} and \eqref{eq:Stwist_def} that
    \begin{align*}
        \Hfull_{11} - \HBMtwist = 
        \begin{pmatrix}
            0 & (1-e^{-i\theta/2} + \sq{\frac{1}{2}}\eps i \beta) \alpha (D_{r_1} - i D_{r_2}) + \frac{1}{2}\eps (1-e^{i\theta})\alpha_o (D^2_{r_1} - D^2_{r_2} + 2i D_{r_1 r_2})\\
            * & 0
        \end{pmatrix}.
    \end{align*}
    Using 
    \srq{the assumption \eqref{eq:beta_def} that $|\theta| \le C \eps$ as $\eps \to 0$,} we have
    \begin{align*}
        \Hfull_{11} - \HBMtwist = 
        \eps^2
        \begin{pmatrix}
            0 & \alpha^\sharp (\eps) (D_{r_1} - i D_{r_2}) + \alpha_o^\sharp (\eps) (D^2_{r_1} - D^2_{r_2} + 2i D_{r_1 r_2})\\
            * & 0
        \end{pmatrix}
    \end{align*}
    for some $\alpha^\sharp, \alpha_o^\sharp \in C^\infty_b (\mathbb{R}; \mathbb{C})$ the space of smooth, bounded functions whose derivatives are all bounded.
    Applying Lemma \ref{lemma:Hs_to_l22}, it follows that for any $\delta > 0$,
    \begin{align}\label{eq:intra_diff2}
        \norm{F_{11} (\microtime) - \wfntwist (\microtime)}_{\ell^2 (\cR_1)} \le C_\delta \eps \norm{\Hfull_{11} [f_1](\cdot, \eps \microtime) - \HBMtwist [f_1] (\cdot, \eps \microtime)}_{H^{2+\delta}} \le C_\delta \eps^3 \norm{f_1 (\cdot, \eps \microtime)}_{H^{4+\delta}}.
    \end{align}
    \sq{Combining \eqref{eq:intra_diff1} and \eqref{eq:intra_diff2}, we have shown that 
    \begin{align}\label{eq:intra_bd}
        \norm{F_{11} (\microtime) - H_{11} \phi_1 (\microtime)}_{\ell^2 (\cR_1)} \le
        \srq{C \eps^3 \norm{f_1 (\cdot, \eps \microtime)}_{H^{4+\sdpt}}}.
    \end{align}}

    We will now control the difference $F_{12} - H_{12} \phi_2$ corresponding to the interlayer terms, where
    \begin{align*}
        (F_{12})^\sigma_{R_1}(\microtime) := \eps^2 \{(\Hfull_{12} [f_2])^\sigma (\eps (R_1+\tau_1^\sigma),\eps \microtime)\} e^{i K_1 \cdot (R_1 + \tau_1^\sigma)} e^{-i\eshift \microtime},
    \end{align*}
    and $\Hfull_{12} := \hoppingT(r) + \radNNN (\eps) \hoppingT_{{\rm NNN}}(r) +\radpNN (\eps) \hoppingT_{\nabla, {\rm NN}}+\eps (\hoppingT_2 + \Tu (r))$ is the off-diagonal block of $\Hfull$. Recall that $\angvar{K} := K/|K|$ and set $\angvar{K}_2 := K_2/|K|$. \sq{Define the $\eps$-dependent operator $\HBMptwist : H^1 (\mathbb{R}^2; \mathbb{C}^2) \to L^2 (\mathbb{R}^2; \mathbb{C}^2)$ by
    \begin{align*}
        \HBMptwist := \hoppingT^\theta (r) + \radNNN (\eps)\hoppingT^\theta_{{\rm NNN}} (r) + \radpNN (\eps) \hoppingT^\theta_{\nabla, {\rm NN}} + \eps \hoppingT^\theta_2,
    \end{align*}
    where each operator $\hoppingT^\theta_j$ is the operator $T_j$ with the $\lambda_i=\hathperpang (\rot_{\pi i/3} \angvar{K})$ and $\mu_i = \frac{1}{|K|} \hathperpang ' (\rot_{2\pi i/3} \angvar{K})$ respectively replaced by $\hathperpang (\rot_{\pi i/3} \angvar{K}_2)$ and $\frac{1}{|K|} \hathperpang ' (\rot_{2\pi i/3} \angvar{K}_2)$. Set 
    \begin{align}\label{eq:wfnptwist}
        (\wfnptwist)^\sigma_{R_1} (\microtime) := \eps^2 \{(\HBMptwist [f_2])^\sigma (\eps (R_1+\tau_1^\sigma),\eps \microtime)\} e^{i K_1 \cdot (R_1 + \tau_1^\sigma)} e^{-i\eshift \microtime}.
    \end{align}
    We will bound $F_{12} -\wfnptwist$ and $\wfnptwist - H_{12} \phi_2$ separately, starting with the latter.}

    As before, this is easiest to analyze in the Bloch domain, where we verify with Lemma \ref{lemma:tildeHperp} that
    \begin{align}\label{eq:tildeHperp}
    \begin{split}
        (\widetilde{\Hperp \phi_2})^\sigma (k_1, \microtime)
        &= \frac{1}{\eps |\Gamma|^2}e^{-i\eshift \microtime}\sum_{G_2, G_2' \in \cR_2^*} \sum_{\sigma'\in \{A,B\}} e^{i(\cG_1 (G_2) \cdot \tau^\sigma_1 + (G'_2- G_2) \cdot \tau^{\sigma'}_2)} \hathperp (k_1 + \cG_1 (G_2); \eps)\\
        &\hspace{6cm} \times \hat{f}^{\sigma'}_2 \Big(\frac{k_1 + \cG_1 (G_2) - G_2 - K_2 + G'_2}{\eps}, \eps \microtime \Big),
    \end{split}
    \end{align}
    where $\cG_1 (G_2)$ is the unique layer-$1$ reciprocal lattice vector $G_1 \in \cR_1^*$ such that $k_1 + G_1 - G_2 \in \Gamma_2^*$.
    For $k,q \in \mathbb{R}^2$ and $1\{\cdot\}$ the indicator function, let
    \begin{align}\label{eq:filtered_Taylor}
        \hat{\mathscr{h}}_{12} (k,q;\eps) := \hathperp (q;\eps)\, 1\{ |q| \le 2 |K| \}+ (k-q) \cdot \nabla \hathperp (q;\eps) \, 1 \{|q| \le |K|\}
    \end{align}
    denote a filtered first-order Taylor expansion of $\hathperp (k;\eps)$ about $k=q$. Then, as shown in Lemma \ref{lemma:tildeF12},
    \begin{align}\label{eq:tildeF12}
    \begin{split}
        (\widetilde{\sq{\wfnptwist}})^\sigma (k_1,\microtime) &= \frac{1}{\eps |\Gamma|^2}e^{-i\eshift \microtime}\sum_{G_2, G_2' \in \cR_2^*} \sum_{\sigma'\in \{A,B\}} e^{i(\cG_1 (G_2) \cdot \tau^\sigma_1 + (G'_2- G_2) \cdot \tau^{\sigma'}_2)}\\
        &\hspace{1.0cm} \times \hat{\mathscr{h}}_{12} (k_1 + \cG_1 (G_2), K_2 + G_2 - G'_2;\eps)
        \hat{f}^{\sigma'}_2 \Big(\frac{k_1 + \cG_1 (G_2) - G_2 - K_2 + G'_2}{\eps},\eps \microtime \Big).
    \end{split}
    \end{align}
    Observe that $\widetilde{\sq{\wfnptwist}}$ and $\widetilde{\Hperp \phi_2}$ are nearly identical, only the factor of $\hathperp(k_1 + \cG_1 (G_2);\eps)$ in \eqref{eq:tildeHperp} gets replaced by $\hat{\mathscr{h}}_{12} (k_1 + \cG_1 (G_2), K_2+G_2-G_2';\eps)$ in \eqref{eq:tildeF12}. \sq{We then use Lemma \ref{lemma:bilayer1} to show that
    \begin{align}\label{eq:bilayer1}
    \norm{\sq{\wfnptwist}(\microtime) - \Hperp \phi_2 (\microtime)}_{\ell^2 (\cR_1)} \le \srq{C \eps^{2 + \sdpt_-}\norm{f_2 (\cdot, \eps \microtime)}_{H^{4+\sdpt}}}.
    \end{align}}

    Using that $|K_2 - K| = O (\eps)$, we will next control $$F_{12}(\microtime) - \wfnptwist (\microtime) = \eps^2 \{(\Hfull_{12} [f_2] - \HBMptwist [f_2])^\sigma (\eps (R_1+\tau_1^\sigma),\eps \microtime)\} e^{i K_1 \cdot (R_1 + \tau_1^\sigma)} e^{-i\eshift \microtime}.$$
    Define $\dtwist_1 := \hoppingT (r) + \eps \Tu (r) - \hoppingT^\theta (r)$ and $\dtwist_2 := \radNNN (\eps)(\hoppingT_{{\rm NNN}} (r) - \hoppingT^\theta_{{\rm NNN}} (r)) + \radpNN (\eps) (\hoppingT_{\nabla, {\rm NN}} - \hoppingT^\theta_{\nabla, {\rm NN}}) + \eps (\hoppingT_2 - \hoppingT^\theta_2)$ so that
    \begin{align*}
        \Hfull_{12} - \HBMptwist = \dtwist_1 + \dtwist_2.
    \end{align*}
    Considering $\dtwist_2$ first,
    we use the definitions of $\hoppingT_{{\rm NNN}} (r)$ and $\hoppingT^\theta_{{\rm NNN}} (r)$ to write
    \begin{align*}
        \hoppingT_{{\rm NNN}} (r) - \hoppingT^\theta_{{\rm NNN}} (r) &=
        \frac{1}{|\Gamma|} \Bigg(
    (\hathperpang (\rot_{\pi} \angvar{K})-\hathperpang (\rot_{\pi} \angvar{K}_2))\sq{e^{i (b_1 - b_2) \cdot \ls}}e^{\frac{-i8\pi \beta}{3a} r_2} \begin{pmatrix}
        1 & 1\\
        1 & 1
    \end{pmatrix}\\
    &\hspace{-0.5cm} + 
    (\hathperpang (\rot_{5\pi/3} \angvar{K}) - \hathperpang (\rot_{5\pi/3} \angvar{K}_2)) \sq{e^{i (b_1 + b_2) \cdot \ls}}e^{\frac{i8\pi \beta}{3a} (\frac{\sqrt{3}}{2} r_1 + \frac{1}{2} r_2)} \begin{pmatrix}
        1 & \sq{e^{-i2\pi/3}}\\
        \sq{e^{i2\pi/3}} & 1
    \end{pmatrix} \\
    &\hspace{-0.5cm} + 
    (\hathperpang (\rot_{\pi/3} \angvar{K}) - \hathperpang (\rot_{\pi/3} \angvar{K}_2)) \sq{e^{-i (b_1 + b_2) \cdot \ls}}e^{\frac{-i8\pi \beta}{3a} (\frac{\sqrt{3}}{2} r_1 - \frac{1}{2} r_2)} \begin{pmatrix}
        1 & \sq{e^{i2\pi/3}}\\
        \sq{e^{-i2\pi/3}} & 1
    \end{pmatrix}\Bigg).
    \end{align*}
    Since $\angvar{K}_2 - \angvar{K} = (\rot_{\theta/2} - I_2) \angvar{K}$ with 
    \srq{$|\theta| \le C \eps$}, the regularity of $\hathperpang$ in Assumption \ref{assumption:hperp2} implies that
    \begin{align*}
        |\hoppingT_{{\rm NNN}} (r) - \hoppingT^\theta_{{\rm NNN}} (r)| \le C \eps
    \end{align*}
    uniformly in $r \in \mathbb{R}^2$ and $0 < \eps < 1$. All derivatives (with respect to $r$) of $\hoppingT_{{\rm NNN}} (r) - \hoppingT^\theta_{{\rm NNN}} (r)$ satisfy the same bound. Hence, for any $N \ge 0$, the operator norm of $\hoppingT_{{\rm NNN}} (r) - \hoppingT^\theta_{{\rm NNN}} (r) : H^N (\mathbb{R}^2; \mathbb{C}^2) \to H^N (\mathbb{R}^2; \mathbb{C}^2)$ is also bounded by $C \eps$. The same logic implies that
    \begin{align*}
        \norm{\hoppingT_{\nabla, {\rm NN}} - \hoppingT^\theta_{\nabla, {\rm NN}}}_{H^{N+1} (\mathbb{R}^2; \mathbb{C}^2) \to H^N (\mathbb{R}^2; \mathbb{C}^2)} + \norm{\hoppingT_2 - \hoppingT_2^\theta}_{H^{N+1} (\mathbb{R}^2; \mathbb{C}^2) \to H^N (\mathbb{R}^2; \mathbb{C}^2)} \le C \eps
    \end{align*}
    uniformly in $\eps$, where the domain $H^{N+1}$ is used instead of $H^N$ because the above operators are first-order differential operators. Recalling the bounds \eqref{eq:xi_zeta_bd} on $\radpNN (\eps)$ and $\radNNN (\eps)$, it follows that for any $N \ge 0$,
    \begin{align}\label{eq:dtwist2}
        \norm{\eps \{(\dtwist_2 [f_2])^\sigma (\cdot,\eps \microtime)\}}_{H^N} \le C\eps^{2 + \frac{1+\sdpt}{2}} \norm{f_2 (\cdot, \eps \microtime)}_{H^{N+1}}.
    \end{align}
    

    We will next prove a similar bound for $\dtwist_1$. We see that
    \begin{align}\label{eq:interlayer_expansion}
    \begin{split}
        \hoppingT(r) + \eps \Tu (r) &= \frac{1}{|\Gamma|} \Bigg(
    (\lambda_0 + \frac{\eps \beta |K|}{2}\mu_0) e^{\frac{i4\pi \beta}{3a} r_2} \begin{pmatrix}
        1 & 1\\
        1 & 1
    \end{pmatrix}\\
    &\hspace{2cm} + 
    (\lambda_2+ \frac{\eps \beta |K|}{2}\mu_1) \sq{e^{-i b_2 \cdot \ls}} e^{\frac{i4\pi \beta}{3a} (-\frac{\sqrt{3}}{2} r_1 - \frac{1}{2} r_2)} \begin{pmatrix}
        1 & \sq{e^{-i2\pi/3}}\\
        \sq{e^{i2\pi/3}} & 1
    \end{pmatrix} \\
    & \hspace{3cm} +(\lambda_4 + \frac{\eps \beta |K|}{2}\mu_2) \sq{e^{ib_1 \cdot \ls}}e^{\frac{i4\pi \beta}{3a} (\frac{\sqrt{3}}{2} r_1 - \frac{1}{2} r_2)} \begin{pmatrix}
        1 & \sq{e^{i2\pi/3}}\\
        \sq{e^{-i2\pi/3}} & 1
    \end{pmatrix}\Bigg).
    \end{split}
    \end{align}
    On the other hand,
    using that $2 \sin (\theta/2) = \beta \eps$ \srq{and $\cos (\theta/2)-1 = O(\eps^2)$}, we find that
    \begin{align*}
        K_2 - K
        = \frac{\eps \beta}{2} \begin{pmatrix}
            0 & -1\\
            1 & 0
        \end{pmatrix} K + O (\eps^2),
    \end{align*}
    and thus
    \begin{align*}
        (\rot_{2\pi j/3} (K_2- K)) \cdot \nabla \hathperp (\rot_{2\pi j/3} K; \eps) = \frac{\eps \beta}{2} \left( \begin{pmatrix}
            0 & -1\\
            1 & 0
        \end{pmatrix} K \right) \cdot \rot_{-2\pi j/3} \nabla \hathperp (\rot_{2\pi j/3} K; \eps) + O (\eps^{2+\frac{1+\sdpt}{2}})
    \end{align*}
    by the \srq{decay of $\nabla \hathperp (\rot_{2\pi j/3} K; \eps)$ in \eqref{eq:nabla_bd_K}}.
    \srq{The decomposition $\hathperp (k;\eps) = \hathperprad (|k|; \eps) \hathperpang (\angvar{k})$ from \eqref{eq:sep} implies that the gradient of $\hathperp (\cdot \; ; \eps)$ is given explicitly by
    \begin{align}\label{eq:nabla_hathperp}
        \nabla \hathperp (k; \eps) = \frac{1}{|k|} \hathperprad (|k|; \eps) \hathperpang ' (\angvar{k}) \rot_{\pi/2} \angvar{k} + \hathperprad ' (|k|; \eps) \hathperpang (\angvar{k}) \angvar{k}, \qquad k \in \mathbb{R}^2, 
    \end{align}
    hence the definition \eqref{eq:rad_ang} of the $\lambda_i$ and $\mu_j$ implies that}
    \begin{align*}
        \left( \begin{pmatrix}
            0 & -1\\
            1 & 0
        \end{pmatrix} K \right) \cdot \rot_{-2\pi j/3} \nabla \hathperp (\rot_{2\pi j/3} K; \eps)&= \eps \mu_j \left( \begin{pmatrix}
            0 & -1\\
            1 & 0
        \end{pmatrix} K \right) \cdot \rot_{\pi/2} \angvar{K}
        + \radpNN (\eps)\lambda_{2j} \left( \begin{pmatrix}
            0 & -1\\
            1 & 0
        \end{pmatrix} K \right) \cdot \angvar{K}\\
        &= \eps |K| \mu_j.
    \end{align*}
    Since $\hathperp (\rot_{2\pi j/3} K; \eps) = \eps \lambda_{2j}$, we conclude that $$\lambda_{2j} + \frac{\eps \beta |K|}{2}\mu_j = \frac{1}{\eps} \left(\hathperp (\rot_{2\pi j/3} K; \eps) + (\rot_{2\pi j/3} (K_2- K)) \cdot \nabla \hathperp (\rot_{2\pi j/3} K; \eps)\right) + O(\eps^{1+\frac{1+\sdpt}{2}}).$$
    We recognize the Taylor expansion of $\hathperp$ on the above right-hand side and recall that the operator 
    $\hoppingT^\theta (r)$ can be obtained from $\hoppingT(r) + \eps \Tu (r)$ by replacing $\lambda_{2j} + \frac{\eps \beta |K|}{2}\mu_j$ in \eqref{eq:interlayer_expansion} by $\frac{1}{\eps}\hathperp (\rot_{2\pi j/3} K_2; \eps)$. The decay of $\nabla^2 \hathperp$ in 
    \srq{\eqref{eq:4bounds_full}} then implies that for any $N \ge 0$,
    \begin{align*}
        \norm{\eps \{(\dtwist_1 [f_2])^\sigma (\cdot,\eps \microtime)\}}_{H^N} \le C\eps^{2+\sdpt} \norm{f_2 (\cdot, \eps \microtime)}_{H^{N}}.
    \end{align*}
    Recalling \eqref{eq:dtwist2} and Lemma \ref{lemma:Hs_to_l22}, we have shown that for any $\delta > 0$,
    \begin{align*}
        \norm{F_{12} (\microtime) - \wfnptwist (\microtime)}_{\ell^2 (\cR_1)} \le \norm{\eps (\Hfull_{12} [f_2] - \HBMptwist [f_2])^\sigma (\cdot,\eps \microtime)}_{H^{2+\delta}}\le C\eps^{2+\sdpt} \norm{f_2 (\cdot, \eps \microtime)}_{H^{3+\delta}}.
    \end{align*}
    Combining the above bound with \eqref{eq:bilayer1}, we conclude that
    \begin{align*}
        \norm{F_{12}(\microtime) - H_{12} \phi_2(\microtime)}_{\ell^2 (\cR_1)} \le \srq{C \eps^{2 + \sdpt_-}\norm{f_2 (\cdot, \eps \microtime)}_{H^{4+\sdpt}}}
    \end{align*}

    Recalling \eqref{eq:intra_bd} and the identities
    \begin{align*}
        F_1 (\microtime) = F_{11} (\microtime) + F_{12} (\microtime), \qquad (H \phi)_1 (\microtime) = H_{11} \phi_1 (\microtime) + H_{12} \phi_2 (\microtime),
    \end{align*}
    \srq{we have now shown that
    \begin{align}\label{eq:bd1_t}
        \norm{F_{1} (\microtime) - (H \phi)_1 (\microtime)}_{\ell^2 (\cR_1)} &\le
        C \eps^{2 + \sdpt_-}\norm{f (\cdot, \eps \microtime)}_{H^{4+\sdpt}}.
    \end{align}}
    By Lemma \ref{lemma:Sobolev_ms2}, for any $N \ge 0$, $f = \tff{1} + \radNNN (\eps) \tff{{\rm NNN}} + \radpNN (\eps) \tff{\nabla, {\rm NN}} + \eps \tff{2}$ satisfies
    \begin{align*}
        \norm{f (\cdot, t)}_{H^N} \le C \left( \norm{f_0}_{H^N} + |\radNNN (\eps)| t \norm{f_0}_{H^N} + |\radpNN (\eps)| t \norm{f_0}_{H^{N+1}} + \eps t \norm{f_0}_{H^{N+2}}\right) \le C \norm{f_0}_{H^{N+2}} \left(1 + \eps^{\frac{1 + \sdpt}{2}} t \right)
    \end{align*}
    uniformly in $0 < \eps < 1$ and $t \ge 0$. \sq{Therefore, \eqref{eq:bd1_t} implies that
    \begin{align*}
        \norm{F_{1} (\microtime) - (H \phi)_1 (\microtime)}_{\ell^2 (\cR_1)} \le \srq{C \norm{f_0}_{H^{6+\sdpt}} \eps^{2 + \sdpt_-} \left(1 + \eps^{\frac{3 + \sdpt}{2}} \microtime \right)}
    \end{align*}
    uniformly in $0 < \eps < 1$ and $\microtime \ge 0$.} Combining this with \eqref{eq:res_neglect}, we obtain that $\resgen_1 := i \partial_\microtime \phi_1 - (H\phi)_1$ satisfies
    \begin{align*}
        \norm{\resgen_1 (\microtime)}_{\ell^2 (\cR_1)} \le \srq{C \norm{f_0}_{H^{6+\sdpt}} \eps^{2 + \sdpt_-} \left(1 + \eps^{1+\sdpt -\sdpt_-} \microtime \right)}
    \end{align*}
    uniformly in $0 < \eps < 1$ and $\microtime \ge 0$.
    A parallel argument establishes the same bound for $\resgen_2 := i \partial_\microtime \phi_2 - (H\phi)_2$.
    Thus we have verified \eqref{eq:resgen_suffice} and the proof is complete.
\end{proof}

\begin{lemma}[Derivation of \eqref{eq:Bloch_transform_diag}]\label{lemma:Bloch_transform_diag}
    The Bloch transform of $H_{11}\phi_1$ is
    \begin{align*}
        (\widetilde{H_{11}\phi_1})^\sigma (k,\microtime) = \frac{1}{\eps  |\Gamma|}e^{-i \eshift \microtime} \sum_{G_1 \in \cR_1^*} \sum_{\sigma' \in \{A,B\}} \tilde{h}_1^{\sigma, \sigma'} (k)e^{iG_1 \cdot \tau_1^{\sigma'}}\hat{f}_1^{\sigma'} \left(\frac{k-K_1+G_1}{\eps},\eps \microtime\right).
    \end{align*}
\end{lemma}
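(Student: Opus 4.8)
The plan is to compute $\widetilde{H_{11}\phi_1}$ in two stages: first establish that the Bloch transform $\cG_1$ diagonalizes $H_{11}$ into pointwise multiplication by the matrix $(\tilde h_1^{\sigma,\sigma'}(k))_{\sigma,\sigma'}$, exactly as in \eqref{eq:tildeH} but on the twisted layer-$1$ lattice; then evaluate $\cG_1\phi_1$ explicitly by Poisson summation, and combine.

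For the first stage I would write $R_1 = \rot_{\theta_1/2}R$ with $R\in\cR$, so that by \eqref{eq:tau_j} one has $R_1+\tau_1^\sigma = \rot_{\theta_1/2}(R+\tau^\sigma-\tfrac12\ls)$, and set $k' := \rot_{-\theta_1/2}k = \rot_{\theta/2}k$. Substituting the definition \eqref{eq:intra} of $H_{11}$ into $[\cG_1(H_{11}\phi_1)]^\sigma(k,\microtime) = \sum_{R_1\in\cR_1}e^{-ik\cdot(R_1+\tau_1^\sigma)}(H_{11}\phi_1)^\sigma_{R_1}(\microtime)$ and re-indexing the double sum by $S := R - R'$ (permissible because $h$ decays super-algebraically by Assumption \ref{assumption:h} and, for $f_0$ as regular as assumed in Theorem \ref{thm:main2}, $\phi_1$ has rapid decay), the exponential factors as $e^{-ik'\cdot(S+\tau^\sigma-\tau^{\sigma'})}\,e^{-ik'\cdot(R'+\tau^{\sigma'}-\frac12\ls)}$; the sum over $S$ produces $\sum_{S\in\cR}e^{-ik'\cdot(S+\tau^\sigma-\tau^{\sigma'})}h(S+\tau^\sigma-\tau^{\sigma'}) = \tilde h^{\sigma,\sigma'}(k') = \tilde h_1^{\sigma,\sigma'}(k)$, while the remaining sum over $R'$ is exactly $\tilde\phi_1^{\sigma'}(k,\microtime) = [\cG_1\phi_1]^{\sigma'}(k,\microtime)$. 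Hence $(\widetilde{H_{11}\phi_1})^\sigma(k,\microtime) = \sum_{\sigma'}\tilde h_1^{\sigma,\sigma'}(k)\,\tilde\phi_1^{\sigma'}(k,\microtime)$, the twisted analogue of \eqref{eq:tildeH}; note that the interlayer displacement $\ls$ drops out of the intralayer block.

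For the second stage, insert the definition \eqref{eq:def_phi} of $\phi_1$ to get $\tilde\phi_1^{\sigma'}(k,\microtime) = \eps\, e^{-i\eshift\microtime}\sum_{R_1\in\cR_1}e^{-i(k-K_1)\cdot(R_1+\tau_1^{\sigma'})}f_1^{\sigma'}(\eps(R_1+\tau_1^{\sigma'}),\eps\microtime)$. Apply the Poisson summation formula to the lattice $\cR_1$ (covolume $|\Gamma|$, dual lattice $\cR_1^*=\rot_{\theta_1/2}\cR^*$) with offset $\tau_1^{\sigma'}$ and test function $x\mapsto e^{-i(k-K_1)\cdot x}f_1^{\sigma'}(\eps x,\eps\microtime)$: its Fourier transform at $G_1\in\cR_1^*$ equals, after the substitution $y=\eps x$, $\eps^{-2}\hat f_1^{\sigma'}\big(\tfrac{k-K_1+G_1}{\eps},\eps\microtime\big)$. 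This gives $\tilde\phi_1^{\sigma'}(k,\microtime) = \frac{1}{\eps|\Gamma|}e^{-i\eshift\microtime}\sum_{G_1\in\cR_1^*}e^{iG_1\cdot\tau_1^{\sigma'}}\hat f_1^{\sigma'}\big(\tfrac{k-K_1+G_1}{\eps},\eps\microtime\big)$. Substituting into the first-stage identity yields precisely \eqref{eq:Bloch_transform_diag}.

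No genuine difficulty arises here; the statement is a direct Bloch-transform computation. The only points requiring care are the bookkeeping of the twist (the substitution $k\mapsto k'=\rot_{\theta/2}k$ and the cancellation of the $-\tfrac12\ls$ offset between $H_{11}$ and $\cG_1\phi_1$), verifying absolute convergence of the double lattice sum so that the reindexing $S=R-R'$ and the factorization are legitimate (guaranteed by Assumption \ref{assumption:h} and the regularity of $f_0$), and tracking the normalization $\tfrac{1}{\eps|\Gamma|}$ together with the $\eps^{-2}$ Jacobian from the rescaling inside the Fourier transform.
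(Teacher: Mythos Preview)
Your proposal is correct and follows essentially the same two-stage approach as the paper: first use translation invariance to reduce to $(\widetilde{H_{11}\phi_1})^\sigma(k,\microtime)=\sum_{\sigma'}\tilde h_1^{\sigma,\sigma'}(k)\,\tilde\phi_1^{\sigma'}(k,\microtime)$, then compute $\tilde\phi_1^{\sigma'}$ via Poisson summation. Your version is slightly more explicit about the twist bookkeeping and the cancellation of the $\ls$ offset, but the argument is the same.
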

\begin{proof}
    A direct calculation using the translation-invariance of $H_{11}$ reveals that
    \begin{align}\label{eq:translation_invariance}
        (\widetilde{H_{11}\phi_1})^\sigma (k,\microtime) = \sum_{\sigma' \in \{A,B\}} \tilde{h}_1^{\sigma, \sigma'} \tilde{\phi}_1^{\sigma'} (k, \microtime).
    \end{align}
    By the definition \eqref{eq:def_phi} of $\phi$, we have
    \begin{align*}
        \tilde{\phi}_1^\sigma (k,\microtime) = \eps e^{-i\eshift \microtime} \sum_{R_1 \in \cR_1} e^{-ik \cdot (R_1 + \tau^\sigma_1)} f_1^\sigma (\eps (R_1 + \tau_1^\sigma), \eps \microtime) e^{iK_1 \cdot (R_1 + \tau_1^\sigma)}.
    \end{align*}
    Writing $f_1^\sigma$ in terms of its Fourier transform, we obtain
    \begin{align*}
        \tilde{\phi}_1^\sigma (k,\microtime) = \eps e^{-i\eshift \microtime} \sum_{R_1 \in \cR_1} e^{-ik \cdot (R_1 + \tau^\sigma_1)} \frac{1}{(2\pi)^2}\int_{\mathbb{R}^2} \hat{f}_1^\sigma (p, \eps \microtime) e^{ip \cdot \eps (R_1 + \tau_1^\sigma)} {\rm d}p \, e^{iK_1 \cdot (R_1 + \tau_1^\sigma)},
    \end{align*}
    which by Poisson's summation formula becomes
    \begin{align}\label{eq:phi_Bloch}
    \begin{split}
        \tilde{\phi}_1^\sigma (k,\microtime) &=\frac{\eps}{|\Gamma|} e^{-i\eshift \microtime} e^{-ik \cdot \tau_1^\sigma} \int_{\mathbb{R}^2} \sum_{G_1 \in \cR_1^*} \delta (\eps p + K_1 - k - G_1)\hat{f}_1^\sigma (p, \eps \microtime) e^{i \eps p \cdot \tau_1^\sigma} {\rm d} p \, e^{iK_1 \cdot \tau_1^\sigma}\\
        &= \frac{1}{\eps |\Gamma|} e^{-i\eshift \microtime} \sum_{G_1 \in \cR_1^*} \hat{f}_1^\sigma \left(\frac{k-K_1+G_1}{\eps}, \eps \microtime\right) e^{i G_1 \cdot \tau_1^\sigma}.
    \end{split}
    \end{align}
    The lemma then follows from 
    plugging 
    \eqref{eq:phi_Bloch} into \eqref{eq:translation_invariance}.
\end{proof}

\begin{lemma}[Derivation of \eqref{eq:Bloch_transform_twist}]\label{lemma:Bloch_transform_twist}
    The Bloch transform of $\wfntwist$ is
    \begin{align*}
        (\widetilde{\wfntwist})^\sigma (k,\microtime) = \frac{1}{\eps  |\Gamma|}e^{-i \eshift \microtime} \sum_{G_1 \in \cR_1^*} \sum_{\sigma' \in \{A,B\}} \tilde{\mathscr{h}}_{1}^{\sigma, \sigma'} (k;K_1 - G_1)e^{iG_1 \cdot \tau_1^{\sigma'}}\hat{f}_1^{\sigma'} \left(\frac{k-K_1+G_1}{\eps},\eps \microtime\right).
    \end{align*}
\end{lemma}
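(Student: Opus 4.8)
The plan is to follow the proof of Lemma \ref{lemma:Bloch_transform_diag} almost verbatim, exploiting that $\HBMtwist$ is a \emph{constant-coefficient} differential operator, so that after the rescaling-and-modulation map its action becomes multiplication by a polynomial symbol in the Bloch domain. First I would split $\wfntwist$ according to its definition \eqref{eq:wfntwist_def} into $(\wfntwist)^\sigma_{R_1} = (\wfntwist^{(1)})^\sigma_{R_1} + \eshift (\phi_1)^\sigma_{R_1}$, where $(\wfntwist^{(1)})^\sigma_{R_1}(\microtime) := \eps^2 (\HBMtwist [f_1])^\sigma (\eps(R_1+\tau_1^\sigma),\eps\microtime)\,e^{iK_1\cdot(R_1+\tau_1^\sigma)}e^{-i\eshift\microtime}$. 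The Bloch transform of the second summand is already computed in \eqref{eq:phi_Bloch}: it equals $\frac{\eshift}{\eps|\Gamma|}e^{-i\eshift\microtime}\sum_{G_1\in\cR_1^*}\hat f_1^\sigma((k-K_1+G_1)/\eps,\eps\microtime)\,e^{iG_1\cdot\tau_1^\sigma}$, which is exactly the $\sigma'=\sigma$ part of the claimed formula once the constant term of the Taylor polynomial is identified (below). For $\wfntwist^{(1)}$, I would write $(\HBMtwist[f_1])^\sigma(x,\eps\microtime) = (2\pi)^{-2}\int_{\mathbb{R}^2} e^{ip\cdot x}\sum_{\sigma'}\widehat{S_{\rm twist}}^{\,\sigma\sigma'}(p)\,\hat f_1^{\sigma'}(p,\eps\microtime)\,\d p$, where $\widehat{S_{\rm twist}}(p)$ is the matrix-valued Fourier symbol of $\HBMtwist$ obtained from \eqref{eq:Stwist_def} by the substitution $D_{r_j}\mapsto p_j$ (a polynomial of degree $\le 2$ in $p$). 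The regularity $f_1(\cdot,\eps\microtime)\in H^{6+\sdpt}$ (from Lemma \ref{lemma:Sobolev_ms2}) gives $\hat f_1$ enough decay for the manipulations below, exactly as in Lemma \ref{lemma:Bloch_transform_diag}.

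Substituting $x=\eps(R_1+\tau_1^\sigma)$, inserting into the definition of the Bloch transform, exchanging the $R_1$-sum with the $p$-integral, and applying Poisson's summation formula over $\cR_1$ — as in the derivation of \eqref{eq:phi_Bloch} — collapses the $R_1$-sum to Dirac masses supported on $\{\eps p = k - K_1 + G_1 : G_1\in\cR_1^*\}$. Carrying out the $p$-integration then gives
\[
(\widetilde{\wfntwist^{(1)}})^\sigma(k,\microtime) = \frac{1}{\eps|\Gamma|}e^{-i\eshift\microtime}\sum_{G_1\in\cR_1^*}\sum_{\sigma'}e^{iG_1\cdot\tau_1^\sigma}\,\Big(\eps\,\widehat{S_{\rm twist}}^{\,\sigma\sigma'}\big(\tfrac{k-K_1+G_1}{\eps}\big)\Big)\,\hat f_1^{\sigma'}\big(\tfrac{k-K_1+G_1}{\eps},\eps\microtime\big),
\]
the powers of $\eps$ balancing because the linear part of $\widehat{S_{\rm twist}}$ evaluated at $(k-K_1+G_1)/\eps$ scales like $\eps^{-1}$ and the quadratic part — which carries a prefactor $\eps$ in \eqref{eq:Stwist_def} — scales like $\eps\cdot\eps^{-2}=\eps^{-1}$ as well. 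Writing $\xi := k-K_1+G_1$, the quantity $\eps\,\widehat{S_{\rm twist}}^{\,\sigma\sigma'}(\xi/\eps)$ is thus a matrix-valued polynomial of degree $\le 2$ in $\xi$ with bounded ($\theta$-dependent) coefficients, and adding back $\eshift\delta_{\sigma\sigma'}$ from the second summand reduces the lemma to the pointwise identity
\[
\eps\,\widehat{S_{\rm twist}}^{\,\sigma\sigma'}(\xi/\eps) + \eshift\,\delta_{\sigma\sigma'} \;=\; e^{-iG_1\cdot\tau_1^{\sigma,\sigma'}}\,\tilde{\mathscr{h}}_{1}^{\sigma,\sigma'}(k;K_1-G_1),
\]
after which the factorisation $e^{iG_1\cdot\tau_1^\sigma}=e^{iG_1\cdot\tau_1^{\sigma,\sigma'}}e^{iG_1\cdot\tau_1^{\sigma'}}$ converts the leftover phase into the $e^{iG_1\cdot\tau_1^{\sigma'}}$ of the statement.

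For the remaining identity I would invoke the quasi-periodicity $\tilde h_1^{\sigma,\sigma'}(k+G)=e^{-iG\cdot\tau_1^{\sigma,\sigma'}}\tilde h_1^{\sigma,\sigma'}(k)$ for $G\in\cR_1^*$, which shows that $e^{-iG_1\cdot\tau_1^{\sigma,\sigma'}}\tilde{\mathscr{h}}_1^{\sigma,\sigma'}(k;K_1-G_1)$ is the degree-$\le 2$ Taylor polynomial of $\tilde h_1^{\sigma,\sigma'}$ at $K_1$ (not $K_1-G_1$), evaluated at $\xi$; so it suffices to match $\eps\widehat{S_{\rm twist}}^{\,\sigma\sigma'}(\xi/\eps)+\eshift\delta_{\sigma\sigma'}$ with $\tilde h_1^{\sigma\sigma'}(K_1)+\xi\cdot\nabla\tilde h_1^{\sigma\sigma'}(K_1)+\tfrac12\xi\cdot\nabla^2\tilde h_1^{\sigma\sigma'}(K_1)\xi$. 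Using $\tilde h_1^{\sigma,\sigma'}(k)=\tilde h^{\sigma,\sigma'}(\rot_{\theta/2}k)$, the Dirac-point data of Theorem \ref{thm:Dirac} ($\tilde h^{A,B}(K)=0$, $\nabla\tilde h^{A,B}(K)=\alpha(1,-i)$, $\nabla^2\tilde h^{A,B}(K)=\alpha_o(1,i)^{\top}(1,i)$, $\nabla\tilde h^{\sigma\sigma}(K)=0$), and the fact that \eqref{eq:allowed_derivatives} forces $\nabla^2\tilde h^{\sigma\sigma}(K)=\alpha_d I_2$ with $\tilde h^{A,A}(K)=\tilde h^{B,B}(K)=\eshift$, this is a finite computation: the chain rule expresses the Taylor coefficients of $\tilde h_1^{\sigma,\sigma'}$ at $K_1$ through those of $\tilde h^{\sigma,\sigma'}$ at $K$ via $\rot_{-\theta/2}$, and the relations $\rot_{-\theta/2}(1,-i)^{\top}=e^{-i\theta/2}(1,-i)^{\top}$, $\rot_{-\theta/2}(1,i)^{\top}=e^{i\theta/2}(1,i)^{\top}$ and $\rot_{-\theta/2}(\alpha_d I_2)\rot_{\theta/2}=\alpha_d I_2$ reproduce exactly the phases $e^{-i\theta/2}$, $e^{i\theta}$ and the scalar $\tfrac12\eps\alpha_d$ appearing in \eqref{eq:Stwist_def} (equivalently, $\HBMtwist$ was defined precisely to make this match, cf.\ the observation after \eqref{eq:Stwist_def}). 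I expect this last bookkeeping — keeping the twist-rotation of the Taylor coefficients and the quasi-periodic shift by $G_1$ straight simultaneously — to be the only genuinely delicate point; everything upstream of it is a line-by-line copy of Lemma \ref{lemma:Bloch_transform_diag}.
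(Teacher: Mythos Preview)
Your proposal is correct and follows essentially the same approach as the paper's proof: both establish the identity by passing to the Bloch/Fourier side via Poisson summation and then matching the symbol of $\HBMtwist$ (plus the $\eshift\delta_{\sigma\sigma'}$ term) with the second-order Taylor polynomial of $\tilde h_1^{\sigma,\sigma'}$ at $K_1$, using the rotation identities $\rot_{-\theta/2}(1,\mp i)^\top=e^{\mp i\theta/2}(1,\mp i)^\top$ and the quasi-periodicity to pass between expansion points $K_1$ and $K_1-G_1$. The only cosmetic difference is that the paper writes out the action of $\HBMtwist$ on $f_1$ and integrates by parts in $r$, whereas you invoke the Fourier symbol $\widehat{S_{\rm twist}}(p)$ directly; these are equivalent manipulations.
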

\begin{proof}
    Recall that by Theorem \ref{thm:Dirac}, the constants $\alpha, \alpha_o \in \mathbb{C}$ and $\alpha_d \in \mathbb{R}$ satisfy 
    \begin{align}\label{eq:alphas}
        \nabla \tilde{h}^{\sigma, \sigma}(K) = (0,0), \quad
        \nabla \tilde{h}^{A,B} (K) = \alpha (1,-i), \quad \nabla^2 \tilde{h}^{\sigma,\sigma} (K) = \alpha_d I_2, \quad \nabla^2 \tilde{h}^{A,B} (K) &= \alpha_o \begin{pmatrix}
            1 & i\\
            i & -1
        \end{pmatrix}.
    \end{align}
    It follows from 
    the definition $\tilde{h}_1^{\sigma, \sigma'} (k) := \tilde{h}^{\sigma,\sigma'} (\rot_{\theta/2} k)$ that 
    $\nabla \tilde{h}_1^{\sigma, \sigma'} (k) = \rot_{-\theta/2}\nabla \tilde{h}^{\sigma, \sigma'} (\rot_{\theta/2} k)$ and $\nabla^2 \tilde{h}_1^{\sigma, \sigma'} (k) = \rot_{-\theta/2}\nabla^2 \tilde{h}^{\sigma, \sigma'} (\rot_{\theta/2} k) \rot_{\theta/2}$ for all $k$. Plugging in $k=K_1 = \rot_{-\theta/2} K$ and applying \eqref{eq:alphas}, we obtain
    \begin{align*}
        \nabla \tilde{h}_1^{\sigma, \sigma}(K_1) = (0,0), &\qquad
        \nabla \tilde{h}_1^{A,B} (K_1) = e^{-i\theta/2} \alpha (1,-i), \\
        \nabla^2 \tilde{h}_1^{\sigma,\sigma} (K_1) = \alpha_d I_2, &\qquad \nabla^2 \tilde{h}_1^{A,B} (K_1) = e^{i\theta} \alpha_o \begin{pmatrix}
            1 & i\\
            i & -1
        \end{pmatrix}.
    \end{align*}
    It follows from the definition \eqref{eq:Stwist_def} that $\HBMtwist$ is given by
    \begin{align}\label{eq:Stwist_given}
        \HBMtwist^{\sigma,\sigma'}[u](r) = -i \nabla \tilde{h}_1^{\sigma,\sigma'}(K_1)\cdot \nabla_r u (r)- \frac{1}{2} \eps \nabla_r \cdot \nabla^2 \tilde{h}_1^{\sigma,\sigma'}(K_1) \nabla_r u(r).
    \end{align}
    Recall that $\eshift := \tilde{h}^{A,A} (K) = \tilde{h}^{B,B} (K)$, where the definition was provided in the statement of Theorem \ref{thm:main2} and the equality follows \eqref{eq:AABB}. We also have $\tilde{h}^{A,B} (K) = 0$ by Theorem \ref{thm:Dirac}. Applying the above definition of $\tilde{h}^{\sigma,\sigma'}_1 (k)$, we conclude that $\tilde{h}_1^{\sigma, \sigma'} (K_1) = \eshift \delta_{\sigma,\sigma'}$. Plugging this and \eqref{eq:Stwist_given} into \eqref{eq:wfntwist_def}, we obtain
    \begin{align*}
        (\wfntwist)_{R_1}^\sigma(\microtime) = \eps e^{i K_1 \cdot (R_1 + \tau_1^\sigma)} e^{-i\eshift \microtime}\sum_{\sigma' \in \{A,B\}}\Big( \tilde{h}_1^{\sigma,\sigma'}(K_1)f_1^{\sigma'} (r,\eps \microtime)&- i\eps \nabla \tilde{h}_1^{\sigma,\sigma'}(K_1)\cdot \nabla_r f_1^{\sigma'} (r,\eps \microtime)\\
        &\hspace{-1cm} - \frac{1}{2} \eps^2 \nabla_r \cdot \nabla^2 \tilde{h}_1^{\sigma,\sigma'}(K_1) \nabla_r f_1^{\sigma'} (r,\eps \microtime)\Big) \Big \vert _{r = \eps (R_1+\tau_1^\sigma)}.
    \end{align*}
    It follows from the definition
    $
        (\widetilde{\wfntwist})^\sigma(k,\microtime):= \sum_{R_1 \in \cR_1} e^{-ik \cdot (R_1+\tau_1^\sigma)}(\wfntwist)_{R_1}^\sigma(\microtime)
    $
    and the Poisson summation formula that
    \begin{align*}
        (\widetilde{\wfntwist})^\sigma(k,\microtime)&=\frac{1}{\eps|\Gamma|} e^{-i\eshift \microtime}\int_{\mathbb{R}^2} \sum_{G_1 \in \cR_1^*} \sum_{\sigma' \in \{A,B\}} e^{i G_1 \cdot \tau_1^{\sigma}} e^{-i(k-K_1 + G_1) \cdot r/\eps}\Big( \tilde{h}_1^{\sigma,\sigma'}(K_1)f_1^{\sigma'} (r,\eps \microtime)\\
        &\hspace{2cm} - i\eps \nabla \tilde{h}_1^{\sigma,\sigma'}(K_1)\cdot \nabla_r f_1^{\sigma'} (r,\eps \microtime)
        - \frac{1}{2} \eps^2 \nabla_r \cdot \nabla^2 \tilde{h}_1^{\sigma,\sigma'}(K_1) \nabla_r f_1^{\sigma'} (r,\eps \microtime)\Big) {\rm d}r.
    \end{align*}
    Integrating by parts in $r$, this becomes
    \begin{align*}
        (\widetilde{\wfntwist})^\sigma(k,\microtime)&=\frac{1}{\eps|\Gamma|} e^{-i\eshift \microtime}\int_{\mathbb{R}^2} \sum_{G_1 \in \cR_1^*} \sum_{\sigma' \in \{A,B\}} e^{i G_1 \cdot \tau_1^{\sigma}} e^{-i(k-K_1 + G_1) \cdot r/\eps}\Big( \tilde{h}_1^{\sigma,\sigma'}(K_1)\\
        &+ (k-K_1 + G_1) \cdot \nabla \tilde{h}_1^{\sigma,\sigma'}(K_1)+ \frac{1}{2} (k-K_1+G_1) \cdot \nabla^2 \tilde{h}_1^{\sigma,\sigma'}(K_1) (k-K_1 + G_1) \Big) f_1^{\sigma'} (r,\eps \microtime) {\rm d}r.
    \end{align*}
    We now recognize that the integral over $r$ yields the Fourier transform of $f_1^{\sigma'}$, that is
    \begin{align*}
        (\widetilde{\wfntwist})^\sigma(k,\microtime)&=\frac{1}{\eps|\Gamma|} e^{-i\eshift \microtime} \sum_{G_1 \in \cR_1^*} \sum_{\sigma' \in \{A,B\}} e^{i G_1 \cdot \tau_1^{\sigma}} \Big( \tilde{h}_1^{\sigma,\sigma'}(K_1)+ (k-K_1 + G_1) \cdot \nabla \tilde{h}_1^{\sigma,\sigma'}(K_1)\\
        &\hspace{2cm} + \frac{1}{2} (k-K_1+G_1) \cdot \nabla^2 \tilde{h}_1^{\sigma,\sigma'}(K_1) (k-K_1 + G_1) \Big) \hat{f}_1^{\sigma'} \left(\frac{k-K_1+G_1}{\eps},\eps \microtime\right).
    \end{align*}
    The lemma then follows from the fact that $\tilde{h}_1^{\sigma, \sigma'} (k-G_1) = e^{i G_1 \cdot \tau_1^{\sigma,\sigma'}} \tilde{h}_1^{\sigma, \sigma'} (k)$ for all $k \in \mathbb{R}^2$ and $G_1 \in \cR_1^*$.
\end{proof}

\begin{lemma}[Proof of \eqref{eq:intra_diff1}]\label{lemma:monolayer1}
    \srq{With $0< \sdpt \le 1$ defined in Theorem \ref{thm:main2}, we have}
    \begin{align}\label{eq:intra_diff1_pf}
        \norm{\wfntwist (\microtime) - H_{11} \phi_1 (\microtime)}_{\ell^2 (\cR_1)} \le 
        \srq{C \eps^3 \norm{f_1 (\cdot, \eps \microtime)}_{H^{4+\sdpt}}}.
    \end{align}
\end{lemma}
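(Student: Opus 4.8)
The plan is to carry out the estimate entirely in the Bloch (Floquet) domain, using the explicit formulas already established. By Lemmas~\ref{lemma:Bloch_transform_diag} and~\ref{lemma:Bloch_transform_twist}, the Bloch transforms of $H_{11}\phi_1$ and $\wfntwist$ have identical structure, differing only in that the factor $\tilde h_1^{\sigma,\sigma'}(k)$ is replaced by its second-order Taylor polynomial $\tilde{\mathscr h}_1^{\sigma,\sigma'}(k;K_1-G_1)$ about $K_1-G_1$ (cf. \eqref{eq:intra_Taylor}). Subtracting, the Bloch transform of the difference is
\begin{align*}
    (\widetilde{\wfntwist-H_{11}\phi_1})^\sigma(k,\microtime)=\frac{e^{-i\eshift\microtime}}{\eps|\Gamma|}\sum_{G_1\in\cR_1^*}\sum_{\sigma'}\Big(\tilde{\mathscr h}_1^{\sigma,\sigma'}(k;K_1-G_1)-\tilde h_1^{\sigma,\sigma'}(k)\Big)e^{iG_1\cdot\tau_1^{\sigma'}}\hat f_1^{\sigma'}\!\Big(\tfrac{k-K_1+G_1}{\eps},\eps\microtime\Big),
\end{align*}
and since the Bloch transform is unitary between $\ell^2(\cR_1)$ and $L^2(\Gamma_1^*)$, it suffices to bound the $L^2(\Gamma_1^*;\mathbb{C}^2)$ norm of the right-hand side.

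The first ingredient is a Taylor remainder bound. Assumption~\ref{assumption:h} (super-algebraic decay of $h$) makes $\tilde h^{\sigma,\sigma'}\in C^\infty_b(\mathbb{R}^2)$, with all third-order derivatives bounded, and the rotated functions $\tilde h_1^{\sigma,\sigma'}(k)=\tilde h^{\sigma,\sigma'}(\rot_{\theta/2}k)$ satisfy the same bounds uniformly in $\eps$ (rotation is an isometry, so $\theta$-dependence is harmless). Hence the Lagrange form of the remainder gives, uniformly over $G_1\in\cR_1^*$ and $k\in\mathbb{R}^2$,
\begin{align*}
    \big|\tilde{\mathscr h}_1^{\sigma,\sigma'}(k;K_1-G_1)-\tilde h_1^{\sigma,\sigma'}(k)\big|\le C\,|k-K_1+G_1|^3 .
\end{align*}

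The key step — and the one I expect to be the main obstacle — is controlling the sum over the reciprocal lattice $\cR_1^*$: using the uniform remainder bound termwise and invoking the triangle inequality would force an unacceptably large number of derivatives onto $f_0$. Instead I would write $k-K_1+G_1=\eps\,p_{G_1}$ with $p_{G_1}:=(k-K_1+G_1)/\eps$, bring absolute values inside, and apply Cauchy--Schwarz in $G_1$ against the weight $\langle p_{G_1}\rangle^{-(1+\sdpt)}$. The decisive point is that $\sum_{G_1\in\cR_1^*}\langle p_{G_1}\rangle^{-2(1+\sdpt)}\le C$ uniformly in $k$ and in $0<\eps<1$, because the $p_{G_1}$ lie on a translate of the lattice $\eps^{-1}\cR_1^*$, whose minimal vector has length $\ge c>0$ for $\eps<1$, while $2(1+\sdpt)>2$ ensures convergence in dimension $2$. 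This produces the pointwise bound
\begin{align*}
    \big|(\widetilde{\wfntwist-H_{11}\phi_1})^\sigma(k,\microtime)\big|^2\le C\eps^{4}\sum_{\sigma'}\sum_{G_1\in\cR_1^*}\langle p_{G_1}\rangle^{2(1+\sdpt)}|p_{G_1}|^{6}\,\big|\hat f_1^{\sigma'}(p_{G_1},\eps\microtime)\big|^2 ,
\end{align*}
where the $\eps^{4}$ is $\eps^{-2}$ from the squared prefactor times $\eps^{6}$ from the cubed remainder ($|k-K_1+G_1|^{6}=\eps^{6}|p_{G_1}|^{6}$).

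Finally I would integrate over $k\in\Gamma_1^*$ and unfold: the map $(k,G_1)\mapsto k-K_1+G_1$ is a measure-preserving bijection of $\Gamma_1^*\times\cR_1^*$ onto $\mathbb{R}^2$, and the substitution $p=(k-K_1+G_1)/\eps$ contributes a Jacobian $\eps^{2}$. Collecting powers, $\eps^{4}\cdot\eps^{2}=\eps^{6}$, so
\begin{align*}
    \norm{\wfntwist(\microtime)-H_{11}\phi_1(\microtime)}_{\ell^2(\cR_1)}^2\le C\eps^{6}\sum_{\sigma'}\int_{\mathbb{R}^2}\langle p\rangle^{2(1+\sdpt)}|p|^{6}\,\big|\hat f_1^{\sigma'}(p,\eps\microtime)\big|^2\,{\rm d}p\le C\eps^{6}\norm{f_1(\cdot,\eps\microtime)}_{H^{4+\sdpt}}^2,
\end{align*}
using $\langle p\rangle^{2(1+\sdpt)}|p|^{6}\le\langle p\rangle^{2(4+\sdpt)}$; taking square roots yields \eqref{eq:intra_diff1_pf}. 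Any weight exponent $\gamma>1$ works in the Cauchy--Schwarz step, producing the index $H^{\gamma+3}$; the choice $\gamma=1+\sdpt$ simply packages the bound at the common regularity level $4+\sdpt$ used throughout Theorem~\ref{thm:main2}, which is why $\sdpt$ appears even though this lemma concerns only the intralayer block.
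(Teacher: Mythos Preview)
Your proof is correct and reaches the same bound, but by a genuinely different (and somewhat cleaner) route than the paper's.

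The paper expands the squared $L^2(\Gamma_1^*)$ norm as a double sum over $G_1,G_1'\in\cR_1^*$, unfolds one sum with the $k$-integral, and then splits into the cases $G_1=0$ and $G_1\neq 0$. It uses the two-sided remainder bound $\chi(x)=\min\{|x|^2,|x|^3\}$: the cubic bound handles $G_1=0$ (producing $C\eps^6\|f_1\|_{H^3}^2$), while for $G_1\neq 0$ the quadratic cap is combined with a further domain decomposition $\Omega_1\cup\Omega_2$ (according to which of the two arguments dominates) in order to extract a factor $(\eps/|G_1|)^{2+\sdpt}$ and regain summability over the lattice.

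Your argument sidesteps both the $\chi$ device and the $\Omega_1/\Omega_2$ splitting. By applying a weighted Cauchy--Schwarz in $G_1$ at each fixed $k$, with weight $\langle p_{G_1}\rangle^{-(1+\sdpt)}$, you exploit directly the fact that the rescaled reciprocal lattice $\eps^{-1}\cR_1^*$ has spacing of order $\eps^{-1}\geq 1$, so that $\sum_{G_1}\langle p_{G_1}\rangle^{-2(1+\sdpt)}$ is bounded uniformly in $k$ and $\eps$. This absorbs the lattice sum in one stroke and lets the cubic remainder bound alone carry the argument. The resulting regularity requirement $H^{4+\sdpt}$ matches the paper's, and as you note, any exponent $\gamma>1$ in the weight would work, giving $H^{3+\gamma}$; the choice $\gamma=1+\sdpt$ is just bookkeeping. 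Your approach is shorter; the paper's makes the separate contributions of the near ($G_1=0$) and far ($G_1\neq0$) shells more visible, which is arguably useful as a template for the more delicate interlayer estimate (Lemma~\ref{lemma:bilayer1}) where such shell-by-shell analysis is unavoidable.
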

\begin{proof}
    By the identity $\norm{u}_{\ell^2 (\cR_1)} = |\Gamma^*|^{-1/2} \norm{\tilde{u}}_{L^2 (\Gamma_1^*)}$, it suffices to 
    verify \eqref{eq:intra_diff1_pf} in the Bloch domain. Lemmas \ref{lemma:Bloch_transform_diag} and \ref{lemma:Bloch_transform_twist} imply that
    \begin{align}\label{eq:diff_Bloch1}
        \begin{split}
        \widetilde{\wfntwist^\sigma} (k, \microtime) - \widetilde{(H_{11} \phi_1)^\sigma} (k,\microtime) &= \frac{1}{\eps  |\Gamma|}e^{-i \eshift \microtime} \sum_{G_1 \in \cR_1^*} \sum_{\sigma' \in \{A,B\}} \left( \tilde{h}_1^{\sigma,\sigma'} (k) - \tilde{\mathscr{h}}_{1}^{\sigma, \sigma'} (k;K_1 - G_1)\right) \\
        &\hspace{5cm} \times e^{iG_1 \cdot \tau_1^{\sigma'}}\hat{f}_1^{\sigma'} \left(\frac{k-K_1+G_1}{\eps},\eps \microtime\right).
        \end{split}
    \end{align}
    The rapid decay of $h$ in Assumption \ref{assumption:h} easily implies that for any $\sigma, \sigma' \in \{A,B\}$, the function $\tilde{h}_1^{\sigma, \sigma'} \in C^\infty_b$ is smooth with all derivatives bounded. Hence, by Taylor's theorem,
    \begin{align*}
        \sq{\left| \tilde{h}_1^{\sigma,\sigma'} (k) - \tilde{\mathscr{h}}_{1}^{\sigma, \sigma'} (k;K_1 - G_1) \right| \le C 
        \chi (k-K_1 + G_1), \qquad \chi (x) := \min \{|x|^2, |x|^3\}.}
    \end{align*}
    Plugging this bound into \eqref{eq:diff_Bloch1}, we obtain
    \begin{align*}
        \norm{\widetilde{\wfntwist} (\cdot, \microtime) - \widetilde{H_{11} \phi_1} (\cdot, \microtime)}^2_{L^2 (\Gamma_1^*)} &\le \frac{C}{\eps^2} \int_{\Gamma_1^*}
        \sum_{G_1, G_1' \in \cR_1^*} \sum_{\sigma', \sigma'' \in \{A,B\}}\sq{\chi(k-K_1 + G_1) \chi(k-K_1 + G_1')}\\
        & \hspace{2cm} \times 
        \left| \hat{f}_1^{\sigma'} \left(\frac{k-K_1+G_1}{\eps},\eps \microtime\right)\right| \left| \hat{f}_1^{\sigma''} \left(\frac{k-K_1+G_1'}{\eps},\eps \microtime\right)\right| {\rm d} k.
    \end{align*}
    Combining the integral over $k$ with the sum over $G_1$, the above right-hand side is equal to
    \begin{align}\label{eq:combining}
        \frac{C}{\eps^2} \int_{\mathbb{R}^2}
        \sum_{G_1 \in \cR_1^*} \sum_{\sigma', \sigma'' \in \{A,B\}}\sq{\chi(k-K_1) \chi(k-K_1 + G_1)}
        \left| \hat{f}_1^{\sigma'} \left(\frac{k-K_1}{\eps},\eps \microtime\right)\right| \left| \hat{f}_1^{\sigma''} \left(\frac{k-K_1+G_1}{\eps},\eps \microtime\right)\right| {\rm d} k.
    \end{align}
    \sq{Using that $\chi (x) \le |x|^3$,} the term $G_1 = 0$ is bounded by $C \eps^6 \norm{f_1 (\cdot, \eps \microtime)}^2_{H^3}$.
    To handle the case $G_1 \ne 0$, let $\Omega_1, \Omega_2 \subset \mathbb{R}^2$ such that $\mathbb{R}^2 = \Omega_1 \cup \Omega_2$ and
    \begin{align*}
        2\Bigg|\frac{k - K_1}{\eps}\Bigg| &\ge |G_1|/\eps \qquad \text{for all} \quad k \in \Omega_1, \qquad \text{and}\\
        2\Bigg|\frac{k - K_1 + G_1}{\eps}\Bigg| &\ge |G_1|/\eps \qquad \text{for all} \quad k \in \Omega_2.
    \end{align*}
    \srq{Using that $\chi (x) \le |x|^2$, it follows that the $G_1 \ne 0$ contribution of \eqref{eq:combining} is bounded by
    \begin{align*}
        &C \eps^2
        \sum_{\vec{G}_1 \ne 0} \sum_{\sigma', \sigma''} \int_{\mathbb{R}^2} \Bigg|\frac{\vec k - \vec K_1}{\eps}\Bigg|^{2} \Bigg|\frac{\vec k - \vec K_1 + \vec G_1}{\eps}\Bigg|^{2} \Bigg| \hat{f}_1^{\sigma'} \Big( \frac{\vec k - \vec K_1}{\eps}, \eps \microtime\Big) \Bigg| \; \Bigg| \hat{f}_1^{\sigma''} \Big( \frac{\vec k - \vec K_1 + \vec G_1}{\eps}, \eps \microtime \Big) \Bigg| {\rm d} k\\
        &\le C\eps^2 \sum_{G_1 \ne 0}\sum_{\sigma', \sigma''} \int_{\Omega_1} \Big(\frac{2\eps}{|G_1|}\Big)^{2+\sdpt}\Bigg|\frac{\vec k - \vec K_1}{\eps}\Bigg|^{4+\sdpt} \Bigg|\frac{\vec k - \vec K_1 + \vec G_1}{\eps}\Bigg|^{2} \Bigg| \hat{f}_1^{\sigma'} \Big( \frac{\vec k - \vec K_1}{\eps}, \eps \microtime \Big) \Bigg| \; \Bigg| \hat{f}_1^{\sigma''} \Big( \frac{\vec k - \vec K_1 + \vec G_1}{\eps}, \eps \microtime\Big) \Bigg| {\rm d} k\\
        &+ C\eps^2 \sum_{G_1 \ne 0}\sum_{\sigma', \sigma''} \int_{\Omega_2} \Big(\frac{2\eps}{|G_1|}\Big)^{2+\sdpt}\Bigg|\frac{\vec k - \vec K_1}{\eps}\Bigg|^{2} \Bigg|\frac{\vec k - \vec K_1 + \vec G_1}{\eps}\Bigg|^{4+\sdpt} \Bigg| \hat{f}_1^{\sigma'} \Big( \frac{\vec k - \vec K_1}{\eps}, \eps \microtime\Big) \Bigg| \; \Bigg| \hat{f}^{\sigma''}_1 \Big( \frac{\vec k - \vec K_1 + \vec G_1}{\eps}, \eps \microtime\Big) \Bigg| {\rm d} k\\
        &\le
        C\eps^2 \sum_{\vec{G}_1 \ne 0} \Big( \frac{\eps}{|\vec{G}_1|}\Big)^{2+\sdpt}\eps^2 \norm{f_1 (\cdot, \eps \microtime)}_{H^{4+\sdpt}} \norm{f_1 (\cdot, \eps \microtime)}_{H^{2}},
    \end{align*}}  
    with the above sum over two-dimensional vectors $G_1$ finite since $\sdpt > 0$. Plugging the above bounds for the $G_1 = 0$ and $G_1 \ne 0$ cases into \eqref{eq:combining}, we conclude that
    \srq{\begin{align*}
        \norm{\widetilde{\wfntwist} (\cdot, \microtime) - \widetilde{H_{11} \phi_1} (\cdot, \microtime)}^2_{L^2 (\Gamma_1^*)} &\le C (\eps^6 \norm{f_1 (\cdot, \eps \microtime)}^2_{H^3} + \eps^{6+\sdpt}\norm{f_1 (\cdot, \eps \microtime)}_{H^{4+\sdpt}} \norm{f_1 (\cdot, \eps \microtime)}_{H^{2}})\\
        &\le C \eps^6 \norm{f_1 (\cdot, \eps \microtime)}_{H^{4+\sdpt}}^2
    \end{align*}}
    and the proof is complete.
\end{proof}

\begin{lemma}[Derivation of \eqref{eq:tildeHperp}]\label{lemma:tildeHperp}
    The Bloch transform of $\Hperp \phi_2$ is
    \begin{align}\label{eq:tildeHperp_pf}
    \begin{split}
        (\widetilde{\Hperp \phi_2})^\sigma (k_1, \microtime)
        &= \frac{1}{\eps |\Gamma|^2}e^{-i\eshift \microtime}\sum_{G_2, G_2' \in \cR_2^*} \sum_{\sigma'\in \{A,B\}} e^{i(\cG_1 (G_2) \cdot \tau^\sigma_1 + (G'_2- G_2) \cdot \tau^{\sigma'}_2)} \hathperp (k_1 + \cG_1 (G_2); \eps)\\
        &\hspace{6cm} \times \hat{f}^{\sigma'}_2 \Big(\frac{k_1 + \cG_1 (G_2) - G_2 - K_2 + G'_2}{\eps}, \eps \microtime \Big).
    \end{split}
    \end{align}
\end{lemma}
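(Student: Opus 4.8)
The plan is the standard Bloch-transform bookkeeping for an interlayer (off-diagonal, displacement-dependent) operator, mirroring the computation in Lemma \ref{lemma:Bloch_transform_diag} but now with the two layers carrying \emph{different} lattices: expand both $\hperp$ and the envelope $f_2$ in Fourier integrals, apply the Poisson summation formula twice — once over the layer-$1$ lattice $\cR_1$ (coming from the definition of the Bloch transform) and once over the layer-$2$ lattice $\cR_2$ (coming from the sum in \eqref{eq:Hperp}) — collect the phases, and finally reindex the resulting reciprocal-lattice sums via the map $\cG_1$.

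Concretely, I would first substitute $\hperp(r;\eps) = (2\pi)^{-2}\int\hathperp(k;\eps)e^{ik\cdot r}\,dk$ and $f_2^{\sigma'}(x,\eps\microtime) = (2\pi)^{-2}\int\hat f_2^{\sigma'}(p,\eps\microtime)e^{ip\cdot x}\,dp$ into \eqref{eq:Hperp} and the definition \eqref{eq:def_phi} of $\phi_2$, turning $(\Hperp\phi_2)^\sigma_{R_1}(\microtime)$ into a double Fourier integral whose integrand carries the factor $e^{ik\cdot(R_1-R_2+\tau^\sigma_1-\tau^{\sigma'}_2)}e^{i(\eps p+K_2)\cdot(R_2+\tau^{\sigma'}_2)}$ (all interchanges of sum and integral being justified by the decay of $\hathperp$ from Assumption \ref{assumption:hperp2} and the fast decay of $\hat f_2$ coming from $f_0\in H^{6+\sdpt}$). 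Multiplying by $e^{-ik_1\cdot(R_1+\tau^\sigma_1)}$ and summing over $R_1\in\cR_1$, Poisson summation $\sum_{R_1\in\cR_1}e^{i(k-k_1)\cdot R_1}=\tfrac{(2\pi)^2}{|\Gamma|}\sum_{G_1\in\cR_1^*}\delta(k-k_1-G_1)$ (a fundamental cell of $\cR_1$ has area $|\Gamma|$) collapses the $k$-integral onto $k=k_1+G_1$ and leaves the phase $e^{iG_1\cdot\tau^\sigma_1}$; summing over $R_2\in\cR_2$ likewise produces $\tfrac{(2\pi)^2}{|\Gamma|}\sum_{G_2'\in\cR_2^*}\delta(\eps p+K_2-k_1-G_1-G_2')$, which collapses the $p$-integral (Jacobian $\eps^{-2}$) onto $\eps p=k_1+G_1+G_2'-K_2$ and, after this substitution, leaves the phase $e^{iG_2'\cdot\tau^{\sigma'}_2}$. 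Accounting for the prefactors ($\eps$ from $\phi_2$, $\eps^{-2}$ from the Jacobian, two copies of $(2\pi)^2/|\Gamma|$ from Poisson and two of $(2\pi)^{-2}$ from Fourier inversion, for a net $(\eps|\Gamma|^2)^{-1}$) yields
\begin{align*}
(\widetilde{\Hperp\phi_2})^\sigma(k_1,\microtime)=\frac{e^{-i\eshift\microtime}}{\eps|\Gamma|^2}\sum_{G_1\in\cR_1^*}\sum_{G_2'\in\cR_2^*}\sum_{\sigma'} e^{i(G_1\cdot\tau^\sigma_1+G_2'\cdot\tau^{\sigma'}_2)}\,\hathperp(k_1+G_1;\eps)\,\hat f_2^{\sigma'}\!\Big(\tfrac{k_1+G_1+G_2'-K_2}{\eps},\eps\microtime\Big).
\end{align*}
Reindexing by $G_1=\cG_1(G_2)$ for $G_2\in\cR_2^*$ and simultaneously replacing the dummy variable $G_2'$ by $G_2'-G_2$ (a permutation of $\cR_2^*$ for each fixed $G_2$) converts this into \eqref{eq:tildeHperp_pf}.

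The only step that is not pure bookkeeping — and hence the main obstacle — is the legitimacy of the substitution $G_1=\cG_1(G_2)$. It is valid precisely because, for the fixed $k_1\in\Gamma_1^*$, the map $G_2\mapsto\cG_1(G_2)$ is a bijection of $\cR_2^*$ onto $\cR_1^*$: the $\cR_2^*$-translates of the Brillouin zone $\Gamma_2^*$ tile $\mathbb{R}^2$, so each point $k_1+G_1$ lies in a unique cell $\Gamma_2^*+G_2$, which is exactly the ``unique $G_1$'' clause defining $\cG_1$ (compare the analogous Bloch-transform identity in \cite{watson2023bistritzer}). I would keep the accounting of the three phase contributions (the $e^{iG_1\cdot\tau^\sigma_1}$ surviving from the $R_1$-sum, and the cancellation producing $e^{iG_2'\cdot\tau^{\sigma'}_2}$ from the $R_2$-sum after $\eps p$ is plugged in) and of the two normalization constants fully explicit, but none of this is delicate.
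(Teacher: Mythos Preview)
Your proposal is correct and follows essentially the same route as the paper: both are the standard Fourier/Poisson bookkeeping, pivoting on the bijectivity of $\cG_1$. The only organizational difference is that the paper first quotes a general Bloch-transform identity for $\Hperp$ from \cite{watson2023bistritzer} (expressed in terms of $\tilde\phi_2$, with the $\cG_1$ reduction already built in via the delta function $\delta(k_1+G_1-k_2-G_2)$ with $k_2\in\Gamma_2^*$) and then substitutes the layer-$2$ analogue of \eqref{eq:phi_Bloch}, whereas you carry out the two Poisson summations from scratch in a single pass and reindex at the end; the content is the same.
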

\begin{proof}
    By equation (44) in \cite{watson2023bistritzer}, we know that
    \begin{align*}
        (\widetilde{\Hperp \phi_2})^\sigma (k_1, \microtime)= \frac{1}{|\Gamma|}\sum_{G_2 \in \cR_2^*} \sum_{G_1 \in \cR_1^*}\int_{\Gamma_2^*} \sum_{\sigma'} e^{i(G_1 \cdot \tau^\sigma_1 - G_2 \cdot \tau^{\sigma'}_2)} \delta (k_1+G_1-k_2-G_2) \hathperp (k_2 + G_2; \eps) \tilde{\phi}^{\sigma'}_2 (k_2, \microtime) {\rm d} k_2.
    \end{align*}
    Observe that $k_1$ and $k_2$ in the above integrand are respectively restricted to $\Gamma_1^*$ and $\Gamma_2^*$. Therefore, for every $G_2 \in \cR_2^*$, there exists a unique $G_1 \in \cR_1^*$ such that the delta-function does not identically vanish.
    We can therefore write
    \begin{align}\label{eq:tildeHperp_pf2}
    \begin{split}
        (\widetilde{\Hperp \phi_2})^\sigma (k_1, \microtime)&= 
        \frac{1}{|\Gamma|} \sum_{G_2 \in \cR_2^*} \int_{\Gamma_2^*} \sum_{\sigma'} e^{i(\cG_1 (G_2) \cdot \tau^\sigma_1 - G_2 \cdot \tau^{\sigma'}_2)} \delta (k_1+\cG_1 (G_2)-k_2-G_2) \\
        &\hspace{7cm} \times \hathperp (k_2 + G_2; \eps) \tilde{\phi}^{\sigma'}_2 (k_2, \microtime) {\rm d} k_2,
    \end{split}
    \end{align}
    where $\cG_1 (G_2)$ denotes this unique value of $G_1$ for a given $G_2$. The layer-$2$ analogue of \eqref{eq:phi_Bloch} is
    \begin{align*}
        \tilde{\phi}_2^\sigma (k_2,\microtime) =
        \frac{1}{\eps |\Gamma|} e^{-i\eshift \microtime} \sum_{G_2 \in \cR_2^*} \hat{f}_2^\sigma \left(\frac{k_2-K_2+G_2}{\eps}, \eps \microtime\right) e^{i G_2 \cdot \tau_2^\sigma},
    \end{align*}
    which after carrying out the integral over $k_2$ in \eqref{eq:tildeHperp_pf2} implies \eqref{eq:tildeHperp_pf}.
\end{proof}

\begin{lemma}[Derivation of \eqref{eq:tildeF12}]\label{lemma:tildeF12}
    The Bloch transform of $\sq{\wfnptwist}$ is
    \begin{align}\label{eq:tildeF12_pf}
    \begin{split}
        (\widetilde{\sq{\wfnptwist}})^\sigma (k_1,\microtime) &= \frac{1}{\eps |\Gamma|^2}e^{-i\eshift \microtime}\sum_{G_2, G_2' \in \cR_2^*} \sum_{\sigma'\in \{A,B\}} e^{i(\cG_1 (G_2) \cdot \tau^\sigma_1 + (G'_2- G_2) \cdot \tau^{\sigma'}_2)}\\
        &\hspace{0.5cm} \times \hat{\mathscr{h}}_{12} (k_1 + \cG_1 (G_2), K_2 + G_2 - G'_2; \eps)
        \hat{f}^{\sigma'}_2 \Big(\frac{k_1 + \cG_1 (G_2) - G_2 - K_2 + G'_2}{\eps},\eps \microtime \Big).
    \end{split}
    \end{align}
\end{lemma}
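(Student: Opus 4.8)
The plan is to run the same machinery that produced \eqref{eq:Bloch_transform_twist} and \eqref{eq:tildeHperp}. Lemma~\ref{lemma:Bloch_transform_twist} shows how to pass from a differential operator acting on a wavepacket of the form \eqref{eq:def_phi} to its Bloch transform — write $f_2$ through its Fourier transform, apply Poisson summation in $R_1$ as in the derivation of \eqref{eq:phi_Bloch}, and then integrate by parts in the resulting Bloch integral to turn each factor $v\cdot D$ into a linear factor in the momentum — while Lemma~\ref{lemma:tildeHperp} shows how the moir\'e‑periodic coefficients of an interlayer operator reorganize the single reciprocal‑lattice sum into a double sum over $\cR_2^*$. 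Here I would combine the two. First expand $\HBMptwist$ into its constituent monomials: each of $\hoppingT^\theta(r)$, $\radNNN(\eps)\hoppingT^\theta_{\rm NNN}(r)$, $\radpNN(\eps)\hoppingT^\theta_{\nabla,\rm NN}$ and $\eps\hoppingT^\theta_2$ is a finite sum of terms $c\,e^{-i\svec\cdot r}M\,(v\cdot D)^{\ell}$ with $\ell\in\{0,1\}$, $M$ a constant $2\times2$ matrix, and $\svec\in\{\pm\svec_j,\pm2\svec_j\}$. Apply each monomial to $f_2$, insert the wavepacket ansatz $\eps^2\{(\cdot)(\eps(R_1+\tau_1^\sigma),\eps\microtime)\}e^{iK_1\cdot(R_1+\tau_1^\sigma)}e^{-i\eshift\microtime}$, Poisson‑sum in $R_1$, and for $\ell=1$ integrate by parts; each monomial then contributes one term of the claimed formula \eqref{eq:tildeF12_pf}.

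The combinatorial heart of the argument is the one already used in Lemma~\ref{lemma:tildeHperp}: since $\eps\svec_1=K_1-K_2$ by \eqref{eq:Dirac_diff}, and similarly $\eps\svec_j$ equals the difference between the corresponding layer‑$1$ and layer‑$2$ reciprocal‑translated Dirac points, the phase $e^{-i\svec\cdot\eps(R_1+\tau_1^\sigma)}e^{iK_1\cdot(R_1+\tau_1^\sigma)}$ produced by a monomial with shift $\svec$ collapses, after Poisson summation and the change of summation variable $G_1\mapsto\cG_1(G_2)$, to exactly the term of \eqref{eq:tildeHperp}–\eqref{eq:tildeF12_pf} in which the interlayer symbol is evaluated at $k_1+\cG_1(G_2)$ and $f_2$ is evaluated at $(k_1+\cG_1(G_2)-G_2-K_2+G_2')/\eps$, with $K_2+G_2-G_2'$ ranging over the nearest and next‑nearest neighbours of $-K$ in the moir\'e reciprocal lattice. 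Matching the exponential prefactors $e^{i(\cG_1(G_2)\cdot\tau_1^\sigma+(G_2'-G_2)\cdot\tau_2^{\sigma'})}$ is then the same bookkeeping, via quasi‑periodicity of the Bloch transform with respect to the shifts $\tau_j^\sigma$, as in the proof of Lemma~\ref{lemma:tildeHperp}.

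It then remains to verify that the coefficients assemble into precisely $\hat{\mathscr{h}}_{12}(k_1+\cG_1(G_2),K_2+G_2-G_2';\eps)$ from \eqref{eq:filtered_Taylor}. The two indicator functions are built into which shells appear in $\HBMptwist$: $\hoppingT^\theta$ gives the value terms at the nearest shell $|q|=|K|$ and $\hoppingT^\theta_{\rm NNN}$ the value terms at $|q|=2|K|$ (no operator lives on the $\sqrt7|K|$ shell, matching $1\{|q|\le2|K|\}$), while the gradient operators $\hoppingT^\theta_{\nabla,\rm NN}$ and $\hoppingT^\theta_2$ live only on the $|K|$ shell, matching $1\{|q|\le|K|\}$. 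For the gradient term one uses the explicit splitting \eqref{eq:nabla_hathperp} of $\nabla\hathperp$ into a radial part — coefficient $\hathperprad'(|K|;\eps)=\radpNN(\eps)$, reproduced by $\hoppingT^\theta_{\nabla,\rm NN}$ whose directions $\rot_{2\pi j/3}\angvar{K}$ are radial — and a tangential part — coefficient $\eps\cdot\tfrac{1}{|K|}\hathperpang'$, reproduced by $\eps\hoppingT^\theta_2$ whose directions $\rot_{\pi/2}\rot_{2\pi j/3}\angvar{K}$ are tangential — together with the identities $\hathperp(\rot_{2\pi j/3}K_2;\eps)=\eps\hathperpang(\rot_{2\pi j/3}\angvar{K}_2)$ and $\hathperp(-2\rot_{2\pi j/3}K_2;\eps)=\eps\radNNN(\eps)\hathperpang(\rot_{\pi}\rot_{2\pi j/3}\angvar{K}_2)$ for the value terms, following from \eqref{eq:sep} and \eqref{eq:rad_ang}. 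I expect this last matching to be the main source of care: one must reconcile the $\angvar{K}$‑aligned frame in which the gradient monomials of $\HBMptwist$ are written with the genuine gradient $\nabla\hathperp$ at the (slightly twisted) point $K_2+G_2-G_2'$, and track the $\eps^{-1}$ normalisations hidden in the $\radNNN,\radpNN$ prefactors and in $\hathperprad(|K|;\eps)=\eps$. Everything else is a direct transcription of the Poisson‑summation and integration‑by‑parts arguments from Lemmas~\ref{lemma:Bloch_transform_twist} and~\ref{lemma:tildeHperp}.
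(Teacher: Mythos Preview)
Your plan is correct and uses the same ingredients as the paper: Poisson summation on $\cR_1$, integration by parts for the $v\cdot D$ terms, and the reindexing $(G_1,G_2')\mapsto(\cG_1(G_2),G_2-G_2')$ already seen in Lemma~\ref{lemma:tildeHperp}. The one organizational difference is that the paper does not decompose $\HBMptwist$ into monomials and reassemble $\hat{\mathscr{h}}_{12}$ at the end; instead it uses the exact identity $\eps^{-1}(\rot_{\theta/2}-\rot_{-\theta/2})=\beta\rot_{\pi/2}$ together with \eqref{eq:nabla_hathperp} to write $(\wfnptwist)^\sigma_{R_1}$ at the outset as a single sum over $G\in\cR^*$ carrying the factor
\[
\hathperp(\rot_{\theta/2}(K+G);\eps)\,1\{|K+G|\le 2|K|\}\;-\;i\eps\,\nabla\hathperp(\rot_{\theta/2}(K+G);\eps)\cdot\nabla_r\,1\{|K+G|\le|K|\},
\]
so that $\hat{\mathscr{h}}_{12}$ is already assembled before the Bloch transform is taken, and the rest is one Poisson summation, one change of summation variables, and one integration by parts---no term-by-term matching required. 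Your anticipated care point (the $\angvar{K}$ versus $\angvar{K}_2$ frame for the gradient directions) is exactly what this upfront recognition handles in one stroke.
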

\begin{proof}
    Recall \srq{from \eqref{eq:beta_def}} that $\sq{2} \sin (\theta/2) = \beta \eps$ and thus $$\eps^{-1} (\rot_{\theta/2} - \rot_{-\theta/2}) = \beta \begin{pmatrix}
        0 & -1\\
        1 & 0
    \end{pmatrix}.$$ 
    \sq{Recalling the expression \eqref{eq:nabla_hathperp} for $\nabla \hathperp (k; \eps)$, 
    it follows from the definition 
    \eqref{eq:wfnptwist} of $\HBMptwist$ that}
    \begin{align*}
        (\sq{\wfnptwist})^\sigma_{R_1}(\microtime) &= 
        \frac{\eps}{ |\Gamma|} e^{-i\eshift \microtime} e^{iK_1 \cdot (R_1 + \tau_1^\sigma)} \sum_{G\in \cR^*}\sum_{\sigma' \in \{A,B\}}e^{i (K + G) \cdot \eps^{-1} (\rot_{-\theta/2}-\rot_{\theta/2})\eps (R_1 + \tau_1^\sigma)} e^{iG \cdot (\tau^\sigma - \tau^{\sigma'})} \sq{e^{-iG \cdot \ls}} \\
        &\hspace{3cm} \times \Big( \hathperp (\rot_{\theta/2} (K + G);\eps ) 1\{|G+K| \le 2|K|\}f^{\sigma'}_2 (r, \eps \microtime)\\
        &\hspace{3.5cm} - i \eps \nabla \hathperp (\rot_{\theta/2} (K + G);\eps ) 1\{|G+K| \le |K|\}\nabla_r f^{\sigma'}_2 (r, \eps \microtime)\Big) \Big \vert_{r = \eps (R_1 +\tau_1^\sigma)},
    \end{align*}
    where $\cR^*, K$ and $\tau^\sigma$ respectively are the \emph{unrotated} reciprocal lattice, Dirac point and sublattice shift. \sq{To simplify the exponential factors, we write
    \begin{align*}
        &K_1 \cdot (R_1 + \tau_1^\sigma) + (K+G) \cdot (\rot_{-\theta/2} - \rot_{\theta/2}) (R_1 + \tau_1^\sigma) + G\cdot (\tau^\sigma - \tau^{\sigma'})-G \cdot \ls\\
        &\hspace{7cm} = (K_2 + G_2) \cdot (R_1 + \tau^\sigma_1) - G \cdot \tau^{\sigma'} - \frac{1}{2}G \cdot \ls - G_1 \cdot R_1,
    \end{align*}
    where 
    $G_1 = \rot_{-\theta/2} G$ and $G_2 = \rot_{\theta/2} G$. Recalling that $G_1 \cdot R_1 \in 2\pi \mathbb{Z}$ and $G \cdot \tau^{\sigma'} = G_2 \cdot \tau_2^{\sigma'} - \frac{1}{2} G \cdot \ls$,}
    it then follows from the definition
    $
        (\widetilde{\sq{\wfnptwist}})^\sigma (k_1,\microtime) := \sum_{R_1 \in \cR_1} e^{-ik_1 \cdot (R_1 + \tau_1^\sigma)}(\sq{\wfnptwist})^\sigma_{R_1}(\microtime)
    $
    that
    \begin{align*}
        (\widetilde{\sq{\wfnptwist}})^\sigma (k_1,\microtime) &= 
        \frac{\eps}{ |\Gamma|} e^{-i\eshift \microtime} \sum_{R_1 \in \cR_1} \sum_{G_2\in \cR_2^*}\sum_{\sigma' \in \{A,B\}}e^{-i(k_1 - K_2 - G_2) \cdot (R_1 + \tau_1^\sigma)} e^{-iG_2 \cdot \tau_2^{\sigma'}}\\
        &\hspace{3cm} \times \Big( \hathperp (K_2 + G_2;\eps ) 1\{|G_2+K_2| \le 2|K_2|\}f^{\sigma'}_2 (r, \eps \microtime)\\
        &\hspace{3.5cm} - i \eps \nabla \hathperp (K_2 + G_2;\eps ) 1\{|G_2+K_2| \le |K_2|\}\nabla_r f^{\sigma'}_2 (r, \eps \microtime)\Big) \Big \vert_{r = \eps (R_1 +\tau_1^\sigma)}.
    \end{align*}
    By the Poisson summation formula, this is equal to
    \begin{align*}
        &\frac{1}{\eps |\Gamma|^2}e^{-i\eshift \microtime}\int_{\mathbb{R}^2} \sum_{G_1\in \cR_1^*}\sum_{G_2\in \cR_2^*}\sum_{\sigma' \in \{A,B\}}e^{i(G_1 \cdot \tau^\sigma_1 - G_2 \cdot \tau^{\sigma'}_2)}e^{-i (k_1+G_1 - G_2 -K_2) \cdot r/\eps}\\
        &\hspace{4.0cm} \times \Big( \hathperp (K_2 + G_2;\eps ) 1\{|G_2+K_2| \le 2|K_2|\}f^{\sigma'}_2 (r, \eps \microtime)\\
        &\hspace{4.5cm} - i \eps \nabla \hathperp (K_2 + G_2;\eps ) 1\{|G_2+K_2| \le |K_2|\}\nabla_r f^{\sigma'}_2 (r, \eps \microtime)\Big){\rm d} r.
    \end{align*}
    Changing variables in the sums and using that $\cG_1 : \cR_2^* \rightarrow \cR_1^*$ is a bijection, we obtain
    \begin{align*}
        (\widetilde{\sq{\wfnptwist}})^\sigma (k_1,\microtime) &=
        \frac{1}{\eps |\Gamma|^2}e^{-i\eshift \microtime}\int_{\mathbb{R}^2} \sum_{G_2, G_2' \in \cR_2^*}\sum_{\sigma' \in \{A,B\}}e^{i(\cG_1 (G_2) \cdot \tau^\sigma_1 - (G_2-G_2') \cdot \tau^{\sigma'}_2)}e^{-i (k_1+\cG_1 (G_2) - G_2 +G_2' -K_2) \cdot r/\eps}\\
        &\hspace{2cm} \times \Big( \hathperp (K_2 + G_2 - G_2';\eps ) 1\{|G_2 - G_2'+K_2| \le 2|K_2|\}f^{\sigma'}_2 (r, \eps \microtime)\\
        &\hspace{2.5cm} - i \eps \nabla \hathperp (K_2 + G_2 - G_2';\eps ) 1\{|G_2 - G_2'+K_2| \le |K_2|\}\nabla_r f^{\sigma'}_2 (r, \eps \microtime)\Big){\rm d} r.
    \end{align*}
    Integrating by parts in $r$, this becomes
    \begin{align*}
        (\widetilde{\sq{\wfnptwist}})^\sigma (k_1,\microtime) &=
        \frac{1}{\eps |\Gamma|^2}e^{-i\eshift \microtime}\int_{\mathbb{R}^2} \sum_{G_2, G_2' \in \cR_2^*}\sum_{\sigma' \in \{A,B\}}e^{i(\cG_1 (G_2) \cdot \tau^\sigma_1 - (G_2-G_2') \cdot \tau^{\sigma'}_2)}e^{-i (k_1+\cG_1 (G_2) - G_2 +G_2' -K_2) \cdot r/\eps}\\
        &\hspace{5.5cm} \times \hat{\mathscr{h}}_{12} (k_1 + \cG_1 (G_2), K_2 + G_2 - G'_2;\eps) f^{\sigma'}_2 (r, \eps \microtime){\rm d} r.
    \end{align*}
    Evaluating the integral over $r$, we obtain \eqref{eq:tildeF12_pf} as desired.
\end{proof}

\begin{lemma}[Proof of \eqref{eq:bilayer1}]\label{lemma:bilayer1}
    Take 
    \srq{$0 < \sdpt_- < \sdpt \le 1$} as in Theorem \ref{thm:main2}. Then
    \begin{align}\label{eq:bilayer1_pf}
        \norm{\sq{\wfnptwist}(\microtime) - \Hperp \phi_2 (\microtime)}_{\ell^2 (\cR_1)} \le C \eps^{2 + \sdpt_-}\sq{\norm{f_2 (\cdot, \eps \microtime)}_{H^{4+\sdpt}}}
    \end{align}
    uniformly in $0 < \eps < 1$ and $\microtime \ge 0$.
\end{lemma}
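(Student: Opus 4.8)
The plan is to work in the Bloch domain and compare the explicit formulas \eqref{eq:tildeHperp} and \eqref{eq:tildeF12}. By unitarity of the Bloch transform, $\norm{\wfnptwist(\microtime) - \Hperp\phi_2(\microtime)}_{\ell^2(\cR_1)}^2 = |\Gamma^*|^{-1}\norm{\widetilde{\wfnptwist}(\cdot,\microtime) - \widetilde{\Hperp\phi_2}(\cdot,\microtime)}_{L^2(\Gamma_1^*)}^2$, and Lemmas \ref{lemma:tildeHperp}--\ref{lemma:tildeF12} show the two Bloch transforms agree except that the factor $\hathperp(k_1+\cG_1(G_2);\eps)$ in \eqref{eq:tildeHperp} is replaced by the filtered Taylor polynomial $\hat{\mathscr{h}}_{12}(k_1+\cG_1(G_2),K_2+G_2-G_2';\eps)$ of \eqref{eq:filtered_Taylor}. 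Writing $k:=k_1+\cG_1(G_2)$ and $q:=K_2+G_2-G_2'\in K_2+\cR_2^*$, so that the argument of $\hat f_2$ is $(k-q)/\eps$, the difference is a lattice sum with kernel $E(k,q;\eps):=\hat{\mathscr{h}}_{12}(k,q;\eps)-\hathperp(k;\eps)$. This is structurally parallel to the difference handled in Lemma \ref{lemma:monolayer1}, with the intralayer symbol $\tilde h_1$ there replaced by $\hathperp$ and the Taylor centre $K_1-G_1$ replaced by a point $q$ of the shifted reciprocal lattice; so I would follow that proof. The two new features are that $\hathperp$ carries $\eps$-decay away from the origin, and that by the geometry of $K_2+\cR_2^*$ (Figure \ref{fig:NNN}) every Taylor centre satisfies $|q|\in\{|K|,2|K|,\sqrt7|K|,\dots\}$, so $|q|\ge|K|$.

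The core is a pointwise bound on $E(q+\eps p,q;\eps)$, organised by the shell containing $q$. For $|q|>2|K|$ the filter kills all Taylor terms, so $E=-\hathperp(q+\eps p;\eps)$, which the last lines of \eqref{eq:4bounds_full} bound by a power of $\eps$ proportional to $|q|/|K|$ when $|p|$ is moderate. For $|q|=2|K|$ the filter retains the zeroth-order term, so $E$ is a first-order Taylor remainder dominated by $|k-q|\sup_{[q,k]}|\nabla\hathperp(\cdot;\eps)|$, and for $|q|=|K|$ a second-order remainder dominated by $|k-q|^2\sup_{[q,k]}|\nabla^2\hathperp(\cdot;\eps)|$; here $|k-q|=\eps|p|$. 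When $|p|$ is below a suitable fixed multiple of $\eps^{-1}$, the segment $[q,q+\eps p]$ stays near $q$ and I would insert the $\eps$-decay bounds \eqref{eq:nabla_bd_K}, \eqref{eq:neighbor_bds}, \eqref{eq:4bounds_full} for $\nabla\hathperp,\nabla^2\hathperp$ there; when $|p|$ is larger the uniform bounds of \eqref{eq:4bounds_full} together with $\eps\lesssim|p|^{-1}$ trade the loss for polynomial growth in $|p|$. This yields an estimate of the form $|E(q+\eps p,q;\eps)|\le C\,\delta_q(\eps)\,\eps^2\,\langle p\rangle^{4+\sdpt}$, with $\delta_q(\eps)=O(\eps^{\sdpt_-})$ on the nearest-neighbour shell $|q|=|K|$, $\delta_q(\eps)=O(\eps^{1+\sdpt_-})$ on $|q|=2|K|$, and $\delta_q(\eps)$ superpolynomially small with a $|q|$-dependent rate otherwise; the power $4+\sdpt$ on $\langle p\rangle$ is what is needed for the later sum over $q$ to converge by a $\sdpt$-power trick as in Lemma \ref{lemma:monolayer1}.

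The assembly is then routine: combine $\int_{\Gamma_1^*}{\rm d}k_1$ with the sum over $G_2$ to obtain an integral over $\mathbb{R}^2$ (using that $\cG_1:\cR_2^*\to\cR_1^*$ is a bijection and that $k_1+\cG_1(G_2)$ tiles $\mathbb{R}^2$, exactly as in Lemma \ref{lemma:monolayer1}), apply Minkowski's inequality in the remaining lattice sum over $q$, change variables $p=(k-q)/\eps$ so that ${\rm d}k=\eps^2\,{\rm d}p$ cancels the squared $\eps^{-1}$ prefactor, and use $\int_{\mathbb{R}^2}\langle p\rangle^{2(4+\sdpt)}|\hat f_2(p,\eps\microtime)|^2\,{\rm d}p\le C\norm{f_2(\cdot,\eps\microtime)}_{H^{4+\sdpt}}^2$. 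Convergence of the series over $q\in K_2+\cR_2^*$ comes from the superpolynomial decay on the far shells plus the $\sdpt$-power trick of Lemma \ref{lemma:monolayer1}, and the innermost (nearest-neighbour) shell, contributing $\delta_q(\eps)=O(\eps^{\sdpt_-})$, dominates, giving $C\eps^{2+\sdpt_-}\norm{f_2(\cdot,\eps\microtime)}_{H^{4+\sdpt}}$ uniformly in $\eps$ and $\microtime$. The hard part is the nearest-neighbour pointwise estimate: by \eqref{eq:4bounds_full} the $\eps$-decay rate of $\nabla^2\hathperp$ along $[q,q+\eps p]$ worsens as the segment swings toward the origin, so the full power $\eps^{\sdpt}$ cannot be retained uniformly in $p$; absorbing this into an arbitrarily small loss, by a careful choice of the threshold between the small-$|p|$ and large-$|p|$ regimes, is exactly what produces the exponent $\sdpt_-<\sdpt$ and hence the rate $2+\sdpt_-$ in Theorems \ref{thm:main2} and \ref{thm:main}.
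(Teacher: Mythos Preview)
Your proposal is correct and captures all the essential ideas of the paper's proof: working in the Bloch domain, splitting by the shells $|q|\in\{|K|,2|K|\}\cup[\sqrt7|K|,\infty)$, using the appropriate-order Taylor remainder on each shell together with the $\eps$-decay of derivatives of $\hathperp$ from \eqref{eq:4bounds_full}, and identifying the nearest-neighbour shell as the source of the $\sdpt_-<\sdpt$ loss. The organisation differs slightly: you apply Minkowski at the level of individual $q$ and then invoke the periodisation/cross-term trick of Lemma~\ref{lemma:monolayer1} once per $q$, whereas the paper applies the triangle inequality only at the shell level (obtaining terms $\Delta_1,\Delta_2,\Delta_3$) and then expands each $\|\Delta_j\|_{L^2(\Gamma_1^*)}^2$ directly, handling the resulting double sums over $(G_1,G_2,G_2')$ via an explicit eightfold domain decomposition $S_{i_1,i_2,i_3}$. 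That decomposition is essentially your small-$|p|$/large-$|p|$ split done twice (once for each $\hat f_2$ factor in the squared norm) together with a summability split in $G_1$; your route is more conceptual, the paper's more explicit. One small correction: the claim $\delta_q(\eps)=O(\eps^{1+\sdpt_-})$ on the shell $|q|=2|K|$ is overstated---with your normalisation it is also only $O(\eps^{\sdpt_-})$, since the first-order remainder there gives $|E|\lesssim\eps^{2+\sdpt_-}|p|$ for small $|p|$---but this does not affect the final bound, as both inner shells contribute at the same order $\eps^{2+\sdpt_-}$.
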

\begin{proof}
    Since $\norm{u}_{\ell^2 (\cR_1)} = |\Gamma^*|^{-1/2} \norm{\tilde{u}}_{L^2 (\Gamma_1^*)}$ for any $u \in \ell^2 (\cR_1)$, it suffices to show that 
    \begin{align}\label{eq:bilayer1_suffices}
        \norm{\widetilde{\sq{\wfnptwist}}(\cdot,\microtime) - \widetilde{\Hperp \phi_2} (\cdot,\microtime)}_{L^2 (\Gamma_1^*)} \le C \eps^{2 + \sdpt_-}\sq{\norm{f_2 (\cdot, \eps \microtime)}_{H^{4+\sdpt}}}
    \end{align}
    uniformly in $0 < \eps < 1$ and $\microtime \ge 0$.
    Lemmas \ref{lemma:tildeHperp} and \ref{lemma:tildeF12} imply that
    \begin{align*}
        &\widetilde{(\sq{\wfnptwist})^\sigma}(k_1,\microtime) - \widetilde{(\Hperp \phi_2)^\sigma} (k_1,\microtime) =
        \frac{1}{\eps |\Gamma|^2}e^{-i\eshift \microtime}\sum_{G_2, G_2' \in \cR_2^*} \sum_{\sigma'\in \{A,B\}} e^{i(\cG_1 (G_2) \cdot \tau^\sigma_1 + (G'_2- G_2) \cdot \tau^{\sigma'}_2)}\\
        &\hspace{0.5cm} \times (\hat{\mathscr{h}}_{12} (k_1 + \cG_1 (G_2), K_2 + G_2 - G'_2;\eps) -\hathperp (k_1 + \cG_1 (G_2); \eps))
        \hat{f}^{\sigma'}_2 \Big(\frac{k_1 + \cG_1 (G_2) - G_2 - K_2 + G'_2}{\eps},\eps \microtime \Big).
    \end{align*}
    Recall that $\hat{\mathscr{h}}_{12} (k,q)$ vanishes if $|q|$ is sufficiently large, hence it will be useful to split the above sum over $G_2, G_2'$ into subsets of $\cR_2^* \oplus \cR_2^*$ depending on $|G_2 - G_2'+K_2|$. To this end, define the sets $\Lambda_1, \Lambda_2, \Lambda_3 \subset \cR_2^* \oplus \cR_2^*$ by
    \begin{align*}
        \Lambda_j := \{(G_2, G_2') \in \cR_2^* \oplus \cR_2^* : G_2 - G_2' \in \Lambda_j^1\},
    \end{align*}
    where the sets $\Lambda_1^1, \Lambda_2^1, \Lambda_3^1 \subset \cR_2^*$ are given by
    \begin{align*}
        \Lambda_1^1 &:= \{G_2 \in \cR_2^*: |G_2+ K_2| = |K_2|\},\qquad
        \Lambda_2^1 := \{G_2 \in \cR_2^*: |G_2+ K_2| = 2|K_2|\},\\
        \Lambda_3^1 &:= \{G_2 \in \cR_2^*: |G_2+ K_2| \ge \sqrt{7}|K_2|\}.
    \end{align*}
    It is a well-known property of the honeycomb lattice that $\Lambda_1^1 \cup \Lambda_2^1 \cup \Lambda_3^1 = \cR_2^*$, and thus $\Lambda_1 \cup \Lambda_2 \cup \Lambda_3 = \cR_2^* \oplus \cR_2^*$.
    We can then write
    \begin{align}\label{eq:Delta_sum}
        \widetilde{\sq{\wfnptwist}}(k_1,\microtime) - \widetilde{\Hperp \phi_2} (k_1,\microtime) = \Delta_1 (k_1, \microtime) + \Delta_2 (k_1, \microtime) + \Delta_3 (k_1, \microtime),
    \end{align}
    where for each $j$,
    \begin{align}\label{eq:Deltaj_def}
    \begin{split}
        \Delta_j^\sigma (k_1, \microtime) &:= \frac{1}{\eps |\Gamma|^2}e^{-i\eshift \microtime}\sum_{(G_2, G_2') \in \Lambda_j} \sum_{\sigma'\in \{A,B\}} e^{i(\cG_1 (G_2) \cdot \tau^\sigma_1 + (G'_2- G_2) \cdot \tau^{\sigma'}_2)}\\
        &\hspace{-2.25cm} \times (\hat{\mathscr{h}}_{12} (k_1 + \cG_1 (G_2), K_2 + G_2 - G'_2;\eps) -\hathperp (k_1 + \cG_1 (G_2); \eps))
        \hat{f}^{\sigma'}_2 \Big(\frac{k_1 + \cG_1 (G_2) - G_2 - K_2 + G'_2}{\eps},\eps \microtime \Big).
    \end{split}
    \end{align}
    Regarding $\Delta_1$, we 
    recall the definition \eqref{eq:filtered_Taylor} of $\hat{\mathscr{h}}_{12}$ which states that
    \begin{align}\label{eq:Taylor1}
    \begin{split}
        &\hat{\mathscr{h}}_{12} (k_1 + \cG_1 (G_2), K_2 + G_2 - G'_2;\eps) \\
        &\hspace{2cm} =
        \hathperp (K_2 + G_2 - G'_2;\eps) + (k_1 + \cG_1 (G_2) -(K_2 + G_2 - G'_2)) \cdot \nabla \hathperp (K_2 + G_2 - G'_2;\eps)
    \end{split}
    \end{align}
    for all $(G_2, G_2') \in \Lambda_1$. Similalry, the terms $\Delta_2$ and $\Delta_3$ can be written explicitly as
    \begin{align}\label{eq:Deltas}
    \begin{split}
        \Delta_2^\sigma (k_1, \microtime) &= \frac{1}{\eps |\Gamma|^2}e^{-i\eshift \microtime}\sum_{(G_2, G_2') \in \Lambda_2} \sum_{\sigma'\in \{A,B\}} e^{i(\cG_1 (G_2) \cdot \tau^\sigma_1 + (G'_2- G_2) \cdot \tau^{\sigma'}_2)}\\
        &\hspace{0.0cm} \times (\hathperp (K_2 + G_2 - G'_2;\eps) -\hathperp (k_1 + \cG_1 (G_2); \eps))
        \hat{f}^{\sigma'}_2 \Big(\frac{k_1 + \cG_1 (G_2) - G_2 - K_2 + G'_2}{\eps},\eps \microtime \Big)\\
        \Delta_3^\sigma (k_1, \microtime) &= -\frac{1}{\eps |\Gamma|^2}e^{-i\eshift \microtime}\sum_{(G_2, G_2') \in \Lambda_3} \sum_{\sigma'\in \{A,B\}} e^{i(\cG_1 (G_2) \cdot \tau^\sigma_1 + (G'_2- G_2) \cdot \tau^{\sigma'}_2)}\\
        &\hspace{4.0cm} \times \hathperp (k_1 + \cG_1 (G_2); \eps)
        \hat{f}^{\sigma'}_2 \Big(\frac{k_1 + \cG_1 (G_2) - G_2 - K_2 + G'_2}{\eps},\eps \microtime \Big).
    \end{split}
    \end{align}
    We will prove the lemma by bounding each of the $\Delta_j$ separately. 

    \medskip
    \textbf{I. We begin with $\Delta_3$.} After changing variables $G_1 \leftarrow -\cG_1 (G_2)$, $G_2 \leftarrow G_2-G_2'$ in the sum, we obtain
    \begin{align*}
        \Delta_3^\sigma (k_1, \microtime) &= -\frac{1}{\eps |\Gamma|^2}e^{-i\eshift \microtime}\sum_{G_1 \in \cR_1^*} \sum_{G_2 \in \Lambda_3^1} \sum_{\sigma'\in \{A,B\}} e^{-i(G_1 \cdot \tau^\sigma_1 + G_2 \cdot \tau^{\sigma'}_2)}\\
        &\hspace{4.5cm} \times \hathperp (k_1 - G_1; \eps)
        \hat{f}^{\sigma'}_2 \Big(\frac{k_1 - G_1 - G_2 - K_2}{\eps},\eps \microtime \Big),
    \end{align*}
    which implies that
    \begin{align*}
        \norm{\Delta_3 (\cdot, \microtime)}^2_{L^2 (\Gamma_1^*)} &\le 
        \frac{C}{\eps^2}\int_{\Gamma_1^*} \sum_{G_1,G_1' \in \cR_1^*} \sum_{G_2,G_2' \in \Lambda_3^1}\sum_{\sigma,\sigma'\in \{A,B\}}\left| \hathperp (k_1 - G_1; \eps) \right| \left| \hathperp (k_1 - G_1'; \eps) \right|\\
        &\hspace{2cm} \times 
        \left| \hat{f}^{\sigma}_2 \Big(\frac{k_1 - G_1 - G_2 - K_2}{\eps},\eps \microtime \Big) \right|
        \left| \hat{f}^{\sigma'}_2 \Big(\frac{k_1 - G_1' - G_2' - K_2}{\eps},\eps \microtime \Big) \right| {\rm d}k_1.
    \end{align*}
    Combining the sum over $G_1$ with the integral over $k_1$, we obtain
    \begin{align*}
        \norm{\Delta_3 (\cdot, \microtime)}^2_{L^2 (\Gamma_1^*)} &\le 
        \frac{C}{\eps^2}\int_{\mathbb{R}^2} \sum_{G_1' \in \cR_1^*} \sum_{G_2,G_2' \in \Lambda_3^1}\sum_{\sigma,\sigma'\in \{A,B\}}\left| \hathperp (k_1; \eps) \right| \left| \hathperp (k_1 - G_1'; \eps) \right|\\
        &\hspace{2cm} \times 
        \left| \hat{f}^{\sigma}_2 \Big(\frac{k_1 - G_2 - K_2}{\eps},\eps \microtime \Big) \right|
        \left| \hat{f}^{\sigma'}_2 \Big(\frac{k_1 - G_1' - G_2' - K_2}{\eps},\eps \microtime \Big) \right| {\rm d}k_1,
    \end{align*}
    which after the change of variables $k_1 \leftarrow k_1 - G_2 - K_2$ becomes
    \begin{align}\label{eq:beta2}
    \begin{split}
        \norm{\Delta_3 (\cdot, \microtime)}^2_{L^2 (\Gamma_1^*)} &\le 
        \frac{C}{\eps^2}\int_{\mathbb{R}^2} \sum_{G_1 \in \cR_1^*} \sum_{G_2,G_2' \in \Lambda_3^1}\sum_{\sigma,\sigma'\in \{A,B\}}\left| \hathperp (k_1+G_2+K_2; \eps) \right| \\
        &\hspace{0.0cm} \times 
        \left| \hathperp (k_1+G_2+K_2- G_1; \eps) \right|
        \left| \hat{f}^{\sigma}_2 \Big(\frac{k_1}{\eps},\eps \microtime \Big) \right|
        \left| \hat{f}^{\sigma'}_2 \Big(\frac{k_1+G_2 - G_1 - G_2'}{\eps},\eps \microtime \Big) \right| {\rm d}k_1.
    \end{split}
    \end{align}
    \srq{Take $0 < \swp \le 1$ from Assumption \ref{assumption:hperp2}. Without loss, assume $\swp$ is sufficiently small} such that $\sdpt (1-\nu) - 2\nu \ge \sdpt_-$.
    For fixed $\vec{G}_1, \vec{G}_2, \vec{G}'_2$, define
    \begin{align*}
        S_1^0 :&= \{\vec{k} : |\vec{k}| < |\vec{G}_1 - \vec{G}_2 + \vec{G}'_2|/2 \}, \qquad \qquad
        S_2^0 := \{k:|\vec{k}| < \nu |\vec{K}_2 + \vec{G}_2| \}, \\
        S_3^0 :&= \{k:|\vec{k}- \vec{G}_1 + \vec{G}_2 - \vec{G}'_2| < \nu |\vec{K}_2 + \vec{G}'_2| \}, \qquad \qquad S_j^1 := S_j^c, \quad j \in \{1,2,3\}.
    \end{align*}
    We will partition $\mathbb{R}^2$ into the eight sets $S_{i_1, i_2, i_3} := S_1^{i_1} \cap S_2^{i_2} \cap S_3^{i_3}$, with the $i_j$ in $\{0,1\}$.
    By a straightforward application of the triangle inequality, we see that
    \begin{align}\label{eq:S_bounds}
        \begin{split}
        |k + K_2 + G_2| \ge (1-\nu) |K_2+G_2| \quad &\text{for all} \; k \in S_2^0, \\
        |k+K_2-G_1+G_2| \ge (1-\nu) |K_2+G_2'| \quad &\text{for all} \; k \in S_3^0.
        \end{split}
    \end{align}
    Set $\srp := \sdpt-\sdpt_-$.
    \srq{By the decay of $\hathperp$ in \eqref{eq:4bounds_full},} 
    it follows from \eqref{eq:S_bounds} that for all $G_2, G_2' \in \Lambda_3^1$,
    \begin{align}\label{eq:normell2}
    \begin{split}
        \left| \hathperp (k+G_2+K_2; \eps) \right| \le C \eps^{\frac{2+\sdpt}{\sqrt{7}} (1-\nu)|K_2+G_2|/|K|} \le C \eps^{(2+\sdpt) (1-\nu)}|K_2+G_2|^{-2-\srp} \quad &\text{for all} \; k \in S_2^0,\\
        \left| \hathperp (k+G_2+K_2- G_1; \eps) \right|\le C \eps^{\frac{2+\sdpt}{\sqrt{7}} (1-\nu)|K_2+G_2'|/|K|}\le C \eps^{(2+\sdpt) (1-\nu)}|K_2+G_2'|^{-2-\srp} \quad &\text{for all} \; k \in S_3^0,
        \end{split}
    \end{align}
    where we used the fact that $|K_2 + G_2| \ge \sqrt{7}|K|$ to obtain the last inequality on each line. 

    With $I$ denoting the right-hand side of \eqref{eq:beta2},
    we now write $I = C\sum_{i_1, i_2, i_3} I_{i_1, i_2, i_3}$, where
    \begin{align*}
        I_{i_1, i_2, i_3} &:= \frac{1}{\eps^2}\int_{S_{i_1,i_2,i_3}} \sum_{G_1 \in \cR_1^*} \sum_{G_2,G_2' \in \Lambda_3^1}\sum_{\sigma,\sigma'\in \{A,B\}}\left| \hathperp (k+G_2+K_2; \eps) \right| \\
        &\hspace{0.5cm} \times 
        \left| \hathperp (k+G_2+K_2- G_1; \eps) \right|
        \left| \hat{f}^{\sigma}_2 \Big(\frac{k}{\eps},\eps \microtime \Big) \right|
        \left| \hat{f}^{\sigma'}_2 \Big(\frac{k+G_2 - G_1 - G_2'}{\eps},\eps \microtime \Big) \right| {\rm d}k
    \end{align*}
    restricts the integral over $k$ to $S_{i_1, i_2, i_3}$.
    If $\vec{k} \in S_{0,0,0}$, then the above integrand is bounded by 
    \begin{align*}
        &C \frac{1 + 2^{2+\delta} |\vec{k} - \vec{G}_1 + \vec{G}_2 -\vec{G}'_2|^{2+\delta}}{1 + |\vec{G}_1 - (\vec{G}_2 - \vec{G}'_2) |^{2+\delta}}\eps^{2(2+\sdpt) (1-\nu)}
        \absval{\vec{K}_2+\vec{G}_2}^{-2-\delta}
        \absval{\vec{K}_2+\vec{G}'_2}^{-2-\delta}\\
        &\hspace{7cm}\times
        \left| \hat{f}^{\sigma}_2 \left( \frac{\vec{k}}{\eps}, \eps \microtime\right) \right| \left| \hat{f}^{\sigma'}_2 \left( \frac{\vec{k} - \vec{G}_1 + \vec{G}_2 -\vec{G}'_2}{\eps}, \eps \microtime\right) \right|.
    \end{align*}
    Indeed, the factors in the first line are justified by \eqref{eq:normell2} and the fact that
    \begin{align}\label{eq:S10_bound}
        |\vec{G}_1 - (\vec{G}_2 - \vec{G}'_2) | \le 2|\vec{k} - \vec{G}_1 + \vec{G}_2 -\vec{G}'_2| \quad \text{for all} \; k \in S^0_1.
    \end{align}
    It follows that
    \begin{align}\label{eq:I_000_2}
    \begin{split}
        I_{0,0,0} &\le C \eps^{-2} \eps^{2(2+\sdpt)(1-\nu)}\sum_{\vec{G}_1\in \cR_1^*} \sum_{G_2, G_2' \in \Lambda_3^1}\sum_{\sigma,\sigma' \in \{A,B\}} \int_{S_{0,0,0}}
        \frac{1 + 2^{2+\delta} |\vec{k} - \vec{G}_1 + \vec{G}_2 -\vec{G}'_2|^{2+\delta}}{1 + |\vec{G}_1 - (\vec{G}_2 - \vec{G}'_2) |^{2+\delta}}\\
        &\hspace{2cm} \times 
        \absval{\vec{K}_2+\vec{G}_2}^{-2-\delta}
        \absval{\vec{K}_2+\vec{G}'_2}^{-2-\delta}\left| \hat{f}^{\sigma}_2 \left( \frac{\vec{k}}{\eps}, \eps \microtime \right) \right| \left| \hat{f}^{\sigma'}_2 \left( \frac{\vec{k} - \vec{G}_1 + \vec{G}_2 -\vec{G}'_2}{\eps}, \eps \microtime\right) \right|
        {\rm d} k\\
        &\le C \eps^{2(1+\sdpt)(1-\nu)-2\nu} \eps \norm{f_2 (\cdot, \eps \microtime)}_{L^2} (\eps \norm{f_2 (\cdot, \eps \microtime)}_{L^2} + \eps^{3+\delta} 2^{2+\delta} \norm{f_2 (\cdot, \eps \microtime)}_{H^{2+\delta}})\\
        &\hspace{6cm} \times\sum_{\vec{G}_1\in \cR_1^*} \sum_{G_2, G_2' \in \Lambda_3^1} 
        \frac{\absval{\vec{K}_2+\vec{G}_2}^{-2-\delta}
        \absval{\vec{K}_2+\vec{G}'_2}^{-2-\delta}}{1 + |\vec{G}_1 - (\vec{G}_2 - \vec{G}'_2) |^{2+\delta}}\\
        &\le C \eps^{1 + (1+\sdpt)(1-\nu)-\nu}\norm{f_2 (\cdot, \eps \microtime)}_{L^2}\\
        &\hspace{2cm} \times (\eps^{1 + (1+\sdpt)(1-\nu)-\nu} \norm{f_2 (\cdot, \eps \microtime)}_{L^2} + \eps^{3+\delta+(1+\sdpt)(1-\nu)-\nu} \norm{f_2 (\cdot, \eps \microtime)}_{H^{2+\delta}}).
        \end{split}
    \end{align}
    Indeed, the sum over $G_1, G_2, G'_2$ is finite because $\sum_{G_1 \in \cR_1^*} \frac{1}{1 + |\vec{G}_1 - (\vec{G}_2 - \vec{G}'_2) |^{2+\delta}} \le C\sum_{G_1\in \cR_1^*} \frac{1}{1 + |\vec{G}_1|^{2+\delta}}$ uniformly in $G_2,G_2'\in\cR_2^*$, and $\sum_{G_2\in \cR_2^*} \absval{K_2 + G_2}^{-2-\delta} < \infty$ (recall that $\absval{K_2+G_2} \ge \absval{K}$ for all $G_2 \in \cR_2^*$).
    Since our choice of $\nu$ ensures that $(1+\sdpt)(1-\nu)-\nu \ge 1+\sdpt_-$, we get
    \begin{align*}
        I_{0,0,0} \le C \eps^{2 + \sdpt_-}\norm{f_2 (\cdot, \eps \microtime)}_{L^2} (\eps^{2 + \sdpt_-} \norm{f_2 (\cdot, \eps \microtime)}_{L^2} + \eps^{4+\delta+\sdpt_-} \norm{f_2 (\cdot, \eps \microtime)}_{H^{2+\delta}}).
    \end{align*}

    Similarly, using the \srq{uniform boundedness of $\hathperp$ in \eqref{eq:4bounds_full} and the} fact that \sq{$\nu^{2+\sdpt_-} |\vec{K}_2 + \vec{G}_2|^{2+\sdpt_-}\le |k|^{2+\sdpt_-}$} for all $k \in S_2^1$, we find
    \begin{align}\label{eq:I_010_2}
    \begin{split}
        I_{0,1,0} &\le C \eps^{-2} \sum_{G_1 \in \cR_1^*}
        \sum_{\vec{G}_2, \vec{G}'_2 \in \Lambda_3^1} \sum_{\sigma,\sigma' \in \{A,B\}}\int_{S_{0,1,0}}\frac{1 + 2^{2+\delta} |\vec{k} - \vec{G}_1 + \vec{G}_2 -\vec{G}'_2|^{2+\delta}}{1 + |\vec{G}_1 - (\vec{G}_2 - \vec{G}'_2) |^{2+\delta}}
        \sq{\frac{|\vec{k}|^{2+\sdpt_-}}{\nu^{2+\sdpt_-} |\vec{K}_2 + \vec{G}_2|^{2+\sdpt_-}}}\\
        &\hspace{2.5cm} \times 
        \eps^{(2+\sdpt) (1-\nu)} \absval{\vec{K}_2+\vec{G}'_2}^{-2-\srp}\left| \hat{f}^{\sigma}_2 \left( \frac{\vec{k}}{\eps}, \eps \microtime \right) \right| \left| \hat{f}^{\sigma'}_2 \left( \frac{\vec{k} - \vec{G}_1 + \vec{G}_2 -\vec{G}'_2}{\eps}, \eps \microtime \right) \right|
        {\rm d} k\\
        &\le C \eps^{\sdpt (1-\nu) - 2\nu} \sq{\nu^{-2-\sdpt_-}}
        \sum_{G_1 \in \cR_1^*}
        \sum_{\vec{G}_2, \vec{G}'_2 \in \Lambda_3^1} \sum_{\sigma,\sigma' \in \{A,B\}}
        \frac{\sq{\absval{K_2 + G_2}^{-2-\sdpt_-}} \absval{K_2 + G_2'}^{-2-\srp}}{1+\absval{G_1 - (G_2 - G_2')}^{2+\srp}}
        \\
        & \hspace{2.5cm} \times\norm{ \sq{\absval{\cdot}^{2+\sdpt_-}} \hat{f}^{\sigma}_2 \left( \frac{\cdot}{\eps}, \eps \microtime \right)}_{L^2 (\mathbb{R}^2)} \Bigg( \norm{\hat{f}^{\sigma'}_2 \left( \frac{\cdot - G_1 + G_2 - G'_2}{\eps}, \eps \microtime\right)}_{L^2 (\mathbb{R}^2)} \\
        & \hspace{3.5cm} + 2^{2+\srp} \norm{\absval{\cdot - G_1 + G_2 - G'_2}^{2+\srp} \hat{f}^{\sigma '}_2 \left( \frac{\cdot - G_1 + G_2 - G'_2}{\eps}, \eps \microtime\right)}_{L^2 (\mathbb{R}^2)} \Bigg)\\
        &= C \eps^{\sdpt (1-\nu) - 2\nu} \sq{\nu^{-2-\sdpt_-}} 
         \sum_{G_1 \in \cR_1^*}
        \sum_{\vec{G}_2, \vec{G}'_2 \in \Lambda_3^1} \sum_{\sigma,\sigma' \in \{A,B\}}
        \frac{\sq{\absval{K_2 + G_2}^{-2-\sdpt_-}} \absval{K_2 + G_2'}^{-2-\srp}}{1+\absval{G_1 - (G_2 - G_2')}^{2+\srp}}
        \\
        & \hspace{0.5cm} \times \norm{ \sq{\absval{\cdot}^{2+\sdpt_-}} \hat{f}^{\sigma}_2 \left( \frac{\cdot}{\eps}, \eps \microtime \right)}_{L^2 (\mathbb{R}^2)} \Bigg( \norm{\hat{f}^{\sigma'}_2 \left( \frac{\cdot}{\eps}, \eps \microtime\right)}_{L^2 (\mathbb{R}^2)}
        + 2^{2+\srp} \norm{\absval{\cdot}^{2+\srp} \hat{f}^{\sigma '}_2 \left( \frac{\cdot}{\eps}, \eps \microtime\right)}_{L^2 (\mathbb{R}^2)} \Bigg).
        \end{split}
    \end{align}
    As before, the $L^2$-norms are independent of $G_1, G_2, G'_2$, and the sum of the first factor is finite. Therefore,
    \begin{align*}
        I_{0,1,0} &\le C \eps^{\sdpt (1-\nu) - 2\nu} \sq{\nu^{-2-\sdpt_-}}\sum_{\sigma,\sigma' \in \{A,B\}}\norm{ \sq{\absval{\cdot}^{2+\sdpt_-}} \hat{f}^{\sigma}_2 \left( \frac{\cdot}{\eps}, \eps \microtime \right)}_{L^2 (\mathbb{R}^2)}\\
        &\hspace{4cm} \times \Bigg( \norm{\hat{f}^{\sigma'}_2 \left( \frac{\cdot}{\eps}, \eps \microtime\right)}_{L^2 (\mathbb{R}^2)}
        + 2^{2+\srp} \norm{\absval{\cdot}^{2+\srp} \hat{f}^{\sigma '}_2 \left( \frac{\cdot}{\eps}, \eps \microtime\right)}_{L^2 (\mathbb{R}^2)} \Bigg)\\
        &= C \eps^{\sdpt (1-\nu) - 2\nu} \sq{\nu^{-2-\sdpt_-}}\sum_{\sigma,\sigma' \in \{A,B\}}\sq{\eps^{3+\sdpt_-} \norm{ f^{\sigma}_2 \left( \cdot, \eps \microtime \right)}_{H^{2+\sdpt_-} (\mathbb{R}^2)}}\\
        &\hspace{4cm} \times \Bigg(\eps \norm{f^{\sigma'}_2 \left(\cdot, \eps \microtime\right)}_{L^2 (\mathbb{R}^2)}
        + 2^{2+\srp}\eps^{3+\srp} \norm{f^{\sigma '}_2 \left( \cdot, \eps \microtime\right)}_{H^{2+\delta} (\mathbb{R}^2)} \Bigg).
    \end{align*}
    Recalling that $\sdpt (1-\nu) - 2\nu \ge \sdpt_-$ \srq{and $\sdpt = \srp + \sdpt_-$}, we obtain
    \begin{align*}
        I_{0,1,0} \le C \sq{\eps^{3+\sdpt_-} \norm{ f_2 \left( \cdot, \eps \microtime \right)}_{H^{2+\sdpt_-} (\mathbb{R}^2)}}\Bigg(\eps^{1+\sdpt_-} \norm{f_2 \left(\cdot, \eps \microtime\right)}_{L^2 (\mathbb{R}^2)}
        + \srq{\eps^{3+\sdpt}} \norm{f_2 \left( \cdot, \eps \microtime\right)}_{H^{2+\delta} (\mathbb{R}^2)} \Bigg).
    \end{align*}

    To bound $I_{0,1,1}$, we recall also that \sq{$\nu^{2+\sdpt_-} |K_2 + G_2'|^{2+\sdpt_-} \le |k-G_1 + G_2 - G_2'|^{2+\sdpt_-}$} for all $k \in S_3^1$. Hence
    \begin{align}\label{eq:I_011_2}
    \begin{split}
        I_{0,1,1} &\le C \eps^{-2} \sum_{G_1 \in \cR_1^*}
        \sum_{\vec{G}_2, \vec{G}'_2 \in \Lambda_3^1} \sum_{\sigma,\sigma' \in \{A,B\}}\int_{S_{0,1,1}}\frac{1 + 2^{2+\delta} |\vec{k} - \vec{G}_1 + \vec{G}_2 -\vec{G}'_2|^{2+\delta}}{1 + |\vec{G}_1 - (\vec{G}_2 - \vec{G}'_2) |^{2+\delta}}
        \sq{\frac{|\vec{k}|^{2+\sdpt_-}}{\nu^{2+\sdpt_-} |\vec{K}_2 + \vec{G}_2|^{2+\sdpt_-}}}\\
        &\hspace{2.5cm} \times 
        \sq{\frac{|k-G_1 + G_2 - G_2'|^{2+\sdpt_-}}{\nu^{2+\sdpt_-} |K_2 + G_2'|^{2+\sdpt_-}}} \left| \hat{f}^{\sigma}_2 \left( \frac{\vec{k}}{\eps}, \eps \microtime \right) \right| \left| \hat{f}^{\sigma'}_2 \left( \frac{\vec{k} - \vec{G}_1 + \vec{G}_2 -\vec{G}'_2}{\eps}, \eps \microtime \right) \right|
        {\rm d} k\\
        &\le C \eps^{-2} \sq{\nu^{-4-2\sdpt_-}}
        \sum_{G_1 \in \cR_1^*}
        \sum_{\vec{G}_2, \vec{G}'_2 \in \Lambda_3^1} \sum_{\sigma,\sigma' \in \{A,B\}}
        \frac{\sq{\absval{K_2 + G_2}^{-2-\sdpt_-} \absval{K_2 + G_2'}^{-2-\sdpt_-}}}{1+\absval{G_1 - (G_2 - G_2')}^{2+\srp}} \\
        & \hspace{1.0cm} \times\norm{ \sq{\absval{\cdot}^{2+\sdpt_-}} \hat{f}^{\sigma}_2 \left( \frac{\cdot}{\eps}, \eps \microtime \right)}_{L^2 (\mathbb{R}^2)} \Bigg( \sq{\absval{\cdot - G_1 + G_2 - G'_2}^{2+\sdpt_-}}\norm{\hat{f}^{\sigma'}_2 \left( \frac{\cdot - G_1 + G_2 - G'_2}{\eps}, \eps \microtime\right)}_{L^2 (\mathbb{R}^2)} \\
        & \hspace{3.25cm} + 2^{2+\srp} \norm{\sq{\absval{\cdot - G_1 + G_2 - G'_2}^{4+\srp+\sdpt_-}} \hat{f}^{\sigma '}_2 \left( \frac{\cdot - G_1 + G_2 - G'_2}{\eps}, \eps \microtime\right)}_{L^2 (\mathbb{R}^2)} \Bigg)\\
        &\le C \eps^{-2} \sq{\nu^{-4-2\sdpt_-} \eps^{3+\sdpt_-} \norm{f_2 (\cdot, \eps \microtime)}_{H^{2+\sdpt_-}} \left( \eps^{3+\sdpt_-} \norm{f_2 (\cdot, \eps \microtime)}_{H^{2+\sdpt_-}} + \eps^{5+\srp+\sdpt_-} \norm{f_2 (\cdot, \eps \microtime)}_{H^{4+\srp+\sdpt_-}}\right)}\\
        &=C \srq{\eps^{2+\sdpt_-} \norm{f_2 (\cdot, \eps \microtime)}_{H^{2+\sdpt_-}} \left( \eps^{2+\sdpt_-} \norm{f_2 (\cdot, \eps \microtime)}_{H^{2+\sdpt_-}} + \eps^{4+\sdpt} \norm{f_2 (\cdot, \eps \microtime)}_{H^{4+\sdpt}}\right)},
    \end{split}
    \end{align}
    where we absorbed the ($\eps$-independent) constant $\nu^{-4-2\srp}$ into $C$ in the last line, \srq{and recall that $\sdpt = \srp+ \sdpt_-$}.

    The term $I_{0,0,1}$ is handled similarly to $I_{0,1,0}$. 
    Indeed, the factors
    \begin{align*}
        \sq{\frac{|\vec{k}|^{2+\sdpt_-}}{\nu^{2+\sdpt_-} |\vec{K}_2 + \vec{G}_2|^{2+\sdpt_-}}} \qquad \text{and} \qquad \eps^{(2+\sdp) (1-\nu)} \absval{\vec{K}_2+\vec{G}'_2}^{-2-\srp}
    \end{align*}
    on the first and second lines of \eqref{eq:I_010_2} respectively get replaced by
    \begin{align*}
        \sq{\frac{|k-G_1 + G_2 - G'_2|^{2+\sdpt_-}}{\nu^{2+\sdpt_-} |K_2 + G'_2|^{2+\sdpt_-}}} \qquad \text{and} \qquad \eps^{(2+\sdp) (1-\nu)} \absval{\vec{K}_2+\vec{G}_2}^{-2-\srp}
    \end{align*}
    to yield
    \begin{align}\label{eq:I_001_2}
    \begin{split}
        I_{0,0,1} &\le C \eps^{-2} \sum_{G_1 \in \cR_1^*}
        \sum_{\vec{G}_2, \vec{G}'_2 \in \Lambda_3^1} \sum_{\sigma,\sigma' \in \{A,B\}}\int_{S_{0,1,0}}\frac{1 + 2^{2+\delta} |\vec{k} - \vec{G}_1 + \vec{G}_2 -\vec{G}'_2|^{2+\delta}}{1 + |\vec{G}_1 - (\vec{G}_2 - \vec{G}'_2) |^{2+\delta}}\\
        &\hspace{-1.0cm} \times 
        \sq{\frac{|k-G_1 + G_2 - G'_2|^{2+\sdpt_-}}{\nu^{2+\sdpt_-} |K_2 + G'_2|^{2+\sdpt_-}}}
        \eps^{(2+\sdpt) (1-\nu)} \absval{\vec{K}_2+\vec{G}_2}^{-2-\srp}\left| \hat{f}^{\sigma}_2 \left( \frac{\vec{k}}{\eps}, \eps \microtime \right) \right| \left| \hat{f}^{\sigma'}_2 \left( \frac{\vec{k} - \vec{G}_1 + \vec{G}_2 -\vec{G}'_2}{\eps}, \eps \microtime \right) \right|
        {\rm d} k,
        \end{split}
    \end{align}
    which leads to
    \begin{align*}
        I_{0,0,1} &\le C\sq{\nu^{-2-\sdpt_-}} \eps^{\sdpt (1-\nu) - 2\nu} \eps \norm{f_2 (\cdot, \eps \microtime)}_{L^2} \sq{\left( \eps^{3+\sdpt_-} \norm{f_2 (\cdot, \eps \microtime)}_{H^{2+\sdpt_-}} + \eps^{5+\srp+\sdpt_-} \norm{f_2 (\cdot, \eps \microtime)}_{H^{4+\srp+\sdpt_-}}\right)}.
    \end{align*}
    Recalling our choice of $\sdpt (1-\nu) - 2\nu \ge \sdpt_-$ \srq{and $\delta = \sdpt - \sdpt_-$}, this implies that
    \begin{align*}
        I_{0,0,1} &\le C\eps^{1+\sdpt_-} \norm{f_2 (\cdot, \eps \microtime)}_{L^2} \srq{\left( \eps^{3+\sdpt_-} \norm{f_2 (\cdot, \eps \microtime)}_{H^{2+\sdpt_-}} + \eps^{5+\sdpt} \norm{f_2 (\cdot, \eps \microtime)}_{H^{4+\sdpt}}\right)}.
    \end{align*}
    The arguments for $I_{1,i_2, i_3}$ are parallel to those for $I_{0,i_2,i_3}$.
    To bound $I_{1,i_2,i_3}$, replace the factor $$\frac{1 + 2^{2+\srp} |\vec{k} - \vec{G}_1 + \vec{G}_2 -\vec{G}'_2|^{2+\srp}}{1 + |\vec{G}_1 - (\vec{G}_2 - \vec{G}'_2) |^{2+\srp}}$$ appearing on the first lines of \eqref{eq:I_000_2}, \eqref{eq:I_010_2}, \eqref{eq:I_011_2} and \eqref{eq:I_001_2} by
    $$\frac{1 + 2^{2+\srp} |\vec{k}|^{2+\srp}}{1 + |\vec{G}_1 - (\vec{G}_2 - \vec{G}'_2) |^{2+\srp}},$$ 
    as the latter is greater than or equal to $1$ for all $k$ in $S_1^1$. Following the steps above, we then get
    \begin{align*}
        I_{1,0,0} &\le C \eps^{2 + \sdpt_-}\norm{f_2 (\cdot, \eps \microtime)}_{L^2} (\eps^{2 + \sdpt_-} \norm{f_2 (\cdot, \eps \microtime)}_{L^2} + \eps^{4+\delta+\sdpt_-} \norm{f_2 (\cdot, \eps \microtime)}_{H^{2+\delta}}),\\
        I_{1,1,0} &\le C\eps^{1+\sdpt_-} \norm{f_2 (\cdot, \eps \microtime)}_{L^2} \sq{\left( \eps^{3+\sdpt_-} \norm{f_2 (\cdot, \eps \microtime)}_{H^{2+\sdpt_-}} + \eps^{5+\sdpt} \norm{f_2 (\cdot, \eps \microtime)}_{H^{4+\sdpt}}\right)},\\
        I_{1,1,1} &\le C \sq{\eps^{2+\sdpt_-} \norm{f_2 (\cdot, \eps \microtime)}_{H^{2+\sdpt_-}} \left( \eps^{2+\sdpt_-} \norm{f_2 (\cdot, \eps \microtime)}_{H^{2+\sdpt_-}} + \eps^{4+\sdpt} \norm{f_2 (\cdot, \eps \microtime)}_{H^{4+\sdpt}}\right)},\\
        I_{1,0,1} &\le C \sq{\eps^{3+\sdpt_-} \norm{ f_2 \left( \cdot, \eps \microtime \right)}_{H^{2+\sdpt_-}}}\left(\eps^{1+\sdpt_-} \norm{f_2 \left(\cdot, \eps \microtime\right)}_{L^2}
        + \sq{\eps^{3+\sdpt}} \norm{f_2 \left( \cdot, \eps \microtime\right)}_{H^{2+\delta}} \right).
    \end{align*}
    Combining the bounds on all the $I_{i_1, i_2, i_3}$, we have shown that 
    \begin{align}\label{eq:Delta3_bd}
        \norm{\Delta_3 (\cdot, \microtime)}_{L^2 (\Gamma_1^*)}\le C \eps^{2+\sdpt_-} \srq{\norm{f_2 (\cdot, \eps \microtime)}_{H^{4+\sdpt}}}.
    \end{align}

    \medskip

    \textbf{II. We next bound $\Delta_2$.} Recalling the expression \eqref{eq:Deltas} for $\Delta_2$, 
    it follows that
    \begin{align*}
        \Delta_2^\sigma (k_1, \microtime) &= \frac{1}{\eps |\Gamma|^2}e^{-i\eshift \microtime}\sum_{G_1 \in \cR_1^*} \sum_{G_2 \in \Lambda_2^1} \sum_{\sigma'\in \{A,B\}} e^{-i(G_1 \cdot \tau^\sigma_1 + G_2 \cdot \tau^{\sigma'}_2)}\\
        &\hspace{0.5cm} \times (\hathperp (K_2 + G_2;\eps) -\hathperp (k_1 -G_1; \eps))
        \hat{f}^{\sigma'}_2 \Big(\frac{k_1 -G_1 - G_2 - K_2}{\eps},\eps \microtime \Big).
    \end{align*}
    By Taylor's theorem,
    \begin{align*}
        \left \vert\hathperp (K_2 + G_2;\eps) -\hathperp (k_1 -G_1; \eps) \right \vert \le |k_1 - G_1 - K_2 - G_2| \gradmax (K_2 + G_2, k_1 - G_1; \eps),
    \end{align*}
    where the (scalar-valued) function $\gradmax$ defined by
    \begin{align*}
        \gradmax (k,q;\eps) := \max_{0 \le s \le 1} \left \vert \nabla \hathperp (s q + (1-s) k; \eps) \right \vert
    \end{align*}
    is the maximum absolute value 
    of $\nabla \hathperp (\cdot, \eps)$ over the line segment connecting $k$ and $q$.
    It follows that
    \begin{align*}
        \norm{\Delta_2 (\cdot, \microtime)}^2_{L^2 (\Gamma_1^*)} &\le 
        \frac{C}{\eps^2}\int_{\Gamma_1^*} \sum_{G_1,G_1' \in \cR_1^*} \sum_{G_2,G_2' \in \Lambda_2^1}\sum_{\sigma,\sigma'\in \{A,B\}}|k_1 - G_1 - K_2 - G_2| \gradmax (K_2 + G_2, k_1 - G_1; \eps)\\
        &\hspace{2cm} \times |k_1 - G_1' - K_2 - G_2'| \gradmax (K_2 + G_2', k_1 - G_1'; \eps)\\
        &\hspace{2.5cm} \times
        \left| \hat{f}^{\sigma}_2 \Big(\frac{k_1 - G_1 - G_2 - K_2}{\eps},\eps \microtime \Big) \right|
        \left| \hat{f}^{\sigma'}_2 \Big(\frac{k_1 - G_1' - G_2' - K_2}{\eps},\eps \microtime \Big) \right| {\rm d}k_1.
    \end{align*}
    Combining the sum over $G_1$ with the integral over $k_1$, 
    we obtain
    \begin{align*}
        \norm{\Delta_2 (\cdot, \microtime)}^2_{L^2 (\Gamma_1^*)} &\le 
        \frac{C}{\eps^2}\int_{\mathbb{R}^2} \sum_{G_1' \in \cR_1^*} \sum_{G_2,G_2' \in \Lambda_2^1}\sum_{\sigma,\sigma'\in \{A,B\}}|k_1- K_2 - G_2| \gradmax (K_2 + G_2, k_1; \eps)|k_1 - G_1' - K_2 - G_2'| \\
        &\hspace{0.75cm} \times 
        \gradmax (K_2 + G_2', k_1 - G_1'; \eps)
        \left| \hat{f}^{\sigma}_2 \Big(\frac{k_1 - G_2 - K_2}{\eps},\eps \microtime \Big) \right|
        \left| \hat{f}^{\sigma'}_2 \Big(\frac{k_1 - G_1' - G_2' - K_2}{\eps},\eps \microtime \Big) \right| {\rm d}k_1.
    \end{align*}
    Changing variables $k_1 \leftarrow k_1 - G_2 - K_2$, we are left with
    \begin{align}\label{eq:beta2_2}
    \begin{split}
        \norm{\Delta_2 (\cdot, \microtime)}^2_{L^2 (\Gamma_1^*)} &\le 
        \frac{C}{\eps^2}\int_{\mathbb{R}^2} \sum_{G_1 \in \cR_1^*} \sum_{G_2,G_2' \in \Lambda_2^1}\sum_{\sigma,\sigma'\in \{A,B\}}|k_1| \gradmax (K_2 + G_2, k_1 + K_2 + G_2; \eps)\\
        &\hspace{2cm} \times 
        |k_1 - G_1 +G_2 - G_2'| 
        \gradmax (K_2 + G_2', k_1 + K_2 + G_2 - G_1; \eps)\\
        &\hspace{3.5cm} \times \left| \hat{f}^{\sigma}_2 \Big(\frac{k_1}{\eps},\eps \microtime \Big) \right|
        \left| \hat{f}^{\sigma'}_2 \Big(\frac{k_1 - G_1 - G_2'+G_2}{\eps},\eps \microtime \Big) \right| {\rm d}k_1.
    \end{split}
    \end{align}
    As above, 
    \srq{assume without loss that $0 < \swp \le  1$ from Assumption \ref{assumption:hperp2} satisfies} $\sdpt (1-\nu) - 2\nu \ge \sdpt_-$, and for fixed $\vec{G}_1, \vec{G}_2, \vec{G}'_2$ define
    \begin{align*}
        S_1^0 :&= \{\vec{k} : |\vec{k}| < |\vec{G}_1 - \vec{G}_2 + \vec{G}'_2|/2 \}, \qquad \qquad
        S_2^0 := \{k:|\vec{k}| < \nu |\vec{K}_2 + \vec{G}_2| \}, \\
        S_3^0 :&= \{k:|\vec{k}- \vec{G}_1 + \vec{G}_2 - \vec{G}'_2| < \nu |\vec{K}_2 + \vec{G}'_2| \}, \qquad \qquad S_j^1 := S_j^c, 
        \qquad S_{i_1, i_2, i_3} := S_1^{i_1} \cap S_2^{i_2} \cap S_3^{i_3},
    \end{align*}
    where $j \in \{1,2,3\}$ and $i_1, i_2, i_3 \in \{0,1\}$. 
    As in \eqref{eq:S_bounds}, the triangle inequality implies that
    \begin{align}\label{eq:S_bounds2}
    \begin{split}
        |s (k+K_2+G_2) + (1-s) (K_2 + G_2)| \ge (1-\nu) |K_2+G_2| \quad &\text{for all} \; k \in S_2^0 \; \text{and} \; 0 \le s \le 1, \\
        |s (k+K_2-G_1+G_2) + (1-s) (K_2 + G_2')| \ge (1-\nu) |K_2+G_2'| \quad &\text{for all} \; k \in S_3^0 \; \text{and} \; 0 \le s \le 1.
    \end{split}
    \end{align}
    \srq{The decay of $\nabla \hathperp$ in \eqref{eq:4bounds_full}} then implies that for all $G_2, G_2' \in \Lambda_2^1$,
    \begin{align}\label{eq:normell2_2}
    \begin{split}
        \left| \gradmax (K_2 + G_2, k+K_2 + G_2; \eps) \right| \le C \eps^{\frac{1+\sdpt}{2} (1-\nu)|K_2+G_2|/|K|} = C \eps^{(1+\sdpt) (1-\nu)} \quad &\text{for all} \; k \in S_2^0,\\
        \left| \gradmax (K_2 + G_2', k + K_2 + G_2 - G_1; \eps) \right|\le C \eps^{\frac{1+\sdpt}{2} (1-\nu)|K_2+G_2'|/|K|}= C \eps^{(1+\sdpt) (1-\nu)} \quad &\text{for all} \; k \in S_3^0,
        \end{split}
    \end{align}
    where to obtain the equalities we recall that $|K_2 + G_2|/|K| = 2$ for all $G_2 \in \Lambda_2^1$.

    Let $I^{(2)}$ denote the right-hand side of \eqref{eq:beta2_2} and write $I^{(2)} = C\sum_{i_1, i_2, i_3} I^{(2)}_{i_1, i_2, i_3}$, where
    \begin{align}\label{eq:integrand_22}
    \begin{split}
        I^{(2)}_{i_1, i_2, i_3} &:= \frac{1}{\eps^2}\int_{S_{i_1,i_2,i_3}} \sum_{G_1 \in \cR_1^*} \sum_{G_2,G_2' \in \Lambda_2^1}\sum_{\sigma,\sigma'\in \{A,B\}}|k| \gradmax (K_2 + G_2, k + K_2 + G_2; \eps)\\
        &\hspace{2cm} \times 
        |k - G_1 +G_2 - G_2'| 
        \gradmax (K_2 + G_2', k + K_2 + G_2 - G_1; \eps)\\
        &\hspace{3.5cm} \times \left| \hat{f}^{\sigma}_2 \Big(\frac{k}{\eps},\eps \microtime \Big) \right|
        \left| \hat{f}^{\sigma'}_2 \Big(\frac{k - G_1 - G_2'+G_2}{\eps},\eps \microtime \Big) \right| {\rm d}k
    \end{split}
    \end{align}
    restricts the integral over $k$ to $S_{i_1, i_2, i_3}$. As before, we will bound each $I^{(2)}_{i_1, i_2, i_3}$ separately \srq{and set $\srp := \sdpt-\sdpt_-$}.
    If $k \in S_{0,0,0}$, then \eqref{eq:normell2_2} implies that the integrand in \eqref{eq:integrand_22} is bounded by
    \begin{align}\label{eq:I2_000}
    \begin{split}
        &C \frac{1 + 2^{2+\delta} |\vec{k} - \vec{G}_1 + \vec{G}_2 -\vec{G}'_2|^{2+\delta}}{1 + |\vec{G}_1 - (\vec{G}_2 - \vec{G}'_2) |^{2+\delta}}\eps^{2(1+\sdpt) (1-\nu)}\, |k| \; |k - G_1 +G_2 - G_2'|\\
        &\hspace{7cm} \times 
        \left| \hat{f}^{\sigma}_2 \Big(\frac{k}{\eps},\eps \microtime \Big) \right|
        \left| \hat{f}^{\sigma'}_2 \Big(\frac{k - G_1 - G_2'+G_2}{\eps},\eps \microtime \Big) \right|,
    \end{split}
    \end{align}
    meaning that
    \begin{align*}
        I^{(2)}_{0,0,0} &\le C \eps^{2\sdpt (1-\nu) - 2\nu} \eps^{2} \norm{f_2 (\cdot, \eps \microtime)}_{H^1}
        (\eps^{2} \norm{f_2 (\cdot, \eps \microtime)}_{H^1} + \eps^{4+\delta} \norm{f_2 (\cdot, \eps \microtime)}_{H^{3+\delta}})\\
        &\hspace{7cm} \times 
        \sum_{G_1 \in \cR_1^*} \sum_{G_2,G_2' \in \Lambda_2^1}\frac{1}{1 + |\vec{G}_1 - (\vec{G}_2 - \vec{G}'_2) |^{2+\delta}}.
    \end{align*}
    Recall that $\Lambda_2^1$ is a finite set, hence the sum on the second line is finite. Our choice of $\nu$ then implies that
    \begin{align*}
        I^{(2)}_{0,0,0} \le C \eps^{2+\sdpt_-} \norm{f_2 (\cdot, \eps \microtime)}_{H^1}
        (\eps^{2+\sdpt_-} \norm{f_2 (\cdot, \eps \microtime)}_{H^1} + \eps^{4+\delta+\sdpt_-} \norm{f_2 (\cdot, \eps \microtime)}_{H^{3+\delta}}).
    \end{align*}

    \srq{The uniform bound on $\nabla \hathperp$ in \eqref{eq:4bounds_full} implies that $\gradmax$ is uniformly bounded in all of its arguments.} Using also that $\nu^{2+\srp} |\vec{K}_2 + \vec{G}_2|^{2+\srp}\le |k|^{2+\srp}$ for all $k \in S_2^1$, we write
    \begin{align}\label{eq:I2_010}
    \begin{split}
        I^{(2)}_{0,1,0} &\le \frac{1}{\eps^2}\int_{S_{0,1,0}} \sum_{G_1 \in \cR_1^*} \sum_{G_2,G_2' \in \Lambda_2^1}\sum_{\sigma,\sigma'\in \{A,B\}}\frac{1 + 2^{2+\delta} |\vec{k} - \vec{G}_1 + \vec{G}_2 -\vec{G}'_2|^{2+\delta}}{1 + |\vec{G}_1 - (\vec{G}_2 - \vec{G}'_2) |^{2+\delta}}
        |k| \frac{|k|^{2+\srp}}{\nu^{2+\srp} |\vec{K}_2 + \vec{G}_2|^{2+\srp}}\\
        &\hspace{2cm} \times 
        |k - G_1 +G_2 - G_2'| 
        \eps^{(1+\sdpt)(1-\nu)}\left| \hat{f}^{\sigma}_2 \Big(\frac{k}{\eps},\eps \microtime \Big) \right|
        \left| \hat{f}^{\sigma'}_2 \Big(\frac{k - G_1 - G_2'+G_2}{\eps},\eps \microtime \Big) \right| {\rm d}k,
    \end{split}
    \end{align}
    which after applying Cauchy-Schwarz to the integral implies
    \begin{align*}
        I^{(2)}_{0,1,0} &\le C \nu^{-2-\delta} \eps^{-2} \eps^{(1+\sdpt)(1-\nu)} \eps^{4+\srp} \norm{f_2 (\cdot, \eps \microtime)}_{H^{3+\delta}}
        (\eps^2 \norm{f_2 (\cdot, \eps \microtime)}_{H^{1}} + \eps^{4+\delta}\norm{f_2 (\cdot, \eps \microtime)}_{H^{3+\delta}}),
    \end{align*}
    where we again absorbed a finite sum over $G_1, G_2, G_2'$ into the constant $C$. Our \srq{smallness} assumption on $\nu$ 
    implies that $\eps^{-2} \eps^{(1+\sdpt)(1-\nu)} \le \eps^{-1 + \sdpt_-}$, and thus
    \begin{align*}
        I^{(2)}_{0,1,0} \le C \eps^{3+\srp} \norm{f_2 (\cdot, \eps \microtime)}_{H^{3+\delta}}
        (\eps^{2+\sdpt_-} \norm{f_2 (\cdot, \eps \microtime)}_{H^{1}} + \eps^{4+\delta+\sdpt_-}\norm{f_2 (\cdot, \eps \microtime)}_{H^{3+\delta}}).
    \end{align*}

    Since \sq{$\nu^{1+\sdpt_-} |K_2 + G_2'|^{1+\sdpt_-} \le |k-G_1 + G_2 - G_2'|^{1+\sdpt_-}$} for all $k \in S_3^1$,
    \begin{align}\label{eq:I2_011}
    \begin{split}
        I^{(2)}_{0,1,1} &\le \frac{1}{\eps^2}\int_{S_{0,1,1}} \sum_{G_1 \in \cR_1^*} \sum_{G_2,G_2' \in \Lambda_2^1}\sum_{\sigma,\sigma'\in \{A,B\}}\frac{1 + 2^{2+\delta} |\vec{k} - \vec{G}_1 + \vec{G}_2 -\vec{G}'_2|^{2+\delta}}{1 + |\vec{G}_1 - (\vec{G}_2 - \vec{G}'_2) |^{2+\delta}}
        |k| \frac{|k|^{2+\srp}}{\nu^{2+\srp} |\vec{K}_2 + \vec{G}_2|^{2+\srp}}\\
        &\hspace{0cm} \times 
        |k - G_1 +G_2 - G_2'| 
        \sq{\frac{|k-G_1 + G_2 - G_2'|^{1+\sdpt_-}}{\nu^{1+\sdpt_-} |K_2 + G_2'|^{1+\sdpt_-}}}
        \left| \hat{f}^{\sigma}_2 \Big(\frac{k}{\eps},\eps \microtime \Big) \right|
        \left| \hat{f}^{\sigma'}_2 \Big(\frac{k - G_1 - G_2'+G_2}{\eps},\eps \microtime \Big) \right| {\rm d}k \\
        &\le C \sq{\nu^{-3-\delta-\sdpt_-}} \eps^{-2} \eps^{4+\srp} \norm{f_2 (\cdot, \eps \microtime)}_{H^{3+\srp}} \sq{(\eps^{3+\sdpt_-} \norm{f_2 (\cdot, \eps \microtime)}_{H^{2+\sdpt_-}} + \eps^{5+\srp+\sdpt_-} \norm{f_2 (\cdot, \eps \microtime)}_{H^{4+\srp+\sdpt_-}})}\\
        &= C\eps^{3+\srp} \norm{f_2 (\cdot, \eps \microtime)}_{H^{3+\srp}} \srq{(\eps^{2+\sdpt_-} \norm{f_2 (\cdot, \eps \microtime)}_{H^{2+\sdpt_-}} + \eps^{4+\sdpt} \norm{f_2 (\cdot, \eps \microtime)}_{H^{4+\sdpt}})}.
    \end{split}
    \end{align}
    
    To bound $I^{(2)}_{0,0,1}$, replace the factor
    $\frac{|\vec{k}|^{2+\srp}}{\nu^{2+\srp} |\vec{K}_2 + \vec{G}_2|^{2+\srp}}$
    on the right-hand side of \eqref{eq:I2_010} by $$\sq{\frac{|k-G_1 + G_2 - G_2'|^{1+\sdpt_-}}{\nu^{1+\sdpt_-} |K_2 + G_2'|^{1+\sdpt_-}}}$$
    to obtain
    \begin{align*}
        I^{(2)}_{0,0,1} &\le C \sq{\nu^{-1-\sdpt_-}}\eps^{-2}\eps^{(1+\sdpt)(1-\nu)} \eps^2 \norm{f_2 (\cdot, \eps \microtime)}_{H^{1}}
        \sq{(\eps^{3+\sdpt_-} \norm{f_2 (\cdot, \eps \microtime)}_{H^{2+\sdpt_-}} + \eps^{5+\srp+\sdpt_-} \norm{f_2 (\cdot, \eps \microtime)}_{H^{4+\srp+\sdpt_-}})}\\
        &\le C\eps^{2+\sdpt_-} \norm{f_2 (\cdot, \eps \microtime)}_{H^{1}}
        \srq{(\eps^{2+\sdpt_-} \norm{f_2 (\cdot, \eps \microtime)}_{H^{2+\sdpt_-}} + \eps^{4+\sdpt} \norm{f_2 (\cdot, \eps \microtime)}_{H^{4+\sdpt}})}.
    \end{align*}
    
    We have now controlled all of the $I^{(2)}_{0,i_2,i_3}$. To handle $I^{(2)}_{1,i_2,i_3}$, replace the factor $$\frac{1 + 2^{2+\srp} |\vec{k} - \vec{G}_1 + \vec{G}_2 -\vec{G}'_2|^{2+\srp}}{1 + |\vec{G}_1 - (\vec{G}_2 - \vec{G}'_2) |^{2+\srp}}$$ appearing on the first lines of \eqref{eq:I2_000}, \eqref{eq:I2_010} and \eqref{eq:I2_011} by
    $$\frac{1 + 2^{2+\srp} |\vec{k}|^{2+\srp}}{1 + |\vec{G}_1 - (\vec{G}_2 - \vec{G}'_2) |^{2+\srp}},$$ 
    which is greater than or equal to $1$ for all $k$ in $S_1^1$. Following the above arguments, we conclude that
    \begin{align*}
        I^{(2)}_{1,0,0} &\le C \eps^{2+\sdpt_-} \norm{f_2 (\cdot, \eps \microtime)}_{H^1}
        (\eps^{2+\sdpt_-} \norm{f_2 (\cdot, \eps \microtime)}_{H^1} + \eps^{4+\delta+\sdpt_-} \norm{f_2 (\cdot, \eps \microtime)}_{H^{3+\delta}}),\\
        I^{(2)}_{1,1,0} &\le C\eps^{2+\sdpt_-} \norm{f_2 (\cdot, \eps \microtime)}_{H^{1}}
        \srq{(\eps^{2+\sdpt_-} \norm{f_2 (\cdot, \eps \microtime)}_{H^{2+\sdpt_-}} + \eps^{4+\sdpt} \norm{f_2 (\cdot, \eps \microtime)}_{H^{4+\sdpt}})},\\
        I^{(2)}_{1,1,1} &\le C\eps^{3+\srp} \norm{f_2 (\cdot, \eps \microtime)}_{H^{3+\srp}} \srq{(\eps^{2+\sdpt_-} \norm{f_2 (\cdot, \eps \microtime)}_{H^{2+\sdpt_-}} + \eps^{4+\sdpt} \norm{f_2 (\cdot, \eps \microtime)}_{H^{4+\sdpt}})}
    \end{align*}
    and (after making the appropriate adjustments to \eqref{eq:I2_010})
    \begin{align*}
        I^{(2)}_{1,0,1} &\le \frac{1}{\eps^2}\int_{S_{1,0,1}} \sum_{G_1 \in \cR_1^*} \sum_{G_2,G_2' \in \Lambda_2^1}\sum_{\sigma,\sigma'\in \{A,B\}}\frac{1 + 2^{2+\delta} |\vec{k}|^{2+\delta}}{1 + |\vec{G}_1 - (\vec{G}_2 - \vec{G}'_2) |^{2+\delta}}
        |k| \frac{|k-G_1 + G_2 - G_2'|^{2+\srp}}{\nu^{2+\srp} |K_2 + G_2'|^{2+\srp}}\\
        &\hspace{2cm} \times 
        |k - G_1 +G_2 - G_2'| 
        \eps^{(1+\sdpt)(1-\nu)}\left| \hat{f}^{\sigma}_2 \Big(\frac{k}{\eps},\eps \microtime \Big) \right|
        \left| \hat{f}^{\sigma'}_2 \Big(\frac{k - G_1 - G_2'+G_2}{\eps},\eps \microtime \Big) \right| {\rm d}k\\
        &\le C \nu^{-2-\srp} \eps^{-2} \eps^{(1+\sdpt)(1-\nu)} \eps^{4+\srp} \norm{f_2 (\cdot, \eps \microtime)}_{H^{3+\srp}}
        (\eps^2 \norm{f_2(\cdot, \eps \microtime)}_{H^1} + \eps^{4+\srp} \norm{f_2 (\cdot, \eps \microtime)}_{H^{3+\srp}})\\
        &\le C\eps^{3+\srp} \norm{f_2 (\cdot, \eps \microtime)}_{H^{3+\srp}}
        (\eps^{2+\sdpt_-} \norm{f_2(\cdot, \eps \microtime)}_{H^1} + \eps^{4+\srp+\sdpt_-} \norm{f_2 (\cdot, \eps \microtime)}_{H^{3+\srp}}).
    \end{align*}
    
    With our bounds on the $I^{(2)}_{i_1, i_2, i_3}$, we have shown that 
    \begin{align}\label{eq:Delta2_bd}
        \norm{\Delta_2 (\cdot, \microtime)}_{L^2 (\Gamma_1^*)}\le C \eps^{2+\sdpt_-} \srq{\norm{f_2 (\cdot, \eps \microtime)}_{H^{4+\sdpt}}}.
    \end{align}

    \medskip

    \textbf{III. Finally, we bound $\Delta_1$.} By the definition \eqref{eq:Deltaj_def}-\eqref{eq:Taylor1} of $\Delta_1$, it follows that
    \begin{align*}
        \Delta_1^\sigma (k_1, \microtime) 
        &= \frac{1}{\eps |\Gamma|^2}e^{-i\eshift \microtime}\sum_{G_1 \in \cR_1^*} \sum_{G_2 \in \Lambda_1^1} \sum_{\sigma'\in \{A,B\}} e^{-i(G_1 \cdot \tau^\sigma_1 + G_2 \cdot \tau^{\sigma'}_2)}\Big(\hathperp (K_2 + G_2;\eps)\\
        & + (k_1 -G_1 -(K_2 + G_2)) \cdot \nabla \hathperp (K_2 + G_2;\eps)
        -\hathperp (k_1 -G_1; \eps)\Big)
        \hat{f}^{\sigma'}_2 \Big(\frac{k_1 -G_1 - G_2 - K_2}{\eps},\eps \microtime \Big).
    \end{align*}
    As before, Taylor's theorem implies that
    \begin{align*}
        &\left \vert \hathperp (K_2 + G_2;\eps)+ (k_1 -G_1 -(K_2 + G_2)) \cdot \nabla \hathperp (K_2 + G_2;\eps)
        -\hathperp (k_1 -G_1; \eps)\right \vert\\
        &\hspace{7cm} \le \frac{1}{2} |k_1 - G_1 - K_2 - G_2|^2 \gradmaxtwo (K_2 + G_2, k_1 - G_1; \eps),
    \end{align*}
    where
    \begin{align*}
        \gradmaxtwo (k,q;\eps) := \max_{0 \le s \le 1} \norm{\nabla^2 \hathperp (s q + (1-s) k; \eps)}_{{\rm F}}
    \end{align*}
    is the maximum Frobenius norm of $\nabla^2 \hathperp (\cdot, \eps)$ over the line segment connecting $k$ and $q$.
    It follows that
    \begin{align*}
        \norm{\Delta_1 (\cdot, \microtime)}^2_{L^2 (\Gamma_1^*)} &\le 
        \frac{C}{\eps^2}\int_{\Gamma_1^*} \sum_{G_1,G_1' \in \cR_1^*} \sum_{G_2,G_2' \in \Lambda_1^1}\sum_{\sigma,\sigma'\in \{A,B\}}|k_1 - G_1 - K_2 - G_2|^2 \gradmaxtwo (K_2 + G_2, k_1 - G_1; \eps)\\
        &\hspace{2cm} \times |k_1 - G_1' - K_2 - G_2'|^2 \gradmaxtwo (K_2 + G_2', k_1 - G_1'; \eps)\\
        &\hspace{2.5cm} \times
        \left| \hat{f}^{\sigma}_2 \Big(\frac{k_1 - G_1 - G_2 - K_2}{\eps},\eps \microtime \Big) \right|
        \left| \hat{f}^{\sigma'}_2 \Big(\frac{k_1 - G_1' - G_2' - K_2}{\eps},\eps \microtime \Big) \right| {\rm d}k_1.
    \end{align*}
    Combining the sum over $G_1$ with the integral over $k_1$, we get
    \begin{align*}
        \norm{\Delta_1 (\cdot, \microtime)}^2_{L^2 (\Gamma_1^*)} &\le 
        \frac{C}{\eps^2}\int_{\mathbb{R}^2} \sum_{G_1' \in \cR_1^*} \sum_{G_2,G_2' \in \Lambda_1^1}\sum_{\sigma,\sigma'\in \{A,B\}}|k_1- K_2 - G_2|^2 \gradmaxtwo (K_2 + G_2, k_1; \eps)|k_1 - G_1' - K_2 - G_2'|^2 \\
        &\hspace{0.25cm} \times \gradmaxtwo (K_2 + G_2', k_1 - G_1'; \eps)
        \left| \hat{f}^{\sigma}_2 \Big(\frac{k_1 - G_2 - K_2}{\eps},\eps \microtime \Big) \right|
        \left| \hat{f}^{\sigma'}_2 \Big(\frac{k_1 - G_1' - G_2' - K_2}{\eps},\eps \microtime \Big) \right| {\rm d}k_1,
    \end{align*}
    which after the change of variables $k_1 \leftarrow k_1 - G_2 - K_2$ yields
    \begin{align}\label{eq:beta2_1}
    \begin{split}
        \norm{\Delta_1 (\cdot, \microtime)}^2_{L^2 (\Gamma_1^*)} &\le 
        \frac{C}{\eps^2}\int_{\mathbb{R}^2} \sum_{G_1 \in \cR_1^*} \sum_{G_2,G_2' \in \Lambda_1^1}\sum_{\sigma,\sigma'\in \{A,B\}}|k_1|^2 \gradmaxtwo (K_2 + G_2, k_1 + G_2 + K_2; \eps)\\
        &\hspace{2cm} \times
        |k_1 - G_1 + G_2 - G_2'|^2 \gradmaxtwo (K_2 + G_2', k_1 - G_1 + G_2 + K_2; \eps)\\
        &\hspace{3.5cm} \times
        \left| \hat{f}^{\sigma}_2 \Big(\frac{k_1}{\eps},\eps \microtime \Big) \right|
        \left| \hat{f}^{\sigma'}_2 \Big(\frac{k_1 - G_1 - G_2' + G_2}{\eps},\eps \microtime \Big) \right| {\rm d}k_1.
    \end{split}
    \end{align}
    As before, let \srq{$0 < \nu \le 1$ from Assumption \ref{assumption:hperp2} be sufficiently small} such that $\sdpt (1-\nu) - 2\nu \ge \sdpt_-$, and for fixed $\vec{G}_1, \vec{G}_2, \vec{G}'_2$ define
    \begin{align*}
        S_1^0 :&= \{\vec{k} : |\vec{k}| < |\vec{G}_1 - \vec{G}_2 + \vec{G}'_2|/2 \}, \qquad \qquad
        S_2^0 := \{k:|\vec{k}| < \nu |\vec{K}_2 + \vec{G}_2| \}, \\
        S_3^0 :&= \{k:|\vec{k}- \vec{G}_1 + \vec{G}_2 - \vec{G}'_2| < \nu |\vec{K}_2 + \vec{G}'_2| \}, \qquad \qquad S_j^1 := S_j^c, 
        \qquad S_{i_1, i_2, i_3} := S_1^{i_1} \cap S_2^{i_2} \cap S_3^{i_3},
    \end{align*}
    where $j \in \{1,2,3\}$ and $i_1, i_2, i_3 \in \{0,1\}$. By \eqref{eq:S_bounds2} and 
    \srq{the decay of $\nabla^2 \hathperp$ in \eqref{eq:4bounds_full},} we conclude that
    for all $G_2, G_2' \in \Lambda_1^1$,
    \begin{align}\label{eq:normell2_1}
    \begin{split}
        \left| \gradmaxtwo (K_2 + G_2, k+K_2 + G_2; \eps) \right| \le C \eps^{\sdpt (1-\nu)|K_2+G_2|/|K|} = C \eps^{\sdpt (1-\nu)} \quad &\text{for all} \; k \in S_2^0,\\
        \left| \gradmaxtwo (K_2 + G_2', k + K_2 + G_2 - G_1; \eps) \right|\le C \eps^{\sdpt (1-\nu)|K_2+G_2'|/|K|}= C \eps^{\sdpt (1-\nu)} \quad &\text{for all} \; k \in S_3^0,
        \end{split}
    \end{align}
    where to obtain the equalities we recall that $|K_2 + G_2|/|K| = 1$ for all $G_2 \in \Lambda_1^1$.

    Let $I^{(1)}$ denote the right-hand side of \eqref{eq:beta2_1} and write $I^{(1)} = C\sum_{i_1, i_2, i_3} I^{(1)}_{i_1, i_2, i_3}$, where
    \begin{align*}
        I^{(1)}_{i_1,i_2,i_3} &:= \frac{1}{\eps^2}\int_{S_{i_1,i_2,i_3}} \sum_{G_1 \in \cR_1^*} \sum_{G_2,G_2' \in \Lambda_1^1}\sum_{\sigma,\sigma'\in \{A,B\}}|k|^2 \gradmaxtwo (K_2 + G_2, k + G_2 + K_2; \eps)\\
        &\hspace{2cm} \times
        |k - G_1 + G_2 - G_2'|^2 \gradmaxtwo (K_2 + G_2', k - G_1 + G_2 + K_2; \eps)\\
        &\hspace{3.5cm} \times
        \left| \hat{f}^{\sigma}_2 \Big(\frac{k}{\eps},\eps \microtime \Big) \right|
        \left| \hat{f}^{\sigma'}_2 \Big(\frac{k - G_1 - G_2' + G_2}{\eps},\eps \microtime \Big) \right| {\rm d}k
    \end{align*}
    restricts the integral to $k \in S_{i_1, i_2, i_3}$. We will again bound each of the $I^{(1)}_{i_1, i_2,i_3}$ separately \srq{and set $\srp := \sdpt-\sdpt_-$}. 
    Following familiar arguments, we write
    \begin{align}\label{eq:I1_000}
    \begin{split}
        I^{(1)}_{0,0,0}&\le \frac{1}{\eps^2}\int_{S_{0,0,0}} \sum_{G_1 \in \cR_1^*} \sum_{G_2,G_2' \in \Lambda_1^1}\sum_{\sigma,\sigma'\in \{A,B\}}
        \frac{1 + 2^{2+\delta} |\vec{k} - \vec{G}_1 + \vec{G}_2 -\vec{G}'_2|^{2+\delta}}{1 + |\vec{G}_1 - (\vec{G}_2 - \vec{G}'_2) |^{2+\delta}} |k|^2 \eps^{\sdpt (1-\nu)}\\
        &\hspace{2cm} \times |k - G_1 + G_2 - G_2'|^2 \eps^{\sdpt (1-\nu)}
        \left| \hat{f}^{\sigma}_2 \Big(\frac{k}{\eps},\eps \microtime \Big) \right|
        \left| \hat{f}^{\sigma'}_2 \Big(\frac{k - G_1 - G_2' + G_2}{\eps},\eps \microtime \Big) \right| {\rm d}k\\
        &\le C \eps^{-2} \eps^{2\sdpt (1-\nu)} \eps^3 \norm{f_2 (\cdot, \eps \microtime)}_{H^2}
        (\eps^3 \norm{f_2 (\cdot, \eps \microtime)}_{H^2} + \eps^{5+\srp}\norm{f_2 (\cdot, \eps \microtime)}_{H^{4+\srp}})\\
        &\le C \eps^{2+\sdpt_-} \norm{f_2 (\cdot, \eps \microtime)}_{H^2}
        (\eps^{2+\sdpt_-} \norm{f_2 (\cdot, \eps \microtime)}_{H^2} + \eps^{4+\srp+\sdpt_-}\norm{f_2 (\cdot, \eps \microtime)}_{H^{4+\srp}}),
    \end{split}
    \end{align}
    where we used \eqref{eq:S10_bound} and \eqref{eq:normell2_1} to obtain the first inequality, Cauchy-Schwarz and the fact that $\Lambda_1^1$ is a finite set for the second, and the assumption that $\sdpt (1-\nu) \ge \sdpt_-$ for the third.

    Similarly, using the fact that $\nu^{2+\srp} |\vec{K}_2 + \vec{G}_2|^{2+\srp}\le |k|^{2+\srp}$ for all $k \in S_2^1$, 
    we obtain
    \begin{align}\label{eq:I1_010}
    \begin{split}
        I^{(1)}_{0,1,0}&\le \frac{1}{\eps^2}\int_{S_{0,1,0}} \sum_{G_1 \in \cR_1^*} \sum_{G_2,G_2' \in \Lambda_1^1}\sum_{\sigma,\sigma'\in \{A,B\}}
        \frac{1 + 2^{2+\delta} |\vec{k} - \vec{G}_1 + \vec{G}_2 -\vec{G}'_2|^{2+\delta}}{1 + |\vec{G}_1 - (\vec{G}_2 - \vec{G}'_2) |^{2+\delta}} |k|^2 
        \frac{|k|^{2+\srp}}{\nu^{2+\srp} |\vec{K}_2 + \vec{G}_2|^{2+\srp}}\\
        &\hspace{2cm} \times |k - G_1 + G_2 - G_2'|^2 \eps^{\sdpt (1-\nu)}
        \left| \hat{f}^{\sigma}_2 \Big(\frac{k}{\eps},\eps \microtime \Big) \right|
        \left| \hat{f}^{\sigma'}_2 \Big(\frac{k - G_1 - G_2' + G_2}{\eps},\eps \microtime \Big) \right| {\rm d}k\\
        &\le C \eps^{-2} \eps^{\sdpt (1-\nu)} \eps^{5+\srp} \norm{f_2 (\cdot, \eps \microtime)}_{H^{4+\srp}} 
        (\eps^{3} \norm{f_2 (\cdot, \eps \microtime)}_{H^{2}} + \eps^{5+\srp} \norm{f_2 (\cdot, \eps \microtime)}_{H^{4+\srp}})\\
        &\le C \eps^{4+\srp} \norm{f_2 (\cdot, \eps \microtime)}_{H^{4+\srp}} 
        (\eps^{2+\sdpt_-} \norm{f_2 (\cdot, \eps \microtime)}_{H^{2}} + \eps^{4+\srp+\sdpt_-} \norm{f_2 (\cdot, \eps \microtime)}_{H^{4+\srp}}),
    \end{split}
    \end{align}
    while the fact that $\sq{\nu^{\sdpt_-} |K_2 + G_2'|^{\sdpt_-} \le |k-G_1 + G_2 - G_2'|^{\sdpt_-}}$ for all $k \in S_3^1$ implies
    \begin{align}\label{eq:I1_011}
    \begin{split}
        I^{(1)}_{0,1,1}&\le \frac{1}{\eps^2}\int_{S_{0,1,1}} \sum_{G_1 \in \cR_1^*} \sum_{G_2,G_2' \in \Lambda_1^1}\sum_{\sigma,\sigma'\in \{A,B\}}
        \frac{1 + 2^{2+\delta} |\vec{k} - \vec{G}_1 + \vec{G}_2 -\vec{G}'_2|^{2+\delta}}{1 + |\vec{G}_1 - (\vec{G}_2 - \vec{G}'_2) |^{2+\delta}} |k|^2 
        \frac{|k|^{2+\srp}}{\nu^{2+\srp} |\vec{K}_2 + \vec{G}_2|^{2+\srp}}\\
        &\hspace{0.0cm} \times |k - G_1 + G_2 - G_2'|^2 
        \sq{\frac{|k-G_1 + G_2 - G_2'|^{\sdpt_-}}{\nu^{\sdpt_-} |K_2 + G_2'|^{\sdpt_-}}}
        \left| \hat{f}^{\sigma}_2 \Big(\frac{k}{\eps},\eps \microtime \Big) \right|
        \left| \hat{f}^{\sigma'}_2 \Big(\frac{k - G_1 - G_2' + G_2}{\eps},\eps \microtime \Big) \right| {\rm d}k\\
        &\le C \eps^{-2} \eps^{5+\srp} \norm{f_2 (\cdot, \eps \microtime)}_{H^{4+\srp}}
        \sq{(\eps^{3+\sdpt_-} \norm{f_2 (\cdot, \eps \microtime)}_{H^{2+\sdpt_-}} + \eps^{5+\srp+\sdpt_-} \norm{f_2 (\cdot, \eps \microtime)}_{H^{4+\srp+\sdpt_-}})}\\
        &=C \eps^{4+\srp} \norm{f_2 (\cdot, \eps \microtime)}_{H^{4+\srp}}
        \srq{(\eps^{2+\sdpt_-} \norm{f_2 (\cdot, \eps \microtime)}_{H^{2+\sdpt_-}} + \eps^{4+\sdpt} \norm{f_2 (\cdot, \eps \microtime)}_{H^{4+\sdpt}})}
    \end{split}
    \end{align}
    and
    \begin{align}\label{eq:I1_001}
    \begin{split}
        I^{(1)}_{0,0,1}&\le \frac{1}{\eps^2}\int_{S_{0,0,1}} \sum_{G_1 \in \cR_1^*} \sum_{G_2,G_2' \in \Lambda_1^1}\sum_{\sigma,\sigma'\in \{A,B\}}
        \frac{1 + 2^{2+\delta} |\vec{k} - \vec{G}_1 + \vec{G}_2 -\vec{G}'_2|^{2+\delta}}{1 + |\vec{G}_1 - (\vec{G}_2 - \vec{G}'_2) |^{2+\delta}} |k|^2 
        \eps^{\sdpt (1-\nu)}\\
        &\hspace{0.0cm} \times |k - G_1 + G_2 - G_2'|^2 
        \sq{\frac{|k-G_1 + G_2 - G_2'|^{\sdpt_-}}{\nu^{\sdpt_-} |K_2 + G_2'|^{\sdpt_-}}}
        \left| \hat{f}^{\sigma}_2 \Big(\frac{k}{\eps},\eps \microtime \Big) \right|
        \left| \hat{f}^{\sigma'}_2 \Big(\frac{k - G_1 - G_2' + G_2}{\eps},\eps \microtime \Big) \right| {\rm d}k\\
        &\le C \eps^{-2} \eps^{\sdpt (1-\nu)}\eps^3 \norm{f_2 (\cdot, \eps \microtime)}_{H^{2}}
        \sq{(\eps^{3+\sdpt_-}\norm{f_2 (\cdot, \eps \microtime)}_{H^{2+\sdpt_-}}+\eps^{5+\srp+\sdpt_-}\norm{f_2 (\cdot, \eps \microtime)}_{H^{4+\srp+\sdpt_-}})}\\
        &\le C \eps^{2+\sdpt_-} \norm{f_2 (\cdot, \eps \microtime)}_{H^{2}}
        \srq{(\eps^{2+\sdpt_-}\norm{f_2 (\cdot, \eps \microtime)}_{H^{2+\sdpt_-}}+\eps^{4+\sdpt}\norm{f_2 (\cdot, \eps \microtime)}_{H^{4+\sdpt}})}.
    \end{split}
    \end{align}
    \srq{The first inequality in each of \eqref{eq:I1_010}-\eqref{eq:I1_001} also used the uniform boundedness of $\gradmaxtwo$, which in turn follows from the uniform bound on $\nabla^2 \hathperp$ in \eqref{eq:4bounds_full}.}

    As before, to bound the $I^{(1)}_{1,i_2,i_3}$, 
    replace the factor $$\frac{1 + 2^{2+\srp} |\vec{k} - \vec{G}_1 + \vec{G}_2 -\vec{G}'_2|^{2+\srp}}{1 + |\vec{G}_1 - (\vec{G}_2 - \vec{G}'_2) |^{2+\srp}}$$ appearing on the (first) right-hand sides of \eqref{eq:I1_000}, \eqref{eq:I1_010}, \eqref{eq:I1_011} and \eqref{eq:I1_001} by
    \begin{align}\label{eq:frac1}
    \frac{1 + 2^{2+\srp} |\vec{k}|^{2+\srp}}{1 + |\vec{G}_1 - (\vec{G}_2 - \vec{G}'_2) |^{2+\srp}}, 
    \end{align}
    as \eqref{eq:frac1} is greater than or equal to $1$ for all $k$ in $S_1^1$. It then follows that
    \begin{align*}
        I^{(1)}_{1,0,0} &\le C \eps^{2+\sdpt_-} \norm{f_2 (\cdot, \eps \microtime)}_{H^2}
        (\eps^{2+\sdpt_-} \norm{f_2 (\cdot, \eps \microtime)}_{H^2} + \eps^{4+\srp+\sdpt_-}\norm{f_2 (\cdot, \eps \microtime)}_{H^{4+\srp}}),\\
        I^{(1)}_{1,1,0} &\le C \eps^{2+\sdpt_-} \norm{f_2 (\cdot, \eps \microtime)}_{H^{2}}
        \srq{(\eps^{2+\sdpt_-}\norm{f_2 (\cdot, \eps \microtime)}_{H^{2+\sdpt_-}}+\eps^{4+\sdpt}\norm{f_2 (\cdot, \eps \microtime)}_{H^{4+\sdpt}})},\\
        I^{(1)}_{1,1,1} &\le C \eps^{4+\srp} \norm{f_2 (\cdot, \eps \microtime)}_{H^{4+\srp}}
        \srq{(\eps^{2+\sdpt_-} \norm{f_2 (\cdot, \eps \microtime)}_{H^{2+\sdpt_-}} + \eps^{4+\sdpt} \norm{f_2 (\cdot, \eps \microtime)}_{H^{4+\sdpt}})},\\
        I^{(1)}_{1,0,1} &\le C \eps^{4+\srp} \norm{f_2 (\cdot, \eps \microtime)}_{H^{4+\srp}} 
        (\eps^{2+\sdpt_-} \norm{f_2 (\cdot, \eps \microtime)}_{H^{2}} + \eps^{4+\srp+\sdpt_-} \norm{f_2 (\cdot, \eps \microtime)}_{H^{4+\srp}}).
    \end{align*}

    Using all the above bounds on the $I^{(1)}_{i_1, i_2, i_3}$, we obtain that 
    \begin{align}\label{eq:Delta1_bd}
        \norm{\Delta_1 (\cdot, \microtime)}_{L^2 (\Gamma_1^*)}\le C \eps^{2+\sdpt_-} \srq{\norm{f_2 (\cdot, \eps \microtime)}_{H^{4+\sdpt}}}.
    \end{align}

    \medskip

    Combining \eqref{eq:Delta_sum}, \eqref{eq:Delta3_bd}, \eqref{eq:Delta2_bd} and \eqref{eq:Delta1_bd}, we have proven \eqref{eq:bilayer1_suffices} as desired.
\end{proof}

\subsection{Theorem \ref{thm:main}}\label{subsec:proof_thm_main}
\begin{proof}[Proof of Theorem \ref{thm:main}]
In the interest of brevity, we will prove the theorem when $\zeta (\eps) = \xi (\eps) = \eps$. The general case follows from the same arguments, only it becomes more cumbersome to keep track of the various orders of $\eps$.

Recall the definitions of $f = \tff{1} + \radNNN (\eps) \tff{{\rm NNN}} + \radpNN (\eps) \tff{\nabla, {\rm NN}} + \eps \tff{2}$ and $\mswf$ in Theorem \ref{thm:main2}, which states that $\srq{\norm{\mswf (\microtime) -\psi (\microtime)}_\cH \le C \norm{f_0}_{H^{6+\sdpt}} \eps^{1 + \sdpt_-} \left(\eps \microtime + \eps^{\sdpt -\sdpt_-} (\eps \microtime)^2 \right)}$ uniformly in $0 < \eps < 1$ and $\microtime \ge 0$.
It remains to bound $\norm{\chi (\microtime) - \phi (\microtime)}_\cH$.

    Combining terms of the same order, 
    set $$\HBMta := \HBMNNN + \HBMgNN + \HBMt, \qquad \tffta := \tff{{\rm NNN}} + \tff{\nabla, {\rm NN}} + \tff{2}.$$
    Let $\tff{3}$ be the solution to
    \begin{align*}
        (i \partial_t - \HBM) \tff{3} = \HBMta \tffta, \qquad \tff{3} (r,0) = 0,
    \end{align*}
    and define $\mathring{f} := f + \eps^2 \tff{3}$.
    We will separately control the differences $\chi - \mathring{\phi}$ and $\mathring{\phi} - \phi$, where
    \begin{align*}
        (\mathring{\mswf}_{i})^\sigma_{R_i} (\microtime) := \eps \mathring{\tf}^\sigma_{i} (\eps (R_i+\tau_i^\sigma),\eps \microtime) e^{i K_i \cdot (R_i + \tau_i^\sigma)}e^{-i\eshift \microtime}
    \end{align*}
    replaces $f$ in \eqref{eq:def_phi} by $\mathring{f}$.

    \medskip
    
    We begin with $\chi - \mathring{\phi}$. The definition of $\mathring{f}$ implies that
    \begin{align}\label{eq:tilde_res}
        (i \partial_t - \Hbm) \mathring{f} = -\eps^3 \HBMta \tff{3}.
    \end{align}
    We now follow the proof of Lemma \ref{lemma:Sobolev_ms2} to obtain a uniform bound on the Sobolev norms of $\tff{3}$. First, \eqref{eq:reg_Hj} implies that for any $N \ge 0$, $\norm{\HBMta u}_{H^N} \le C \norm{u}_{H^{N+2}}$ uniformly in $u$. Writing
    \begin{align*}
        i \partial_t \norm{((\HBM)^2 + \mu)^{N/2} \tff{3}}^2_{L^2} =2 i \Im \left( ((\HBM)^2 + \mu)^{N/2} \tff{3}, ((\HBM)^2 + \mu)^{N/2}\HBMta\tffta \right)
    \end{align*}
    for $\mu > 0$ as in
    \eqref{eq:elliptic_M2}, it follows that
    \begin{align*}
        \partial_t \norm{((\HBM)^2 + \mu)^{N/2} \tff{3}}^2_{L^2} \le C \norm{((\HBM)^2 + \mu)^{N/2} \tff{3}}_{L^2} \norm{\tffta}_{H^{N+2}}.
    \end{align*}
    Since $\norm{\tffta (\cdot, t)}_{H^{N+2}} \le Ct \norm{f_0}_{H^{N+4}}$ by \eqref{eq:Sobolev2}, we obtain
    \begin{align*}
        \norm{((\HBM)^2 + \mu)^{N/2} \tff{3} (\cdot, t)}_{L^2} \le C t^2 \norm{f_0}_{H^{N+4}}.
    \end{align*}
    Applying the lower bound in \eqref{eq:elliptic_M2}, we conclude that
    \begin{align}\label{eq:f3}
        \norm{\tff{3} (\cdot, t)}_{H^N} \le C t^2 \norm{f_0}_{H^{N+4}}
    \end{align}
    uniformly in $t \ge 0$. Applying \eqref{eq:tilde_res}, it follows that $\mathring{\res} := (i \partial_t - \Hbm) \mathring{f}$ satisfies
    \begin{align}\label{eq:tilde_res_bound}
        \norm{\mathring{\res} (\cdot, t)}_{H^N} \le Ct^2 \eps^3 \norm{f_0}_{H^{N+6}}
    \end{align}
    uniformly in $t \ge 0$.
    
    Let $s > 0$, take $\ellC > 0$ satisfying \eqref{eq:ellipticity} and set $\dpm := \mathring{f} - g$. Since $((\Hbm)^2 + \ellC)^{s/4}$ commutes with $\Hbm$, we know that $\dpmH{s} := ((\Hbm)^2 + \ellC)^{s/4} \dpm$ satisfies $(i\partial_t - \Hbm) \dpmH{s} = ((\Hbm)^2 + \ellC)^{s/4}\mathring{\rho}$, where
    \begin{align*}
        \norm{((\Hbm)^2 + \ellC)^{s/4}\mathring{\res} (\cdot, t)}_{L^2} \le C\norm{\mathring{\res} (\cdot, t)}_{H^{s}} \le C t^2 \eps^{3} \norm{f_0}_{H^{s+6}}, \qquad 0 < \eps < 1, \quad t \ge 0,
    \end{align*}
    with the first and second inequalities respectively following from \eqref{eq:ellipticity} and \eqref{eq:tilde_res_bound}.
    Since $\Hbm$ is symmetric with respect to the $L^2$ inner product, we conclude from Lemma \ref{lemma:res2} that
    \begin{align*}
        \norm{\dpmH{s} (\cdot, t)}_{L^2} \le C t^3 \eps^{3} \norm{f_0}_{H^{s+6}}, \qquad 0 < \eps < 1, \quad t \ge 0.
    \end{align*}
    Applying the lower bound in \eqref{eq:ellipticity}, we obtain
    \begin{align*}
        \norm{\dpm (\cdot, t)}_{H^s} \le C t^3 \eps^{3 - s/2}\norm{f_0}_{H^{s+6}}, \qquad 0 < \eps < 1, \quad t \ge 0.
    \end{align*}
    \srq{Set $\delta := \sdpt - \sdpt_-$.} Writing 
    \begin{align*}
        (\phph_{i})^\sigma_{R_i} (\microtime) - (\mathring{\mswf}_{i})^\sigma_{R_i} (\microtime) = -\eps \varphi^\sigma_{i} (\eps (R_i+\tau_i^\sigma),\eps \microtime) e^{i K_i \cdot (R_i + \tau_i^\sigma)}e^{-i\eshift \microtime}
    \end{align*}
    and choosing $s>2$ such that $3 - s/2 \ge 2-\delta$ and $\srq{s\le 2+\sdpt}$,
    we conclude by Lemma \ref{lemma:Hs_to_l22} that
    \begin{align}\label{eq:full_bd1}
        \norm{\chi (\microtime) - \mathring{\phi} (\microtime)}_\cH \le C (\eps \microtime)^3 \eps^{2-\delta}\srq{\norm{f_0}_{H^{8+\sdpt}}}, \qquad 0 < \eps < 1, \quad \microtime \ge 0.
    \end{align}

    We now bound $\mathring{\phi} - \phi$. Recall that $\mathring{f} - f = \eps^2 \tff{3}$ with $\tff{3}$ satisfying \eqref{eq:f3}. It follows that
    \begin{align*}
        (\mathring{\mswf}_{i})^\sigma_{R_i} (\microtime) - (\mswf_{i})^\sigma_{R_i} (\microtime) = \eps^3 (\tff{3})^\sigma_{i} (\eps (R_i+\tau_i^\sigma),\eps \microtime) e^{i K_i \cdot (R_i + \tau_i^\sigma)}e^{-i\eshift \microtime}
    \end{align*}
    satisfies
    \begin{align}\label{eq:full_bd2}
        \norm{\mathring{\mswf}(\microtime) - \mswf (\microtime)}_\cH \le C\eps^2 \srq{\norm{\tff{3}(\cdot, \eps \microtime)}_{H^{2+\sdpt}}} \le C (\eps \microtime)^2 \eps^2 \srq{\norm{f_0}_{H^{6+\sdpt}}}, \qquad 0 < \eps < 1, \quad \microtime \ge 0.
    \end{align}
    The result then follows from combining \eqref{eq:full_bd1} and \eqref{eq:full_bd2} with Theorem \ref{thm:main2}.
\end{proof}

\subsection{Propositions \ref{prop:symmetries}-\ref{prop:ph}}\label{subsection:proofs_symmetries}
\begin{proof}[Proof of Proposition \ref{prop:symmetries}]
    \begin{enumerate}
        \item 
        Note that $\Pi_j$ is a bijection with $\Pi_j^{-1} = \Pi_j$.
        The self-adjointness condition $h(-r) = \overline{h(r)}$ implies that $[H_{jj}, \cC_j \cP_j] = 0$ for $j=1,2$. Next, the assumption that $\hperp (-r) = \overline{\hperp (r)}$ implies
        \begin{align*}
            (\Hperp \cC_2 \cP_2 \psi)^\sigma_{R_1} = \sum_{R_2 \in \cR_2} \sum_{\sigma' \in \{A,B\}} \hperp (R_1 - \Pi_2 (R_2) + \tau^\sigma_1 - \tau^{(\sigma')^c}_2; \eps) \overline{\psi^{\sigma'}_{R_2}}
        \end{align*}
        and
        \begin{align*}
            (\cC_1 \cP_1 \Hperp \psi)^{\sigma}_{R_1} = \sum_{R_2 \in \cR_2} \sum_{\sigma' \in \{A,B\}} \hperp (-\Pi_1 (R_1) +R_2 - \tau^{\sigma^c}_1 + \tau^{\sigma'}_2; \eps) \overline{\psi^{\sigma'}_{R_2}}.
        \end{align*}
        The definition \eqref{eq:Pi0} of the $\Pi_j$ implies that $R_1 - \Pi_2 (R_2) + \tau^\sigma_1 - \tau^{(\sigma')^c}_2 = -\Pi_1 (R_1) +R_2 - \tau^{\sigma^c}_1 + \tau^{\sigma'}_2$, and thus $\Hperp \cC_2 \cP_2 = \cC_1 \cP_1 \Hperp$.
        Since the $\cC_j$ and $\cP_j$ are self-adjoint and satisfy $[\cC_j, \cP_j] = 0$, it follows that $\cC_2 \cP_2 \Hperp^\dagger = \Hperp^\dagger \cC_1 \cP_1$. Thus we have shown that $[H, \cD] = 0$.
        \item Observe that $\Pi_j^\sigma$ is a bijection with $(\Pi^\sigma_j)^{-1} (R_j) = \rot_{-2\pi/3} (R_j + \tau_j^\sigma) - \tau_j^\sigma.$ For $j \in \{1,2\}$, let $\cU_j : \ell^2 (\cR_j; \mathbb{C}^2) \to \ell^2 (\cR_j; \mathbb{C}^2)$ such that $\cU = \diag (\cU_1, \cU_2)$. The 
        $2\pi/3$-rotation invariance of $h$ in Assumption \ref{assumption:h} implies that $[H_{jj}, \cU_j] = 0$ for $j=1,2$. Moreover, a direct calculation reveals that
        \begin{align*}
            (\Hperp \cU_2 \psi)^\sigma_{R_1} = \sum_{R_2 \in \cR_2} \sum_{\sigma' \in \{A,B\}} \hperp (R_1 - \rot_{-2\pi/3} (R_2 + \tau^{\sigma'}_2) + \tau^\sigma_1; \eps) \psi^{\sigma'}_{R_2}
        \end{align*}
        and
        \begin{align*}
            (\cU_1 \Hperp \psi)^\sigma_{R_1} = \sum_{R_2 \in \cR_2} \sum_{\sigma' \in \{A,B\}} \hperp (\rot_{2\pi/3} (R_1 +\tau_1^\sigma) -R_2 - \tau^{\sigma'}_2; \eps) \psi^{\sigma'}_{R_2}.
        \end{align*}
        The assumption that $\hperp (\rot_{2\pi/3} r; \eps) = \hperp (r; \eps)$ then implies that $\Hperp \cU_2 = \cU_1 \Hperp$. Since the $\cU_j$ are unitary and satisfy $\cU_j^3 = \text{Id}_{\ell^2 (\cR_j; \mathbb{C}^2)}$, it follows that $\Hperp^{\dagger} \cU_1 = \cU_2 \Hperp^{\dagger}$. We conclude that $[H, \cU] = 0$ and the proof is complete.
        \item We will prove that $[H, \cM_x] = 0$; the argument for $\cM_y$ is similar. Writing
        \begin{align*}
            \cM_x = \begin{pmatrix}
                0 & \cM_{x,12}\\
                \cM_{x,12}^\dagger & 0
            \end{pmatrix}
        \end{align*}
        with $\cM_{x,12} : \ell^2 (\cR_2; \mathbb{C}^2) \to \ell^2 (\cR_1; \mathbb{C}^2)$ defined by $(\cM_{x,12} \psi)^\sigma_{R_1} = \psi^\sigma_{\Pi^\sigma_{x,1}(R_{1})}$, it follows that
        \begin{align*}
            H \cM_x = \begin{pmatrix}
                H_{12} \cM^\dagger_{x,12} & H_{11} \cM_{x,12}\\
                H_{22} \cM_{x,12}^\dagger & H_{12}^\dagger \cM_{x,12}
            \end{pmatrix}, \qquad
            \cM_x H = \begin{pmatrix}
                \cM_{x,12} H^\dagger_{12} & \cM_{x,12} H_{22}\\
                \cM_{x,12}^\dagger H_{11} & \cM^\dagger_{x,12} H_{12}
            \end{pmatrix}.
        \end{align*}
        Thus it suffices to show that $H_{12} \cM^\dagger_{x,12} = \cM_{x,12} H^\dagger_{12}$ and $H_{11} \cM_{x,12} = \cM_{x,12} H_{22}$. We have
        \begin{align}\label{eq:HM12}
        \begin{split}
            (H_{12} \cM^\dagger_{x,12} \psi)^\sigma_{R_1} &= \sum_{R_2 \in \cR_2} \sum_{\sigma' \in \{A,B\}} \hperp (R_1 - R_2 + \tau^\sigma_1 - \tau^{\sigma'}_2; \eps) \psi^{\sigma'}_{\Pi^{\sigma'}_{x,2} (R_2)}\\
            &= \sum_{R_1 \in \cR'_1} \sum_{\sigma' \in \{A,B\}} \hperp (R_1 - \Pi^{\sigma'}_{x,1} (R'_1) + \tau^\sigma_1 - \tau^{\sigma'}_2; \eps) \psi^{\sigma'}_{R'_1},
        \end{split}
        \end{align}
        where the last equality follows from the fact that $\Pi^\sigma_{x,2} :\cR_2 \to \cR_1$ is a bijection with $(\Pi^\sigma_{x,2})^{-1} = \Pi^\sigma_{x,1}$. Next, we write
        \begin{align}\label{eq:M12H}
            (\cM_{x,12} H_{12}^\dagger \psi)^\sigma_{R_1} = \sum_{R_1 \in \cR'_1} \sum_{\sigma' \in \{A,B\}} \overline{\hperp (\Pi^{\sigma}_{x,1} (R_1) - R'_1 + \tau^\sigma_2 - \tau^{\sigma'}_1; \eps)} \psi^{\sigma'}_{R'_1},
        \end{align}
        and verify that the arguments of $\hperp$ in \eqref{eq:HM12} and \eqref{eq:M12H} satisfy
        \begin{align*}
            R_1 - \Pi^{\sigma'}_{x,1} (R'_1) + \tau^\sigma_1 - \tau^{\sigma'}_2 &= R_1 + \tau_1^\sigma - m_x (R'_1 + \tau_1^{\sigma'}), \\
            \Pi^{\sigma}_{x,1} (R_1) - R'_1 + \tau^\sigma_2 - \tau^{\sigma'}_1 &=
            m_x (R_1 +\tau_1^\sigma) - R'_1 - \tau^{\sigma'}_1.
        \end{align*}
        Therefore, our assumption that $\hperp (m_x (r)) = \overline{\hperp (r)}$ implies that $H_{12} \cM^\dagger_{x,12} = \cM_{x,12} H^\dagger_{12}$ as desired.

        It remains to prove that $H_{11} \cM_{x,12} = \cM_{x,12} H_{22}$.
        We find by a direct calculation that
        \begin{align*}
            (H_{11} \cM_{x,12} \psi)_{R_1}^\sigma &= \sum_{R_2 \in \cR_2} \sum_{\sigma' \in \{A,B\}} h(\rot_{\theta/2} (R_1 - \Pi^{\sigma'}_{x,2} (R_2) + \tau^{\sigma, \sigma'}_1)) \psi^{\sigma'}_{R_2},\\
            (\cM_{x,12} H_{22} \psi)^\sigma_{R_1} &= \sum_{R_2 \in \cR_2} \sum_{\sigma' \in \{A,B\}} h(\rot_{-\theta/2} ( \Pi^{\sigma}_{x,1} (R_1) - R_2 + \tau^{\sigma, \sigma'}_2)) \psi^{\sigma'}_{R_2},
        \end{align*}
        where
        \begin{align*}
            R_1 - \Pi^{\sigma'}_{x,2} (R_2) + \tau^{\sigma, \sigma'}_1 &= R_1 + \tau^\sigma_1 - m_x (R_2 + \tau^{\sigma'}_2),\\
            \Pi^{\sigma}_{x,1} (R_1) - R_2 + \tau^{\sigma, \sigma'}_2 &= m_x (R_1 + \tau^\sigma_1) - R_2 - \tau^{\sigma'}_2.
        \end{align*}
        The result then follows from the observation that $m_x \rot_{\theta/2} = \rot_{-\theta/2} m_x$ and our assumption that $h(m_x (r)) = h(r)$.
    \end{enumerate}
\end{proof}

\begin{proof}[Proof of Proposition \ref{prop:unitary}]
    The diagonal blocks $L + \eps \HBMsd \mp \frac{\eps}{2} \beta i \sigma_3 L$ of $\Hfull$ are independent of $\ls$ and commute with the diagonal blocks of $\cU$. Thus it remains to show that
    $\cU \Hoff \cU^\dagger = \Hoff_0$, where
    \begin{align*}
        \Hoff := \begin{pmatrix}
            0 & \hoppingT (r)\\
            \hoppingT^\dagger (r) & 0
        \end{pmatrix} + \radNNN (\eps) \HBMNNN + \radpNN (\eps) \HBMgNN + \eps \begin{pmatrix}
            0 & \hoppingT_2 + \Tu (r)\\
            \hoppingT_2^\dagger + \Tu^\dagger (r) & 0
        \end{pmatrix}
    \end{align*}
    consists of the off-diagonal blocks of $\Hfull$, and $\Hoff_0$ is the operator $\Hoff$ with $\ls = 0$.
    To this end, it is useful to
    define the functions $\tj_j : \mathbb{R}^2 \to \mathbb{C}^{2 \times 2}$ by
    \begin{align}\label{eq:tj}
    \begin{split}
        \tj_0 (r) &= |\Gamma|^{-1}e^{\frac{i4\pi \beta}{3a} r_2} \begin{pmatrix}
        1 & 1\\
        1 & 1
    \end{pmatrix}, \qquad 
    \tj_1 (r) = |\Gamma|^{-1} e^{\frac{i4\pi \beta}{3a} (-\frac{\sqrt{3}}{2} r_1 - \frac{1}{2} r_2)} \begin{pmatrix}
        1 & \sq{e^{-i2\pi/3}}\\
        \sq{e^{i2\pi/3}} & 1
    \end{pmatrix},\\
    \tj_2 (r) &= |\Gamma|^{-1} e^{\frac{i4\pi \beta}{3a} (\frac{\sqrt{3}}{2} r_1 - \frac{1}{2} r_2)} \begin{pmatrix}
        1 & \sq{e^{i2\pi/3}}\\
        \sq{e^{-i2\pi/3}} & 1
    \end{pmatrix},
    \end{split}
    \end{align}
    so that 
    \begin{align}\label{eq:Tj}
    \begin{split}
    \hoppingT(r) &= \lambda_0 \tj_0 (r) + \lambda_2 e^{-i b_2 \cdot \ls}\tj_1 (r) + \lambda_4 e^{ib_1 \cdot \ls}\tj_2 (r), \\
    \hoppingT_{{\rm NNN}} (r) &= \lambda_3 e^{i (b_1 - b_2) \cdot \ls} \tj_0 (-2r) + \lambda_5 e^{i (b_1 + b_2) \cdot \ls} \tj_1 (-2r)+ 
    \lambda_1 e^{-i (b_1 + b_2) \cdot \ls} \tj_2 (-2r),\\
    \hoppingT_{\nabla, {\rm NN}} &= \lambda_0 \tj_0 (r) D_{r_1} + \lambda_2 e^{-ib_2 \cdot \ls} \tj_1 (r) (-\frac{1}{2}D_{r_1} + \frac{\sqrt{3}}{2} D_{r_2})+ \lambda_4 e^{ib_1 \cdot \ls}\tj_2 (r) (-\frac{1}{2}D_{r_1} - \frac{\sqrt{3}}{2} D_{r_2}),\\
    \hoppingT_2 &= \mu_0 \tj_0 (r) D_{r_2} + \mu_1 e^{-ib_2 \cdot \ls} \tj_1 (r) (-\frac{\sqrt{3}}{2}D_{r_1} - \frac{1}{2} D_{r_2}) + \mu_2 e^{ib_1 \cdot \ls} \tj_2 (r)(\frac{\sqrt{3}}{2}D_{r_1} - \frac{1}{2} D_{r_2}),\\
    \Tu (r) &= \frac{\beta |K|}{2} (\mu_0 \tj_0 (r) + \mu_1 e^{-i b_2 \cdot \ls}\tj_1 (r) + \mu_2 e^{ib_1 \cdot \ls}\tj_2 (r)).
    \end{split}
    \end{align} 
    A direct calculation using the identities
    \begin{align}\label{eq:phi_w}
        \phi = \frac{1}{6} \ls \cdot (b_1 - b_2), \qquad \fw = \frac{a}{4\pi \beta} (\sqrt{3} (b_1 + b_2) \cdot \ls, (-b_1 + b_2) \cdot \ls)
    \end{align}
    reveals that 
    \begin{align*}
        e^{-2i\phi} \tj_0 (r-\fw) = \tj_0 (r), \qquad
        e^{-2i\phi}e^{-ib_2 \cdot \ls} \tj_1 (r-\fw) = \tj_1 (r), \qquad
        e^{-2i\phi}e^{ib_1 \cdot \ls} \tj_2 (r-\fw) = \tj_2 (r).
    \end{align*}
    Making the dependence of $\hoppingT$ on $\ls$ explicit, it follows that
    \begin{align}\label{eq:off}
        \cU \begin{pmatrix}
            0 & \hoppingT (r;\ls)\\
            \hoppingT^\dagger (r;\ls) & 0
        \end{pmatrix} \cU^\dagger = \begin{pmatrix}
            0 & e^{-2i\phi} \hoppingT (r-\fw; \ls)\\
            e^{2i\phi} \hoppingT^\dagger (r-\fw; \ls) & 0
        \end{pmatrix} = \begin{pmatrix}
            0 & \hoppingT(r; 0)\\
            \hoppingT^\dagger (r;0) & 0
        \end{pmatrix}.
    \end{align}
    The above still holds if $\hoppingT$ is replaced by $\hoppingT_{\nabla, {\rm NN}}$, $\hoppingT_2$ or $\Tu$, as
    the operators $D_{r_j}$ commute with translation. 
    Finally, one can verify by \eqref{eq:phi_w} that
    \begin{align*}
        e^{-2i\phi}e^{i(b_1-b_2) \cdot \ls} \tj_0 (r+2\fw) = \tj_0 (r), \quad
        e^{-2i\phi}e^{i(b_1+b_2) \cdot \ls} \tj_1 (r+2\fw) = \tj_1 (r), \quad
        e^{-2i\phi}e^{-i(b_1+b_2) \cdot \ls} \tj_2 (r+2\fw) = \tj_2 (r);
    \end{align*}
    hence $\hoppingT$ in \eqref{eq:off} can also be replaced by $\hoppingT_{{\rm NNN}}$ and the proof is complete.
\end{proof}

\begin{proof}[Proof of Propositon \ref{prop:continuum_symmetries}]
By Proposition \ref{prop:unitary}, it suffices to prove the result when $\ls = 0$.
    \begin{enumerate}
        \item 
        Let $v \in \cR_m$. We observe that the diagonal blocks of $\Hfull$ are differential operators with constant coefficients, and thus commute with $\cT_v$. It remains to show that the off-diagonal blocks also commute with $\cT_v$. We have $v = n_1 a_{m,1} + n_2 a_{m,2}$ for some $(n_1, n_2) \in \mathbb{Z}^2$, which implies that $$\hoppingT(r - v) = e^{-i2\pi (n_1 + n_2)/3} \hoppingT(r),$$
        hence
        \begin{align*}
        &\left[
            \begin{pmatrix}
                0 & \hoppingT(r)\\
                \hoppingT^\dagger (r) & 0
            \end{pmatrix}, \cT_v \right] f(r)\\
            &\hspace{2cm} = 
            \begin{pmatrix}
                0 & (e^{-i\frac{4\pi \beta}{3a}v_2} - e^{-i2\pi (n_1 + n_2)/3}) \hoppingT(r)\\
                (1 - e^{-i\frac{4\pi \beta}{3a}v_2} e^{i2\pi (n_1 + n_2)/3})\hoppingT^\dagger (r) & 0
            \end{pmatrix} f(r-v).
        \end{align*}
        Since $v_2 = \frac{(n_1 + n_2) a}{2\beta}$, we conclude that
        $$\left[
            \begin{pmatrix}
                0 & \hoppingT(r)\\
                \hoppingT^\dagger (r) & 0
            \end{pmatrix}, \cT_v \right] = 0.$$
        The other off-diagonal blocks of $\Hfull$ are handled similarly.
        \item We write $\cD = \diag (\cP \cC \sigma_1, \cP \cC \sigma_1)$ and observe that
        \begin{align}\label{eq:LPC}
        \begin{split}
            L \cP \cC \sigma_1 f (r) = L \begin{pmatrix} \overline{f^B (-r)}\\
            \overline{f^A (-r)}\end{pmatrix} = 
            \begin{pmatrix}
            -\alpha (D_{r_1} - i D_{r_2})\bar{f}^A (-r)\\
            -\bar{\alpha} (D_{r_1} + i D_{r_2}) \bar{f}^B (-r)
            \end{pmatrix}&= \cP \cC \begin{pmatrix}
                \bar{\alpha} (D_{r_1} + i D_{r_2}) f^A (r)\\
                \alpha (D_{r_1} - i D_{r_2}) f^B (r)
            \end{pmatrix}\\
            &= \cP \cC \sigma_1 L f(r)
        \end{split}
        \end{align}
        while the functions defined in \eqref{eq:tj} satisfy $\sigma_1 \tj^\dagger_j (-r) \sigma_1 = \tj_j (r)$. Using \eqref{eq:Tj} with $\ls = 0$ and the fact that $[D_{r_j}, \cP \cC] = 0$, our assumption that $\lambda_i, \mu_j \in \mathbb{R}$ implies that the operators $\HBM, \HBMNNN, \HBMgNN$ 
        all commute with $\cD$, as do the off-diagonal blocks of $\HBMt$. 
        To handle the diagonal blocks of $\HBMt$, we write
        \begin{align*}
            \cS \cP \cC \sigma_1 f(r) = \cS \begin{pmatrix}
                \overline{f^B (-r)}\\
                \overline{f^A (-r)}
            \end{pmatrix} &= \frac{1}{2} \begin{pmatrix}
                \alpha_d (D_{r_1}^2 + D_{r_2}^2) \bar{f}^B (-r) + \alpha_o (D_{r_1}^2 - D_{r_2}^2 + 2 i D_{r_1 r_2} \bar{f}^A (-r)\\
                \overline{\alpha_0} (D_{r_1}^2 - D_{r_2}^2 - 2i D_{r_1 r_2}) \bar{f}^B (-r) + \alpha_d (D_{r_1}^2 + D_{r_2}^2) \bar{f}^A (-r)
            \end{pmatrix}\\ 
            &= 
            \cP \cC \sigma_1 \cS f(r),
        \end{align*}
        where the last equality follows from the assumption that $\alpha_d \in \mathbb{R}$.
        Moreover, using \eqref{eq:LPC}, it follows that
        \begin{align*}
            \cP \cC \sigma_1 i \sigma_3 L = \cP \cC \sigma_2 L = - \sigma_2 \cP \cC L = i \sigma_3 \sigma_1 \cP \cC L = i \sigma_3 L \cP \cC \sigma_1,
        \end{align*}
        which completes the proof.
        \item Define the operator $\mathscr{R}_1 f(r) := \diag (1, e^{i2\pi/3}) f(\rot^\top_{2\pi/3} r)$ so that $\mathscr{R} = \diag (\mathscr{R}_1, \mathscr{R}_1)$. We then have
        \begin{align*}
            L \mathscr{R}_1 f (r) = \begin{pmatrix}\alpha e^{i 2\pi/3} (1,-i) \cdot \rot_{2\pi/3} \nabla f^B (\rot_{2\pi/3}^\top r)\\
            \bar{\alpha} (1,i) \cdot \rot_{2\pi/3} \nabla f^A (\rot_{2\pi/3}^\top r)\end{pmatrix} = \begin{pmatrix}
                \alpha (1,-i) \cdot \nabla f^B (\rot_{2\pi/3}^\top r)\\
                \bar{\alpha} e^{i2\pi/3} (1,i) \cdot \nabla f^A (\rot_{2\pi/3}^\top r)
            \end{pmatrix} = \mathscr{R}_1 L f (r)
        \end{align*}
        and
        \begin{align}\label{eq:rot_tj}
            \diag (1, e^{i2\pi/3}) \tj_j (\rot^\top_{2\pi/3} r) \diag (1, e^{-i2\pi/3}) = \tj_{j+1} (r), \qquad j = 0,1,2,
        \end{align}
        where it is understood that $\tj_3 := \tj_0$. Since $\lambda_0 = \lambda_2 = \lambda_4$, it follows from \eqref{eq:Tj} with $\ls = 0$ that
        \begin{align*}
            [\HBM, \mathscr{R}] = \diag([L, \mathscr{R}_1], [L, \mathscr{R}_1]) + \begin{pmatrix}
                0 & [\hoppingT(r), \mathscr{R}_1]\\
                [\hoppingT^\dagger (r), \mathscr{R}_1] & 0
            \end{pmatrix} = 0,
        \end{align*}
        as $[\hoppingT(r), \mathscr{R}_1] f(r) = (\hoppingT(r) \diag (1, e^{i 2\pi/3}) - \diag (1, e^{i 2\pi/3})\hoppingT(\rot^\top_{2\pi/3} r)) f(\rot^\top_{2\pi/3} r)$ vanishes by \eqref{eq:rot_tj}.
        Since the maps $r \mapsto -2r$ and $r \mapsto \rot^\top_{2\pi/3}r$ commute with each other, we similarly conclude that $[\HBMNNN, \mathscr{R}] = 0$. Writing $D = (D_1, D_2)$, $e_1 = (1,0)$ and
        \begin{align*}
            \hoppingT_{\nabla, {\rm NN}} = \lambda_0 \tj_0 (r) e_1 \cdot D + \lambda_2 \tj_1 (r) (\rot_{2\pi/3} e_1)\cdot D + \lambda_4 \tj_2 (r) (\rot^\top_{2\pi/3} e_1)\cdot D,
        \end{align*}
        it follows that
        \begin{align*}
            &\hoppingT_{\nabla, {\rm NN}} \mathscr{R}_1 f(r)= \Big( \lambda_0 \tj_0 (r) \diag (1, e^{i2\pi/3}) (e_1 \cdot \rot_{2\pi/3} D)\\
            &\hspace{2.5cm} + \lambda_2 \tj_1 (r) \diag (1, e^{i2\pi/3}) (e_1 \cdot D) + \lambda_4 \tj_2 (r)\diag (1, e^{i2\pi/3})(e_1 \cdot \rot^\top_{2\pi/3} D)\Big) f(\rot^\top_{2\pi/3}r)
        \end{align*}
        and
        \begin{align*}
            &\mathscr{R}_1 \hoppingT_{\nabla, {\rm NN}} f(r)= \Big( \lambda_0 \diag (1, e^{i2\pi/3}) \tj_0 (\rot^\top_{2\pi/3} r) (e_1 \cdot D) \\
            &\hspace{0.25cm} + \lambda_2 \diag (1, e^{i2\pi/3}) \tj_1 (\rot^\top_{2\pi/3} r) (e_1 \cdot \rot^{\top}_{2\pi/3}D) + \lambda_4 \diag (1, e^{i2\pi/3})\tj_2 (\rot^\top_{2\pi/3} r)(e_1 \cdot \rot_{2\pi/3} D)\Big) f(\rot^\top_{2\pi/3}r),
        \end{align*}
        hence $[\hoppingT_{\nabla, {\rm NN}}, \mathscr{R}_1] = 0$ by \eqref{eq:rot_tj} and the assumption that $\lambda_0 = \lambda_2 = \lambda_4$. The same argument establishes that
        \begin{align*}
            \hoppingT^\dagger_{\nabla, {\rm NN}} = \overline{\lambda_0} \tj^\dagger_0 (r) e_1 \cdot D + \overline{\lambda_2} \tj^\dagger_1 (r) (\rot_{2\pi/3} e_1)\cdot D + \overline{\lambda_4} \tj^\dagger_2 (r) (\rot^\top_{2\pi/3} e_1)\cdot D
        \end{align*}
        also commutes with $\mathscr{R}_1$, and thus $[\HBMgNN, \mathscr{R}] = 0$. Similarly, with $e_2 = (0,1)$,
        \begin{align*}
            \hoppingT_2 = \mu_0 \tj_0 (r) e_2 \cdot D + \mu_1 \tj_1 (r) (\rot_{2\pi/3} e_2) \cdot D + \mu_2 \tj_2 (r)(\rot^\top_{2\pi/3} e_2) \cdot D
        \end{align*}
        and $\Tu (r)$ commute with $\mathscr{R}_1$, as does $\hoppingT^\dagger_2 + \Tu^\dagger (r)$. Moreover,
        \begin{align*}
            \HBMsd \mathscr{R}_1 f (r) = \frac{1}{2} \begin{pmatrix}
                -\alpha_d \Delta f^A (\rot^\top_{2\pi/3} r) + \alpha_o e^{i2\pi/3}(-i, 1) \cdot \rot_{2\pi/3} \nabla^2 f^B (\rot^\top_{2\pi/3} r) \rot^\top_{2\pi/3} (-i,1)\\
                \overline{\alpha_o} (i, 1) \cdot \rot_{2\pi/3} \nabla^2 f^A (\rot^\top_{2\pi/3} r) \rot^\top_{2\pi/3} (i,1) - \alpha_d e^{i2\pi/3} \Delta f^B (\rot^\top_{2\pi/3} r)
            \end{pmatrix},
        \end{align*}
        which by $\rot^\top_{2\pi/3} (-i,1) = e^{i2\pi/3} (-i,1)$ and $\rot^\top_{2\pi/3} (i,1) = e^{-i2\pi/3} (i,1)$ implies that
        \begin{align*}
            \HBMsd \mathscr{R}_1 f (r) = \frac{1}{2} \begin{pmatrix}
                -\alpha_d \Delta f^A (\rot^\top_{2\pi/3} r) + \alpha_o (-i, 1) \cdot \nabla^2 f^B (\rot^\top_{2\pi/3} r) (-i,1)\\
                \overline{\alpha_o}e^{i2\pi/3} (i, 1) \cdot \nabla^2 f^A (\rot^\top_{2\pi/3} r) (i,1) - \alpha_d e^{i2\pi/3} \Delta f^B (\rot^\top_{2\pi/3} r)
            \end{pmatrix}= \mathscr{R}_1 \HBMsd f(r).
        \end{align*}
        Using that
        $\mathscr{R}_1$ commutes with both $L$ and $\sigma_3$ to handle the other diagonal term, we conclude that $[\HBMt, \mathscr{R}] = 0$, and thus $[\Hfull, \mathscr{R}] = 0$ as desired.
        \item We write
        \begin{align*}
            \cM_x = \begin{pmatrix}
                0 & \cC \sm_x\\
                \cC \sm_x & 0
            \end{pmatrix}, \qquad \sm_x f (r_1, r_2) := f(-r_1, r_2),
        \end{align*}
        with $\cC$ the complex conjugation operator defined in \eqref{eq:PC}. We begin with the diagonal blocks of $\Hfull$. Observe that
        \begin{align*}
            \left[ \diag (L, L), \cM_x \right] = \begin{pmatrix}
                0 & [L, \cC \sm_x]\\
                [L, \cC \sm_x] & 0
            \end{pmatrix} = 
            \begin{pmatrix}
                0 & 0\\
                0 & 0
            \end{pmatrix},
        \end{align*}
        as the assumption $\alpha \in \mathbb{R}$ implies that
        \begin{align}\label{eq:LCM}
            L \cC \sm_x f(r) = \begin{pmatrix}
                \alpha (-D_{r_1} - i D_{r_2}) \overline{f^B (-r_1, r_2)}\\
                \bar{\alpha} (-D_{r_1} + i D_{r_2}) \overline{f^A (-r_1, r_2)}
            \end{pmatrix} = \cC \sm_x L f(r).
        \end{align}
        Next, we write
        \begin{align*}
            \diag (-i \sigma_3 L, i \sigma_3 L) \cM_x &= \begin{pmatrix}
                0 & -i\sigma_3 L \cC \sm_x\\
                i\sigma_3 L \cC \sm_x & 0
            \end{pmatrix}, \\
            \cM_x \diag (-i \sigma_3 L, i \sigma_3 L) &= \begin{pmatrix}
                0 & \cC \sm_x i \sigma_3 L\\
                -\cC \sm_x i \sigma_3 L & 0
            \end{pmatrix}
        \end{align*}
        and use \eqref{eq:LCM} to conclude that
        \begin{align*}
            -i\sigma_3 L \cC \sm_x = -i \sigma_3 \cC \sm_x L = -i \sigma_3 \cC \sm_x (-i\sigma_3) i \sigma_3 L = -i\sigma_3 i \sigma_3 \cC \sm_x i \sigma_3 L = \cC \sm_x i \sigma_3 L;
        \end{align*}
        therefore, $[\diag (-i \sigma_3 L, i \sigma_3 L), \cM_x] = 0$.
        The remaining diagonal blocks satisfy
        \begin{align*}
            \left[ \diag (\HBMsd, \HBMsd), \cM_x \right] = \begin{pmatrix}
                0 & [\HBMsd, \cC \sm_x]\\
                [\HBMsd, \cC \sm_x] & 0
            \end{pmatrix}= 
            \begin{pmatrix}
                0 & 0\\
                0 & 0
            \end{pmatrix},
        \end{align*}
        since
        \begin{align*}
            \HBMsd \cC \sm_x f(r) = \frac{1}{2} \begin{pmatrix}
                \alpha_d (D_{r_1}^2 + D_{r_2}^2) \overline{f^A (-r_1, r_2)} + \alpha_o (D_{r_1}^2 - D_{r_2}^2 - 2iD_{r_1 r_2} \overline{f^B (-r_1, r_2)}\\
                \overline{\alpha_o} (D_{r_1}^2 - D_{r_2}^2 + 2iD_{r_1 r_2}) \overline{f^A (-r_1, r_2)} + \alpha_d (D_{r_1}^2 + D_{r_2}^2) \overline{f^B (-r_1, r_2)}
            \end{pmatrix}
            = \cC \sm_x \HBMsd f(r)
        \end{align*}
        by the assumption that $\alpha_o \in \mathbb{R}$.
        Therefore, we have shown that
        \begin{align*}
            [\diag (L + \eps \HBMsd - \frac{\eps}{2} \beta i\sigma_3 L, L + \eps \HBMsd + \frac{\eps}{2} \beta i\sigma_3 L), \cM_x] = 0.
        \end{align*}
        
        For the off-diagonal blocks, we use that
        \begin{align}\label{eq:tj_transpose}
            \tj_0^\top (-r_1, r_2) = \tj_0 (r_1, r_2), \qquad \tj_1^\top (-r_1, r_2) = \tj_2 (r_1, r_2),
        \end{align}
        which by \eqref{eq:Tj} and the assumption that $\lambda_2 = \lambda_4$ implies that $\hoppingT (r_1, r_2)=\hoppingT^\top (-r_1, r_2)$.
        Since
        \begin{align*}
            \begin{pmatrix}
                0 & \hoppingT(r)\\
                \hoppingT^\dagger (r) & 0
            \end{pmatrix} \cM_x = \diag (\hoppingT(r) \cC \sm_x, \hoppingT^\dagger (r) \cC \sm_x), \;
            \cM_x \begin{pmatrix}
                0 & \hoppingT(r)\\
                \hoppingT^\dagger (r) & 0
            \end{pmatrix} =
            \diag (
                \cC \sm_x \hoppingT^\dagger (r),
                \cC \sm_x \hoppingT (r))
        \end{align*}
        with
        \begin{align*}
            \hoppingT(r_1, r_2) \cC \sm_x = \cC \sm_x \sm_x \cC \hoppingT(r_1, r_2) \cC \sm_x = \cC \sm_x \overline{\hoppingT (-r_1, r_2)}= \cC \sm_x (\hoppingT^\top (-r_1, r_2))^\dagger = \cC \sm_x \hoppingT^\dagger (r_1, r_2),
        \end{align*}
        we conclude that
        \begin{align*}
            \left[ \begin{pmatrix}
                0 & \hoppingT(r)\\
                \hoppingT^\dagger (r) & 0
            \end{pmatrix}, \cM_x\right] = 0.
        \end{align*}
        Since the maps $r \mapsto -2r$ and $(r_1, r_2) \mapsto (-r_1, r_2)$ commute, 
        the assumption that $\lambda_1 = \lambda_5$ similarly implies that $[\HBMNNN, \cM_x] = 0$.
        For the next off-diagonal contribution, we have
        \begin{align*}
            \cC \sm_x \tj_0^\dagger (r_1, r_2) D_{r_1} \sm_x \cC &= \tj_0^\top (-r_1, r_2) D_{r_1}, \\
            \cC \sm_x \tj_1^\dagger (r_1, r_2) (-\frac{1}{2}D_{r_1} + \frac{\sqrt{3}}{2} D_{r_2}) \sm_x \cC &= \tj_1^\top (-r_1, r_2) (-\frac{1}{2}D_{r_1} - \frac{\sqrt{3}}{2} D_{r_2}),\\
            \cC \sm_x \tj_2^\dagger (r_1, r_2) (-\frac{1}{2}D_{r_1} - \frac{\sqrt{3}}{2} D_{r_2}) \sm_x \cC &= \tj_2^\top (-r_1, r_2) (-\frac{1}{2}D_{r_1} + \frac{\sqrt{3}}{2} D_{r_2}),
        \end{align*}
        which by \eqref{eq:Tj}, \eqref{eq:tj_transpose} and the assumption that $\lambda_2= \lambda_4$ implies that $\cC \sm_x \hoppingT^\dagger_{\nabla, {\rm NN}} \sm_x \cC = \hoppingT_{\nabla, {\rm NN}}$. Since
        \begin{align*}
            \HBMgNN \cM_x &= \diag (\hoppingT_{\nabla, {\rm NN}}, \cC \sm_x, \hoppingT^\dagger_{\nabla, {\rm NN}}\cC \sm_x),\quad
            \cM_x \HBMgNN = \diag (\cC \sm_x \hoppingT^\dagger_{\nabla, {\rm NN}}, \cC \sm_x \hoppingT_{\nabla, {\rm NN}}),
        \end{align*}
        we conclude that
        $
            \left[ \HBMgNN, \cM_x\right] = 0
        $
        as desired.
        \sq{It remains to analyze the second-order off-diagonal blocks. To this end,
        we write
        \begin{align*}
            \left[ \begin{pmatrix}
                0 & \hoppingT_2 + \Tu (r) \\
                \hoppingT_2^\dagger + \Tu^\dagger (r)
            \end{pmatrix}, \cM_x \right] = \diag (\cA, -\cC \sm_x \cA \sm_x \cC),
        \end{align*}
        where $\cA := (\hoppingT_2 + \Tu) \cC \sm_x - \cC \sm_x (\hoppingT_2^\dagger +\Tu^\dagger)$. Thus it suffices to show that $\cA = 0$, or equivalently, 
        \begin{align}\label{eq:offdiag2}
            \cC \sm_x (\hoppingT_2^\dagger +\Tu^\dagger) \sm_x \cC = \hoppingT_2 + \Tu.
        \end{align}
        By \eqref{eq:adjoint}, we have
        \begin{align}\label{eq:dagger_tj}
        \begin{split}
            \hoppingT_2^\dagger &= \overline{\mu_0} \tj_0 ^\dagger (r) D_{r_2} + \overline{\mu_1} \tj_1^\dagger (r)(-\frac{\sqrt{3}}{2}D_{r_1} - \frac{1}{2} D_{r_2}) + \overline{\mu_2} \tj_2^\dagger (r)(\frac{\sqrt{3}}{2}D_{r_1} - \frac{1}{2} D_{r_2}) - 2 \Tu^\dagger,\\
            \Tu^\dagger &= \frac{\beta |K|}{2} (\overline{\mu_0} \tj_0^\dagger (r) + \overline{\mu_1} \tj_1^\dagger (r) + \overline{\mu_2} \tj_2^\dagger (r)).
        \end{split}
        \end{align}
        Using the identities
        \begin{align*}
            \cC \sm_x \tj_0^\dagger (r_1, r_2) D_{r_2} \sm_x \cC &= -\tj_0^\top (-r_1, r_2) D_{r_2}, \\
            \cC \sm_x \tj_1^\dagger (r_1, r_2) (-\frac{\sqrt{3}}{2}D_{r_1} - \frac{1}{2} D_{r_2}) \sm_x \cC &= \tj_1^\top (-r_1, r_2) (-\frac{\sqrt{3}}{2}D_{r_1} + \frac{1}{2} D_{r_2}),\\
            \cC \sm_x \tj_2^\dagger (r_1, r_2) (\frac{\sqrt{3}}{2}D_{r_1} - \frac{1}{2} D_{r_2}) \sm_x \cC &= \tj_2^\top (-r_1, r_2) (\frac{\sqrt{3}}{2}D_{r_1} + \frac{1}{2} D_{r_2}),
        \end{align*}
        we obtain by \eqref{eq:tj_transpose} that
        \begin{align*}
            &\cC \sm_x (\hoppingT_2^\dagger +\Tu^\dagger) \sm_x \cC \\
            &\hspace{1cm} = -\mu_0 \tj_0 (r) D_{r_2} + \mu_1 \tj_2 (r) (-\frac{\sqrt{3}}{2}D_{r_1} + \frac{1}{2} D_{r_2}) + \mu_2 \tj_1 (r) (\frac{\sqrt{3}}{2}D_{r_1} + \frac{1}{2} D_{r_2}) - \cC \sm_x \Tu^\dagger \sm_x \cC,\\
            &\cC \sm_x \Tu^\dagger \sm_x \cC = \frac{\beta |K|}{2} (-\mu_0 \tj_0 (r) + \mu_1 \tj_2 (r) + \mu_2 \tj_1 (r)).
        \end{align*}
        Using the assumptions $\mu_0 = 0$ and $\mu_1 = -\mu_2$, we apply \eqref{eq:Tj} to conclude that \eqref{eq:offdiag2} holds.}

        \item We write 
        \begin{align*}
            \cM_y = \begin{pmatrix}
                0 & \sigma_1 \sm_y\\
                \sigma_1 \sm_y & 0
            \end{pmatrix}, \qquad \sm_y f(r_1, r_2) := f(r_1, -r_2)
        \end{align*}
        and, using the assumption that $\alpha \in \mathbb{R}$ to justify the last equality below,
        \begin{align*}
            L\sigma_1 \sm_y f(r) = L \begin{pmatrix}
                f^B(r_1, -r_2)\\
                f^A (r_1, -r_2)
            \end{pmatrix} =
            \begin{pmatrix}\alpha (D_{r_1} + i D_{r_2})f^A (r_1, -r_2)\\
            \bar{\alpha}(D_{r_1} - i D_{r_2})f^B (r_1, -r_2)\end{pmatrix} = \sigma_1 \sm_y L f(r);
        \end{align*}
        hence $[\diag (L, L), \cM_y]$. Moreover, we observe that
        \begin{align*}
            \diag (-\sigma_3 L, \sigma_3 L) \cM_y &= \begin{pmatrix}
                0 & -\sigma_3 L \sigma_1 \sm_y\\
                \sigma_3 L \sigma_1 \sm_y & 0
            \end{pmatrix},\\
            \cM_y \diag (-\sigma_3 L, \sigma_3 L) &= \begin{pmatrix}
                0 & \sigma_1 \sm_y \sigma_3 L\\
                -\sigma_1 \sm_y \sigma_3 L & 0
            \end{pmatrix},
        \end{align*}
        where $-\sigma_3 L \sigma_1 \sm_y = -\sigma_3 \sigma_1 \sm_y L = \sigma_1 \sigma_3 \sm_y L = \sigma_1 \sm_y \sigma_3 L$; hence $[\diag (-\sigma_3 L, \sigma_3 L), \cM_y] = 0$. Next,
        \begin{align*}
            \diag (\HBMsd, \HBMsd) \cM_y = \begin{pmatrix}
                0 & \HBMsd \sigma_1 \sm_y\\
                \HBMsd \sigma_1 \sm_y & 0
            \end{pmatrix}, \qquad
            \cM_y \diag (\HBMsd, \HBMsd) = \begin{pmatrix}
                0 & \sigma_1 \sm_y\HBMsd \\
                \sigma_1 \sm_y\HBMsd & 0
            \end{pmatrix},
        \end{align*}
        where
        \begin{align*}
            \HBMsd \sigma_1 \sm_y f(r) = \frac{1}{2} \begin{pmatrix}
                \alpha_d (D_{r_1}^2 + D_{r_2}^2) f^B(r_1, -r_2) + \alpha_o (D_{r_1}^2 - D_{r_2}^2 - 2iD_{r_1 r_2}) f^A(r_1, -r_2) \vspace{0.2cm}\\
                \overline{\alpha_o} (D_{r_1}^2 - D_{r_2}^2 + 2iD_{r_1 r_2}) f^B(r_1, -r_2) + \alpha_d (D_{r_1}^2 + D_{r_2}^2)f^A(r_1, -r_2)
            \end{pmatrix} = \sigma_1 \sm_y \HBMsd f(r)
        \end{align*}
        since $\alpha_o \in \mathbb{R}$. Thus we have shown that the diagonal blocks of $\Hfull$ commute with $\cM_y$; that is,
        \begin{align*}
            [\diag (L + \eps \HBMsd - \frac{\eps}{2} \beta i\sigma_3 L, L + \eps \HBMsd + \frac{\eps}{2} \beta i\sigma_3 L), \cM_y] = 0.
        \end{align*}
        
        For the interlayer coupling terms, we use the identities
        \begin{align}\label{eq:tj_y}
            \sigma_1 \tj^\dagger_0 (r_1, -r_2) \sigma_1 = \tj_0 (r_1, r_2), \qquad \sigma_1 \tj^\dagger_1 (r_1, -r_2) \sigma_1 = \tj_2 (r_1, r_2).
        \end{align}
        Recalling \eqref{eq:Tj}, since
        \begin{align*}
            \begin{pmatrix}
                0 & \hoppingT(r)\\
                \hoppingT^\dagger (r) & 0
            \end{pmatrix} \cM_y &= \diag (\hoppingT(r) \sigma_1 \sm_y, \hoppingT^\dagger (r) \sigma_1 \sm_y), \\
            \cM_y \begin{pmatrix}
                0 & \hoppingT(r)\\
                \hoppingT^\dagger (r) & 0
            \end{pmatrix} &= \diag (\sigma_1 \sm_y \hoppingT^\dagger (r), \sigma_1 \sm_y \hoppingT(r)),
        \end{align*}
        our assumptions that $\lambda_0 \in \mathbb{R}$ and $\overline{\lambda_2} = \lambda_4$ imply that $$\left[\begin{pmatrix}
                0 & \hoppingT(r)\\
                \hoppingT^\dagger (r) & 0
            \end{pmatrix}, \cM_y \right] = 0.$$
        Since the maps $(r_1, r_2) \mapsto (r_1, - r_2)$ and
        $r \mapsto - 2r$ commute with each other, a parallel argument using the assumptions $\lambda_3 \in \mathbb{R}$ and $\overline{\lambda_1} = \lambda_5$ reveals that $[\HBMNNN, \cM_y] = 0$. Moreover, \eqref{eq:tj_y} implies that
        \begin{align*}
            &\sigma_1 \sm_y \tj_0^\dagger (r) D_{r_1} \sm_y \sigma_1 = \tj_0 (r) D_{r_1}, \quad \sigma_1 \sm_y \tj_1^\dagger (r) (-\frac{1}{2} D_{r_1} + \frac{\sqrt{3}}{2} D_{r_2}) \sm_y \sigma_1 = \tj_2 (r)(-\frac{1}{2} D_{r_1} - \frac{\sqrt{3}}{2} D_{r_2})\\
            & \sigma_1 \sm_y \tj_2^\dagger (r) (-\frac{1}{2} D_{r_1} - \frac{\sqrt{3}}{2} D_{r_2}) \sm_y \sigma_1 = \tj_1 (r)(-\frac{1}{2} D_{r_1} + \frac{\sqrt{3}}{2} D_{r_2}),
        \end{align*}
        from which we conclude
        \begin{align*}
            [\HBMgNN, \cM_y] = \diag (\hoppingT_{\nabla, {\rm NN}} \sigma_1 \sm_y - \sigma_1 \sm_y \hoppingT^\dagger_{\nabla, {\rm NN}},\; \hoppingT^\dagger_{\nabla, {\rm NN}} \sigma_1 \sm_y - \sigma_1 \sm_y \hoppingT_{\nabla, {\rm NN}}) = 0.
        \end{align*}
        \sq{The remaining off-diagonal terms satisfy
        \begin{align*}
            \left[ \begin{pmatrix}
                0 & \hoppingT_2 + \Tu (r) \\
                \hoppingT_2^\dagger + \Tu^\dagger (r)
            \end{pmatrix}, \cM_y \right] = \diag (\cB, -\sigma_1 \sm_y \cB \sm_y \sigma_1),
        \end{align*}
        where $\cB := (\hoppingT_2 + \Tu) \sigma_1 \sm_y - \sigma_1 \sm_y (\hoppingT_2^\dagger +\Tu^\dagger)$. Therefore we must show that $\cB = 0$, or equivalently,
        \begin{align}\label{eq:offdiag2y}
            \sigma_1 \sm_y (\hoppingT_2^\dagger +\Tu^\dagger) \sigma_1 \sm_y = \hoppingT_2 + \Tu.
        \end{align}
        We verify that
        \begin{align*}
            \sigma_1\sm_y \tj_0^\dagger (r_1, r_2) D_{r_2} \sm_y \sigma_1 &= -\sigma_1 \tj_0^\dagger (r_1, -r_2) \sigma_1 D_{r_2}, \\
            \sigma_1\sm_y \tj_1^\dagger (r_1, r_2) (-\frac{\sqrt{3}}{2}D_{r_1} - \frac{1}{2} D_{r_2}) \sm_y \sigma_1 &= \sigma_1 \tj_1^\dagger (r_1, -r_2)\sigma_1 (-\frac{\sqrt{3}}{2}D_{r_1} + \frac{1}{2} D_{r_2}),\\
            \sigma_1\sm_y \tj_2^\dagger (r_1, r_2) (\frac{\sqrt{3}}{2}D_{r_1} - \frac{1}{2} D_{r_2}) \sm_y \sigma_1 &= \sigma_1 \tj_2^\dagger (r_1, -r_2) \sigma_1 (\frac{\sqrt{3}}{2}D_{r_1} + \frac{1}{2} D_{r_2}),
        \end{align*}
        then conclude by \eqref{eq:dagger_tj} and \eqref{eq:tj_y} that
        \begin{align*}
            &\sigma_1 \sm_y (\hoppingT_2^\dagger +\Tu^\dagger) \sm_y \sigma_1\\
            &\hspace{1cm} = - \overline{\mu_0}\tj_0(r) D_{r_2} + \overline{\mu_1} \tj_2 (r) (-\frac{\sqrt{3}}{2}D_{r_1} + \frac{1}{2} D_{r_2}) + \overline{\mu_2} \tj_1 (r) (\frac{\sqrt{3}}{2}D_{r_1} + \frac{1}{2} D_{r_2}) - \sigma_1 \sm_y \Tu^\dagger \sm_y \sigma_1,\\
            &\sigma_1 \sm_y \Tu^\dagger \sm_y \sigma_1 = \frac{\beta |K|}{2} (\overline{\mu_0} \tj_0 (r) + \overline{\mu_1} \tj_2 (r) + \overline{\mu_2} \tj_1 (r)).
        \end{align*}
        Our assumptions $\mu_0 \in i \mathbb{R}$ and $\overline{\mu_1} = -\mu_2$ then imply that \eqref{eq:offdiag2y} holds, which completes the proof.
        }
    \end{enumerate}
\end{proof}

\begin{proof}[Proof of Proposition \ref{prop:sufficient}]
    \begin{enumerate}
        \item The condition $\hperp (-r; \eps) = \overline{\hperp (r; \eps)}$ immediately implies that the Fourier transform of $\hperp$ is real-valued. We conclude by \eqref{eq:sep} that $\hathperpang$ is also real-valued, hence the result follows from the definitions of the $\lambda_i$ and $\mu_j$ in \eqref{eq:rad_ang}.
        \item The assumed rotation invariance of $\hperp$ implies that $\hathperp (\rot_{2\pi/3} k; \eps) = \hathperp (k; \eps)$ for all $(k; \eps) \in \mathbb{R}^2 \times (0,1)$, thus the result follows from \eqref{eq:sep} and \eqref{eq:rad_ang}.
        \item A direct calculation reveals that
        \begin{align*}
            \partial_{k_1} \tilde{h}^{A,B} (K) = -i \sum_{R \in \cR} R_1 e^{-i R \cdot K} h (R + \tau^{A,B}) = -i \sum_{R \in \cR} R_1 e^{-i 4\pi R_1/3a} h (R_1, R_2 - a/\sqrt{3}),
        \end{align*}
        where $R = (R_1, R_2)$ above. Since $h$ is assumed to be even in its first argument, the real part of the sum vanishes and thus $\alpha = \partial_{k_1} \tilde{h}^{A,B} (K) \in \mathbb{R}$. Similarly, we find that
        \begin{align*}
            \alpha_o = \partial^2_{k_1}\tilde{h}^{A,B} (K) = - \sum_{R \in \cR} R_1^2 e^{-i 4\pi R_1/3a} h (R_1, R_2 - a/\sqrt{3}) \in \mathbb{R}.
        \end{align*}
        \item The assumption on $\hperp$ implies that $\hathperp (k_1, -k_2; \eps) = \hathperp (k_1, k_2; \eps)$ for all $(k_1, k_2; \eps) \in \mathbb{R}^2 \times (0,1)$. 
        It follows that $\hathperpang (k_1, -k_2) = \hathperpang (k_1, k_2)$ and thus $\hathperpang ' (k_1, -k_2) = -\hathperpang ' (k_1, k_2)$ for all $(k_1, k_2) \in \mathbb{S}$. The result then follows from the definitions of the $\lambda_i$ and $\mu_j$ in \eqref{eq:rad_ang}.
        \item We write
        \begin{align*}
            \overline{\hathperp (k_1, -k_2; \eps)} = \int_{\mathbb{R}^2} e^{-i (-k_1 r_1 + k_2 r_2)} \overline{\hperp (r; \eps)} {\rm d} r = \int_{\mathbb{R}^2} e^{-i (-k_1 r_1 + k_2 r_2)} \hperp (-r_1, r_2; \eps) {\rm d} r = \hathperp (k; \eps),
        \end{align*}
        with the second equality above justified by our assumption on $\hperp$. Therefore, $$\overline{\hathperpang (k_1, -k_2)} = \hathperpang (k_1, k_2), \qquad \overline{\hathperpang ' (k_1, -k_2)} = -\hathperpang ' (k_1, k_2)$$ for all $(k_1, k_2) \in \mathbb{S}$, and the result follows from \eqref{eq:rad_ang}.
    \end{enumerate}
\end{proof}

\begin{proof}[Proof of Proposition \ref{prop:ph}]
    In the proof of Proposition \ref{prop:unitary}, we showed that $\cU \HBM \cU^\dagger = \HBM_0$, where $\HBM_0$ is the operator $\HBM$ when $\ls = 0$.
    Thus we can set $\ls = 0$ without loss of generality.
    We write $$\cM'_x = \begin{pmatrix}
        \sigma_1 & 0\\
        0 & -\sigma_1
    \end{pmatrix} \sm_x,$$
    with $\sm_x$ defined in the proof of Proposition \ref{prop:continuum_symmetries} (\ref{symm:mirror_x}). Our assumption that $\alpha \in \mathbb{R}$ implies that
    \begin{align*}
        &(L \sigma_1 \sm_x + \sigma_1 \sm_x L) f(r) \\
        &\hspace{1cm} =
        \begin{pmatrix}
            \alpha (-D_{r_1} - i D_{r_2}) + \bar{\alpha} (D_{r_1} + i D_{r_2}) & 0\\
            0 & \bar{\alpha} (-D_{r_1} + i D_{r_2}) + \alpha (D_{r_1} - i D_{r_2})
        \end{pmatrix} f (-r_1, r_2) = 0,
    \end{align*}
    hence $\diag (L,L) \cM'_x + \cM'_x \diag (L,L) = 0$. The off-diagonal blocks of $\Hfull$ satisfy
    \begin{align*}
        \begin{pmatrix}
            0 & \hoppingT(r)\\
            \hoppingT^\dagger (r) & 0
        \end{pmatrix}
        \begin{pmatrix}
            \sigma_1 & 0\\
            0 & -\sigma_1
        \end{pmatrix} \sm_x &= \begin{pmatrix}
            0 & -\hoppingT(r) \sigma_1\\
            \hoppingT^\dagger (r) \sigma_1 & 0
        \end{pmatrix}\sm_x,\\
        \begin{pmatrix}
            \sigma_1 & 0\\
            0 & -\sigma_1
        \end{pmatrix} \sm_x \begin{pmatrix}
            0 & \hoppingT(r)\\
            \hoppingT^\dagger (r) & 0
        \end{pmatrix} &= \begin{pmatrix}
            0 & \sigma_1 \hoppingT(-r_1, r_2)\\
            -\sigma_1 \hoppingT^\dagger (-r_1, r_2) & 0
        \end{pmatrix} \sm_x,
    \end{align*}
    where $\hoppingT(-r_1, r_2) = \sigma_1 \hoppingT(r) \sigma_1$ by our assumption that $\lambda_2 = \lambda_4$. We conclude that
    \begin{align*}
        \begin{pmatrix}
            0 & \hoppingT(r)\\
            \hoppingT^\dagger (r) & 0
        \end{pmatrix} \cM'_x + \cM'_x \begin{pmatrix}
            0 & \hoppingT(r)\\
            \hoppingT^\dagger (r) & 0
        \end{pmatrix} = 0,
    \end{align*}
    and the proof is complete.
\end{proof}

\section{Examples of interlayer hopping functions}\label{sec:examples}
In this section, we present two more examples of interlayer hopping functions satisfying Assumpiton \ref{assumption:hperp2}.
\srq{
\begin{example}\label{ex:1}
Fix $0 < \hoppingparam < \frac{\sqrt{3}}{2}|K|$. We first define the auxiliary function $\afn : [0,\infty) \to (0,\infty)$ by
\begin{align*}
    \afn (x) := \frac{2\pi}{(|K|^2 + \hoppingparam^2)^{3/2}} \hoppingparam e^{-x \sqrt{|K|^2 + \hoppingparam^2}}\left(1+x \sqrt{|K|^2 + \hoppingparam^2}\right).
\end{align*}
We verify that $\afn$ is monotonically decreasing and 
converges to $0$ at infinity, hence the inverse map $$\afn^{-1} : \Big(0, \frac{2\pi \hoppingparam}{(|K|^2 + \hoppingparam^2)^{3/2}}\Big] \to [0, \infty)$$ is well defined and monotonicallly decreasing. 
Define
\begin{align*}
    \afn_1 (x) := \frac{2\pi}{(|K|^2 + \hoppingparam^2)^{3/2}} \hoppingparam e^{-x \sqrt{|K|^2 + \hoppingparam^2}}, \qquad 0 \le x < \infty,
\end{align*}
so that $\afn_1 \le \afn$ and thus 
\begin{align}\label{eq:ex_lower_bd}
    \frac{1}{\sqrt{|K|^2 + \hoppingparam^2}} \log \left( \frac{2\pi \hoppingparam}{(|K|^2 + \hoppingparam^2)^{3/2}y}\right)=\afn^{-1}_1 (y) \le \afn^{-1} (y), \qquad 0 < y \le \frac{2\pi \hoppingparam}{(|K|^2 + \hoppingparam^2)^{3/2}}.
\end{align}
Fix $$0 < \lambda_0 \le \frac{2\pi \hoppingparam}{(|K|^2 + \hoppingparam^2)^{3/2}}$$ and define the function $\ell : (0,1) \to (0,\infty)$ by
\begin{align*}
    \ell (\eps) = \afn^{-1} (\eps \lambda_0).
\end{align*}
Now, define the interlayer hopping function by
$$\hperp (r;\eps) := e^{-\hoppingparam \sqrt{|r|^2 + \ell^2}},$$
where we henceforth use the shorthand $\ell := \ell (\eps)$.
The Fourier transform of $\hperp (\cdot \; ; \eps)$ is
\begin{align}\label{eq:hathperp_ex1}
    \hathperp (k; \eps) = 2\pi \frac{\hoppingparam e^{-\ell \sqrt{|k|^2 + \hoppingparam^2}} (1+\ell \sqrt{|k|^2 + \hoppingparam^2})}{(|k|^2 + \hoppingparam^2)^{3/2}},
\end{align}
from which we conclude that $\hathperp (K; \eps) = \afn(\ell) = \eps \lambda_0$ and verify the equality in \eqref{eq:neighbor_bds}, with $$\hathperprad (|k|; \eps) := \hathperp (|k|,0; \eps)/\lambda_0.$$ 
It remains to verify the decay estimates in \eqref{eq:neighbor_bds} and \eqref{eq:4bounds}.
The lower bound on $\afn^{-1}$ in \eqref{eq:ex_lower_bd} implies that
\begin{align*}
    \ell (\eps) \ge \frac{1}{\sqrt{|K|^2 + \hoppingparam^2}} \log \left( \frac{2\pi \hoppingparam}{(|K|^2 + \hoppingparam^2)^{3/2}\eps \lambda_0}\right), \qquad 0 < \eps < 1.
\end{align*}
Using that
\begin{align*}
    0 \le \hathperprad (|k|; \eps) \le \frac{2\pi}{\hoppingparam^2 \lambda_0}e^{-\ell \sqrt{|k|^2 + \hoppingparam^2}} (1+\ell \sqrt{|k|^2 + \hoppingparam^2}),
\end{align*}
we conclude that for any $\delta > 0$, there exists a positive constant $C_\delta$ such that
\begin{align*}
    0 \le \hathperprad(|k|; \eps) \le C_\delta \left( \frac{(|K|^2 + \hoppingparam^2)^{3/2}\eps \lambda_0}{2\pi \gamma}\right)^{(1-\delta)\frac{\sqrt{|k|^2 + \hoppingparam^2}}{\sqrt{|K|^2 + \hoppingparam^2}}}, \qquad |k| \ge 0, \quad 0 < \eps < 1.
\end{align*}
Derivatives of $\hathperprad (\cdot \; ; \eps)$ satisfy a similar estimate. Therefore, as in Example \ref{ex:2} we conclude that provided
\begin{align*}
    0 < \sdpt < \frac{\sqrt{7} |K|}{\sqrt{|K|^2+\hoppingparam^2}}-2,
\end{align*}
the estimate
\begin{align*}
    |\hathperprad (|k|; \eps)| + |\hathperprad '(|k|; \eps)| + |\hathperprad ''(|k|; \eps)| \le C (C\eps)^{\frac{2+\sdpt}{\sqrt{7}}|k|/|K|}, \qquad \qquad k \in \mathbb{R}^2, \quad 0 < \eps < 1
\end{align*}
holds. Thus we have verified \eqref{eq:neighbor_bds} and \eqref{eq:4bounds}, as desired.
\end{example}}
\begin{example}
    Examples \ref{ex:2} and \ref{ex:1} can be modified to allow for angular dependence. That is, instead of \eqref{eq:example_hopping}, 
    we could define the interlayer hopping function by its Fourier transform as
    \begin{align*}
        \hathperp (k; \eps) = \frac{2\pi}{\sqrt{|k|^2+\alpha^2}} \left( \frac{\lambda_0 \eps \sqrt{|K|^2 + \alpha^2}}{2\pi}\right)^{\sqrt{\frac{|k|^2+\alpha^2}{|K|^2+\alpha^2}}} \hathperpang (\angvar{k}) \chi (|k|), \qquad k \in \mathbb{R}^2,
    \end{align*}
    where $\hathperpang \in C^2 (\mathbb{S}; \mathbb{C})$ and the function $\chi \in C^\infty [0,\infty)$ satisfies $\chi (0) = 0$ and $\chi \equiv \lambda_0^{-1}$ in $(|K|-\delta,\infty)$ for some $\delta > 0$. Recall that $\angvar{k} := k/|k|$.
    For any non-constant $\hathperpang$, the vanishing of $\chi$ at zero 
    is necessary to maintain the smoothness of $\hathperp$ at the origin. A similar construction can be applied to the Fourier transform \eqref{eq:hathperp_ex1} of the hopping function from Example \ref{ex:1}.
\end{example}

\begin{remark}
    One might wish to 
    consider hopping functions with an arbitrary sublattice dependence.
    That is, instead of \eqref{eq:intra}-\eqref{eq:Hperp}, one could define the blocks of the tight-binding Hamiltonian \eqref{eq:H} by
\begin{align}\label{eq:intra_general}
    (H_{jj}\varphi_j)^{\sigma}_{R_j} := \sum_{R_j' \in \cR_j} \sum_{\sigma'} h^{\sigma \sigma'} (\rot_{-\theta_j/2} (R_j - R_j')) (\varphi_j^{\sigma'})_{R'_j}
\end{align}
and
\begin{align}\label{eq:Hperp_general}
    (\Hperp \varphi_2)^\sigma_{R_1} &:= \sum_{R_2 \in \cR_2} \sum_{\sigma'} \hperp^{\sigma \sigma'} (R_1 - R_2;\eps) (\varphi^{\sigma'}_{2})_{R_2}, 
\end{align}
for some functions $h^{\sigma \sigma'}$ and $\hperp^{\sigma \sigma'}$ indexed by $\sigma, \sigma' \in \{A,B\}$. An example is the Fang-Kaxiras tight-binding model \cite{Fang_Kaxiras_2016}, 
which is generalized in \cite[Section II C]{vafek2023continuum}.

As we show in Section \ref{subsec:symmetries_mono}, any $H_{jj}$ of the form \eqref{eq:intra_general} that satisfies natural symmetry assumptions must also be of the form \eqref{eq:intra} for some (sublattice-independent) hopping function $h$. 

However, 
the sublattice dependence of the hopping function in \eqref{eq:Hperp_general} would in general define an interlayer Hamiltonian $\Hperp$ which is outside of our framework \eqref{eq:Hperp}.
Indeed, we see that each off-diagonal block of the effective Hamiltonian $\Hfull$ (defined by \eqref{eq:Hfull} below) is a linear combination of the matrices
\begin{align*}
    \begin{pmatrix}
        1 & 1\\
        1 & 1
    \end{pmatrix}, \qquad
    \begin{pmatrix}
        1 & \sq{e^{-i2\pi/3}}\\
        \sq{e^{i2\pi/3}} & 1
    \end{pmatrix}, \qquad
    \begin{pmatrix}
        1 & \sq{e^{i2\pi/3}}\\
        \sq{e^{-i2\pi/3}} & 1
    \end{pmatrix},
\end{align*}
which would no longer be the case if we generalized \eqref{eq:Hperp} to \eqref{eq:Hperp_general}. One could still follow the proofs in this paper to derive a (more complicated) effective Hamiltonian in the more general case \eqref{eq:Hperp_general}, provided that each component $\hperp^{\sigma \sigma'}$ of the interlayer hopping function satisfies Assumption \ref{assumption:hperp2}. In particular, the factor $$\hathperp (k_1 + \cG_1 (G_2); \eps)$$ on the right-hand side of \eqref{eq:tildeHperp} would get replaced by $$\hat{\mathfrak{h}}^{\sigma \sigma'}_{12} (k_1+ \cG_1 (G_2); \eps) := e^{-i (k_1 + \cG_1 (G_2))\cdot (\tau^\sigma_1 - \tau^{\sigma'}_2)} \hathperp^{\sigma \sigma'}(k_1 + \cG_1 (G_2); \eps),$$
meaning that $\hat{\mathscr{h}}_{12} (k,q; \eps)$ in \eqref{eq:filtered_Taylor}-\eqref{eq:tildeF12} would get replaced by the filtered first-order Taylor expansion of $\hat{\mathfrak{h}}^{\sigma \sigma'}_{12} (k;\eps)$ about $k=q$.
The rest of the derivation would be the same.
\end{remark}

\end{document}